\tikzstyle{layer0}=[draw, fill = red!50, circle, minimum size = 10pt,inner sep = 0pt]
\tikzstyle{layer1}=[draw, fill = blue!50, circle, minimum size = 10pt,inner sep = 0pt]
\tikzstyle{layer2}=[draw, fill = green!50, circle, minimum size = 10pt,inner sep = 0pt]
\tikzstyle{arrow}=[arrows={{Latex[scale=0.5]}-}, thick]
\tikzstyle{layer01}=[draw, fill = red!20, circle, minimum size = 10pt,inner sep = 0pt]
\tikzstyle{layer11}=[draw, fill = blue!20, circle, minimum size = 10pt,inner sep = 0pt]
\tikzstyle{layer02}=[draw, fill = red!50, circle, minimum size = 10pt,inner sep = 0pt]
\tikzstyle{layer12}=[draw, fill = blue!50, circle, minimum size = 10pt,inner sep = 0pt]
\tikzstyle{layer03}=[draw, fill = red!80, circle, minimum size = 10pt,inner sep = 0pt]
\tikzstyle{layer13}=[draw, fill = blue!80, circle, minimum size = 10pt,inner sep = 0pt]
\def\tikzdist{5.5}
\def\logit{\mathrm{logit}}
\def\offdiag{\mathrm{off}\text{-}\mathrm{diag}}
\def\sgn{\mathrm{sgn}}
\def\Dir{\mathrm{Dir}}
\def\PG{\mathrm{PG}}
\def\perm{\mathrm{Perm}}
\def\marg{\mathrm{Marg}}
\def\shuff{\mathrm{Shuff}}
\def\law{\mathrm{Law}}
\title{Bayesian Deep Generative Models for Multiplex Networks with Multiscale Overlapping Clusters}
\author{
Yuren Zhou
\thanks{Department of Statistical Science, Duke University, Durham, NC, USA.}
\and
Yuqi Gu
\thanks{Department of Statistics, Columbia University, New York, NY, USA.}
\and
David B. Dunson
\thanks{Department of Statistical Science \& Mathematics, Duke University, Durham, NC, USA.}
}
\begin{document}

\maketitle

\begin{abstract}
Our interest is in multiplex network data with multiple network samples observed across the same set of nodes.
Examples originate from a variety of fields, including brain connectivity, international trade networks, and social networks, among others. Our goal is to infer a hierarchical structure of the nodes at a population level, while performing multi-resolution clustering of the individual replicates. To accomplish this, we propose a Bayesian hierarchical model, provide theoretical support in terms of identifiability and posterior consistency, and design efficient methods for posterior computation. We provide novel technical tools for proving model identifiability, which are of independent interest. Our proposed methodology is demonstrated through numerical simulation and an application to brain connectome data.

\end{abstract}

\keywords{
Bayesian inference; Deep generative model; Identifiability; Community detection; Overlapping clustering.
}

\section{Introduction}\label{sec:intro}

In recent decades, the study of network data has been a popular topic in the statistics literature. An undirected network with $p$ nodes is commonly represented by a symmetric binary matrix $\Xb$, where $X_{i,j} = 1$ signifies the existence of an edge between the nodes $i$ and $j$, and $X_{i,j} = 0$ indicates the absence of an edge. In this paper, we consider multiplex network data in which each sample collects a different undirected network that shares the same nodes. Motivating applications include connectivity networks between brain regions for different individuals \citep{park2013structural}, trade networks between countries for different agricultural products \citep{macdonald2022latent}, passing networks between soccer players in different matches \citep{han2018multiresolution}, contact networks between students collected at different times \citep{durante2016locally}, etc.

More specifically, our overarching goal is to develop statistical methods that can reliably uncover a hierarchical structure of the nodes at a population level and perform multi-resolution overlapping clustering of individual samples based on the idiosyncrasies of their networks. To enable reliable model interpretation, it is crucial that the latent model structure be identifiable and that our methods provide a realistic characterization of uncertainty. We address the former problem with appropriate identifiability theory, while tackling the latter through a Bayesian inferential framework.

There is a vast statistical literature on modeling a single network. One of the most common goals in single-network analysis is community detection, which groups nodes into communities such that nodes within a community are more densely connected. Traditional methods include spectral clustering \citep{shi2000normalized, meilua2001random, ng2001spectral}, stochastic block models \citep[SBMs,][]{holland1983stochastic, nowicki2001estimation, kemp2006learning, abbe2017community, van2018bayesian}, 
extensions of SBMs to allow degree heterogeneity of nodes \citep{karrer2011stochastic, jin2015fast, peng2016bayesian} and many others.
There are also more flexible extensions of SBMs that allow mixed memberships in multiple communities
\citep{airoldi2008mixed, jin2023mixed}, as well as network models that avoid clustering of nodes entirely including exponential random graphs \citep{frank1986markov}, latent space models \citep{hoff2002latent, handcock2007model}, random dot product graphs \citep{young2007random}, etc.

Multiplex network data arise in diverse applications spanning across neuroscience, ecology, sociology, and sports, and have been referred to by various alternative names in the literature including multiple network data, populations of networks, replicated networks, and ensembled networks.
To model multiplex networks, researchers have proposed various generative models \citep{durante2017nonparametric, nielsen2018multiple, wang2019common}.
These approaches typically define a connection probability matrix for each network sample, take an element-wise logit transform, and then define a hierarchical matrix factorization model.
To find the community structure of the nodes at the population level, community detection methods have been developed for multiplex networks \citep{han2015consistent, mukherjee2017clustering, arroyo2021inference, jing2021community, duan2023bayesian, josephs2023nested, mantziou2024bayesian}.
To our knowledge, we propose the first method for simultaneously inferring the community structure of network nodes, obtaining lower-dimensional network representations, and grouping individual networks into multiscale clusters.
Our model features a multi-latent-layer structure that performs these tasks at multiple resolutions.


\begin{figure}[ht]
\centering
\subfigure[
Population graphical model structure.
]{
\centering
\begin{tikzpicture}[scale = 0.9]
\node (n21)[layer2] at (0, 0.2) {};
\node (n22)[layer2] at (1, 0.2) {};
\node (n23)[layer2] at (2, 0.2) {};
\node (n24)[layer2] at (3, 0.2) {};
\node (n25)[layer2] at (4, 0.2) {};
\node (n26)[layer2] at (5, 0.2) {};
\node (n11)[layer1] at (1, 2.2) {};
\node (n12)[layer1] at (2, 2.2) {};
\node (n13)[layer1] at (3, 2.2) {};
\node (n14)[layer1] at (4, 2.2) {};
\node (n01)[layer0] at (2, 4.2) {};
\node (n02)[layer0] at (3, 4.2) {};
\draw (n21) edge[arrow] (n11);
\draw (n22) edge[arrow] (n11);
\draw (n23) edge[arrow] (n11);
\draw (n23) edge[arrow] (n12);
\draw (n24) edge[arrow] (n12);
\draw (n24) edge[arrow] (n13);
\draw (n24) edge[arrow] (n14);
\draw (n25) edge[arrow] (n13);
\draw (n25) edge[arrow] (n14);
\draw (n26) edge[arrow] (n14);
\draw (n11) edge[arrow] (n01);
\draw (n12) edge[arrow] (n01);
\draw (n13) edge[arrow] (n02);
\draw (n14) edge[arrow] (n02);
\draw[style = dashed] (-0.5, -0.6 + 0.2) -- (5.5, -0.6 + 0.2) -- (5.5, 0.6 + 0.2) -- (-0.5, 0.6 + 0.2) -- (-0.5, -0.6 + 0.2);
\draw[style = dashed] (-0.5, 1.4 + 0.2) -- (5.5, 1.4 + 0.2) -- (5.5, 2.6 + 0.2) -- (-0.5, 2.6 + 0.2) -- (-0.5, 1.4 + 0.2);
\draw[style = dashed] (-0.5, 3.4 + 0.2) -- (5.5, 3.4 + 0.2) -- (5.5, 4.6 + 0.2) -- (-0.5, 4.6 + 0.2) -- (-0.5, 3.4 + 0.2);
\node (l2)[anchor = west] at (5.7, 0 + 0.2) {\textcolor{black}{\small{Layer 2}}};
\node (l1)[anchor = west] at (5.7, 2 + 0.2) {\textcolor{black}{\small{Layer 1}}};
\node (l0)[anchor = west] at (5.7, 4 + 0.2) {\textcolor{black}{\small{Layer 0}}};
\node (dm)[anchor = west] at (5.7, -0.4) {\textcolor{white}{Dummy}};
\end{tikzpicture}
\label{fig:tikz_A_intro}
}
\hfill
\subfigure[
Individual adjacency matrices at each layer.
]{
\centering
\begin{tikzpicture}[inner sep = 0pt, scale = 0.9]
\tikzmath{ let \hdist = 2.3; let \vdist = 2;
for \x in {1,...,6}{ for \y in {1,...,6}{ if{
(\x == \y)
|| (\x < 4 && \y < 4)
|| (\x > 3 && \y > 3)
|| (\x == 3 && \y == 4) || (\x == 4 && \y == 3)
|| (\x == 2 && \y == 6) || (\x == 6 && \y == 2)
}then{
let \c = green!50;
}else{
let \c = black!10;
}; {
\fill[\c] (0.2 * \x, \vdist - 0.2 * \y - 0.5) rectangle + (0.19, 0.19);
}; }; };
for \x in {1,...,6}{ for \y in {1,...,6}{ if{
(\x == \y)
|| (\x < 3 && \y < 3)
|| (\x > 3 && \y > 3)
|| (\x == 3 && \y == 4) || (\x == 4 && \y == 3)
|| (\x == 2 && \y == 3) || (\x == 3 && \y == 2)
|| (\x == 2 && \y == 4) || (\x == 4 && \y == 2)
}then{
let \c = green!50;
}else{
let \c = black!10;
}; {
\fill[\c] (\hdist + 0.2 * \x, \vdist - 0.2 * \y - 0.5) rectangle + (0.19, 0.19);
}; }; };
for \x in {1,...,6}{ for \y in {1,...,6}{ if{
(\x == \y)
|| (\x < 4 && \y < 4)
|| (\x > 4 && \y > 4)
|| (\x == 3 && \y == 4) || (\x == 4 && \y == 3)
|| (\x == 4 && \y == 5) || (\x == 5 && \y == 4)
|| (\x == 2 && \y == 5) || (\x == 5 && \y == 2)
}then{
let \c = green!50;
}else{
let \c = black!10;
}; {
\fill[\c] (\hdist * 2 + 0.2 * \x, \vdist - 0.2 * \y - 0.5) rectangle + (0.19, 0.19);
}; }; };
for \x in {1,...,4}{ for \y in {1,...,4}{ if{
(\x == \y)
|| (\x < 3 && \y < 3)
|| (\x > 2 && \y > 2)
||(\x + \y == 5)
}then{
let \c = blue!50;
}else{
let \c = black!10;
}; {
\fill[\c] (0.3 * \x - 0.1, \vdist * 2 - 0.3 * \y - 0.5) rectangle + (0.29, 0.29);
}; }; };
for \x in {1,...,4}{ for \y in {1,...,4}{ if{
(\x == \y)
|| (\x < 3 && \y < 3)
|| (\x > 2 && \y > 2)
|| (\x == 1 && \y == 3) || (\x == 3 && \y == 1)
}then{
let \c = blue!50;
}else{
let \c = black!10;
}; {
\fill[\c] (\hdist + 0.3 * \x - 0.1, \vdist * 2 - 0.3 * \y - 0.5) rectangle + (0.29, 0.29);
}; }; };
for \x in {1,...,4}{ for \y in {1,...,4}{ if{
(\x == \y)
|| (\x < 3 && \y < 3)
|| (\x > 2 && \y > 2)
}then{
let \c = blue!50;
}else{
let \c = black!10;
}; {
\fill[\c] (\hdist * 2 + 0.3 * \x - 0.1, \vdist * 2 - 0.3 * \y - 0.5) rectangle + (0.29, 0.29);
}; }; };
for \x in {1, 2}{ for \y in {1, 2}{ if{
(\x == \y)
|| (\x + \y == 3)
}then{
let \c = red!50;
}else{
let \c = black!10;
}; {
\fill[\c] (0.6 * \x - 0.4, \vdist * 3 - 0.6 * \y - 0.5) rectangle + (0.59, 0.59);
}; }; };
for \x in {1, 2}{ for \y in {1, 2}{ if{
(\x == \y)
}then{
let \c = red!50;
}else{
let \c = black!10;
}; {
\fill[\c] (\hdist + 0.6 * \x - 0.4, \vdist * 3 - 0.6 * \y - 0.5) rectangle + (0.59, 0.59);
}; }; };
for \x in {1, 2}{ for \y in {1, 2}{ if{
(\x == \y)
}then{
let \c = red!50;
}else{
let \c = black!10;
}; {
\fill[\c] (\hdist * 2 + 0.6 * \x - 0.4, \vdist * 3 - 0.6 * \y - 0.5) rectangle + (0.59, 0.59);
}; }; };
}
\draw[style = dashed] (0 - 0.3, 1 - 0.8) -- (\hdist * 3 - 0.3, 1 - 0.8) -- (\hdist * 3 - 0.3, 1 + 0.7) -- (0 - 0.3, 1 + 0.7) -- (0 - 0.3, 1 - 0.8);
\draw[style = dashed] (0 - 0.3, 3 - 0.8) -- (\hdist * 3 - 0.3, 3 - 0.8) -- (\hdist * 3 - 0.3, 3 + 0.7) -- (0 - 0.3, 3 + 0.7) -- (0 - 0.3, 3 - 0.8);
\draw[style = dashed] (0 - 0.3, 5 - 0.8) -- (\hdist * 3 - 0.3, 5 - 0.8) -- (\hdist * 3 - 0.3, 5 + 0.7) -- (0 - 0.3, 5 + 0.7) -- (0 - 0.3, 5 - 0.8);
\node[anchor = west] at (\hdist * 3 + 0.2 - 0.3, 4.9) {\textcolor{black}{\small{Layer 0}}};
\node[anchor = west] at (\hdist * 3 + 0.2 - 0.3, 2.9) {\textcolor{black}{\small{Layer 1}}};
\node[anchor = west] at (\hdist * 3 + 0.2 - 0.3, 0.9) {\textcolor{black}{\small{Layer 2}}};
\node[anchor = west] at (\hdist * 0 - 0.22, 6.2) {\textcolor{black}{\small{Individual 1}}};
\node[anchor = west] at (\hdist * 1 - 0.22, 6.2) {\textcolor{black}{\small{Individual 2}}};
\node[anchor = west] at (\hdist * 2 - 0.22, 6.2) {\textcolor{black}{\small{Individual 3}}};
\draw[color = black!50] (\hdist * 1 - 0.3, 0) -- (\hdist * 1 - 0.3, \vdist * 3 + 0.5);
\draw[color = black!50] (\hdist * 2 - 0.3, 0) -- (\hdist * 2 - 0.3, \vdist * 3 + 0.5);
\end{tikzpicture}
\label{fig:tikz_X_intro}
}
\caption{
A simple illustration of our model.
(a) a graphical model with a three-layer population hierarchical structure, where layer 2 is observed and layers 0 and 1 are latent. 
(b) a sample of $N = 3$ individual networks with their own realizations in the three layers.
}
\label{fig:tikz_intro}
\end{figure}

We illustrate the population-level model using a graphical model representation in Figure \ref{fig:tikz_A_intro} and give their sample realizations in Figure \ref{fig:tikz_X_intro}.
The bottom layer in Figure \ref{fig:tikz_A_intro} corresponds to shared nodes of the observed adjacency matrices.
Between any two consecutive layers in the population model (e.g., between layers 0 and 1, or between layers 1 and 2), the nodes in the upper/deeper layer are interpreted as the communities for the nodes in the lower/shallower layer.
We allow overlapping community memberships, meaning that for each node in the lower layer, there could be multiple nodes in the upper layer having directed edges pointing to it.
As illustrated in Figure \ref{fig:tikz_X_intro}, each individual network has a distinct latent adjacency matrix at each layer, which can be viewed as a representation of its observed adjacency matrix at a lower resolution.
The entries of these latent adjacency matrices can be used for grouping individuals into clusters that are easily interpretable.
We define a multilayer directed graphical model (i.e., a Bayesian network) with latent adjacency matrices at each resolution.
The distribution of an individual's adjacency matrix at one layer is defined conditional on the adjacency matrix in the above layer.
It is worth emphasizing the distinction between the observed edges in the multiplex networks and the directed edges in our directed graphical model or Bayesian network:
The former determine entries of the observed adjacency matrices, while the latter indicate statistical dependence of variables and are unobserved.

We develop nontrivial identifiability theory for the proposed model, which is key to the interpretability and reproducibility of the results.
Our proof introduces a new technique to show generic identifiability, whose definition will be given later in Section \ref{sec:theo}.
Most existing proofs \citep{allman2009identifiability, gu2020partial, gu2021bayesian} for generic identifiability in related models use properties of algebraic varieties in algebraic geometry, and we have substantially generalized the approach by incorporating the concept of holomorphic functions from complex analysis.
This new technique may be of independent theoretical interest for studying the identifiability of other models.
Based on the novel identifiability result, we further use Doob's and Schwartz's theories \citep{doob1949application, schwartz1965bayes} to prove posterior consistency of the model parameters in Bayesian inference.

The rest of our paper is organized as follows. We describe our model in Section \ref{sec:model} and provide a rich set of theoretical results on its identifiability and posterior consistency in Section \ref{sec:theo}.
Our new technique for proving generic identifiability is also briefly introduced in Section \ref{sec:theo}.
In Section \ref{sec:post}, we specify the prior distribution and discuss its theoretical properties. Details of the Gibbs sampler for posterior computation and scalable versions are deferred to the supplementary materials. Section \ref{sec:appl} applies the model to brain connectome data. We discuss the implications of our model and future directions in Section \ref{sec:dis}.
Additional theoretical developments, comprehensive simulation studies, and extended data analyses are also provided in the supplementary materials.

\paragraph{Notations.}
We use bold capital letters (e.g. $\Xb$) to denote matrices and tensors, bold letters (e.g. $\xb$) to denote vectors, and nonbold letters (e.g. $X, x$) to denote scalars. For a square matrix $\Mb$, we let $\diag(\Mb)$ denote the diagonal matrix obtained by setting all off-diagonal entries of $\Mb$ to zero, and let $\offdiag(\Mb) := \Mb - \diag(\Mb)$ denote the matrix obtained by setting all diagonal entries of $\Mb$ to zero. For integer valued variables and integers $k_1, k_2 \in \ZZ$, we let $[k_1, k_2]$ denote the collection of integers $\{k_1, k_1 + 1, \ldots, k_2\}$.
For $k \in \ZZ$, we abbreviate $[k] := [1, k]$ and let $\langle k \rangle := \{(i, j):~ 1 \leq i < j \leq k\}$ denote the collection of all $\frac{k (k - 1)}{2}$ distinct pairs of integers in $[k]$.

\section{Deep Generative Model for Multiplex Networks}\label{sec:model}

\subsection{Two-layer Model}\label{ssec:model_2}

In this subsection, we first present the intuition behind the hierarchical structure of our model by introducing its two-layer version. Its general form with $K + 1$ layers will be formally introduced in Section \ref{ssec:model_k}.

We start with the community detection problem of a single network with $p$ nodes, where an adjacency matrix $\Xb \in \RR^{p \times p}$ is observed.
Assuming that the $p$ nodes each belong to one of the $\tilde{p}$ communities, we let the binary matrix $\Ab \in \RR^{p \times \tilde{p}}$ denote the community membership of the $p$ nodes, so that node $i$ belongs to the community $s$ if and only if $A_{i, s} = 1$.
Following traditional community detection methods for a single network, we can model the probability of nodes $i, j$ being adjacent as
\begin{equation}\label{eq:sbm}
\PP(X_{i, j} = 1 | \Ab, C, \gamma)
=
\frac{\exp\left(
C + \gamma \sum_{s = 1}^{\tilde{p}} A_{i, s} A_{j, s}
\right)}{1 + \exp\left(
C + \gamma \sum_{s = 1}^{\tilde{p}} A_{i, s} A_{j, s}
\right)}
,
\end{equation}
where $C, \gamma \in \RR$ are parameters.
The above formulation is a reparameterization of the basic stochastic block model \citep{holland1983stochastic}, where the probability that two nodes of the same community are adjacent is $\frac{\exp(C + \gamma)}{1 + \exp(C + \gamma)}$, and the probability that two nodes of different communities are adjacent is $\frac{\exp(C)}{1 + \exp(C)}$.
A common assumption in SBMs is that two nodes in the same community are more likely to be adjacent than nodes from different communities, that is, $\gamma > 0$.

For multiplex networks with $N$ observed adjacency matrices $\Xb^{(1:N)} := \{\Xb^{(n)}\}_{n = 1}^N$, the above model can be easily extended if we assume the $N$ networks share the common community memberships $\Ab$ and that the networks are conditionally independent given the community memberships:
\begin{equation}\label{eq-naive2layer}
\PP(\Xb^{(1:N)} | \Ab, C, \gamma)
=
\prod_{n = 1}^N \PP(\Xb^{(n)} | \Ab, C, \gamma)
=
\prod_{n = 1}^N \prod_{1 \le i < j \le p}
\frac{\exp\left( X_{i, j}^{(n)} \left(
C + \gamma \sum_{s = 1}^{\tilde{p}} A_{i, s} A_{j, s}
\right) \right)}{1 + \exp\left(
C + \gamma \sum_{s = 1}^{\tilde{p}} A_{i, s} A_{j, s}
\right)}
.
\end{equation}
However, this model does not account for heterogeneity among the individual networks.

We leverage information from multiplex networks to improve flexibility over SBMs while characterizing differences across the networks. With this in mind, we introduce \emph{latent adjacency matrices} for each network, which characterize connections between communities: $\tilde\Xb^{(1:N)} := \{\tilde\Xb^{(n)}\}_{n = 1}^N$.
Each $\tilde\Xb^{(n)} = (\tilde{X}^{(n)}_{s,t})$ is a binary $\tilde p \times \tilde p$ matrix, where $\tilde{X}^{(n)}_{s,t} = 1$ or $0$ indicates whether the latent community $s$ and the latent community $t$ are adjacent.
In \eqref{eq:sbm}, the basic SBM assumes that the nodes are more likely to be adjacent if they belong to the same community. We additionally assume that nodes in the $n$th individual network are more likely to be adjacent if they belong to adjacent communities in $\tilde\Xb^{(n)}$, controlled by the parameter $\delta > 0$. For individual $n$ with adjacency matrix $\tilde\Xb^{(n)}$, we let the probability of two nodes $i, j$ being adjacent be
\begin{equation}\label{eq-2layer2}
\PP(X_{i, j}^{(n)} = 1 | \Ab, C, \gamma, \delta, \tilde\Xb^{(n)})
=
\frac{\exp\left(
C + \gamma \sum_{s = 1}^{\tilde{p}} A_{i, s} A_{j, s} + \delta \sum_{1\leq s < t \leq \tilde{p}}  A_{i, s} A_{j, t} \tilde{X}_{s, t}^{(n)}
\right)}{1 + \exp\left(
C + \gamma \sum_{s = 1}^{\tilde{p}} A_{i, s} A_{j, s} + \delta \sum_{1\leq s < t \leq \tilde{p}} A_{i, s} A_{j, t} \tilde{X}_{s, t}^{(n)}
\right)}
.
\end{equation}
When two nodes are from different but adjacent communities, we have $\sum_{s = 1}^{\tilde{p}} A_{i, s} A_{j, s} = 0$ but $ \sum_{1\leq s < t \leq \tilde{p}} A_{i, s} A_{j, t} \tilde{X}_{s, t}^{(n)} = 1$, so the adjacency probability of these two nodes is $\frac{\exp(C + \delta)}{1 + \exp(C + \delta)}$. Compared to \eqref{eq-naive2layer}, the model \eqref{eq-2layer2} can accommodate heterogeneity between multiple networks through individual-specific latent adjacency matrices $\tilde\Xb^{(n)}$ for $n=1,\ldots,N$.
See Figure \ref{fig:tikz_layer2} for an illustration with $N = 3$ individual networks.

\begin{figure}[ht]
\centering
\begin{tikzpicture}[scale = 0.8]
\node[anchor = west] at (-2.5, 0.5) {\small{Nodes}};
\node[anchor = west] at (-3.3, 2.5) {\small{Communities}};
\draw (-1.9, 0.7) edge[arrow] (-1.9, 2.3);
\node (n11)[layer11] at (0, 0.5) {};
\node (n12)[layer11] at (1, 0) {};
\node (n13)[layer11] at (1, 1) {};
\node (n14)[layer12] at (2, 0) {};
\node (n15)[layer12] at (2, 1) {};
\node (n16)[layer13] at (3, 0) {};
\node (n17)[layer13] at (3, 1) {};
\node (n18)[layer13] at (4, 0.5) {};
\draw (n11) -- (n12);
\draw (n11) -- (n13);
\draw (n12) -- (n13);
\draw (n14) -- (n15);
\draw (n16) -- (n17);
\draw (n17) -- (n18);
\draw[style = dashed] (-0.3, -0.3) -- (1.3, -0.3) -- (1.3, 1.3) -- (-0.3, 1.3) -- (-0.3, -0.3);
\draw[style = dashed] (1.7, -0.3) -- (2.3, -0.3) -- (2.3, 1.3) -- (1.7, 1.3) -- (1.7, -0.3);
\draw[style = dashed] (2.7, -0.3) -- (4.3, -0.3) -- (4.3, 1.3) -- (2.7, 1.3) -- (2.7, -0.3);
\node (c11)[layer01] at (0.5, 2.5) {};
\node (c12)[layer02] at (2, 2.5) {};
\node (c13)[layer03] at (3.5, 2.5) {};
\draw (0.5, 1.3) edge[arrow] (c11);
\draw (2, 1.3) edge[arrow] (c12);
\draw (3.5, 1.3) edge[arrow] (c13);
\node (n21)[layer11] at (0 + \tikzdist, 0.5) {};
\node (n22)[layer11] at (1 + \tikzdist, 0) {};
\node (n23)[layer11] at (1 + \tikzdist, 1) {};
\node (n24)[layer12] at (2 + \tikzdist, 0) {};
\node (n25)[layer12] at (2 + \tikzdist, 1) {};
\node (n26)[layer13] at (3 + \tikzdist, 0) {};
\node (n27)[layer13] at (3 + \tikzdist, 1) {};
\node (n28)[layer13] at (4 + \tikzdist, 0.5) {};
\draw (n21) -- (n22);
\draw (n22) -- (n23);
\draw (n24) -- (n25);
\draw (n26) -- (n27);
\draw (n26) -- (n28);
\draw (n27) -- (n28);
\draw[style = dashed] (-0.3 + \tikzdist, -0.3) -- (1.3 + \tikzdist, -0.3) -- (1.3 + \tikzdist, 1.3) -- (-0.3 + \tikzdist, 1.3) -- (-0.3 + \tikzdist, -0.3);
\draw[style = dashed] (1.7 + \tikzdist, -0.3) -- (2.3 + \tikzdist, -0.3) -- (2.3 + \tikzdist, 1.3) -- (1.7 + \tikzdist, 1.3) -- (1.7 + \tikzdist, -0.3);
\draw[style = dashed] (2.7 + \tikzdist, -0.3) -- (4.3 + \tikzdist, -0.3) -- (4.3 + \tikzdist, 1.3) -- (2.7 + \tikzdist, 1.3) -- (2.7 + \tikzdist, -0.3);
\node (c21)[layer01] at (0.5 + \tikzdist, 2.5) {};
\node (c22)[layer02] at (2 + \tikzdist, 2.5) {};
\node (c23)[layer03] at (3.5 + \tikzdist, 2.5) {};
\draw (0.5 + \tikzdist, 1.3) edge[arrow] (c21);
\draw (2 + \tikzdist, 1.3) edge[arrow] (c22);
\draw (3.5 + \tikzdist, 1.3) edge[arrow] (c23);
\node (n31)[layer11] at (0 + \tikzdist * 2, 0.5) {};
\node (n32)[layer11] at (1 + \tikzdist * 2, 0) {};
\node (n33)[layer11] at (1 + \tikzdist * 2, 1) {};
\node (n34)[layer12] at (2 + \tikzdist * 2, 0) {};
\node (n35)[layer12] at (2 + \tikzdist * 2, 1) {};
\node (n36)[layer13] at (3 + \tikzdist * 2, 0) {};
\node (n37)[layer13] at (3 + \tikzdist * 2, 1) {};
\node (n38)[layer13] at (4 + \tikzdist * 2, 0.5) {};
\draw (n31) -- (n32);
\draw (n31) -- (n33);
\draw (n32) -- (n33);
\draw (n34) -- (n35);
\draw (n36) -- (n37);
\draw (n36) -- (n38);
\draw (n37) -- (n38);
\draw[style = dashed] (-0.3 + \tikzdist * 2, -0.3) -- (1.3 + \tikzdist * 2, -0.3) -- (1.3 + \tikzdist * 2, 1.3) -- (-0.3 + \tikzdist * 2, 1.3) -- (-0.3 + \tikzdist * 2, -0.3);
\draw[style = dashed] (1.7 + \tikzdist * 2, -0.3) -- (2.3 + \tikzdist * 2, -0.3) -- (2.3 + \tikzdist * 2, 1.3) -- (1.7 + \tikzdist * 2, 1.3) -- (1.7 + \tikzdist * 2, -0.3);
\draw[style = dashed] (2.7 + \tikzdist * 2, -0.3) -- (4.3 + \tikzdist * 2, -0.3) -- (4.3 + \tikzdist * 2, 1.3) -- (2.7 + \tikzdist * 2, 1.3) -- (2.7 + \tikzdist * 2, -0.3);
\node (c31)[layer01] at (0.5 + \tikzdist * 2, 2.5) {};
\node (c32)[layer02] at (2 + \tikzdist * 2, 2.5) {};
\node (c33)[layer03] at (3.5 + \tikzdist * 2, 2.5) {};
\draw (0.5 + \tikzdist * 2, 1.3) edge[arrow] (c31);
\draw (2 + \tikzdist * 2, 1.3) edge[arrow] (c32);
\draw (3.5 + \tikzdist * 2, 1.3) edge[arrow] (c33);
\node[anchor = west] at (0.7, 3.5) {\small{Individual 1}};
\node[anchor = west] at (0.7 + \tikzdist, 3.5) {\small{Individual 2}};
\node[anchor = west] at (0.7 + \tikzdist * 2, 3.5) {\small{Individual 3}};
\draw[thick, style = dotted] (-0.4, -0.4) -- (4.4, -0.4) -- (4.4, 4) -- (-0.4, 4) -- (-0.4, -0.4);
\draw[thick, style = dotted] (-0.4 + \tikzdist, -0.4) -- (4.4 + \tikzdist, -0.4) -- (4.4 + \tikzdist, 4) -- (-0.4 + \tikzdist, 4) -- (-0.4 + \tikzdist, -0.4);
\draw[thick, style = dotted] (-0.4 + \tikzdist * 2, -0.4) -- (4.4 + \tikzdist * 2, -0.4) -- (4.4 + \tikzdist * 2, 4) -- (-0.4 + \tikzdist * 2, 4) -- (-0.4 + \tikzdist * 2, -0.4);
\draw (n12) -- (n14);
\draw (n12) -- (n15);
\draw (n13) -- (n15);
\draw (n15) -- (n17);
\draw (c11) -- (c12);
\draw (n23) -- (n25);
\draw (n24) -- (n26);
\draw (n25) -- (n26);
\draw (n25) -- (n27);
\draw (c22) -- (c23);
\draw (n32) -- (n34);
\draw (n32) -- (n35);
\draw (n33) -- (n35);
\draw (n34) -- (n36);
\draw (n35) -- (n36);
\draw (c31) -- (c32);
\draw (c32) -- (c33);
\end{tikzpicture}
\caption{
An illustration of node adjacency and community adjacency in three individuals, representing the two-layer version of our model with single community membership.
}
\label{fig:tikz_layer2}
\end{figure}


Flexible SBMs typically allow the adjacency probabilities to vary according to the community memberships.
Similarly, we can allow $\gamma$ in \eqref{eq-2layer2} to be different for each community $s$, and $\delta$ to be different for each pair of communities $s, t$.
To this end, we define
$$
\bGamma
:=
\left(\begin{matrix}
\gamma_{1, 1} & \delta_{1, 2} & \ldots & \delta_{1, \tilde{p}} \\
\delta_{2, 1} & \gamma_{2, 2} & \ldots & \delta_{2, \tilde{p}} \\
\vdots & \vdots & \ddots & \vdots \\
\delta_{\tilde{p}, 1} & \delta_{\tilde{p}, 2} & \ldots & \gamma_{\tilde{p}, \tilde{p}}
\end{matrix}\right)
$$
as a symmetric $\tilde{p} \times \tilde{p}$ matrix with positive entries.
Rewriting \eqref{eq-2layer2} in matrix form, the probability of two nodes $i, j$ in individual network $n$ being adjacent is:
$$
\PP(X_{i, j}^{(n)} = 1 | \Ab, C, \bGamma)
=
\frac{\exp\left(
C + \ab_i^\top (\tilde\Xb^{(n)} * \bGamma) \ab_j
\right)}{1 + \exp\left(
C + \ab_i^\top (\tilde\Xb^{(n)} * \bGamma) \ab_j
\right)}
,
$$
where $\ab_i^\top \in \RR^{1 \times \tilde{p}}$ denotes the $i$th row of $\Ab$ and $*$ denotes the Hadamard product.
We avoid modeling self-adjacency of nodes and fix the diagonal entries of the adjacency matrix to all ones for notation simplicity. Letting $\bTheta := (C, \bGamma)$ denote the model parameters, the model can be written as
\begin{equation}\label{eq:half_model_2layer}
\PP(\Xb^{(1:N)} | \Ab, \bTheta, \tilde\Xb^{(1:N)})
=
\prod_{n = 1}^N \prod_{1 \le i < j \le p} \frac{\exp\left( X_{i, j}^{(n)} \left(
C + \ab_i^\top (\tilde\Xb^{(n)} * \bGamma) \ab_j
\right) \right)}{1 + \exp\left(
C + \ab_i^\top (\tilde\Xb^{(n)} * \bGamma) \ab_j
\right)}
.
\end{equation}

Restricting each node to have a single membership is often insufficiently flexible. For this reason, we consider an \emph{overlapping membership} model.
Overlapping membership models \citep{yang2013overlapping}
characterize nodes as belonging to a community or not instead of assigning weights characterizing proportional memberships as in mixed membership models \citep{airoldi2008mixed}. In our model, each row of $\Ab$ is a binary vector in $\{0,1\}^{\tilde p}$ with no restriction, such as all entries summing up to one. 
For two nodes $i, j$ of an individual network $n$, if \emph{either} the number of their shared communities is larger ($\sum_{s = 1}^{\tilde{p}} A_{i, s} A_{j, s}$ larger), \emph{or} the number of adjacent pairs of communities to which they separately belong is larger ($\sum_{1\leq s < t \leq \tilde{p}} A_{i, s} A_{j, t} \tilde{X}_{s, t}^{(n)}$ larger), then the probability that nodes $i, j$ are adjacent should also be higher. We make these effects additive, so that for each community $s$ to which the nodes $i, j$ both belong, we increase $\logit\{\PP(X_{i, j} = 1)\}$ by $\Gamma_{s, s}$, and for every adjacent pair of communities $s, t$ to which the nodes $i, j$ separately belong, we increase $\logit\{\PP(X_{i, j} = 1)\}$ by $\Gamma_{s, t}$. Our model with overlapping communities can still be written as \eqref{eq:half_model_2layer}.
Although we focus on $\Gamma_{s, t} > 0$ in this paper, our model can accommodate disassortative mixing, that is, nodes from different communities are more likely to be connected, allowing $\Gamma_{s, t} < 0$.

We model the distribution of $\tilde\Xb^{(1:N)}$ without structural constraints.
There are $2^{\frac{\tilde{p} (\tilde{p} - 1)}{2}}$ distinct $\tilde{p} \times \tilde{p}$ adjacency matrices, so the distribution of $\tilde\Xb^{(1:N)}$ can be characterized via a categorical distribution with parameter $\bnu$ belonging to the probability simplex $\cS^{2^{\frac{\tilde{p} (\tilde{p} - 1)}{2}} - 1} \subset \RR^{2^{\frac{\tilde{p} (\tilde{p} - 1)}{2}}}$. Our model is most useful practically when the dimension $\tilde{p}$ is small so that the adjacency matrix is at most $4 \times 4$ or so. 

We have now arrived at the 2-layer version of our model with the following joint distribution of observed $\Xb^{(1:N)}$ and latent $\tilde\Xb^{(1:N)}$:
$$
\PP(\Xb^{(1:N)}, \tilde\Xb^{(1:N)} | \Ab, \bTheta)
=
\prod_{n = 1}^N \left(
\PP(\tilde\Xb^{(n)} | \bnu) \prod_{1 \le i < j \le p} \frac{\exp\left( X_{i, j}^{(n)} \left(
C + \ab_i^\top (\tilde\Xb^{(n)} * \bGamma) \ab_j
\right) \right)}{1 + \exp\left(
C + \ab_i^\top (\tilde\Xb^{(n)} * \bGamma) \ab_j
\right)}
\right)
.
$$
As the latent adjacency matrices  $\tilde\Xb^{(1:N)}$ form a sample of multiplex networks, the same approach of this subsection can be applied to model the distribution of the latent $\tilde\Xb^{(1:N)}$ and our model can be formulated hierarchically across multiple layers, forming a deep generative model for multiplex networks.
This model is described in the following subsection.

\subsection{General \texorpdfstring{$(K + 1)$}{}-layer Model}\label{ssec:model_k}

With the intuition for the two-layer model established in the previous section, we now describe the general version of our model with $K + 1$ layers.
Let $\cN$ be a Bayesian network of $(K + 1)$ layers with the 0th layer $\cN_0$ at the top and the $K$th layer $\cN_K$ at the bottom. Each layer $k$ contains $p_k$ nodes with $p_k \ge p_{k - 1}$ for all $k \in [K]$, such that the network $\cN$ is pyramid-shaped. The 0th layer typically contains very few nodes (e.g. $p_0 = 2, 3, 4$). Only directed edges between nodes in two consecutive layers are allowed, which always point from the deeper layer to the shallower layer.
Between two consecutive layers $\cN_{k - 1}$ and $\cN_k$, we use a binary connection matrix {\em} $\Ab_k \in \{0, 1\}^{p_k \times p_{k - 1}}$ to denote the bipartite connections between the $p_k$ nodes of $\cN_k$ and the $p_{k - 1}$ nodes of $\cN_{k - 1}$. Using the intuition from the previous subsection, we can view nodes in the deeper layer $\cN_{k - 1}$ as defining communities for nodes in the shallower layer $\cN_k$, such that the connection matrix $\Ab_k$ encodes the overlapping community membership.
See Figure \ref{fig:tikz_A} for an illustration with three layers.

\begin{figure}[ht]
\centering
\subfigure[
Population Bayesian network $\cN$ and connection matrices $\Ab_k$ between layers $(k - 1), k$.
]{
\centering
\begin{tikzpicture}[scale = 0.9]
\node (n21)[layer2] at (0, 0) {\small{$\cN_{2, 1}$}};
\node (n22)[layer2] at (1, 0) {\small{$\cN_{2, 2}$}};
\node (n23)[layer2] at (2, 0) {\small{$\cN_{2, 3}$}};
\node (n24)[layer2] at (3, 0) {\small{$\cN_{2, 4}$}};
\node (n25)[layer2] at (4, 0) {\small{$\cN_{2, 5}$}};
\node (n26)[layer2] at (5, 0) {\small{$\cN_{2, 6}$}};
\node (n11)[layer1] at (1, 2) {\small{$\cN_{1, 1}$}};
\node (n12)[layer1] at (2, 2) {\small{$\cN_{1, 2}$}};
\node (n13)[layer1] at (3, 2) {\small{$\cN_{1, 3}$}};
\node (n14)[layer1] at (4, 2) {\small{$\cN_{1, 4}$}};
\node (n01)[layer0] at (2, 4) {\small{$\cN_{0, 1}$}};
\node (n02)[layer0] at (3, 4) {\small{$\cN_{0, 2}$}};
\draw (n21) edge[arrow] (n11);
\draw (n22) edge[arrow] (n11);
\draw (n23) edge[arrow] (n11);
\draw (n23) edge[arrow] (n12);
\draw (n24) edge[arrow] (n12);
\draw (n24) edge[arrow] (n13);
\draw (n24) edge[arrow] (n14);
\draw (n25) edge[arrow] (n13);
\draw (n25) edge[arrow] (n14);
\draw (n26) edge[arrow] (n14);
\draw (n11) edge[arrow] (n01);
\draw (n12) edge[arrow] (n01);
\draw (n13) edge[arrow] (n02);
\draw (n14) edge[arrow] (n02);
\draw[style = dashed] (-0.5, -0.6) -- (5.5, -0.6) -- (5.5, 0.6) -- (-0.5, 0.6) -- (-0.5, -0.6);
\draw[style = dashed] (-0.5, 1.4) -- (5.5, 1.4) -- (5.5, 2.6) -- (-0.5, 2.6) -- (-0.5, 1.4);
\draw[style = dashed] (-0.5, 3.4) -- (5.5, 3.4) -- (5.5, 4.6) -- (-0.5, 4.6) -- (-0.5, 3.4);
\node (l2)[anchor = west] at (5.7, 0) {\textcolor{black}{\small{Layer 2}}};
\node (l1)[anchor = west] at (5.7, 2) {\textcolor{black}{\small{Layer 1}}};
\node (l0)[anchor = west] at (5.7, 4) {\textcolor{black}{\small{Layer 0}}};
\draw (l2) edge[arrow] (l1);
\draw (l1) edge[arrow] (l0);
\node[anchor = west] at (6.5, 1) {\textcolor{black}{\small{$\Ab_2$}}};
\node[anchor = west] at (6.5, 3) {\textcolor{black}{\small{$\Ab_1$}}};
\node[anchor = west] at (0.5, 5) {\textcolor{black}{\small{Bayesian Network $\cN$}}};
\node (dm)[anchor = west] at (0.5, -0.55) {\textcolor{white}{Dummy}};
\end{tikzpicture}
\label{fig:tikz_A}
}
\hfill
\subfigure[
Individual adjacency matrices $\Xb_k^{(1:N)}$ at each layer $k$.
]{
\centering
\begin{tikzpicture}[inner sep = 0pt, scale = 0.9]
\tikzmath{ let \hdist = 2.3; let \vdist = 2;
for \x in {1,...,6}{ for \y in {1,...,6}{ if{
(\x == \y)
|| (\x < 4 && \y < 4)
|| (\x > 3 && \y > 3)
|| (\x == 3 && \y == 4) || (\x == 4 && \y == 3)
|| (\x == 2 && \y == 6) || (\x == 6 && \y == 2)
}then{
let \c = green!50;
}else{
let \c = black!10;
}; {
\fill[\c] (0.2 * \x, \vdist - 0.2 * \y - 0.5) rectangle + (0.19, 0.19);
}; }; };
for \x in {1,...,6}{ for \y in {1,...,6}{ if{
(\x == \y)
|| (\x < 3 && \y < 3)
|| (\x > 3 && \y > 3)
|| (\x == 3 && \y == 4) || (\x == 4 && \y == 3)
|| (\x == 2 && \y == 3) || (\x == 3 && \y == 2)
|| (\x == 2 && \y == 4) || (\x == 4 && \y == 2)
}then{
let \c = green!50;
}else{
let \c = black!10;
}; {
\fill[\c] (\hdist + 0.2 * \x, \vdist - 0.2 * \y - 0.5) rectangle + (0.19, 0.19);
}; }; };
for \x in {1,...,6}{ for \y in {1,...,6}{ if{
(\x == \y)
|| (\x < 4 && \y < 4)
|| (\x > 4 && \y > 4)
|| (\x == 3 && \y == 4) || (\x == 4 && \y == 3)
|| (\x == 4 && \y == 5) || (\x == 5 && \y == 4)
|| (\x == 2 && \y == 5) || (\x == 5 && \y == 2)
}then{
let \c = green!50;
}else{
let \c = black!10;
}; {
\fill[\c] (\hdist * 2 + 0.2 * \x, \vdist - 0.2 * \y - 0.5) rectangle + (0.19, 0.19);
}; }; };
for \x in {1,...,4}{ for \y in {1,...,4}{ if{
(\x == \y)
|| (\x < 3 && \y < 3)
|| (\x > 2 && \y > 2)
||(\x + \y == 5)
}then{
let \c = blue!50;
}else{
let \c = black!10;
}; {
\fill[\c] (0.3 * \x - 0.1, \vdist * 2 - 0.3 * \y - 0.5) rectangle + (0.29, 0.29);
}; }; };
for \x in {1,...,4}{ for \y in {1,...,4}{ if{
(\x == \y)
|| (\x < 3 && \y < 3)
|| (\x > 2 && \y > 2)
|| (\x == 1 && \y == 3) || (\x == 3 && \y == 1)
}then{
let \c = blue!50;
}else{
let \c = black!10;
}; {
\fill[\c] (\hdist + 0.3 * \x - 0.1, \vdist * 2 - 0.3 * \y - 0.5) rectangle + (0.29, 0.29);
}; }; };
for \x in {1,...,4}{ for \y in {1,...,4}{ if{
(\x == \y)
|| (\x < 3 && \y < 3)
|| (\x > 2 && \y > 2)
}then{
let \c = blue!50;
}else{
let \c = black!10;
}; {
\fill[\c] (\hdist * 2 + 0.3 * \x - 0.1, \vdist * 2 - 0.3 * \y - 0.5) rectangle + (0.29, 0.29);
}; }; };
for \x in {1, 2}{ for \y in {1, 2}{ if{
(\x == \y)
|| (\x + \y == 3)
}then{
let \c = red!50;
}else{
let \c = black!10;
}; {
\fill[\c] (0.6 * \x - 0.4, \vdist * 3 - 0.6 * \y - 0.5) rectangle + (0.59, 0.59);
}; }; };
for \x in {1, 2}{ for \y in {1, 2}{ if{
(\x == \y)
}then{
let \c = red!50;
}else{
let \c = black!10;
}; {
\fill[\c] (\hdist + 0.6 * \x - 0.4, \vdist * 3 - 0.6 * \y - 0.5) rectangle + (0.59, 0.59);
}; }; };
for \x in {1, 2}{ for \y in {1, 2}{ if{
(\x == \y)
}then{
let \c = red!50;
}else{
let \c = black!10;
}; {
\fill[\c] (\hdist * 2 + 0.6 * \x - 0.4, \vdist * 3 - 0.6 * \y - 0.5) rectangle + (0.59, 0.59);
}; }; };
}
\node[anchor = west] at (\hdist * 0 + 1.4, \vdist * 1 - 1.1) {\textcolor{black}{\small{$\Xb_2^{(1)}$}}};
\node[anchor = west] at (\hdist * 0 + 1.4, \vdist * 2 - 1.1) {\textcolor{black}{\small{$\Xb_1^{(1)}$}}};
\node[anchor = west] at (\hdist * 0 + 1.4, \vdist * 3 - 1.1) {\textcolor{black}{\small{$\Xb_0^{(1)}$}}};
\node[anchor = west] at (\hdist * 1 + 1.4, \vdist * 1 - 1.1) {\textcolor{black}{\small{$\Xb_2^{(2)}$}}};
\node[anchor = west] at (\hdist * 1 + 1.4, \vdist * 2 - 1.1) {\textcolor{black}{\small{$\Xb_1^{(2)}$}}};
\node[anchor = west] at (\hdist * 1 + 1.4, \vdist * 3 - 1.1) {\textcolor{black}{\small{$\Xb_0^{(2)}$}}};
\node[anchor = west] at (\hdist * 2 + 1.4, \vdist * 1 - 1.1) {\textcolor{black}{\small{$\Xb_2^{(3)}$}}};
\node[anchor = west] at (\hdist * 2 + 1.4, \vdist * 2 - 1.1) {\textcolor{black}{\small{$\Xb_1^{(3)}$}}};
\node[anchor = west] at (\hdist * 2 + 1.4, \vdist * 3 - 1.1) {\textcolor{black}{\small{$\Xb_0^{(3)}$}}};
\draw[style = dashed] (0, 1 - 0.8) -- (\hdist * 3, 1 - 0.8) -- (\hdist * 3, 1 + 0.7) -- (0, 1 + 0.7) -- (0, 1 - 0.8);
\draw[style = dashed] (0, 3 - 0.8) -- (\hdist * 3, 3 - 0.8) -- (\hdist * 3, 3 + 0.7) -- (0, 3 + 0.7) -- (0, 3 - 0.8);
\draw[style = dashed] (0, 5 - 0.8) -- (\hdist * 3, 5 - 0.8) -- (\hdist * 3, 5 + 0.7) -- (0, 5 + 0.7) -- (0, 5 - 0.8);
\node[anchor = west] at (\hdist * 3 + 0.2, 4.9) {\textcolor{black}{\small{Layer 0}}};
\node[anchor = west] at (\hdist * 3 + 0.2, 2.9) {\textcolor{black}{\small{Layer 1}}};
\node[anchor = west] at (\hdist * 3 + 0.2, 0.9) {\textcolor{black}{\small{Layer 2}}};
\node[anchor = west] at (\hdist * 0, 6.2) {\textcolor{black}{\small{Individual 1}}};
\node[anchor = west] at (\hdist * 1, 6.2) {\textcolor{black}{\small{Individual 2}}};
\node[anchor = west] at (\hdist * 2, 6.2) {\textcolor{black}{\small{Individual 3}}};
\draw[color = black!50] (\hdist * 1 - 0.06, 0) -- (\hdist * 1 - 0.08, \vdist * 3 + 0.5);
\draw[color = black!50] (\hdist * 2 - 0.06, 0) -- (\hdist * 2 - 0.08, \vdist * 3 + 0.5);
\end{tikzpicture}
\label{fig:tikz_X}
}
\caption{
An illustration of our model, where the Bayesian network $\cN$ in (a) has $(K + 1) = 3$ layers with $p_k = 2(k + 1)$ nodes in each layer $k$, and the sample in (b) has $N = 3$ individuals.
Connection matrices: $\mathbf{A}_1 = (1,0;~ 1,0;~ 0,1;~ 0,1)$, and $\mathbf{A}_2 = (1,0,0,0;~ 1,0,0,0;~ 1,1,0,0;~ 0,1,1,1;~ 0,0,1,1;~ 0,0,0,1)$.
}
\label{fig:tikz_layer3}
\end{figure}


For a sample of $N$ individual networks, we view the observed adjacency matrix of each individual as a realization from the population distribution of adjacency matrices modeled by our Bayesian network as follows. For individual $n$, the adjacency between the $p_k$ nodes of layer $\cN_k$ is indicated by the symmetric binary adjacency matrix $\Xb_k^{(n)} \in \{0, 1\}^{p_k \times p_k}$. We view each node as being adjacent to itself, such that all diagonal entries of each adjacency matrix are ones. Only the bottom layer adjacency matrix $\Xb_K^{(n)}$ is observed, while the upper layer adjacency matrices $\Xb_k^{(n)}$'s ($k \in [0, K - 1]$) are viewed as latent variables and the connection matrices $\Ab_k$'s ($k \in [K]$) are viewed as unknown model parameters.
See Figure \ref{fig:tikz_X} for an illustration.
Let $\Xb^{(n)} := \{\Xb_k^{(n)}\}_{k \in [0, K]}$ denote the collection of adjacency matrices and $\Ab := \{\Ab_k\}_{k \in [K]}$ denote the collection of connection matrices.
The joint distribution of $\Xb^{(n)}$ given $\Ab$ is
$$
\PP(\Xb^{(n)} | \Ab, \bTheta)
:=
\PP(\Xb_0^{(n)} | \bnu) \prod_{k = 1}^K \PP(\Xb_k^{(n)} | \Xb_{k - 1}^{(n)}, \Ab_k, \bTheta_k)
,
$$
where $\bTheta := \{\bnu, \bTheta_k: k \in [K]\}$ are model parameters, and $\PP(\Xb_0^{(n)} | \bnu)$ is a categorical distribution over all $p_0 \times p_0$ adjacency matrices parameterized by $\bnu$ in the probability simplex $\cS^{2^{\frac{p_0(p_0 - 1)}{2}} - 1}$.
For $k \in [K]$, each $\Xb_k^{(n)}$ is symmetric, has a diagonal of all ones, and we assume the lower-triangular entries are conditionally independent given $\Xb_{k - 1}^{(n)}, \Ab_k, \bTheta_k$ following the Bernoulli distribution
\begin{equation}\label{eq:model_entry}
\PP(X_{k, i, j}^{(n)} | \Xb_{k - 1}^{(n)}, \Ab_k, \bTheta_k)
=
\frac{\exp\left(X_{k, i, j}^{(n)} \left(
C_k + \ab_{k, i}^\top (\bGamma_k * \Xb_{k - 1}^{(n)}) \ab_{k, j}
\right) \right)}{1 + \exp\left(
C_k + \ab_{k, i}^\top (\bGamma_k * \Xb_{k - 1}^{(n)}) \ab_{k, j}
\right)}
.
\end{equation}
Here $X_{k, i, j}^{(n)}$ denotes the $(i, j)$th entry of matrix $\Xb_k^{(n)}$, $\ab_{k, i}^\top \in \RR^{1 \times p_{k - 1}}$ denotes the $i$th row of matrix $\Ab_k$, and $\{C_k, \bGamma_k\}$ together form the model parameter $\bTheta_k$ of the $k$th layer.
Parameter $C_k \in \RR$ is a scalar, and $\bGamma_k \in \RR_+^{p_{k - 1} \times p_{k - 1}}$ is a symmetric matrix with positive entries. 
Between every two consecutive layers $\cN_k$ and $\cN_{k - 1}$, the parameters $\Ab_k, C_k, \bGamma_k$ are as we introduced in the previous section for the two-layer version of our model.

The joint distribution of observed and latent adjacency matrices for all $N$ individuals at all $(K + 1)$ layers under our model is
\begin{equation}\label{eq:model}
\PP(\Xb^{(1:N)} | \Ab, \bTheta)
=
\prod_{n = 1}^N \left(
\PP(\Xb_0^{(n)} | \bnu) \prod_{k = 1}^K \prod_{1 \le i < j \le p_k} \frac{\exp\left( X_{k, i, j}^{(n)} \left(
C_k + \ab_{k, i}^\top (\bGamma_k * \Xb_{k - 1}^{(n)}) \ab_{k, j}
\right) \right)}{1 + \exp\left(
C_k + \ab_{k, i}^\top (\bGamma_k * \Xb_{k - 1}^{(n)}) \ab_{k, j}
\right)}
\right)
.
\end{equation}
Marginalizing out all latent adjacency matrices $\Xb_0^{(1:N)},\ldots,\Xb_{K-1}^{(1:N)}$ gives the marginal distribution of the observed adjacency matrices $\Xb_K^{(1:N)}$ for the multiplex networks.
Let $\cA_0$ denote the parameter space containing all $\Ab$'s with each $\Ab_k$ being a $p_k \times p_{k - 1}$ binary matrix:
\begin{equation}\label{eq:A0_def}
\cA_0
:=
\Big\{
\Ab:~
\forall k \in [K]
,~
\Ab_k \in \{0, 1\}^{p_k \times p_{k - 1}}
\Big\}
,
\end{equation}
and let $\cT_0$ denote the continuous parameter space containing all $\bTheta$'s:
\begin{equation}\label{eq:T0_def}
\cT_0
:=
\Big\{
\bTheta:~
\bnu \in \cS_+^{2^{\frac{p_0 (p_0 - 1)}{2}} - 1}
,~
\forall k \in [K]
,~
i, j \in [p_k]
,~
C_k \in \RR,~ \Gamma_{k, i, j} = \Gamma_{k, j, i} > 0
\Big\}
,
\end{equation}
where for any $m \in \NN$, $\cS_+^m := \{\bnu \in \RR^{m + 1}: ~ \nu_i > 0~ \forall i \in [m + 1],~ \one^\top \bnu = 1\}$ denotes the interior of the probability simplex $\cS^m \subset \RR^{m + 1}$.
Let $\cX_k$ denote the collection of all $p_k \times p_k$ adjacency matrices for each $k \in [0, K]$, defined as
\begin{equation}\label{eq:Xk_def}
\cX_k
:=
\big\{
\Xb_k \in \{0, 1\}^{p_k \times p_k}:~
\forall i, j \in [p_k],~
X_{k, i, i} = 1,~ X_{k, i, j} = X_{k, j, i}
\big\}
.
\end{equation}
The parameter spaces of our model considered in this work will be subsets of $\cA_0 \times \cT_0$, and each $\cX_k$ is the space of adjacency matrices $\Xb_k$.

\section{Theoretical Properties}\label{sec:theo}

In this section, we establish novel theoretical results on identifiability and posterior consistency of suitable Bayesian estimators.
At a high level, identifiability of our model means all unknown parameters (including both continuous $\bTheta$ and discrete connection matrices $\Ab$) can be uniquely identified and recovered from the marginal distribution of the observed adjacency matrix.
There exists the usual trivial type of non-identifiability issue of label switching \citep{gyllenberg1994non}.
In particular, latent adjacency matrices  $\Xb_k^{(n)}$ for $k \in [0, K - 1]$ are discrete latent class variables and the posterior is invariant to permutations of their rows and columns.
We follow standard practice in dealing with the label switching issue using post-processing methods in \citet{stephens2000dealing}, and focus on the nontrivial task of proving identifiability of our model up to a permutation of the nodes in each layer.
Proofs of all theorems in this section are deferred to the supplementary material Sections \ref{sec:iden}, \ref{sec:class_M}, \ref{sec:post_theo}.

Let $\cR \subset \cA_0 \times \cT_0$ denote a parameter space.
For a parameter $(\Ab, \bTheta) \in \cR$, we denote the set of all its permutations in $\cR$ by
\begin{align}
\perm_\cR(\Ab, \bTheta)
:=
\Big\{
(\Ab', \bTheta') \in \cR:~
&
\forall k \in [K]
,~
\Ab_k'
=
\Pb_k \Ab_k \Pb_{k - 1}^\top
,~
C_k'
=
C_k
,~
\bGamma_k'
=
\Pb_{k - 1} \bGamma_k \Pb_{k - 1}^\top
,\notag\\&
\forall \Xb_0 \in \cX_0
,~
\PP(\Pb_0 \Xb_0 \Pb_0^\top | \bnu') = \PP(\Xb_0 | \bnu)
\Big\}
,
\label{eq:sigma_def}
\end{align}
where $\Pb_k$ are arbitrary $p_k \times p_k$ permutation matrices for $k \in [0, K - 1]$ and $\Pb_K := \Ib_{p_K}$.
Any combination of permutation matrices $\Pb_0, \ldots, \Pb_{K - 1}$ corresponds to a way of permuting the nodes in each layer of our Bayesian network, so that the set $\perm_\cR(\Ab, \bTheta)$ contains all model parameters that are equivalent to $(\Ab, \bTheta)$ up to label switching.
We also denote the set of all parameters that generate the same marginal distribution of $\Xb_K$ by
\begin{equation}\label{eq:pi_def}
\marg_\cR(\Ab, \bTheta)
:=
\Big\{
(\Ab', \bTheta') \in \cR:~
\forall \Xb_K \in \cX_K,~
\PP(\Xb_K | \Ab, \bTheta) = \PP(\Xb_K | \Ab', \bTheta')
\Big\}
.
\end{equation}
Since the marginal distribution $\PP(\Xb_K | \Ab, \bTheta)$ is invariant under the permutations of $(\Ab, \bTheta)$, we always have $\perm_\cR(\Ab, \bTheta) \subseteq \marg_\cR(\Ab, \bTheta)$.
We say a parameter $(\Ab, \bTheta)$ is identifiable in parameter space $\cR$ if $\perm_\cR(\Ab, \bTheta) = \marg_\cR(\Ab, \bTheta)$, and non-identifiable in $\cR$ if $\perm_\cR(\Ab, \bTheta) \subsetneq \marg_\cR(\Ab, \bTheta)$.
Following \cite{allman2009identifiability}, we next define strict and generic notions of model identifiability.

\begin{definition}[Strict Identifiability]\label{defi:strict}
The model with parameter space $\cR$ is \textit{strictly identifiable} if $\perm_\cR(\Ab, \bTheta) = \marg_\cR(\Ab, \bTheta)$ for all $(\Ab, \bTheta) \in \cR$.
\end{definition}

Strict identifiability means that each parameter in the parameter space is identifiable.
While it is a necessary condition for posterior consistency under all possible true parameters, it is often hard to establish without constraining the parameter space \citep{gu2021bayesian}.
A slightly weaker notion, called generic identifiability, means that the set of non-identifiable parameters have Lebesgue measure zero.
Generic identifiability often suffices for data analysis purposes \citep{allman2009identifiability} and can be used to show posterior consistency for almost all true parameters in the prior support.
While originally defined for a continuous parameter space, the definition of generic identifiability can be easily extended to our model, where there are both discrete ($\Ab$) and continuous ($\bTheta$) parameters.
Viewing $\cA_0$ as a discrete space and $\cT_0$ as a Riemannian manifold of dimension $\big(2^{\frac{p_0 (p_0 - 1)}{2}} - 1\big) + \sum_{k = 1}^K \big(1 + \frac{p_k (p_k + 1)}{2}\big)$,
we let $\upmu$ denote the counting measure over $\cA_0$, let $\uplambda$ denote the Lebesgue measure over $\cT_0$, and take their product $\upmu \times \uplambda$ to be the $\sigma$-finite base measure on $\cA_0 \times \cT_0$.

\begin{definition}[Generic Identifiability]\label{defi:generic}
The model with parameter space $\cR$ is \textit{generically identifiable} if $(\upmu \times \uplambda) \big\{ (\Ab, \bTheta) \in \cR:~ \perm_\cR(\Ab, \bTheta) \subsetneq \marg_\cR(\Ab, \bTheta) \big\} = 0$.
\end{definition}

To guarantee the asymptotic correctness of Bayesian inference, one often tries to establish the theoretical property known as \emph{posterior consistency}, as defined in the following.
Equiping the space $\cA_0 \times \cT_0$ with the $L_1$ metric, the induced topology on $\cA_0 \times \cT_0$ is the product of the discrete topology on $\cA_0$ and the Euclidean topology on $\cT_0$.
For a parameter space $\cR \subset \cA_0 \times \cT_0$ with the same metric and the subspace topology,
we denote the $\epsilon$-neighborhood of $(\Ab, \bTheta)$ as $\cO_\epsilon(\Ab, \bTheta)$ and the union of $\epsilon$-neighborhoods surrounding permutations of $(\Ab, \bTheta)$ as
\begin{equation}\label{eq:eps_nbhd}
\tilde\cO_\epsilon(\Ab, \bTheta)
:=
\bigcup_{(\Ab', \bTheta') \in \perm_\cR(\Ab, \bTheta)} \cO_\epsilon(\Ab', \bTheta')
.
\end{equation}

\begin{definition}[Posterior Consistency]\label{defi:post}
For the model with parameter space $\cR$ and prior distribution $\PP(\Ab, \bTheta)$, the posterior distribution $\PP\big(\Ab, \bTheta | \Xb_K^{(1:n)}\big)$ is \emph{(strongly) consistent} at $(\Ab^*, \bTheta^*) \in \cR$ if for any $\epsilon > 0$,
$$
\lim_{n \to \infty} \PP\left( \left.
(\Ab, \bTheta) \in \tilde\cO_\epsilon(\Ab^*, \bTheta^*)^c
\right| \Xb_K^{(1:n)}
\right)
=
0
$$
$\PP\big(\Xb_K^{(1:\infty)} | \Ab^*, \bTheta^*\big)$-almost surely.
\end{definition}

We note the slight difference between Definition \ref{defi:post} and the standard definition of posterior consistency \citep{ghosal2017fundamentals}.
The standard definition requires the posterior distribution to asymptotically concentrate in an arbitrarily small $\epsilon$-neighborhood $\cO_\epsilon(\Ab^*, \bTheta^*)$.
In our model, to deal with the trivial type of non-identifiability issue due to permutations, we loosen the $\epsilon$-neighborhood $\cO_\epsilon(\Ab^*, \bTheta^*)$ to $\tilde\cO_\epsilon(\Ab^*, \bTheta^*)$, which is the union of $\epsilon$-neighborhoods surrounding permutations of $(\Ab^*, \bTheta^*)$.
With $\cR$ being a separable metric space, Definition \ref{defi:post} is equivalent to saying that the posterior distribution is asymptotically supported on $\perm_\cR(\Ab^*, \bTheta^*)$.

In the following, we first present a theorem that establishes the strict identifiability of our model within a subset of $\cA_0 \times \cT_0$.
Let $\cA_1 \subset \cA_0$ be the set of all $\Ab$'s satisfying that each $\Ab_k$ contains two distinct identity submatrices $\Ib_{p_{k - 1}}$.
Formally, $\cA_1$ is defined as
\begin{equation}\label{eq:A1_def}
\cA_1
:=
\Big\{
\Ab \in \cA_0:~
\forall k \in [K],~
\exists~ \Pb_k, \Bb_k \text{ such that }
\Pb_k \Ab_k
=
\big(
\Ib_{p_{k - 1}} ~ \Ib_{p_{k - 1}} ~ \Bb_k
\big)^\top
\Big\}
,
\end{equation}
where each $\Pb_k \in \RR^{p_k \times p_k}$ is an arbitrary permutation matrix and each $\Bb_k \in \RR^{p_{k - 1} \times (p_k - 2p_{k - 1})}$ is an arbitrary matrix.
For each $\Ab \in \cA_1$, it is implicitly required that $\Ab$ characterizes a Bayesian network structure in the form of a steep pyramid with $p_k \ge 2p_{k - 1}$ for all $k \in [K]$.
Additionally, each community (node at upper layer) must have two members (nodes at lower layer) that only belong to this community.
Such nodes with a unique community are sometimes referred to as pure nodes, whose existence is a common condition to guarantee model identifiability of network models \citep[e.g.,][]{jin2023mixed}.

\begin{theorem}[Strict Identifiability]\label{theo:strict}
The model with parameter space $\cA_1 \times \cT_0$ is strictly identifiable.
\end{theorem}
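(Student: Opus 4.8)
The plan is to prove the nontrivial inclusion $\marg_\cR(\Ab,\bTheta)\subseteq\perm_\cR(\Ab,\bTheta)$ for $\cR=\cA_1\times\cT_0$ (the reverse holds always) by induction on the number of layers, peeling off the observed bottom layer at each step. The base case $K=0$ is immediate: $\Xb_0=\Xb_K$ is observed and its law is the categorical $\bnu$, which is determined exactly, so $\marg=\perm$ with $\Pb_0=\Ib$. For the inductive step, the key observation is that the marginal law of the observed $\Xb_K$ is the finite mixture
\[
\PP(\Xb_K\mid\Ab,\bTheta)=\sum_{\Xb_{K-1}\in\cX_{K-1}} w(\Xb_{K-1})\prod_{1\le i<j\le p_K}\PP\big(X_{K,i,j}\mid\Xb_{K-1},\Ab_K,\bTheta_K\big),
\]
i.e. a mixture of products of the Bernoulli emissions in \eqref{eq:model_entry}, whose latent class is the latent layer-$(K-1)$ adjacency matrix and whose mixing weights $w$ are the law of $\Xb_{K-1}$ under the top $K-1$ layers. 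I first record that $w$ has full support on $\cX_{K-1}$: since $\bnu\in\cS_+$ and every emission logit is finite, each $\Xb_k$ is supported on all of $\cX_k$ by induction on $k$. The three substeps are then: (i) recover $w$ and all emission probabilities up to a relabeling of the $2^{\binom{p_{K-1}}{2}}$ latent classes; (ii) use the pure-node structure of $\cA_1$ to turn this relabeling into a node permutation $\Pb_{K-1}$ and to read off $\Ab_K,C_K,\bGamma_K$; (iii) recurse on the recovered law of $\Xb_{K-1}$, viewing it as the data law of the $K$-layer sub-model on layers $0,\ldots,K-1$ (whose connection matrices again lie in $\cA_1$).

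For (i), the role of $\Ab\in\cA_1$ is to certify the rank hypothesis for a Kruskal/Allman--Matias--Rhodes \citep{allman2009identifiability} argument. Because each $\Ab_K$ contains two copies of $\Ib_{p_{K-1}}$ among its rows after permutation, every community $s$ has two pure nodes, and for any pair $(s,t)$ with $s\neq t$ the four cross-edges between their pure nodes each have emission $\sigma\big(C_K+\Gamma_{K,s,t}X_{K-1,s,t}\big)$ (writing $\sigma$ for the logistic function) and are conditionally independent given $\Xb_{K-1}$. Distributing these cross-edges so that each of three groups contains, for every pair $(s,t)$, at least one reading edge yields three emission matrices whose restriction to the designated reading edges is a Kronecker product of $2\times2$ blocks with rows $(1-\sigma(C_K),\,\sigma(C_K))$ and $(1-\sigma(C_K+\Gamma_{K,s,t}),\,\sigma(C_K+\Gamma_{K,s,t}))$. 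Each block is nonsingular because $\Gamma_{K,s,t}>0$, so every group has full row rank $L:=2^{\binom{p_{K-1}}{2}}$ and the Kruskal condition $3L\ge 2L+2$ holds. Since full support forces both parameterizations to use all $L$ classes, Kruskal uniqueness applies and yields a bijection $\phi$ of $\cX_{K-1}$ matching the weights and all emission functions $g_{ij}(x):=\PP(X_{K,i,j}=1\mid\Xb_{K-1}=x)$ across any two parameters with the same marginal.

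Step (ii) is where the real work lies. Each $\logit g_{ij}(x)=C_K+\sum_{s,t}A_{K,i,s}A_{K,j,t}\Gamma_{K,s,t}x_{s,t}$ is affine and nondecreasing in the off-diagonal bits, so the latent identity matrix is the unique simultaneous minimizer of all $g_{ij}$ and is identified intrinsically. A pure cross-edge between pure nodes of communities $s\neq t$ has $\logit g=C_K+\Gamma_{K,s,t}x_{s,t}$, depending on a single bit; such single-bit-dependent emissions are detectable, and matching them across the two parameterizations forces $\phi$ to respect the product structure of $\cX_{K-1}$, hence to be induced by a permutation $\Pb_{K-1}$ of the layer-$(K-1)$ nodes, while simultaneously giving $C_K'=C_K$ (at bit $0$) and the off-diagonal identity $\bGamma_K'=\Pb_{K-1}\bGamma_K\Pb_{K-1}^\top$ (at bit $1$). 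The within-community pure edges recover the diagonal of $\bGamma_K$, and evaluating $g_{i,a_s}$ at the identity class recovers each membership entry from $\sigma(C_K+A_{K,i,s}\Gamma_{K,s,s})$, delivering $\Ab_K'=\Pb_K\Ab_K\Pb_{K-1}^\top$ with $\Pb_K=\Ib$. Finally the $\phi$-relabeled weights are exactly the law of $\Xb_{K-1}$ up to $\Pb_{K-1}$; applying the inductive hypothesis to this $K$-layer data law identifies the remaining parameters up to $\Pb_0,\ldots,\Pb_{K-2}$, and composing the permutations places $(\Ab',\bTheta')$ in $\perm_\cR(\Ab,\bTheta)$.

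I expect step (ii)---converting the abstract class relabeling $\phi$ delivered by Kruskal into a genuine node permutation $\Pb_{K-1}$---to be the main obstacle. Kruskal only identifies the latent classes as an unstructured set, whereas here I must recover the hypercube/product coordinate system on $\cX_{K-1}$ indexed by node pairs and show it is preserved. The subtlety is that the pure nodes of the two parameterizations need not coincide a priori, so the single-bit characterization must be shown to pin down the same node-pair correspondence on both sides; threading this matching consistently through all layers so that the permutations $\Pb_0,\ldots,\Pb_{K-1}$ compose correctly is the delicate bookkeeping that the full proof will have to carry out.
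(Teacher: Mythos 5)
Your overall architecture coincides with the paper's: recover the mixture components via Kruskal's theorem using the two identity blocks in $\Ab_K$ (the paper's Lemma \ref{lemm:strict_tensor}), convert the resulting relabeling of the $|\cX_{K-1}|$ latent classes into a node permutation and read off $(\Ab_K, C_K, \bGamma_K)$ (Lemmas \ref{lemm:strict_sX} and \ref{lemm:strict_param}), then recurse down the layers (the paper's induction). Your step (i) is sound: distributing the four pure-node cross-edges per pair $(s,t)$ across three groups makes each factor matrix, after marginalizing the non-reading edges, a Kronecker product of nonsingular $2\times 2$ blocks, giving Kruskal ranks summing to $3|\cX_{K-1}| \ge 2|\cX_{K-1}|+2$; the paper's own partition (edges within each identity block, everything else lumped into the third factor) only secures $2|\cX_{K-1}|+2$, so your variant is, if anything, slightly stronger. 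The recursion in step (iii) likewise mirrors the paper's induction, modulo the bookkeeping of carrying the layer-$(K-1)$ node permutation through the induction hypothesis, which you flag.

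The genuine gap is step (ii), and it is not a minor one: it is where the paper spends the bulk of its technical effort. Two claims are asserted there without proof. First, you reason about ``pure cross-edges between pure nodes,'' but which observed rows are pure nodes is not known a priori --- $\Ab_K$ is exactly what is being identified, and the competing parameterization $(\Ab',\bTheta') \in \cA_1 \times \cT_0$ may place its identity blocks in different rows. One must show that the set of pure rows is itself determined by the emission functions; the paper does this in Lemma \ref{lemm:strict_param} via a partial order on rows, under which pure rows are the minimal nonzero elements, detectable through logit comparisons alone. Second, even granting that every bit partition of $\cX_{K-1}$ is recoverable from the emissions (e.g., from two-valued emission functions), this only yields a permutation of the $\frac{p_{K-1}(p_{K-1}-1)}{2}$ bits; passing from a bit permutation to a permutation of the $p_{K-1}$ nodes requires recovering which bits share an endpoint, and arbitrary bit permutations vastly outnumber node-induced ones. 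The paper's combinatorial argument in Lemma \ref{lemm:strict_sX} --- identifying the one-bit-up classes $\Ib_{p_{K-1}} + \Eb_{i,j}$ as the minimal elements of $\cG_0$, grouping them into the row-indexed sets $\cG_{m,r}$, extracting the $p_{K-1}$ minimal groups $\cG_m^{(\ell)}$, and showing each one-bit class lies in exactly two of them --- exists precisely to recover this incidence structure. Your closing paragraph correctly identifies this as the main obstacle, but the proposal asserts the conclusion (``forces $\phi$ \ldots\ to be induced by a permutation $\Pb_{K-1}$'') rather than proving it, so as written the argument is incomplete at its crux.
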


To provide further intuition into the condition in \eqref{eq:A1_def}, we briefly discuss the proof techniques used for Theorem \ref{theo:strict} with $K = 1$.
The marginal distribution $\PP(\Xb_1 | \Ab, \bTheta)$ can be transformed into a three-way tensor and the conditional distribution $\PP(\Xb_1 | \Xb_0, \Ab_1, \bTheta_1)$ can be viewed as its CP decomposition \citep{kolda2009tensor}.
The strict identifiability of our model requires the uniqueness of this CP decomposition, for which the Kruskal's condition \citep{kruskal1977three} is known to be sufficient.
The existence of two identity blocks in $\Ab_1$ ensures the Kruskal's condition to hold.
Our model can also be viewed as a constrained latent class model, so a comparable result of strict identifiability can be found in \citet{gu2021bayesian}.

The identifiability of model parameters is a prerequisite for posterior consistency.
With the strict identifiability established for our model with parameter space $\cA_1 \times \cT_0$, we now present results for the posterior consistency of all parameters in the prior support.

\begin{theorem}[Schwartz's Posterior Consistency]\label{theo:schwartz}
For the model with parameter space $\cR \subset \cA_0 \times \cT_0$ and true parameter $(\Ab^*, \bTheta^*)$ in the support of the prior distribution $\PP(\Ab, \bTheta)$, the posterior distribution $\PP\big(\Ab, \bTheta | \Xb_K^{(1:n)}\big)$ is consistent at $(\Ab^*, \bTheta^*)$ if one of the following two conditions holds:
\begin{enumerate}[(i)]
\item
$\cR$ is compact and $\perm_\cR(\Ab^*, \bTheta^*) = \marg_\cR(\Ab^*, \bTheta^*)$;
\item
$\cR \subset \cA_1 \times \cT_0$.
\end{enumerate}
\end{theorem}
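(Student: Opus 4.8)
The plan is to prove posterior consistency via Schwartz's theorem, which requires verifying two ingredients: (a) the \emph{Kullback--Leibler support condition}, namely that the true parameter $(\Ab^*, \bTheta^*)$ places positive prior mass on arbitrarily small KL-neighborhoods; and (b) the existence of \emph{uniformly exponentially consistent tests} separating $\perm_\cR(\Ab^*, \bTheta^*)$ from the complement of the enlarged neighborhood $\tilde\cO_\epsilon(\Ab^*, \bTheta^*)$. Since each observation $\Xb_K^{(n)}$ takes values in the finite set $\cX_K$, the marginal likelihood $\PP(\Xb_K \mid \Ab, \bTheta)$ is a probability vector on a finite alphabet, so many technical subtleties of Schwartz's theory simplify. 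First I would establish the KL support condition: because $(\Ab^*, \bTheta^*)$ lies in the prior support and the map $(\Ab, \bTheta) \mapsto \PP(\cdot \mid \Ab, \bTheta)$ is continuous in $\bTheta$ (the logistic link in \eqref{eq:model} is smooth and the finite marginalization over latent $\Xb_0, \ldots, \Xb_{K-1}$ preserves continuity) with $\Ab$ fixed in a discrete neighborhood, small parameter perturbations yield small KL divergence; hence every KL-ball around $\PP(\cdot \mid \Ab^*, \bTheta^*)$ has positive prior probability.

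The test-construction step differs between the two cases, and this is where the two hypotheses of the theorem enter. For the separation, note that by definition $(\Ab, \bTheta) \notin \marg_\cR(\Ab^*, \bTheta^*)$ forces $\PP(\cdot \mid \Ab, \bTheta) \neq \PP(\cdot \mid \Ab^*, \bTheta^*)$ as distributions on the finite set $\cX_K$, so there is a strictly positive total-variation gap; since i.i.d.\ samples on a finite alphabet admit exponentially consistent tests (e.g.\ via empirical-frequency separation and Hoeffding/Sanov bounds), one obtains the required tests on $\marg_\cR(\Ab^*, \bTheta^*)^c$. The remaining issue is the region inside $\marg_\cR(\Ab^*, \bTheta^*)$ but outside $\tilde\cO_\epsilon(\Ab^*, \bTheta^*)$: here the distributions coincide, so no test can separate them, and one must instead show this region is \emph{empty}. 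Under condition~(i), compactness gives the usual covering argument for the uniform exponential test over the complement, and the hypothesis $\perm_\cR(\Ab^*, \bTheta^*) = \marg_\cR(\Ab^*, \bTheta^*)$ ensures precisely that every marginal-equivalent parameter is a permutation, so it already lies in $\tilde\cO_\epsilon(\Ab^*, \bTheta^*)$; the problematic region is thus empty. Under condition~(ii), I would invoke Theorem~\ref{theo:strict}: since $\cR \subset \cA_1 \times \cT_0$, strict identifiability holds at \emph{every} parameter, so again $\marg_\cR(\Ab^*, \bTheta^*) = \perm_\cR(\Ab^*, \bTheta^*)$, emptying the region without any compactness assumption.

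To handle the discrete component $\Ab$ and the lack of compactness in case~(ii), I would exploit the finiteness of $\cA_0$ together with the finite alphabet $\cX_K$: for each fixed discrete structure $\Ab$, the marginal distribution depends continuously on $\bTheta$, and the continuous parameter space $\cT_0$ can be reduced to a relevant compact region because the logistic probabilities saturate only in limits that are detectable by the tests. Concretely, I would partition the complement of $\tilde\cO_\epsilon(\Ab^*, \bTheta^*)$ by discrete structure, construct a test for each of the finitely many $\Ab$, and take their maximum; the uniform exponential bound survives because a finite maximum of exponentially small error probabilities is still exponentially small. Assembling these pieces via Schwartz's theorem then yields that the posterior concentrates on $\marg_\cR(\Ab^*, \bTheta^*) = \perm_\cR(\Ab^*, \bTheta^*)$, which is exactly the consistency statement in Definition~\ref{defi:post}.

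I expect the main obstacle to be the boundary or unbounded portion of $\cT_0$ in case~(ii): without compactness, one must rule out escape of posterior mass toward parameters where some $\Gamma_{k,i,j} \to \infty$ or the logistic probabilities degenerate, and I anticipate needing a careful argument—either a direct tail bound on the prior combined with a likelihood-ratio control, or a compactification of the parameter space on which the marginal distribution extends continuously—to show that such limiting behavior is still separated from the truth by exponentially consistent tests.
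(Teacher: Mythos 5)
Your overall architecture is sound and, on a finite alphabet, essentially parallel to the paper's: the paper also verifies a KL support condition (Lemma \ref{lemm:supp_kl_supp}, via continuity of the parameter-to-distribution map plus Pinsker and reverse Pinsker), then transfers consistency of the induced distributions back to the parameters, with condition (i) handled exactly as you describe (compactness plus the assumed identifiability at the truth) and condition (ii) invoking strict identifiability from Theorem \ref{theo:strict}. Your treatment of the KL support step and of case (i) is correct.

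The genuine gap is in case (ii), and you flagged it yourself without closing it. Your test-based argument requires a \emph{uniform} separation $\inf\{\,\|\PP(\cdot\mid\Ab,\bTheta)-\PP(\cdot\mid\Ab^*,\bTheta^*)\|_1:\ (\Ab,\bTheta)\in\cR\cap\tilde\cO_\epsilon(\Ab^*,\bTheta^*)^c\,\}>0$; pointwise inequality of distributions, which is all identifiability provides, does not yield this when $\cR\subset\cA_1\times\cT_0$ is not compact, since a sequence of parameters with $C_k\to\pm\infty$ or $\Gamma_{k,i,j}\to 0$ or $\infty$ could in principle have marginal laws approaching the truth while staying far from every permutation of $(\Ab^*,\bTheta^*)$. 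Ruling this out is precisely the content of the paper's Lemma \ref{lemm:A1T0_almost_compact}: any parameter in $\cA_1\times\cT_0$ whose marginal law lies within some fixed $\kappa$ of the truth must lie in a compact set. Its proof is not soft; it argues by contradiction through the CP-decomposition uniqueness (Lemma \ref{lemm:strict_tensor}) and a six-case analysis showing that each degenerate limit of $(C_k,\bGamma_k)$ forces the conditional probability matrices to acquire zero entries or repeated columns, contradicting structural properties of the true parameter's matrices. Note also that one of your proposed fallbacks, a tail bound on the prior combined with likelihood-ratio control, is unavailable here: the theorem permits an \emph{arbitrary} prior whose support merely contains $(\Ab^*,\bTheta^*)$, so no tail decay can be assumed, and the resolution must be, as in the paper, a statement about the model map alone (your compactification idea is the right spirit, but it is exactly the unproven lemma). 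Until you supply an argument of this type, your proof of case (ii) is incomplete.
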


The proof of Theorem \ref{theo:schwartz} is based on Schwartz's Theorem \citep{schwartz1965bayes}, which establishes the posterior consistency of models in the Wasserstein space of probability measures.
We note that consistent probability measures are not the same as consistent model parameters.
To fill in this gap, we need the identifiability of the true parameter and some topological property of the parameter space $\cR$.
In condition (i), compactness of $\cR$ together with the assumed identifiability at $(\Ab^*, \bTheta^*)$ suffice.
In condition (ii), the parameter space $\cA_1 \times \cT_0$ ensures the identifiability at all parameters by Theorem \ref{theo:strict} and satisfies the desired topological property, so any subspace $\cR \subset \cA_1 \times \cT_0$ automatically inherits both properties.

Despite being the strongest possible notion of identifiability, strict identifiability requires the connection matrices to be in $\cA_1$, i.e., containing at least two identity submatrices, which could be a stringent constraint.
When running MCMC algorithms (introduced in Section \ref{sec:post}) to obtain posterior samples of $(\Ab, \bTheta)$, the samples of $\Ab$ do not always fall into $\cA_1$.
While we could certainly enforce constraints to guarantee $\Ab \in \cA_1$, this can lead to slow MCMC mixing. 
Hence, we hope to establish identifiability in a parameter space larger than $\cA_1$, so that constraining  $\Ab$ to $\cA_1$ will not be necessary.
This is where the notion of generic identifiability comes into play.

For a square matrix $\Mb$, we let $\mb_i$ denote its $i$th column and let $\offdiag(\Mb) := \Mb - \diag(\Mb)$ be the matrix obtained by setting the diagonal of $\Mb$ to zero.
Let $\langle d \rangle$ denote the set $\{(i, j):~ 1 \leq i < j \leq d\}$, which is the collection of all $\frac{d (d - 1)}{2}$ distinct pairs of integers in $[d]$.
Let $2^{\langle d \rangle}$ denote the power set of $\langle d \rangle$, containing all $2^{\frac{d (d - 1)}{2}}$ subsets of $\langle d \rangle$.
We relax the submatrices requirement in \eqref{eq:A1_def} from identity matrices to the following enlarged class of binary square matrices.

\begin{definition}[Class $\cM_d$]\label{defi:M}
We define $\cM_d$ to be the class of binary square matrices $\Mb \in \{0, 1\}^{d \times d}$ satisfying the following two conditions:
\begin{enumerate}[(i)]
\item
$\diag(\Mb) = \Ib_d$;
\item
for any two distinct subsets $\cS_1, \cS_2 \in 2^{\langle d \rangle}$,
$$
\offdiag\Big(
\sum_{(i, j) \in \cS_1} (\mb_i \mb_j^\top + \mb_j \mb_i^\top)
\Big)
\ne
\offdiag\Big(
\sum_{(i, j) \in \cS_2} (\mb_i \mb_j^\top + \mb_j \mb_i^\top)
\Big)
.
$$
\end{enumerate}
\end{definition}

The class $\cM_d$ is a large class of binary square matrices that includes the identity matrix $\Ib_d$ as an instance.
We defer other examples of class $\cM_d$ to the supplementary material Section \ref{sec:class_M}.
Therein we also discuss the connection of class $\cM_d$ to common linear algebra concepts such as distinct rows and columns, symmetry, invertibility, and rank.
We define an enlargement of the space $\cA_1$ as
\begin{equation}\label{eq:A21_def}
\cA_{2, 1}
:=
\Big\{
\Ab \in \cA_0:~
\forall k \in [K],~
\exists~ \Pb_k, \Bb_k \text{ such that }
\Pb_k \Ab_k
=
\big(
\Mb_k ~ \Mb_k' ~ \Bb_k
\big)^\top
\Big\}
,
\end{equation}
where each $\Pb_k \in \RR^{p_k \times p_k}$ is an arbitrary permutation matrix, each $\Bb_k \in \RR^{p_{k - 1} \times (p_k - 2p_{k - 1})}$ is an arbitrary matrix, and $\Mb_k, \Mb_k'$ are binary square matrices in $\cM_{p_{k - 1}}$.
The all ones diagonals of $\Mb_k, \Mb_k'$ ensure that each community contains at least two nodes.
We consider $\Pb_k = \Ib_{p_k}$ for intuition.
The condition (ii) in Definition \ref{defi:M} satisfied by $\Mb_k$ is equivalent to $\prod_{1 \le i < j \le p_{k - 1}} \exp\big( C_k + \ab_{k, i}^\top (\bGamma_k * \Xb_{k - 1}) \ab_{k, j} \big)$ being distinct for each $\Xb_{k - 1} \in \cX_{k - 1}$, which allows Kruskal's condition to hold almost everywhere.

Two additional conditions are needed. 
Note that each $\Ab_k$ needs to have distinct columns, otherwise entries in the identical columns of $\bGamma_k$ can be swapped without changing the conditional distribution $\PP(\Xb_k | \Xb_{k - 1}, \Ab_k, \bTheta_k)$.
We incorporate this condition to the space
\begin{equation}\label{eq:A22_def}
\cA_{2, 2}
:=
\Big\{
\Ab \in \cA_0:~
\forall k \in [K],~
\forall \Pb_{k - 1},~
\Ab_k \Pb_{k - 1}
\ne
\Ab_k
\Big\}
,
\end{equation}
where each $\Pb_{k - 1} \in \RR^{p_{k - 1} \times p_{k - 1}}$ is an arbitrary permutation matrix.
We also note from \eqref{eq:model_entry} and \eqref{eq:model} that the conditional distribution $\PP(\Xb_k | \Xb_{k - 1}, \Ab_k, \bTheta_k)$ depends on $C_k$ and the diagonal vector $\bgamma_k$ of $\bGamma_k$ only through linear combinations $C_k + (\ab_{k, i}^\top * \ab_{k, j}^\top) \bgamma_k$ for $(i, j) \in \langle p_k \rangle$.
Let matrix $\Db_k \in \RR^{\frac{p_k (p_k - 1)}{2} \times (1 + p_{k - 1})}$ denote the stack of row vectors $\big( 1 ~ \ab_{k, i}^\top * \ab_{k, j}^\top \big)$, then we need $\Db_k$ to have full column rank for the parameters $C_k, \bgamma_k$ to be identifiable.
Accordingly, define
\begin{equation}\label{eq:A23_def}
\cA_{2, 3}
:=
\Big\{
\Ab \in \cA_0:~
\forall k \in [K],~
\rank(\Db_k) = p_{k - 1} + 1
\Big\}
.
\end{equation}

By taking intersection over the spaces $\cA_{2, 1}, \cA_{2, 2}, \cA_{2, 3}$, we define the space $\cA_2 \supset \cA_1$ as
\begin{equation}\label{eq:A2_def}
\cA_2
:=
\cA_{2, 1} \cap \cA_{2, 2} \cap \cA_{2, 3}
.
\end{equation}
Despite the complicated definition of $\cA_2$, in practice we only need to verify the condition in $\cA_{2, 1}$, as shown in Proposition \ref{prop:simp_A2} in the following section.

\begin{theorem}[Generic Identifiability]\label{theo:generic}
The model with parameter space $\cA_2 \times \cT_0$ is generically identifiable.
\end{theorem}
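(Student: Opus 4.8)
The plan is to fix the discrete parameter $\Ab \in \cA_2$ and show that
$$
\big\{\bTheta \in \cT_0:~ \perm_{\cA_2 \times \cT_0}(\Ab, \bTheta) \subsetneq \marg_{\cA_2 \times \cT_0}(\Ab, \bTheta)\big\}
$$
has Lebesgue measure zero; since $\cA_0$ (hence $\cA_2$) is finite for fixed layer sizes, summing this over the finitely many $\Ab$ against the counting measure $\upmu$ yields $(\upmu \times \uplambda)$-null for the full non-identifiable set, which is Definition \ref{defi:generic}. Because $\marg_{\cA_2 \times \cT_0}$ only compares against competitors $(\Ab', \bTheta')$ that also lie in $\cA_2 \times \cT_0$, every such $\Ab'$ inherits the $\cA_{2,1}$ block form of \eqref{eq:A21_def}, keeping the whole comparison inside a single tensor-decomposition framework.

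First I would set up the layerwise tensor structure. As sketched after Theorem \ref{theo:strict}, at each layer $k$ the conditional $\PP(\Xb_k \mid \Xb_{k-1}, \Ab_k, \bTheta_k)$ realizes a CP decomposition \citep{kolda2009tensor} of $\PP(\Xb_k \mid \Ab, \bTheta)$ viewed as a three-way tensor, whose three factor matrices arise from the blocks $\Mb_k, \Mb_k', \Bb_k$ and whose latent index runs over $\Xb_{k-1} \in \cX_{k-1}$. I would proceed recursively from the bottom: uniqueness of the layer-$K$ decomposition identifies both $\PP(\Xb_K \mid \Xb_{K-1})$ and the marginal law of $\Xb_{K-1}$; the latter then plays the role of the observed law for the layer-$(K-1)$ tensor, and so on up to the top parameter $\bnu$.

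The heart of the proof is that Kruskal's condition \citep{kruskal1977three} for this decomposition holds for Lebesgue-almost every $\bTheta$, which is where the holomorphic-function technique enters. Each factor-matrix column indexed by $\Xb_{k-1}$ is a product of logistic terms whose exponents are linear in $(C_k, \bGamma_k)$, so every minor certifying full Kruskal rank is real-analytic in $\bTheta$ and extends holomorphically once $C_k, \Gamma_{k,i,j}$ are complexified and $e^{z}$ is treated as entire. By the standard fact that a holomorphic function on a connected domain is either identically zero or has a Lebesgue-null zero set, it suffices to exhibit a single $\bTheta$ at which all relevant minors are nonzero. Condition (ii) of the class $\cM_d$ in Definition \ref{defi:M} supplies exactly this witness: driving the entries of $\bGamma_k$ to $+\infty$ along a suitable ray degenerates the logistic entries to $0/1$ indicators and the columns to the binary off-diagonal patterns $\offdiag(\sum_{(i,j)\in\cS}(\mb_i\mb_j^\top + \mb_j\mb_i^\top))$, which $\cM_d$ guarantees are distinct across subsets $\cS$; distinctness of these limits forces the leading monomial of each minor (as a polynomial in the $e^{\Gamma_{k,i,j}}$) to be nonzero, so the minor is not identically zero. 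This is precisely the generalization over the polynomial algebraic-variety arguments of \citet{allman2009identifiability, gu2021bayesian}: the exponential/logistic nonlinearity, which obstructs a genuine polynomial parameterization in $\bTheta$, is absorbed into holomorphicity.

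With Kruskal's condition in force, the CP decomposition is unique up to column permutation and scaling, identifying each $\PP(\Xb_k \mid \Xb_{k-1})$ up to a relabeling of the latent states $\Xb_{k-1}$, i.e., up to $\perm_\cR$. Decoding the continuous parameters then uses $\cA_{2,2}$ and $\cA_{2,3}$: distinctness of the columns of $\Ab_k$ in \eqref{eq:A22_def} prevents an undetectable permutation of the off-diagonal entries of $\bGamma_k$, while full column rank of $\Db_k$ in \eqref{eq:A23_def} makes the linear map $C_k + (\ab_{k,i}^\top * \ab_{k,j}^\top)\bgamma_k$ injective, pinning down $C_k$ and the diagonal $\bgamma_k$. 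Intersecting the finitely many analytic exceptional sets, one per layer and per minor, leaves a Lebesgue-null bad set. I expect the generic-Kruskal step to be the main obstacle: the degeneration-at-infinity argument must hold uniformly over the combinatorially many subsets $\cS \in 2^{\langle p_{k-1}\rangle}$, and one must verify that the distinct limiting binary columns certify full Kruskal rank of all three factor matrices simultaneously, rather than merely column distinctness.
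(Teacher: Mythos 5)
Your overall architecture matches the paper's: fix $\Ab \in \cA_2$, reduce to layerwise CP decompositions, show Kruskal's condition holds off the zero set of functions that are holomorphic in the complexified $(C_k, \bGamma_k)$, and exhibit a witness $\bTheta$ via the distinctness of the binary patterns $\offdiag\big(\sum_{(i,j)\in\cS}(\mb_i\mb_j^\top + \mb_j\mb_i^\top)\big)$ guaranteed by condition (ii) of $\cM_d$ (the paper's witness uses logarithms of rapidly growing primes so that a unique lexicographically-dominant permutation term controls the Leibniz expansion of the determinant; your ``ray to infinity'' is the asymptotic version of the same idea). However, there are two genuine gaps. First, after Kruskal's theorem you conclude that each $\PP(\Xb_k \mid \Xb_{k-1})$ is identified ``up to a relabeling of the latent states $\Xb_{k-1}$, i.e., up to $\perm_\cR$'' --- but these are very different groups: a relabeling of latent states is an arbitrary permutation of $\cX_{k-1}$ (of which there are $|\cX_{k-1}|!$), whereas $\perm_\cR$ only allows the $p_{k-1}!$ node permutations. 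Converting the former into the latter is the content of the paper's Lemma \ref{lemm:generic_sX}, and in the generic setting (where $\Ab_k$ need not contain identity blocks) this step requires excluding a further null set $\cT_1$ of parameters with nontrivial integer linear relations among $C_k$ and the entries of $\bGamma_k$: the proof intrinsically recovers the set $\{\Gamma_{k,i,j}\}$ from the conditional laws as the unique minimal generating set under non-negative integer combinations, which fails if accidental integer relations hold. Your proposal never introduces this exclusion, and $\cA_{2,2}$/$\cA_{2,3}$ alone do not supply it.

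Second, and more structurally: your final step ``intersecting the finitely many analytic exceptional sets leaves a Lebesgue-null bad set'' establishes at best $\perm_{\cR}(\Ab,\bTheta) = \marg_{\cR}(\Ab,\bTheta)$ for all $(\Ab,\bTheta)$ in the good set $\cR$, where both the parameter \emph{and its competitors} are restricted to $\cR$. This is not what Definition \ref{defi:generic} demands: there the marginal-law comparison runs over all of $\cA_2 \times \cT_0$, including the excluded null set. The post-Kruskal decoding steps (latent-state relabeling $\to$ node permutation, and parameter recovery) require nondegeneracy of \emph{both} sides, so a good parameter could in principle share its marginal law with a degenerate competitor without being its permutation. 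Crucially, ``the bad set is null'' does not imply ``the set of good parameters confusable with bad ones is null'': preimages of null sets under the marginal-law map can have positive measure (the paper flags this explicitly with a Cantor-function example). Closing this gap is nontrivial; the paper does it with a separate geometric-measure-theory result (Theorem \ref{theo:geom}), using analyticity of the parameter-to-law map and the finite bound $\prod_{k=0}^{K-1}(p_k!)$ on fiber cardinalities over $\cR$, via the area formula, to show that $\big\{(\Ab,\bTheta) \in \cR:~ \marg_\cR(\Ab,\bTheta) \subsetneq \marg_{\cA_2\times\cT_0}(\Ab,\bTheta)\big\}$ is $(\upmu\times\uplambda)$-null. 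Without this (or an argument that the decoding works against arbitrary competitors in $\cA_2 \times \cT_0$), the proposal does not prove the theorem as stated.
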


In the literature, all previous techniques for showing generic identifiability rely on algebraic varieties as in  \citet{allman2009identifiability}.
In the proof of Theorem \ref{theo:generic}, we instead rely on zero sets of holomorphic functions, instead of just polynomials.
This suggests potential of this approach to be applied to general model classes built on other nonlinear functions.
Details are included in the supplementary material Section \ref{sec:iden}.

Based on the generic identifiability conclusion, the following Theorem \ref{theo:doob} establishes the posterior consistency for our model with parameter space in $\cA_2 \times \cT_0$.
We show that the null set of posterior inconsistency is negligibly small under the prior distribution, so the posterior is consistent at almost every true parameter.

\begin{theorem}[Doob's Posterior Consistency]\label{theo:doob}
Let parameter space $\cR$ be a Borel subset of $\cA_2 \times \cT_0$.
For a prior distribution $\PP(\Ab, \bTheta)$ absolutely continuous w.r.t. $\upmu \times \uplambda$, the posterior distribution $\PP\big(\Ab, \bTheta | \Xb_K^{(1:n)}\big)$ is consistent at $(\Ab^*, \bTheta^*)$, for $\PP(\Ab, \bTheta)$-almost every $(\Ab^*, \bTheta^*) \in \cR$
\end{theorem}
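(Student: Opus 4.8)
The plan is to derive the conclusion from the generic identifiability of Theorem~\ref{theo:generic} together with Doob's consistency theorem \citep{doob1949application}, after quotienting out the finite group of node relabelings encoded in Definition~\ref{defi:post}. First I would reduce to identifiable true parameters. For any $(\Ab, \bTheta) \in \cR$ one has $\marg_\cR(\Ab, \bTheta) = \marg_{\cA_2 \times \cT_0}(\Ab, \bTheta) \cap \cR$ and $\perm_\cR(\Ab, \bTheta) = \perm_{\cA_2 \times \cT_0}(\Ab, \bTheta) \cap \cR$, so a parameter identifiable in $\cA_2 \times \cT_0$ is also identifiable in $\cR$; equivalently, the non-identifiable parameters of $\cR$ sit inside those of $\cA_2 \times \cT_0$. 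By Theorem~\ref{theo:generic} the latter are $(\upmu \times \uplambda)$-null, and since $\PP(\Ab, \bTheta) \ll \upmu \times \uplambda$ they are prior-null as well. It therefore suffices to prove consistency at each $(\Ab^*, \bTheta^*)$ that is identifiable in $\cR$.

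Next I would install the Doob framework on the finite space $\cX_K$. Counting measure dominates the whole family, and the marginal map $\Phi: (\Ab, \bTheta) \mapsto \PP(\Xb_K \mid \Ab, \bTheta)$, valued in the simplex $\cS^{|\cX_K| - 1}$, is continuous --- indeed a polynomial in $\bTheta$ after summing out layers $0, \ldots, K - 1$. The ambient space $\cA_0 \times \cT_0$ is Polish, since $\cA_0$ is finite and $\cT_0$ is a Borel subset of Euclidean space, so $\cR$ is standard Borel. All of Doob's hypotheses hold save injectivity of $\Phi$: it is injective only modulo the finite group $G := \prod_{k = 0}^{K - 1} S_{p_k}$, which acts on $\cR$ by the coordinate permutations defining $\perm_\cR$, with orbits exactly $\perm_\cR(\Ab, \bTheta)$.

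To absorb this symmetry I would pass to the quotient $\cR / G$, with quotient map $q$, and push the prior forward to $q_\ast \PP(\Ab, \bTheta)$. Since $G$ is finite and acts by homeomorphisms, $\cR / G$ is again Polish. Restricting to the prior-full-measure set of orbits coming from identifiable parameters --- on which any two parameters sharing a marginal lie in a common orbit --- the descended map $\tilde\Phi$ is injective, so Doob's theorem applies to this standard-Borel, dominated model and yields that, for $q_\ast \PP$-almost every orbit and $\PP(\cdot \mid \Ab^*, \bTheta^*)$-almost every data sequence, the quotient posterior $q_\ast \PP(\cdot \mid \Xb_K^{(1:n)})$ converges weakly, in the Polish quotient topology, to the Dirac mass at $q(\Ab^*, \bTheta^*)$. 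The mechanism is the usual bounded-martingale argument: for invariant bounded continuous $f$, the posterior mean $\mathbb{E}[f \mid \Xb_K^{(1:n)}]$ converges to $\mathbb{E}[f \mid \mathcal{F}_\infty]$, and because $\cX_K$ is finite the strong law recovers $\Phi(\Ab^*, \bTheta^*)$ --- hence, by identifiability, the orbit $q(\Ab^*, \bTheta^*)$ --- as an $\mathcal{F}_\infty$-measurable functional of the observations, forcing this limit to equal $f(q(\Ab^*, \bTheta^*))$.

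Finally I would pull the concentration back to $\cR$. Because $G$ acts by coordinate-permuting isometries, $\tilde\cO_\epsilon(\Ab^*, \bTheta^*)$ is open and $G$-invariant with $q^{-1}\big(q(\tilde\cO_\epsilon(\Ab^*, \bTheta^*))\big) = \tilde\cO_\epsilon(\Ab^*, \bTheta^*)$, so $q(\tilde\cO_\epsilon(\Ab^*, \bTheta^*))$ is an open neighborhood of $q(\Ab^*, \bTheta^*)$ in $\cR / G$. Weak convergence to the Dirac mass and the Portmanteau theorem give $q_\ast \PP\big(q(\tilde\cO_\epsilon(\Ab^*, \bTheta^*)) \mid \Xb_K^{(1:n)}\big) \to 1$, and the preimage identity turns this into $\PP\big(\tilde\cO_\epsilon(\Ab^*, \bTheta^*)^c \mid \Xb_K^{(1:n)}\big) \to 0$, which is precisely Definition~\ref{defi:post}. \textbf{The main obstacle is this structural bookkeeping around the symmetry group}: verifying that $\cR / G$ inherits a standard-Borel (indeed Polish) structure so that Doob's theorem is legitimately applicable, and that weak concentration at the image orbit pulls back to the enlarged neighborhood $\tilde\cO_\epsilon$ rather than to some strictly smaller set. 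The analytic core --- martingale convergence, the strong law on a finite alphabet, and the reduction to identifiable parameters --- is routine once the quotient is in place.
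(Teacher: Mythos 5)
Your proposal follows essentially the same route as the paper's proof in Sections \ref{ssec:post_quotient} and \ref{ssec:proof_doob}: quotient the parameter space by the finite group of layer-wise node permutations, verify that the quotient is a Borel subset of a Polish space so that Doob's theorem (via Proposition \ref{prop:doob_meas}) applies to the now-identifiable quotient model, and pull the resulting concentration back through $q^{-1}\big(q(\tilde\cO_\epsilon(\Ab^*, \bTheta^*))\big) = \tilde\cO_\epsilon(\Ab^*, \bTheta^*)$. However, there is a genuine gap at exactly the point you flag as the main obstacle, and it is not merely bookkeeping.

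You restrict attention to ``the prior-full-measure set of orbits coming from identifiable parameters'' and assert that the resulting model is standard Borel. But the set of identifiable parameters, $\cI := \big\{(\Ab, \bTheta):~ \perm_{\cA_2 \times \cT_0}(\Ab, \bTheta) = \marg_{\cA_2 \times \cT_0}(\Ab, \bTheta)\big\}$, is defined by a quantifier over the entire parameter space: its complement is the projection onto the first coordinate of the Borel set $\big\{\big((\Ab, \bTheta), (\Ab', \bTheta')\big):~ \cL(\Ab, \bTheta) = \cL(\Ab', \bTheta'),~ (\Ab', \bTheta') \notin \perm_{\cA_2 \times \cT_0}(\Ab, \bTheta)\big\}$, hence is in general only an analytic set, and $\cI$ only co-analytic. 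Neither needs to be Borel, and you give no argument that it is; so Proposition \ref{prop:doob_meas}, which requires the (quotient) parameter space to be a Borel subset of a Polish space precisely because its proof rests on Kuratowski's theorem, cannot be applied to $q(\cI \cap \cR)$ as written. The paper sidesteps this by never invoking the abstract set of identifiable parameters: it restricts instead to the explicitly constructed set $\cD = \big\{(\Ab, \bTheta) \in \cA_2 \times (\cT_0 \cap \cT_1^c):~ h_\Ab(\bTheta) \ne 0\big\}$ from the proof of Theorem \ref{theo:generic}, whose complement is a finite union of zero sets of holomorphic functions and of the hyperplane-union $\cT_1$ --- manifestly Borel and $(\upmu \times \uplambda)$-null --- and on which strict identifiability \emph{within} $\cR \cap \cD$ holds; Lemmas \ref{lemm:quotient_borel} and \ref{lemm:quotient_metric} then supply the Borel-in-Polish structure of $\varphi(\cR \cap \cD)$ needed for Doob. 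Your argument is repairable in the same spirit (the non-identifiable set is contained in a Borel null set, which can be symmetrized under the group and removed from $\cR$), but as written the standard-Borel claim carries the whole proof and is unsupported. A second, smaller omission: an arbitrary Borel $\cR \subset \cA_2 \times \cT_0$ need not be invariant under the permutation group, so the group does not literally act on $\cR$ and ``$\cR / G$'' is undefined; the paper handles this with a preliminary step replacing $\cR$ by $\bigcup_{r \in \cR} \perm_{\cA_0 \times \cT_0}(r)$ with zero prior mass on the extension, and your quotient construction and your pullback identity for $\tilde\cO_\epsilon$ require the same device.
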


In its proof, we define a quotient parameter space and its quotient $\sigma$-algebra, in which each $\perm_\cR(\Ab, \bTheta)$ is viewed as a fiber (equivalence class) of the quotient map and $\tilde\cO_\epsilon(\Ab, \bTheta)$ is its $\epsilon$-neighborhood.
This provides further insight into our definition of posterior consistency.
Applying Doob's theorem of posterior consistency \citep{doob1949application} to the quotient parameter space along with measure theoretic arguments yields Theorem \ref{theo:doob}.

\section{Posterior Computation}\label{sec:post}

\subsection{Prior Specification}\label{ssec:prior}

With the likelihood function of our model given in \eqref{eq:model}, to conduct Bayesian inference we need to specify prior distributions on the model parameters $(\Ab, \bTheta)$.
For simplicity, we impose independent priors on each row $\ab_{k, i}^\top$ of $\Ab_k$, each $C_k$, each entry $\Gamma_{k, i, j}$ of $\bGamma_k$, and $\bnu$, such that the prior distribution $\PP(\Ab, \bTheta)$ takes the factorized form
\begin{equation}\label{eq:prior}
\PP(\Ab, \bTheta)
:=
\PP(\bnu)
\prod_{k = 1}^K \left(
\PP(C_k)
~
\prod_{i = 1}^{p_k} \PP(\ab_{k, i})
~
\prod_{1 \le i \le j \le p_{k - 1}} \PP(\Gamma_{k, i, j})
\right)
.
\end{equation}

We let the prior on each $C_k$ be the normal distribution $N(\mu_C, \sigma_C^2)$, the prior on each diagonal entry of $\bGamma_k$ be the positively truncated normal distribution $N_+(\mu_\gamma, \sigma_\gamma^2)$, and the prior on each off-diagonal entry of $\bGamma_k$ be the positively truncated normal distribution $N_+(\mu_\delta, \sigma_\delta^2)$.
The difference between diagonal and off-diagonal priors gives flexibility in modeling the adjacency probability of two nodes from the same community to be larger than that from two adjacent communities.
For parameter $\bnu$ of the categorical distribution on $\cX_0$, we assign the Dirichlet prior $\Dir(\alpha \one)$ over the probability simplex $\cS_+^{2^{\frac{p_0(p_0 - 1)}{2}} - 1}$.
Here $\mu_C, \sigma_C^2, \mu_\gamma, \sigma_\gamma^2, \mu_\delta, \sigma_\delta^2, \alpha$ are all hyperparameters.

To encourage statistical parsimony and prevent each node from belonging to too many communities, we enforce sparsity in the connection matrices.
For each $\ab_{k, i}$, we let its prior be the uniform distribution $\mathrm{Unif}(\cV_k)$, where $\cV_k := \big\{ \vb \in \{0, 1\}^{p_k}: 1 \le \one^\top \vb \le S \big\}$ is the set of all binary vectors under a sparsity constraint.
The hyperparameter $S$ is the maximum number of communities allowed for each node, which we recommend choosing as 1 or 2 for interpretability and computational efficiency. 

Let $\cA_3$ denote the space of all $\Ab$ satisfying the sparsity constraint.
The prior distribution $\PP(\Ab, \bTheta)$ in \eqref{eq:prior} is supported on $\cA_3 \times \cT_0$ and absolutely continuous with respect to the base measure $\upmu \times \uplambda$, so Theorems \ref{theo:strict} and \ref{theo:generic} respectively hold on parameter spaces $(\cA_1 \cap \cA_3) \times \cT_0$ and $(\cA_2 \cap \cA_3) \times \cT_0$.
We also note that the conditions in defining $\cA_2$ can be simplified when sparsity hyperparameter $S \le 2$, as shown in the following proposition.

\begin{proposition}\label{prop:simp_A2}
Let $S \in \{1, 2\}$ and $p_0 \ge 3$.
If $\Ab \in \cA_3$ does not have an all-ones column, then $\Ab \in \cA_{2, 1}$ is equivalent to $\Ab \in \cA_2$.
\end{proposition}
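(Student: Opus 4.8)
The plan is to prove the two inclusions separately, the reverse one being immediate and the forward one carrying all the content. Since $\cA_2 = \cA_{2,1}\cap\cA_{2,2}\cap\cA_{2,3}$ we trivially have $\cA_2 \subseteq \cA_{2,1}$, so it remains to show that every $\Ab \in \cA_{2,1}\cap\cA_3$ with no all-ones column also lies in $\cA_{2,2}\cap\cA_{2,3}$ when $S\in\{1,2\}$ and $p_0\ge3$. All three defining conditions quantify over layers $k\in[K]$ independently, so I would fix $k$ and argue for $\Ab_k$ alone, using that the pyramid assumption $p_k\ge p_{k-1}$ gives $p_{k-1}\ge p_0\ge3$ for every $k$. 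After applying the row permutation $\Pb_k$ from \eqref{eq:A21_def}, the first $p_{k-1}$ rows of $\Ab_k$ form $\Mb_k^\top$ with $\Mb_k\in\cM_{p_{k-1}}$; the membership of top-block node $i$ is the column $\mb_i$ of $\Mb_k$, and the row sparsity $\one^\top\ab_{k,i}\le S\le2$ translates into each column of $\Mb_k$ carrying at most two ones, namely the diagonal one plus at most one more.

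For $\cA_{2,2}$ I would show $\Ab_k$ has distinct columns. Suppose columns $s\ne t$ of $\Ab_k$ coincide; restricting to the $\Mb_k^\top$ block forces rows $s$ and $t$ of $\Mb_k$ to be equal. Reading off the diagonal entries $M_{s,s}=M_{t,t}=1$ then gives $M_{t,s}=M_{s,t}=1$, and since each column of $\Mb_k$ has at most two ones, columns $s$ and $t$ must both equal $\mathbf e_s+\mathbf e_t$, i.e.\ $\mb_s=\mb_t$. Equal columns immediately violate condition (ii) of Definition \ref{defi:M}: picking any $\ell\notin\{s,t\}$ (available since $p_{k-1}\ge3$), the singleton pair-sets $\cS_1=\{(s,\ell)\}$ and $\cS_2=\{(t,\ell)\}$ (indices ordered increasingly) yield $\mb_s\mb_\ell^\top+\mb_\ell\mb_s^\top=\mb_t\mb_\ell^\top+\mb_\ell\mb_t^\top$ and hence identical off-diagonal matrices, contradicting $\Mb_k\in\cM_{p_{k-1}}$. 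Thus the columns of $\Ab_k$ are distinct and $\Ab\in\cA_{2,2}$.

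For $\cA_{2,3}$ I would first reinterpret the rank condition combinatorially. Writing $V_s\subseteq[p_k]$ for the support of column $s$ of $\Ab_k$ (the members of community $s$) and $V_0=[p_k]$, the columns of $\Db_k$ are exactly the clique-edge indicators $\mathbf c_s=\mathbf 1\{\{i,j\}\subseteq V_s\}$, so $\rank(\Db_k)=p_{k-1}+1$ is equivalent to linear independence of $\mathbf c_0,\mathbf c_1,\dots,\mathbf c_{p_{k-1}}$. The sparsity $S\le2$ means every node belongs to at most two communities, i.e.\ the family $\{V_s\}$ has the graphic structure in which each node is incident to at most two of the sets; the two $\cM_{p_{k-1}}$ blocks guarantee $|V_s|\ge2$, the no-all-ones-column hypothesis gives $V_s\subsetneq V_0$, and $\cA_{2,2}$ (just proved) gives $V_s\ne V_t$ for $s\ne t$. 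Given a dependence $\sum_{s\ge0}\lambda_s\mathbf c_s=0$, evaluating at a pair lying in no common community forces $\lambda_0=0$, while evaluating at pairs sharing exactly one community isolates each $\lambda_s$. I would manufacture these probing pairs from the two blocks: the pair (top node $s$, second-block node $s$) shares community $s$, and shares a second community $t$ only in the ``entangled'' case where both have partner $t$; in that case $V_s\subsetneq V_t$ by distinctness and sparsity, so this pair gives $\lambda_0+\lambda_s+\lambda_t=0$ while pairing the top node $s$ (of type $\{s,t\}$) with a node in $V_t\setminus V_s$ gives $\lambda_0+\lambda_t=0$, and subtracting yields $\lambda_s=0$. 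The disjoint pair needed for $\lambda_0=0$ is produced from a node lying outside a chosen community (supplied by the no-all-ones hypothesis) together with a node belonging to that community only, using $p_{k-1}\ge3$ to leave room for the index choices.

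The main obstacle is this last step, the full-rank claim $\cA_{2,3}$: unlike $\cA_{2,2}$, it is not a single clean violation of condition (ii) but a finite case analysis over how the member-sets $V_s$ overlap, and the difficulty is guaranteeing that the ``pure'' probing pairs (sharing exactly one, or exactly a prescribed two, communities) exist in every degenerate configuration where several communities co-occur. The constraints $S\le2$ (bounding pairwise co-occurrence so any two nodes share at most two communities), $p_0\ge3$ (room to choose auxiliary indices), no all-ones column (outside nodes), and the already-established distinctness of columns are exactly the ingredients that make each configuration reducible; organizing this analysis uniformly in the structure of $\Mb_k$ and $\Mb_k'$ is where the real work lies.
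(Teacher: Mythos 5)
Your overall route is the same as the paper's: the inclusion $\cA_2 \subseteq \cA_{2,1}$ is immediate, the distinct-columns condition $\cA_{2,2}$ is extracted from the $\cM_{p_{k-1}}$ block, and the rank condition $\cA_{2,3}$ is attacked by probing the rows of $\Db_k$ with pairs of nodes sharing zero, one, or two communities (the paper does this through Lemmas \ref{lemm:A21_A3_to_A22}--\ref{lemm:A21_A22_A3_to_A23}). Your $\cA_{2,2}$ argument is complete, correct, and coincides with the paper's: identical columns of $\Ab_k$ plus the unit diagonals and the $S \le 2$ sparsity force two equal columns of $\Mb_k$, which violates condition (ii) of Definition \ref{defi:M} via the singleton pair-sets $\{(s,\ell)\}$ and $\{(t,\ell)\}$.

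The genuine gap is in the $\cA_{2,3}$ half, and it is not merely the "organization" you defer. First, your disjoint pair is built from "a node lying outside a chosen community together with a node belonging to that community only," i.e., a pure node. But pure nodes need not exist under $\cA_{2,1}$ --- permitting their absence is exactly the point of relaxing $\cA_1$ to $\cA_{2,1}$; every row of both $\cM$-blocks may have two ones. The paper instead proves existence of a pair with $\ab_{k,i_0} * \ab_{k,j_0} = \zero$ by contradiction (Lemma \ref{lemm:A21_A3_to_A23_part1}): if no such pair exists, a pure row would force an all-ones column, so every row has exactly two ones, and chasing pairwise intersections against the no-all-ones-column hypothesis produces a row with three ones, contradicting $S \le 2$. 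Second, that argument needs $p_{k-1} \ge 4$: for $p_{k-1} = 3$ a disjoint pair can genuinely fail to exist, namely in the cyclic pattern where every row is one of $(1,1,0)$, $(0,1,1)$, $(1,0,1)$, which satisfies $\cA_3$ and has no all-ones column. There the paper abandons the $\lambda_0$-first strategy and instead exhibits four explicit linearly independent rows of $\Db_k$ (Lemmas \ref{lemm:A3_to_A23_p3} and \ref{lemm:A21_A22_A3_to_A23}). Your uniform strategy needs either that computation or a separate proof that the cyclic pattern is incompatible with containing two disjoint $\cM_3$-blocks; you supply neither. Third, a smaller slip: in the entangled case you assert "$V_s \subsetneq V_t$ by distinctness and sparsity," but distinctness only yields $V_s \ne V_t$; what is actually needed (and what the paper's Lemma \ref{lemm:A21_A22_A3_to_A23_part2} uses) is a node in the symmetric difference, followed by a case split on whether it lies in $V_s \setminus V_t$ (giving a pure probing pair for $s$ directly) or in $V_t \setminus V_s$ (giving your subtraction argument). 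With these repairs your outline becomes the paper's proof; without them the $\cA_{2,3}$ half remains a plan rather than a proof.
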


In the following Section \ref{ssec:post_comp}, we will introduce a MCMC sampler with data augmentation over space $\cA_3 \times \cT_0$ for posterior computation.
To accommodate conditions for generic identifiability, we could afterwards remove the small proportion of samples outside $(\cA_2 \cap \cA_3) \times \cT_0$.
Proposition \ref{prop:simp_A2} essentially suggests that we only need to verify samples with $\Ab$ lying in $\cA_{2, 1}$ instead of $\cA_2$, largely simplifying this post-processing step.
Its proof is provided in the supplementary material Section \ref{sec:class_M}.

\subsection{MCMC Sampling and Spectral Initialization}\label{ssec:post_comp}

To conduct posterior inference, we design a Gibbs sampler using Polya-Gamma data augmentation \citep{polson2013bayesian}.
We associate each entry $X_{k, i, j}^{(n)}$ with an augmented Polya-Gamma variable $\omega_{k, i, j}^{(n)}$, so that the full conditional distributions of each parameter and each latent variable are all tractable through semi-conjugacy.
Conditioned on all other parameters and latent variables, each row $\ab_{k, i}^\top$ in $\Ab_k$ follows a categorical distribution over $\cV_k$, each $C_k$ follows a normal distribution, each entry $\Gamma_{k, i, j}$ in $\bGamma_k$ follows a truncated normal distribution, $\bnu$ follows a Dirichlet distribution, each entry $X_{k, i, j}^{(n)}$ in $\Xb_k^{(n)}$ follows a Bernoulli distribution, and each augmented variable $\omega_{k, i, j}^{(n)}$ follows a Polya-Gamma distribution.
Details of this Gibbs sampler are provided in the supplementary material Section \ref{sec:post_comp_supp}.

As is typical with complex and hierarchical Bayesian models, concerns can arise regarding the efficiency of the Gibbs sampler when applied to large network datasets.
To address this concern, we further design an efficient algorithm leveraging techniques including subsampling MCMC \citep{quiroz2018speeding, quiroz2018subsampling, maire2019informed} and spectral initialization using the Mixed-SCORE algorithm \citep{jin2023mixed}.
Subsampling MCMC significantly reduces the computational complexity of obtaining one posterior sample while only slightly compromising the accuracy of stationary distribution.
Spectral initialization, through theoretical justification, allows an initialization close to the posterior mode, which further reduces the burn-in period in MCMC sampling.
Due to space limitations, we defer the computational details to the supplementary material Section \ref{sec:post_comp_supp}.
We also empirically demonstrate the consistency of posterior inference and model selection for our model through abundant simulation studies in the supplementary material Section \ref{sec:sim}.

\section{Application to Brain Connectome Data}\label{sec:appl}

As one of the motivating examples of multiplex networks, we apply our model to brain structural connectome data from the Human Connectome Project (HCP) \citep{van2012human, elam2021human}.
Following the Desikan atlas \citep{desikan2006automated}, the brain is divided into 68 anatomical regions of interest (ROI), and the observed data are the $68 \times 68$ adjacency matrices for each of the 1065 individuals.
In addition to the network data, a variety of cognitive traits of each individual are also available, including test scores for skills such as linguistics and memory, use of tobacco and marijuana, etc.

As discussed in Section \ref{sec:post}, the sparsity hyperparameter $S$ controls the number of deeper-layer communities that each node can belong to, and we recommend $S \in \{1,2\}$ for interpretability and computational/statistical efficiency.
For this application, we focus on three-layer Bayesian network structures ($K = 2$) and carry out our analyses for both $S = 1$ and $S = 2$.
We conduct model selection based on the Watanabe-Akaike information criterion (WAIC) in \citet{watanabe2010asymptotic} and make posterior inference through a combination of spectral initialization and MCMC sampling, with details included in the supplementary material Section \ref{ssec:model_sel}.
Under sparsity parameter $S = 1$, the optimal model has structure $p_0 = 4, p_1 = 21$, and under sparsity parameter $S = 2$, the optimal model has structure $p_0 = 4, p_1 = 18$.

In Section \ref{ssec:appl_part_clus}, we demonstrate that our model can successfully identify and cluster together the ROIs that are spatially close on the cerebral cortex, even though no such location information is used in our analysis.
In Section \ref{ssec:appl_cluster}, we show that our model can group the $N = 1065$ individuals into clusters interpretable by significant differences in cognitive traits, even though they are formed purely based on the observed adjacency matrices without supervision.
In the supplementary material Section \ref{ssec:appl_relate}, we show that the latent features inferred by our model exhibit substantially stronger associations with cognitive traits than those obtained from alternative methods.
In the supplementary material Section \ref{ssec:mask}, we further show that our model provides substantially more accurate prediction of missing edges in multiplex networks than alternative methods, highlighting the benefits of full Bayesian inference.

\subsection{Multi-resolution Partitioning and Clustering of Brain Regions}\label{ssec:appl_part_clus}

Our first goal is to group the 68 brain ROIs into meaningful hierarchical partitions and clusters.
As explained in Section \ref{sec:model}, the $p_1$ nodes in the middle layer of our model can be viewed as communities for the $p_2 = 68$ ROIs in the observed adjacency matrices, and the $p_0$ nodes in the top (deepest) layer can be viewed as communities for the $p_1$ nodes in the middle layer.
The connection matrices $\Ab =  \{\Ab_1, \Ab_2\}$ indicate the (overlapping) memberships of these community assignments, which naturally represent a hierarchical partitioning (when $S = 1$) or overlapping clustering (when $S = 2$) of the 68 ROIs.

\begin{figure}[ht]
\centering
\subfigure[
Visualization of $p_0 = 4$ higher-level clusters of ROIs on cerebral cortex.
]{
\includegraphics[width = 0.4\textwidth]{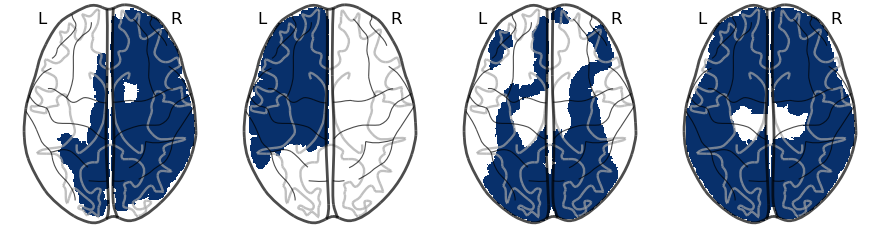}
\label{fig:app_brain_S2_A1}
}
\hfill
\subfigure[
Visualization of $p_1 = 18$ lower-level clusters of ROIs on cerebral cortex.
]{
\includegraphics[width = 0.8\textwidth]{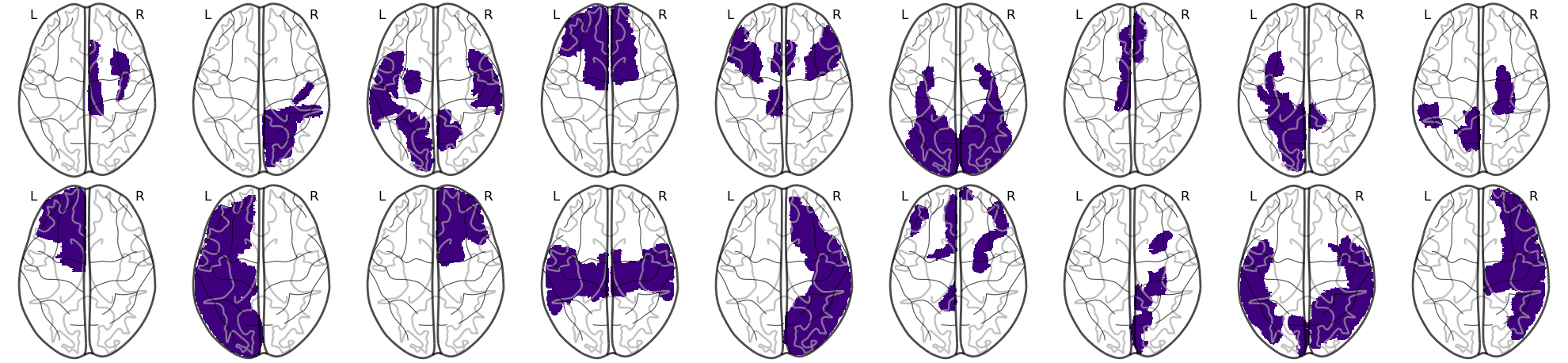}
\label{fig:app_brain_S2_A2}
}
\caption{
Posterior distributions of the higher-level and lower-level clusters given by connection matrices $\Ab$, under sparsity hyperparameter $S = 2$ and number of nodes in each Bayesian network layer $p_0 = 4, p_1 = 18, p_2 = 68$.
}
\label{fig:app_brain_S2}
\end{figure}

For each posterior sample $\Ab^{(t)}$, we obtain $p_1$ clusters through $\Ab_2^{(t)}$ and $p_0$ higher-level clusters of these $p_1$ clusters through $\Ab_1^{(t)}$.
Under $S = 1$, we define an ROI as belonging to a higher-level (deeper-layer) cluster if there exists a path from this ROI to that higher-level cluster in the DAG structure visualized in Figure \ref{fig:tikz_A}.
Under $S = 2$, we further require the number of paths from an ROI to its higher-level cluster to be at least two, in order to adjust for the increased number of edges in the DAG structure.
Therefore, the $j$th (lower-level) cluster is formed by the ROIs in $\{i \in [p_2]:~ A_{2, i, j}^{(t)} = 1\}$ and the $\ell$th higher-level cluster is formed by the ROIs in $\{i \in [p_2]:~ (\Ab_2^{(t)} \Ab_1^{(t)})_{i, \ell} \ge S\}$.
The posterior samples are post-processed to handle label switching \citep{stephens2000dealing} before being used to calculate posterior summaries.

We present the posterior distributions of the lower-level and higher-level clusters for $S = 2$ and defer the results for $S = 1$ to the supplementary material Section \ref{sec:more_app}.
For $S = 2$, the optimal number of nodes in each layer of our model is $p_0 = 4, p_1 = 18, p_2 = 68$.
Under this optimal choice, the marginal posterior distributions of $p_0$ higher-level clusters and $p_1$ lower-level clusters are visualized in Figure \ref{fig:app_brain_S2} on the cerebral cortex in a top-down view.
An interesting observation is that, almost all lower-level and higher-level clusters are formed by brain regions that are geographically close together on the cerebral cortex, even though we do not include spatial locations of nodes in our analysis.
This suggests our model's excellent ability to extract important neurobiological information from the brain connectivity data alone.

The learned lower-level and higher-level clusters are highly interpretable.
The lower-level clusters in Figure \ref{fig:app_brain_S2_A2} form 18 overlapping regions on the cerebral cortex roughly corresponding to the frontal, parietal, temporal, and occipital lobes of the brain.
Some of clusters are restricted to one hemisphere, while others include regions that are symmetric in the left and right hemispheres.
In Figure \ref{fig:app_brain_S2_A1}, the four higher-level clusters exhibit clear anatomical structure: two roughly correspond to the left and right hemispheres, one aligns with the occipital lobe, and the remaining cluster captures the combination of frontal, temporal, and occipital lobes.
Overall, these clusters closely reflect known anatomical organization of the human brain.

\subsection{Clustering Individuals}\label{ssec:appl_cluster}

We are also interested in grouping the 1065 individuals into interpretable clusters based on their brain connectivity.
In the following, we demonstrate that the entries in $\{\Xb_1^{(n)}: n = 1, \ldots, 1065\}$ can be naturally used for hierarchical clustering.
For an individual $n$ and $L$ arbitrary entries of $\Xb_1^{(n)}$, we let $\bar{x}_1^{(n)}, \ldots, \bar{x}_L^{(n)}$ denote the posterior means of these entries estimated from the MCMC samples.
By picking $L$ thresholds $m_1, \ldots, m_L \in (0, 1)$, we can group the individuals into a total of $2^L$ clusters consisting of
$$
\left\{
n \in [N]:~
\forall \ell \in [L],~
1_{\bar{x}_\ell^{(n)} > m_\ell} = z_\ell
\right\}
$$
for each binary vector $\zb \in \{0, 1\}^L$ indicating the cluster.
By choosing an order of the $L$ entries, these $2^L$ clusters can be viewed as the leaves of a complete binary tree, where each parent cluster has two children clusters obtained from further partitioning by an entry of $\Xb_1$.
We recommend choosing $m_\ell$ as the median of $\bar{x}_\ell^{(1)}, \ldots, \bar{x}_\ell^{(N)}$.
It is also possible to form more than two clusters using one entry of $\Xb_1$, in which case the multiple thresholds could be chosen as quantiles of $\bar{x}_\ell^{(1)}, \ldots, \bar{x}_\ell^{(N)}$.

We assess how different clusters of individuals vary in their cognitive traits.
A significant difference would help further interpret the clusters.
For instance, one cluster could correspond to people with stronger addiction to marijuana and the other cluster could correspond to people with weaker addiction.
For each active entry of $\Xb_1$, we divide individuals into two clusters based on whether their posterior means $\bar{x}^{(n)}$ are above or below the median $m$.
We test for the difference between these two groups over the 175-dimensional vectors of cognitive traits using Hotelling's $T^2$ test \citep{hotelling1931generalization}, conducted for each active entry and both of our $S = 1$ and $S = 2$ models.
For either choice of $S$, we find that around half of the active entries in $\Xb_1$ have significant $p$-values ($< 0.1$), indicating that our unsupervised model is inferring latent adjacency edges that are related to many cognitive traits.
Histograms of these $p$-values are provided in the supplementary material Section \ref{sec:more_app}.

\begin{figure}[ht]
\centering
\subfigure[
Model: $S = 1, p_0 = 4, p_1 = 21, p_2 = 68$.
Entry: $X_{1, 18, 7}$.
Cognitive trait: age of first cigarette.
]{
\includegraphics[width=0.45\textwidth]{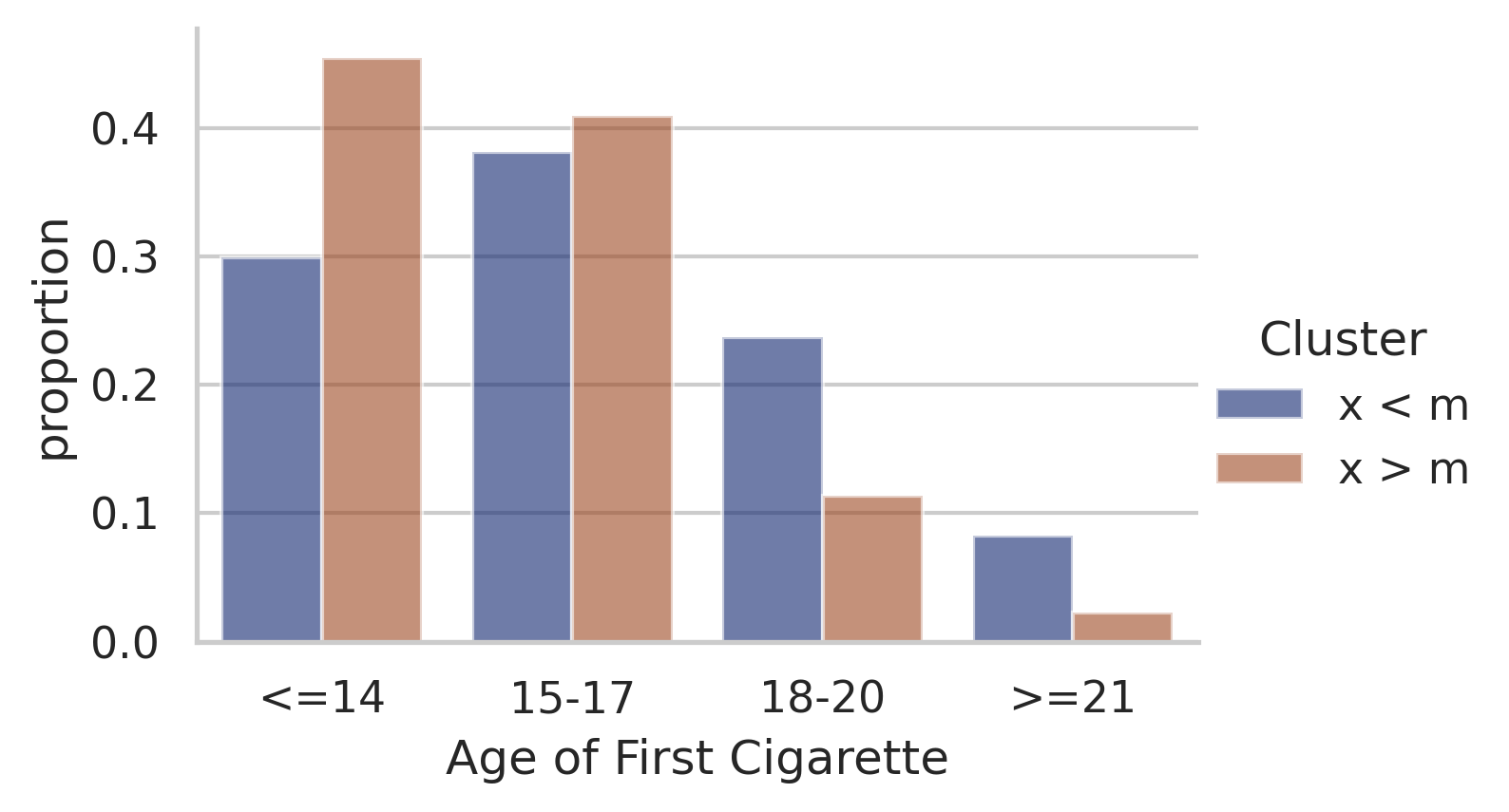}
\label{fig:cluster_S1}
}
\hfill
\subfigure[
Model: $S = 2, p_0 = 4, p_1 = 18, p_2 = 68$.
Entry: $X_{1, 14, 10}$.
Cognitive trait: (age-adjusted) grip strength.
]{
\includegraphics[width=0.45\textwidth]{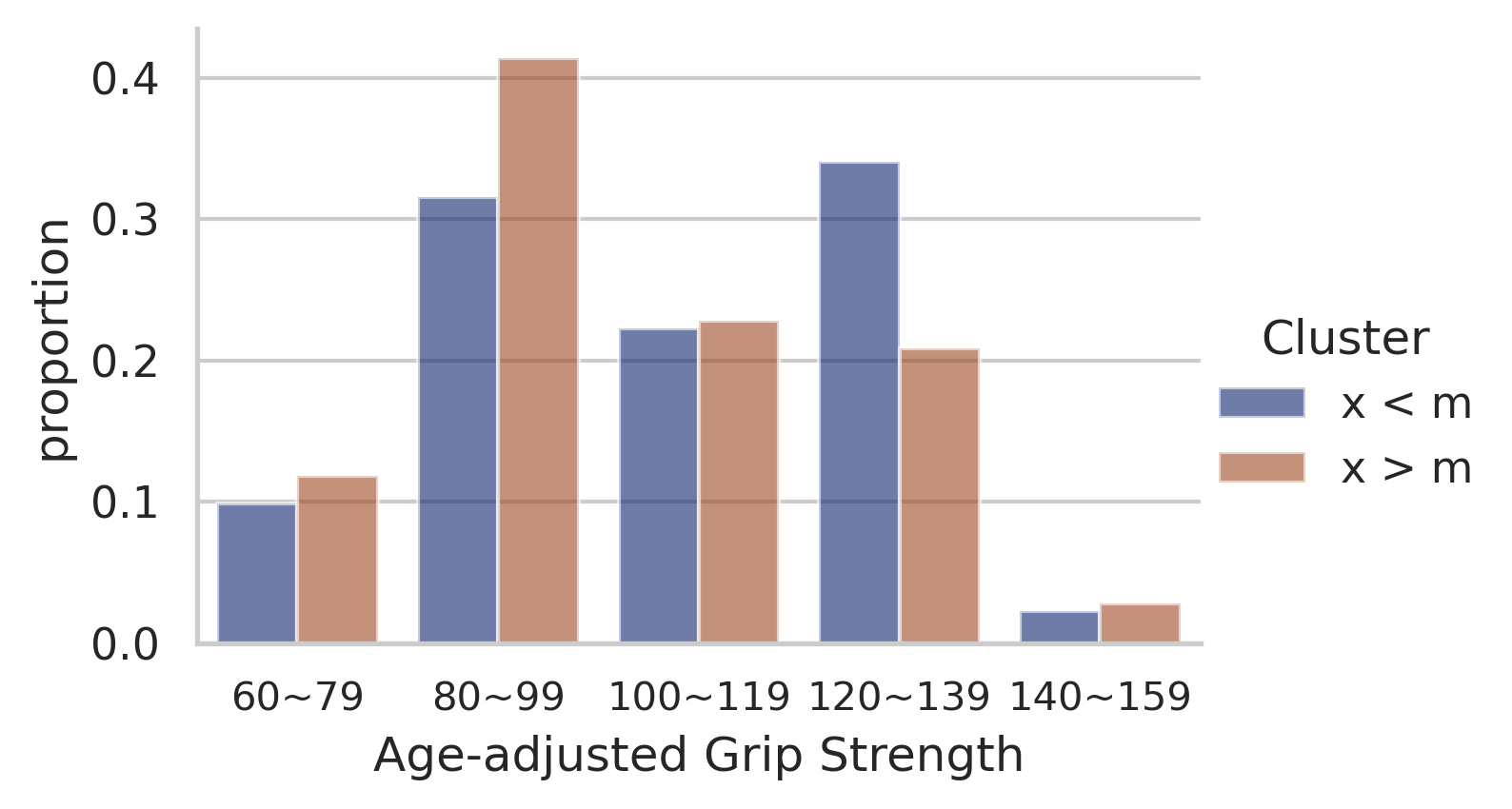}
\label{fig:cluster_S2}
}
\caption{
The distributions of some cognitive traits in the two clusters of individuals with entrywise posterior mean $\bar{x}^{(n)} < m$ and $\bar{x}^{(n)} > m$, with $m$ the median of $\bar{x}^{(1:N)}$.
}
\label{fig:cluster}
\end{figure}

We provide two examples of active entries in $\Xb_1$ for which the formed clusters have significantly varying cognitive traits.
For the model with $S = 1, p_0 = 4, p_1 = 21, p_2 = 68$, the cluster of individuals with larger posterior mean of $X_{1, 18, 7}$ tends to have an earlier age of smoking compared to the cluster of individuals with smaller posterior mean of $X_{1, 18, 7}$.
The 18th community in this model roughly corresponds to the occipital lobe of the right hemisphere, and the 7th community roughly corresponds to the parietal lobe of the left hemisphere.
This suggests that clusters of individuals formed by different strengths of brain connectivity between these two brain regions differ in their age of first cigarette use.
We restrict the population to the smoking individuals in this example.
The distributions of age of first cigarette use for both clusters are plotted in Figure \ref{fig:cluster_S1}.

For the model with $S = 2, p_0 = 4, p_1 = 18, p_2 = 68$, we find that the cluster of individuals with smaller posterior mean of $X_{1, 14, 10}$ tends to have a stronger (age-adjusted) grip strength compared to the cluster of individuals with larger posterior mean of $X_{1, 14, 10}$.
The 14th community in this model roughly corresponds to the right hemisphere, and the 10th community roughly corresponds to the frontal lobe of the left hemisphere.
This suggests that clusters of individuals formed by different strength of brain connectivity between these two brain regions differ in their grip strength.
The distributions of age-adjusted grip strength for both clusters are plotted in Figure \ref{fig:cluster_S2}.
While our conclusions are based on an unsupervised exploratory data analysis, closely related results can be found in the neurobiology literature \citep{cheng2019decreased, shaughnessy2020narrative, jiang2022associations}.

\section{Discussion}\label{sec:dis}

There are multiple directions for future work.
First, we are currently using a simple unstructured prior distribution for the model parameters.
Using structured priors such as low-rank representations for the top layer adjacency matrix could further reduce the dimension of model parameters.
Second, the sparsity of our model is enforced as a hard constraint.
Ideas of constraint relaxation could potentially give more flexible control of model sparsity as well as further improvement to the efficiency of MCMC.
Third, our identifiability theory tools can be potentially applied to a much broader range of models, encompassing more flexible data types and alternative latent variable structures.
Fourth, our current MCMC sampler involves a Gibbs sampler using all the data following a subsampling Gibbs sampler.
An interesting question is how to perform model selection by WAIC or WBIC using these approximate posterior samples.
Fifth, while our model is motivated from the application to brain connectome data, it is of immediate interest to consider applications to multiplex network data in ecological and social networks among other areas; in some of these cases the replicates have a temporal ordering, suggesting time series extensions of our model.
Finally, our model can be viewed as a deep generative model for multiplex networks, and there is essentially no conceptual or theoretical difficulty in using our model with a much larger $K$ (i.e., much deeper structure) than considered in this paper.
To this end, it remains to explore how to further scale up computation to adapt to truly deep model architectures.

\section*{Acknowledgements}
This work was partially supported by the National Institutes of Health (grant ID R01ES035625), by the European Research Council under the European Union’s Horizon 2020 research and innovation programme (grant agreement No 856506), by NSF grant DMS-2210796, by Office of Naval Research (ONR) projects N00014-21-1-2510 and N00014-24-1-2626.

\section*{Supplementary Material}

The Supplementary Material contains additional discussions and technical details supporting the proposed model and methodology. This includes further literature on multiplex networks, connections to related models, additional model illustrations, and extensions of prior specifications. We provide complete proofs for all theorems and propositions, along with the development of novel technical tools for establishing model identifiability.

Detailed descriptions of posterior computation are given, including the data-augmented Gibbs sampler, its subsampling variant, spectral initialization, and practical guidance on model selection. Additional theoretical results on posterior consistency are also included.

We further present comprehensive simulation studies, including sensitivity analysis, large-$p_k$ small-$N$ regimes, and comparisons with hierarchical community detection methods. Additional data analyses are provided, covering model selection, multi-resolution clustering of brain regions, relationships between inferred latent features and cognitive traits, and prediction of missing edges in multiplex networks. Implementation details and computational scalability are also discussed.

\bibliography{ref}
\bibliographystyle{apalike}

\pagebreak
\appendix

\tableofcontents

\paragraph{Notations.}
We use bold capital letters (e.g. $\Xb$) to denote matrices and tensors, bold letters (e.g. $\xb$) to denote vectors, and non-bold letters (e.g. $X, x$) to denote scalars.
Following \cite{kolda2009tensor}, we use $\Xb * \Yb$, $\Xb \otimes \Yb$, and $\Xb \odot \Yb$ to denote the Hadamard product, Kronecker product, and Khatri-Rao product, of two matrices $\Xb, \Yb$, and use $\vec(\cdot)$ to denote the vectorization of tensors.
For matrices $\Xb, \Yb$ of the same shape, we define inner product $\langle \Xb, \Yb \rangle := \tr(\Xb \Yb^\top)$.
For a vector $\vb$, we let $\diag(\vb)$ denote the diagonal matrix with diagonal entries being entries of $\vb$.
For a square matrix $\Mb$, we let $\diag(\Mb)$ denote the diagonal matrix obtained by setting all off-diagonal entries of $\Mb$ to zero, and let $\offdiag(\Mb) := \Mb - \diag(\Mb)$ denote the matrix obtained by setting all diagonal entries of $\Mb$ to zero.
We let $\eb_i$ denote the indicator vector with the $i$th entry equal to one and all other entries equal to zero.
The logit function is defined as $\logit(p) := \log\frac{p}{1 - p}$ for $p \in (0, 1)$.
We denote the cardinality of a set $\cB$ by $|\cB|$.
For a sequence of random variables $\{Z^{(n)}\}_{n = 1}^\infty$, we let $Z^{(N_1:N_2)}$ denote the collection $\{Z^{(n)}\}_{n = N_1}^{N_2}$ and let $Z^{(1:\infty)}$ denote the whole sequence.
For integer valued variables and integers $k_1, k_2 \in \ZZ$, we let $[k_1, k_2]$ denote the collection of integers $\{k_1, k_1 + 1, \ldots, k_2\}$.
For $k \in \ZZ$, we abbreviate $[k] := [1, k]$ and let $\langle k \rangle := \{(i, j):~ 1 \leq i < j \leq k\}$ denote the collection of all $\frac{k (k - 1)}{2}$ distinct pairs of integers in $[k]$.
We use $\cS^m$ to denote the $m$-dimensional probability simplex embedded in  $\RR^{m + 1}$, defined as $\cS^m := \{\bnu \in \RR^{m + 1}: ~ \nu_i \ge 0~ \forall i \in [m + 1],~ \one^\top \bnu = 1\}$, and let $\cS_+^m$ denote its interior $\{\bnu \in \RR^{m + 1}: ~ \nu_i > 0~ \forall i \in [m + 1],~ \one^\top \bnu = 1\}$.
For two measures $\upmu_1, \upmu_2$, we denote $\upmu_1$ absolutely continuous with respect to $\upmu_2$ by $\upmu_1 \ll \upmu_2$.

\section{Further Discussions}\label{sec:fur_disc}

In this section, we provide further discussions that are not included in the main paper due to space limitations.
We include more related literature in Section \ref{ssec:more_lit}, discuss the connection of our model with deep belief networks in Section \ref{ssec:dbn}, present a graphical illustration of the additive effects in our model formulation in Section \ref{ssec:add_model_ill}, provide more advanced ideas of designing prior distributions in Section \ref{ssec:adv_prior}, and discuss the practical benefit of potentially truncating the continuous parameters in Section \ref{ssec:trunc_param}.

\subsection{More Literature on Multiplex Networks}\label{ssec:more_lit}

In recent years, many works have emerged that study multiplex networks.
We refer readers broadly interested in this field to \citet{stanley2016clustering, reyes2016stochastic, le2018estimating, diquigiovanni2019analysis, lei2020consistent, paul2020random, arroyo2020simultaneous, jing2021community, mantziou2021bayesian, chen2022global, fan2022alma, young2022clustering, josephs2023nested}.

\subsection{Connection with Deep Belief Networks}\label{ssec:dbn}

There is an interesting connection between the proposed model and deep belief networks (DBNs) and their variants \citep{neal1992connectionist, hinton2006fast, gu2021bayesian}. Although DBNs are used to model simpler multivariate data rather than multiplex networks, our model and DBNs share a common feature in adopting multiple binary latent layers to model the hierarchical data generative process. 
Intuitively speaking, our model can be viewed as nuanced ``folding'' of the vector in each layer of a DBN to be a network adjacency matrix, so that observed adjacency matrices are generated from multiscale latent adjacency matrices hierarchically. Despite this resemblance, a distinguishing property of our model is \emph{sparsity} of the between layer directed edges.
In the context of multiplex networks, since the deeper latent nodes can be viewed as communities for shallower nodes in the graphical model, the sparsity between layers can be interpreted as each node belonging to a few, instead of all, deeper communities.
The between-layer sparsity not only enhances model interpretability and parsimony, but also is key to showing identifiability of our model, whereas DBNs typically assume fully connected edges between layers.

\subsection{Additional Model Illustrations}\label{ssec:add_model_ill}

We now provide some further intuition of our model through graphical illustrations.

Recall that our model has a Bayesian network structure given in Figure \ref{fig:tikz_A}, with the $k$th layer of the Bayesian network denoted by $\cN_k$ and the nodes at the $k$th layer denoted by $\cN_{k, i}$'s for $i \in [p_k]$.
As discussed in Section \ref{sec:model}, the formulation of our model \eqref{eq:model_entry}, \eqref{eq:model} is designed such that two nodes $\cN_{k, i}$ and $\cN_{k, j}$ in layer $\cN_k$ are more likely to be adjacent, i.e. $\PP\big( X_{k, i, j}^{(n)} = 1 | \Xb_{k - 1}^{(n)}, \Ab_k, \bTheta_k \big)$ is relatively large, if they are connected to the same node $\cN_{k - 1, s}$ in layer $\cN_{k - 1}$, i.e. $A_{k, i, s} = A_{k, j, s} = 1$, or if they are separately connected to adjacent nodes $\cN_{k - 1, s}$ and $\cN_{k - 1, t}$ in layer $\cN_{k - 1}$, i.e. $A_{k, i, s} = A_{k, j, t} = 1$ and $X_{k - 1, s, t}^{(n)} = 1$.
For simplicity, we refer to the latter event of $A_{k, i, s} = A_{k, j, t} = 1$ and $X_{k - 1, s, t}^{(n)} = 1$ as nodes $\cN_{k, i}$ and $\cN_{k, j}$ being \textit{upper-connected}.
The logit of the conditional probability of $X_{k, i, j}^{(n)} = 1$ in \eqref{eq:model_entry} can be rewritten as
$$
\logit~\PP\left(
X_{k, i, j}^{(n)} = 1
~\Big|~
\Xb_{k - 1}^{(n)}, \Ab_k, \bTheta_k
\right)
=
C_k + \sum_{s = 1}^{p_{k - 1}} \sum_{t = 1}^{p_{k - 1}} \Gamma_{k, s, t} A_{k, i, s} A_{k, j, t} X_{k - 1, s, t}^{(n)}
.
$$

When two nodes $\cN_{k, i}$ and $\cN_{k, j}$ are not upper-connected, the logit equals the baseline $C_k$, suggesting a small probability $\frac{\exp(C_k)}{1 + \exp(C_k)}$ of these two nodes being adjacent.
For each shared node $\cN_{k - 1, s}$ that both $\cN_{k, i}$ and $\cN_{k, j}$ are connected to, the logit is increased by $\Gamma_{k, s, s}$.
For each pair of adjacent nodes $\cN_{k - 1, s}$ and $\cN_{k - 1, t}$ through which nodes $\cN_{k, i}$ and $\cN_{k, j}$ are upper-connected, the logit is further increased by $\Gamma_{k, s, t}$.
A graphical illustration of this additive effect in the logit conditional probability is provided in Figure \ref{fig:additive_effects}.

\begin{figure}[ht]
\centering
\begin{tikzpicture}[scale = 0.95]
\tikzmath{
for \x in {1, ..., 4}{
if {
(\x <= 2)
} then {
let \c = red!50;
} else {
let \c = black!10;
}; {
\fill[\c] (-0.5, 5 - \x) rectangle +(.95,.95);
}; };
for \y in {1, ..., 4}{
if {
(\y != 2)
} then {
let \c = red!50;
} else {
let \c = black!10;
}; {
\fill[\c] (\y, 5.5) rectangle + (0.95, 0.95);
}; };
for \x in {1, ..., 4}{ for \y in {1, ..., 4}{
if {
(\x <= 2) && (\y != 2)
} then {
let \c = red!50;
} else {
let \c = black!10;
}; {
\fill[\c] (\y, 5 - \x) rectangle + (0.95, 0.95);
}; }; };
for \x in {1, ..., 4}{ for \y in {1, ..., 4}{
if {
(\x == \y) || (\x + \y <= 3) || (\x + \y >= 7) || ((\x - \y) ^ 2 == 9)
} then {
let \c = blue!50;
} else {
let \c = black!10;
}; {
\fill[\c] (5 + \y, 5 - \x) rectangle + (0.95, 0.95);
}; }; };
for \x in {1, ..., 4}{ for \y in {1, ..., 4}{
if {
(\x + \y ^ 2 <= 3) || (\y - \x == 3)
} then {
let \c = green!50;
} else {
let \c = black!10;
}; {
\fill[\c] (10 + \y, 5 - \x) rectangle + (0.95, 0.95);
}; }; };
}
\node[anchor = west] at (-1.5, 3) {$\ab_{k, i}$};
\node[anchor = west] at (2.7, 7) {$\ab_{k, j}^\top$};
\node[anchor = west] at (2.6, 0.5) {$\ab_{k, i} \ab_{k, j}^\top$};
\node[anchor = west] at (5.3, 3) {$*$};
\node[anchor = west] at (5.3, 0.5) {$*$};
\node[anchor = west] at (7.7, 0.5) {$\Xb_{k - 1}^{(n)}$};
\node[anchor = west] at (10.3, 3) {$=$};
\node[anchor = west] at (10.3, 0.5) {$=$};
\node[anchor = west] at (11.5, 0.5) {$(\ab_{k, i} \ab_{k, j}^\top) * \Xb_{k - 1}^{(n)}$};
\node[anchor = west] at (-1.5, -0.5) {$\logit~ \PP(X_{k, i, j}^{(n)} = 1 | \Xb_{k - 1}^{(n)}, \Ab_k, \bTheta_k) = C_k + \left\langle (\ab_{k, i} \ab_{k, j}^\top) * \Xb_{k - 1}^{(n)}, \bGamma_k \right\rangle = C_k + \Gamma_{k, 1, 1} + \Gamma_{k, 2, 1} + \Gamma_{k, 1, 4}$.};
\end{tikzpicture}
\caption{
An illustration of the additive effects of $\Gamma_{k, i, j}$ in calculating the logit of the probability of nodes $\cN_{k, i}$ and $\cN_{k, j}$ being connected, where $\cN_{k, i}, \cN_{k, j}$ both belong to community $\cN_{k - 1, 1}$, and they separately belong to pairs of communities $(\cN_{k - 1, 2}, \cN_{k - 1, 1})$ and $(\cN_{k - 1, 1}, \cN_{k - 1, 4})$ adjacent for individual $n$.
}
\label{fig:additive_effects}
\end{figure}

\subsection{Advanced Ideas of Prior Distributions}\label{ssec:adv_prior}

While we have restricted our attention to the simple independent prior distributions in Section \ref{sec:post}, there are multiple more advanced ideas for designing prior distributions.
These ideas are briefly discussed in the following and left for future work.

Within our current model framework, the number of nodes $p_k$ at each layer of our Bayesian network is treated as a hyperparameter and fixed during model inference.
Each choice of $p_k (k \in [0, K])$ suggests a different model and we pick the best model through model selection approaches.
As discussed in Section \ref{sec:sim}, we recommend using WAIC \citep{watanabe2010asymptotic} or WBIC \citep{watanabe2013widely} as the information criteria for model selection.
An alternative to our model selection approach is to directly incorporate the $p_k$'s into our model as parameters.
We could potentially design our model in an over-complete formulation by choosing each $p_k$ large enough and adopt cumulative shrinkage priors to control the effective number of nodes at each layer $k$.
Such cumulative shrinkage priors are initially proposed for factor models \citep{bhattacharya2011sparse, legramanti2020bayesian} and have been extended to various models such as Bayesian networks \citep{gu2021bayesian}.

As discussed in Section \ref{sec:post}, we enforce sparsity in the connection matrices $\Ab$ through the hyperparameter $S$, which controls the maximum number of communities allowed for each node, or equivalently the maximum number of ones within each row of each $\Ab_k$.
Our Gibbs sampler treats each row of $\Ab_k$ as a block having support $\cV_k = \big\{ \vb \in \{0, 1\}^{p_{k - 1}}:~ 1 \le \one^\top \vb \le S \big\}$.
The set $\cV_k$ contains $\sum_{s = 1}^S \binom{p_{k - 1}}{s}$ binary vectors, each requiring an evaluation of the likelihood function during Gibbs sampling, which incurs a high computational cost when $S > 1$.
An alternative to this hard constrained sparsity is constraint relaxation \citep{duan2020bayesian}.
A constraint-relaxed posterior distribution is more smoothly varying, leading to potentially better mixing of Gibbs sampler and allowing us to sample each connection matrix $\Ab_k$ entrywise instead of rowwise, which could further reduce the computational complexity of the Gibbs sampler.

Our model formulation as given in Section \ref{sec:model} includes the parameter $\bnu \in \cS^{2^{\frac{p_0 (p_0 - 1)}{2}} - 1}$ that parameterizes the categorical distribution of $\Xb_0$ over $\cX_0$.
So far we have restricted attention to when the dimension $p_0$ is small ($p_0 \le 4$), partly for the sake of model interpretability and partly due to the quick growth in the dimension of $\bnu$ as $p_0$ increases.
When a larger $p_0$ is preferred, we could replace the categorical distribution of $\PP(\Xb_0 | \bnu)$ with more structured distributions.
For instance, we could let the upper-triangular, off-diagonal entries be conditionally independent given a joint one-dimensional parameter $v$, with the conditional distributions given by the Bernoulli distribution $\mathrm{Ber}(v)$.
A Beta distribution prior on $v$ can then be specified to preserve conditional conjugacy.

\subsection{Truncation of Continuous Parameters}\label{ssec:trunc_param}

In practice, we find that it is often helpful to truncate the unbounded distributions of the continuous parameters $C_k, \Gamma_{k, i, j}$'s to bounded intervals.
For instance, we could restrict each $C_k$ to the region $[\tilde{C}_-, \tilde{C}_+]$, each diagonal entry $\Gamma_{k, i, i}$ to the region $[\tilde\gamma_-, \tilde\gamma_+]$, and each off-diagonal entry $\Gamma_{k, i, j} (i \ne j)$ to the region $[\tilde\delta_-, \tilde\delta_+]$.

The lower and upper bounds $\tilde{C}_-$ and $\tilde{C}_+$ specify the range $\left[ \frac{\exp(\tilde{C}_-)}{1 + \exp(\tilde{C}_-)}, \frac{\exp(\tilde{C}_+)}{1 + \exp(\tilde{C}_+)} \right]$ allowed for the default probability of connection between two nodes that are not upper-connected, which could be chosen using prior knowledge of the problem.
For instance, if we want the default probability of connection to be at most 0.001, then we could choose $\tilde{C}_+ \approx -7$.

The lower bounds $\tilde\gamma_-$ and $\tilde\delta_-$ are small positive constants that could help improve the mixing of the Gibbs sampler.
Empirically, we have observed many local posterior modes around each $\Gamma_{k, i, j} = 0$.
Enforcing the positive lower bounds rule out these uninteresting local modes.

The upper bounds $\tilde\gamma_+$ and $\tilde\delta_+$ have some theoretical benefits.
For instance, the compactness of the parameter space allows application of Theorem \ref{theo:schwartz} (i) to all identifiable true parameters $(\Ab^*, \bTheta^*)$.
In practice, we find that when $\tilde\gamma_+, \tilde\delta_+$ are chosen to be large enough (e.g. $\tilde\gamma_+ = 20$, $\tilde\delta_+ = 10$), the difference between the truncated and non-truncated posterior distributions of the parameters $C_k, \Gamma_{k, i, j}$ are negligible.

\subsection{Extensions to Node Covariates and Weighted Edges}

The proposed model can be extended in several natural ways to incorporate additional information, such as node covariates or weighted edges.
While these extensions are conceptually straightforward at the modeling level, their identifiability properties differ from the binary-edge setting considered in this work and require additional assumptions to ensure theoretical guarantees.

\paragraph{Node covariates}
Node-level attributes can be incorporated by augmenting the generative mechanism for the latent adjacency matrices.
For example, one may allow the log-odds of an edge to depend on both the latent community interactions and observed node covariates through an additive or multiplicative structure.
This type of extension has been studied in related Bayesian network models, where covariates help explain heterogeneity beyond latent communities and often improve empirical clustering performance.
From an identifiability perspective, the presence of covariates modifies the parameterization of the link probabilities and can weaken or strengthen identifiability depending on how they interact with the latent structure.
In particular, identifiability typically requires that covariate effects and latent community effects do not compensate for each other in the likelihood; constraints such as centered covariate effects or identifiability restrictions on regression coefficients are commonly imposed to separate the latent and observed components.
A related set of identifiability arguments and sufficient conditions have been developed in recent work on covariate-dependent latent class models (for example, refer to  \cite{zhou2025bayesian}), and similar techniques would extend to the multiplex network setting considered here.

\paragraph{Weighted edges}
The model can also be generalized to settings where edges take values in $\RR$ or $\RR_+$, rather than being binary.
A simple extension replaces Bernoulli likelihoods with conditionally independent distributions appropriate for the edge weights, such as Gaussian, Poisson, or exponential families.
The latent adjacency matrices would then encode higher-level structural patterns that modulate the mean or dispersion of the weighted edges. 
Identifiability in this setting depends on the choice of edge-weight distribution.
When the conditional distribution belongs to a one-parameter exponential family with known variance structure, identifiability can often be established by adapting arguments for generalized linear models. 
However, weighted networks introduce a richer parameter space, and additional conditions, such as monotonicity or separation between community-specific parameters, are typically required to ensure that distinct latent structures induce distinguishable distributions on the network.
These ideas parallel existing identifiability results developed for weighted stochastic block models in recent literature.

Overall, both extensions are compatible with the framework of multilayer Bayesian networks and would broaden the applicability of the model.
A full identifiability theory for covariate-augmented or weighted-edge versions of the model is feasible but requires technical developments beyond the scope of the present work.
We view these directions as promising avenues for future research.

\subsection{Sparsity Level of Adjacency Matrices}\label{ssec:sparse_adjmat}

The proposed model, in its basic form, may generate relatively dense networks if the intercept parameter $C_k$ is kept fixed as the number of nodes $p_k$ grows.
In many real-world networks, however, the average degree remains bounded as $p_k \to \infty$, motivating the need to understand how sparsity can be incorporated into the model.

In our formulation, the edge probabilities at layer $k$ take the form
$$
\PP(X_{k, i, j} = 1 | \Xb_{k - 1}, \Ab_k, C_k, \bGamma_k)
=
\frac{1}{
1 + \exp\left(
-C_k - \ab_{k, i}^\top (\bGamma_k * \Xb_{k - 1}) \ab_{k, j}
\right)
}
.
$$
For node $i$ at layer $k$, we let $deg_k(i)$ denote its node degree in the adjacency matrix $\Xb_k$.
We show that the sparsity level of $\Xb_k$ can be be controlled through the scaling of $C_k$ as $p_k \to \infty$.

\begin{lemma}\label{lemm:node_deg}
As $p_k \to \infty$, suppose $C_k \asymp -\log p_k$ while $\bGamma_k$ remains fixed.
Then $\EE[deg_k(i)] = O(1)$.
\end{lemma}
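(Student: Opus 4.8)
The plan is to bound the expected degree directly, by summing the marginal edge probabilities and exploiting the fact that, under the between-layer sparsity encoded in $\cA_3$, the non-intercept part of the linear predictor stays uniformly bounded no matter how large $p_k$ and $p_{k-1}$ grow. Writing the degree as $deg_k(i) = \sum_{j \ne i} X_{k,i,j}$, linearity of expectation together with the conditional independence in \eqref{eq:model_entry} gives
\[
\EE[deg_k(i)] = \sum_{j \ne i} \EE_{\Xb_{k-1}}\big[ \PP(X_{k,i,j} = 1 \mid \Xb_{k-1}, \Ab_k, C_k, \bGamma_k) \big].
\]
Because the pointwise bound derived below is uniform over every $\Xb_{k-1} \in \cX_{k-1}$, I will not need to characterize the marginal law of $\Xb_{k-1}$: the same estimate survives the outer expectation.

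First I would control the Hadamard-weighted quadratic form. Since $\Ab \in \cA_3$, each row $\ab_{k,i}$ has at most $S$ nonzero entries, so in the expansion $\ab_{k,i}^\top (\bGamma_k * \Xb_{k-1}) \ab_{k,j} = \sum_{s,t} A_{k,i,s} A_{k,j,t}\, \Gamma_{k,s,t}\, X_{k-1,s,t}$ at most $S^2$ summands are nonzero. As $\bGamma_k$ is held fixed with entries bounded by $\bar\gamma := \max_{s,t}\Gamma_{k,s,t}$ and $X_{k-1,s,t} \in \{0,1\}$, this yields
\[
0 \le \ab_{k,i}^\top (\bGamma_k * \Xb_{k-1}) \ab_{k,j} \le S^2 \bar\gamma =: B,
\]
a constant independent of $p_k$ and $p_{k-1}$. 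I expect this to be the crux of the argument: it is precisely the bounded number of community memberships per node that prevents the linear predictor from growing as the layers widen, and the claim would fail for dense connection matrices.

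Next I would apply monotonicity of the logistic link and the elementary inequality $\sigma(x) = e^{x}/(1+e^{x}) \le e^{x}$ to obtain, for every $j$ and every $\Xb_{k-1}$,
\[
\PP(X_{k,i,j}=1 \mid \Xb_{k-1}, \Ab_k, C_k, \bGamma_k) \le \sigma(C_k + B) \le e^{B} e^{C_k}.
\]
Summing over the $p_k - 1$ off-diagonal neighbors and invoking the hypothesis $C_k \asymp -\log p_k$ in the sparse-network sense $e^{C_k} = O(p_k^{-1})$ (equivalently $C_k \le -\log p_k + O(1)$) gives
\[
\EE[deg_k(i)] \le (p_k - 1)\, e^{B} e^{C_k} \le c\, e^{B} = O(1)
\]
for a constant $c$ and all large $p_k$, which is the claim. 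The two points needing care in the full write-up are making the interpretation of $\asymp$ precise --- only the upper bound $e^{C_k} = O(p_k^{-1})$ is used for $O(1)$ expected degree --- and verifying that $B$ is a genuine constant, which rests on the fixed, bounded $\bGamma_k$ and the sparsity cap $S = O(1)$.
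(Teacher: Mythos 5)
Your proposal follows the same skeleton as the paper's proof: write $\EE[deg_k(i)]$ as a sum of edge probabilities, bound the quadratic form $\ab_{k,i}^\top(\bGamma_k * \Xb_{k-1})\ab_{k,j}$ by a constant uniformly in $j$ and $\Xb_{k-1}$, apply the bound $\sigma(x)\le e^{x}$ so that each probability is $\lesssim e^{C_k}$, and conclude from $e^{C_k}=O(p_k^{-1})$. The one genuine difference is the step you call the crux: you invoke the sparsity constraint $\Ab\in\cA_3$ (at most $S$ memberships per row) to bound the form by $S^2\bar\gamma$, whereas the paper simply uses $\ab_{k,i}^\top(\bGamma_k*\Xb_{k-1})\ab_{k,j}\le \one^\top\bGamma_k\one$, which is already a constant because $\bGamma_k$ is held fixed --- a fixed matrix has fixed dimension $p_{k-1}$, so the sum of its entries cannot grow. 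Two consequences: first, your extra hypothesis $\Ab\in\cA_3$ is not part of the lemma's statement and is not needed; second, your side remark that the claim ``would fail for dense connection matrices'' is incorrect under the lemma's actual hypotheses, since with $\bGamma_k$ fixed the paper's crude bound handles fully dense $\Ab_k$. What your sparsity-based bound does buy is robustness in a strictly stronger regime the lemma does not address: if $p_{k-1}$ were allowed to grow with $p_k$ while only the \emph{entries} of $\bGamma_k$ stay bounded, the paper's bound degrades to order $p_{k-1}^2$ while your $S^2\bar\gamma$ remains a constant. Finally, your explicit treatment of the outer expectation over $\Xb_{k-1}$ (observing the bound is uniform in $\Xb_{k-1}$, so the marginal law is irrelevant) is slightly more careful than the paper's proof, which states the expected degree conditionally on $\Xb_{k-1}$.
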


\begin{proof}[Proof of Lemma \ref{lemm:node_deg}]
We have
\begin{align*}
\EE[deg_k(i)]
&=
\sum_{j = 1}^{p_k} \PP(X_{k, i, j} = 1 | \Xb_{k - 1}, \Ab_k, C_k, \bGamma_k)
\\&=
1 + \sum_{j \in [p_k], j \ne i} \frac{1}{
1 + \exp\left(
-C_k - \ab_{k, i}^\top (\bGamma_k * \Xb_{k - 1}) \ab_{k, j}
\right)
}
.
\end{align*}
Using the bound $\ab_{k, i}^\top (\bGamma_k * \Xb_{k - 1}) \ab_{k, j} \le \one^\top \bGamma_k \one$ with $\bGamma_k$ fixed, we obtain
$$
\EE[deg_k(i)]
\le
1 + \sum_{j \in [p_k], j \ne i} \frac{1}{
1 + \exp\left(
-C_k - \one^\top \bGamma_k \one
\right)
}
\lesssim
p_k \exp(C_k)
.
$$
Setting $C_k \asymp -\log p_k$ yields $\EE[\deg_k(i)] = O(1)$.
\end{proof}

This result shows that sparsity of adjacency matrices can be achieved in our framework without introducing an explicit multiplicative factor as in common network models.
Instead, the intercept $C_k$ plays an analogous role and can be scaled with $p_k$ to control the expected degree.
In particular, choosing $C_k \asymp -\log p_k$ ensures that the network operates in a sparse regime with bounded expected degree.

We note that this scaling also highlights a trade-off between sparsity and signal strength: as $C_k$ decreases, edges become rarer, which may require a larger number of network samples $N$ to maintain sufficient signal-to-noise ratio for reliable inference.
A more detailed investigation of this interplay is an interesting direction for future work.

\section{Identifiability Theory}\label{sec:iden}

We here present the proofs of Theorem \ref{theo:strict} on strict identifiability and Theorem \ref{theo:generic} on generic identifiability.
Our proof technique from tensor decomposition is overviewed in Section \ref{ssec:cp_kruskal}.
We provide the proof of Theorem \ref{theo:strict} in Section \ref{ssec:iden_strict}, using auxiliary lemmas proved in Section \ref{ssec:iden_strict_aux}, 
Additional proof techniques from complex analysis and geometric measure theory are introduced in Sections \ref{ssec:holo} and \ref{ssec:geom}.
We provide the proof of Theorem \ref{theo:generic} in Section \ref{ssec:iden_generic}, using auxiliary lemmas proved in Section \ref{ssec:iden_generic_aux}.
The necessity of our conditions for strict and generic identifiability is established in Section \ref{ssec:iden_nece}.

\subsection{CP Decomposition and Kruskal's Theorem}\label{ssec:cp_kruskal}

We start by introducing some definitions.
Let $\Yb \in \RR^{R_1 \times R_2 \times \cdots \times R_n}$ be a \emph{$n$-way tensor}, denoted entrywise by $Y_{i_1, i_2, \ldots, i_n}$ for each index $i_d \in [R_d]$.
For each $d \in [n]$, by fixing all indices other than $i_d$, the $R_d$-dimensional vector
$$
\big(
Y_{i_1, \ldots, i_{d - 1}, 1, i_{d + 1}, \ldots, i_n}
~
Y_{i_1, \ldots, i_{d - 1}, 2, i_{d + 1}, \ldots, i_n}
~
\cdots
~
Y_{i_1, \ldots, i_{d - 1}, R_d, i_{d + 1, \ldots, i_n}} \big)
\in
\RR^{R_d}
$$
is called a \emph{mode-$d$ fiber} of $\Yb$.
By varying each index $i_k (k \ne d)$ in $[R_k]$, $\Yb$ has a total of $\prod_{k \in [n], k \ne d} R_k$ mode-$d$ fibers.
For a matrix $\Ub \in \RR^{p \times R_d}$, we let $\Yb \times_d \Ub$ denote the $n$-way tensor of dimensions $R_1 \times \cdots R_{d - 1} \times p \times R_{d + 1} \times \cdots R_n$ obtained from linearly transforming each mode-$d$ fiber of $\Yb$ through the left multiplication of $\Ub$.
Entrywise we have
$$
(\Yb \times_d \Ub)_{i_1, \ldots, i_n}
:=
\sum_{j = 1}^{R_d} U_{i_d, j} Y_{i_1, \ldots, i_{d - 1}, j, i_{d + 1}, \ldots, i_n}
,\quad
\forall i_k \in [R_k]~ (k \ne d)
,~
i_d \in [p]
.
$$

We now restrict attention to the three-way tensors.
Let $\tilde\Ib \in \RR^{R \times R \times R}$ denote the three-way tensor with ones along its super-diagonal and zeros elsewhere such that its entry $\tilde{I}_{i_1, i_2, i_3} = 1_{i_1 = i_2 = i_3}$ for all $i_1, i_2, i_3 \in [R]$.
For matrices $\Ub_1 \in \RR^{p_1 \times R}, \Ub_2 \in \RR^{p_2 \times R}, \Ub_3 \in \RR^{p_3 \times R}$, we define
\begin{equation}\label{eq:cp_decom}
\llbracket \Ub_1, \Ub_2, \Ub_3 \rrbracket
:=
\tilde\Ib \times_1 \Ub_1 \times_2 \Ub_2 \times_3 \Ub_3
,
\end{equation}
which is a three-way tensor of dimensions $p_1 \times p_2 \times p_3$.
Given $\Yb \in \RR^{p_1 \times p_2 \times p_3}$ and $R$, finding these matrices $\Ub_1, \Ub_2, \Ub_3$ such that $\Yb = \llbracket \Ub_1, \Ub_2, \Ub_3 \rrbracket$ is known as the \emph{CP decomposition of rank $R$} \citep{kolda2009tensor}.

For a matrix $\Ab$, we let its \emph{Kruskal rank}, denoted by $\rank_K(\Ab)$, be the maximal number $k$ such that any $k$ columns of $\Ab$ are linearly independent.
In the following Theorem \ref{theo:kruskal}, we state the seminal result of \citet{kruskal1977three} that provides a condition under which the uniqueness of CP decomposition of rank $R$ is guaranteed up to column rescaling and permutation.

\begin{theorem}[Kruskal's Theorem]\label{theo:kruskal}
Let $\Ub_1 \in \RR^{p_1 \times R}, \Ub_2 \in \RR^{p_2 \times R}, \Ub_3 \in \RR^{p_3 \times R}$ be matrices satisfying the condition
$$
\rank_K(\Ub_1) + \rank_K(\Ub_2) + \rank_K(\Ub_3)
\ge
2R + 2
.
$$
Then for any matrices $\tilde\Ub_1 \in \RR^{p_1 \times R}, \tilde\Ub_2 \in \RR^{p_2 \times R}, \tilde\Ub_3 \in \RR^{p_3 \times R}$ such that
$$
\llbracket \Ub_1, \Ub_2, \Ub_3 \rrbracket
=
\llbracket \tilde\Ub_1, \tilde\Ub_2, \tilde\Ub_3 \rrbracket
,
$$
there exist diagonal matrices $\Qb_1, \Qb_2, \Qb_3 \in \RR^{R \times R}$ satisfying $\Qb_1 \Qb_2 \Qb_3 = \Ib$ and a permutation matrix $\Pb \in \RR^{R \times R}$ such that
$$
\tilde\Ub_1 = \Ub_1 \Qb_1 \Pb
,\quad
\tilde\Ub_2 = \Ub_2 \Qb_2 \Pb
,\quad
\tilde\Ub_3 = \Ub_3 \Qb_3 \Pb
.
$$
\end{theorem}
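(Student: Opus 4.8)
The plan is to establish uniqueness of the rank-$R$ CP decomposition through Kruskal's classical \emph{permutation lemma}, after extracting the structural content of the Kruskal-rank hypothesis from the contracted slices of $\Yb$. Throughout, for a vector $\vb$ let $\omega(\vb)$ denote its number of nonzero entries, write $k_d := \rank_K(\Ub_d)$, and denote the $r$th column of $\Ub_d$ by $\Ub_d \eb_r$. The central objects are the mode-$3$ contractions: for any $\xb \in \RR^{p_3}$,
$$
\Yb \times_3 \xb^\top = \Ub_1 \diag(\Ub_3^\top \xb) \Ub_2^\top = \tilde\Ub_1 \diag(\tilde\Ub_3^\top \xb) \tilde\Ub_2^\top,
$$
so that a single $p_1 \times p_2$ matrix is written through both factorizations at once. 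First I would note that the hypothesis forces $k_d \ge 2$, hence no factor can have a zero column, so the target conclusion is exactly that $\tilde\Ub_d = \Ub_d \Qb_d \Pb$ for a common permutation $\Pb$ and diagonal $\Qb_d$.

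The workhorse is a rank identity for these slices: if $\vb \in \RR^R$ has support of size $w = \omega(\vb) \le \min(k_1, k_2)$, then restricting to the $w$ supported columns and invoking the definition of Kruskal rank (any $w \le k_d$ columns of $\Ub_d$ are linearly independent) together with Sylvester's rank inequality yields $\rank\!\big(\Ub_1 \diag(\vb) \Ub_2^\top\big) = w$. Thus, for $\xb$ with $\omega(\Ub_3^\top \xb)$ small, the rank of the contracted slice equals $\omega(\Ub_3^\top \xb)$ exactly, and similarly the tilde side detects $\omega(\tilde\Ub_3^\top \xb)$. Comparing the two expressions for the same matrix therefore converts rank equalities into comparisons of $\omega(\Ub_3^\top \xb)$ against $\omega(\tilde\Ub_3^\top \xb)$.

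With this dictionary in hand, I would verify the hypothesis of the permutation lemma: whenever $\tilde\Ub_3^\top \xb$ has at most $R - k_3 + 1$ nonzero entries, the slice-rank comparison—made feasible precisely because $k_1 + k_2 + k_3 \ge 2R + 2$ keeps the relevant support sizes below $\min(k_1, k_2)$—forces $\omega(\Ub_3^\top \xb) \le \omega(\tilde\Ub_3^\top \xb)$. The permutation lemma then yields that each column of $\tilde\Ub_3$ is a nonzero scalar multiple of a distinct column of $\Ub_3$, i.e.\ $\tilde\Ub_3 = \Ub_3 \Qb_3 \Pb_3$ for a diagonal $\Qb_3$ and permutation $\Pb_3$. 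Since both the hypothesis and the desired conclusion are symmetric under permuting the three tensor modes, contracting instead along mode $1$ or mode $2$ gives $\tilde\Ub_1 = \Ub_1 \Qb_1 \Pb_1$ and $\tilde\Ub_2 = \Ub_2 \Qb_2 \Pb_2$ by the identical argument.

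It then remains to reconcile the three permutations and the scalings. Substituting these factorizations back into $\llbracket \tilde\Ub_1, \tilde\Ub_2, \tilde\Ub_3 \rrbracket = \llbracket \Ub_1, \Ub_2, \Ub_3 \rrbracket$ and matching the $R$ rank-one summands—genuinely distinct because the columns are nonzero and the Kruskal ranks are positive—forces $\Pb_1 = \Pb_2 = \Pb_3 =: \Pb$, while the fixed super-diagonal core $\tilde\Ib$ pins the scalings to $\Qb_1 \Qb_2 \Qb_3 = \Ib$. I expect the main obstacle to be the permutation lemma itself, whose proof is a delicate combinatorial argument about supports of linear images, together with making the slice-rank lower bounds sharp: the diagonal $\diag(\Ub_3^\top \xb)$ deletes columns, and the interaction between the resulting column selection and the Kruskal ranks $k_1, k_2$ must be controlled precisely enough that the threshold $R - k_3 + 1$ aligns with the arithmetic of $k_1 + k_2 + k_3 \ge 2R + 2$.
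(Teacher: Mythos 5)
The first thing to note is that the paper does not prove this statement at all: Theorem \ref{theo:kruskal} is imported as a classical result, with the proof explicitly deferred to \citet{kruskal1977three} and \citet{rhodes2010concise}. So there is no internal argument to compare yours against; what you have written is an outline of the classical route (Kruskal's original argument, in the form popularized by Stegeman and Sidiropoulos): mode-3 slices $\Ub_1 \diag(\Ub_3^\top \xb)\Ub_2^\top$, a slice-rank dictionary, Kruskal's permutation lemma, then symmetry across modes and a final reconciliation. The architecture is the right one, but since your ``workhorse'' is the permutation lemma, which you do not prove and which carries essentially all of the depth of the theorem, the proposal is a reduction of the theorem to its hardest ingredient rather than a proof; in that respect it is no more self-contained than the paper's citation.

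Beyond that structural point, three steps have genuine gaps. (i) There is a threshold mismatch in the application of the permutation lemma: the lemma requires the comparison $\omega(\Ub_3^\top\xb)\le\omega(\tilde\Ub_3^\top\xb)$ for every $\xb$ with $\omega(\tilde\Ub_3^\top\xb)\le R-\rank(\tilde\Ub_3)+1$, while you verify it on the set $\omega(\tilde\Ub_3^\top\xb)\le R-\rank_K(\Ub_3)+1$. The verified set contains the required one only if $\rank(\tilde\Ub_3)\ge\rank_K(\Ub_3)$, which is not automatic ($\tilde\Ub_3$ is an arbitrary alternative factor) and must be established first --- the standard way is via the unfolding $(\Ub_1\odot\Ub_2)\Ub_3^\top=(\tilde\Ub_1\odot\tilde\Ub_2)\tilde\Ub_3^\top$ together with the fact that $k_1+k_2\ge R+2$ forces $\Ub_1\odot\Ub_2$ to have full column rank $R$; none of this appears in your outline. (ii) Your rank identity covers only the regime $w=\omega(\vb)\le\min(k_1,k_2)$; the verification of the permutation-lemma hypothesis needs the general lower bound $\rank\big(\Ub_1\diag(\vb)\Ub_2^\top\big)\ge\min(k_1,w)+\min(k_2,w)-w$ and a case analysis for $w>\min(k_1,k_2)$, which is exactly where the arithmetic $\omega(\tilde\Ub_3^\top\xb)\le k_1+k_2-R-1$ extracted from $k_1+k_2+k_3\ge 2R+2$ gets consumed; you flag this as an ``obstacle'' but do not carry it out. (iii) The final reconciliation is circular as stated: ``matching the $R$ rank-one summands --- genuinely distinct'' is precisely the uniqueness being proven, since a sum of pairwise distinct rank-one tensors can in general equal another such sum without any term-by-term matching. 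The correct finish again uses the full column rank of $\Ub_1\odot\Ub_2$ (and $k_3\ge 2$, not merely positivity of the Kruskal ranks and nonzero columns) to force $\Pb_1=\Pb_2=\Pb_3$ and then $\Qb_1\Qb_2\Qb_3=\Ib$.
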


In simple terms, if a three-way tensor admits a CP decomposition $\llbracket \Ub_1, \Ub_2, \Ub_3 \rrbracket$ of rank $R$, with the Kruskal ranks of $\Ub_1, \Ub_2, \Ub_3$ summing to be $\ge 2R + 2$, then such a CP decomposition of rank $R$ is unique up to column rescaling and permutation of $\Ub_i$'s.
Proofs of Theorem \ref{theo:kruskal} can be found in \citet{kruskal1977three} and \citet{rhodes2010concise}.

\subsection{Proof of Theorem \ref{theo:strict}}\label{ssec:iden_strict}

Recall that $\cX_k$ denotes the collection of all $p_k \times p_k$ adjacency matrices.
The cardinality of $\cX_k$ is $|\cX_k| = 2^{\frac{p_k (p_k - 1)}{2}}$ due to the symmetry and the all-ones diagonal.
We distinguish the notations $\law$ and $\PP$.
For instance, $\law(\Xb_k | \Ab, \bTheta)$ denotes the categorical distribution of $\Xb_k | \Ab, \bTheta$ over $\cX_k$, whereas $\PP(\Xb_k | \Ab, \bTheta)$ denotes the probability for a specific $\Xb_k \in \cX_k$ under this distribution.

We let $\sX_k$ be an arbitrary bijective map from $\cX_k$ to the set of integers $[|\cX_k|] = \{1, 2, \ldots, |\cX_k|\}$.
We characterize the marginal distribution $\law(\Xb_k | \Ab, \bTheta)$ through a $|\cX_k|$-dimensional probability vector $\vb_k \in \cS^{|\cX_k| - 1}$, such that
\begin{equation}\label{eq:def_vk}
\PP(\Xb_k | \Ab, \bTheta)
=
v_{k, \sX_k(\Xb_k)}
\quad
\forall \Xb_k \in \cX_k
,
\end{equation}
for $v_{k, i}$ denoting the $i$th entry of $\vb_k$.
Similarly, for $X_{k, i, j}$ denoting the $(i, j)$th entry in $\Xb_k$, we characterize its conditional distribution $\law(X_{k, i, j} | \Xb_{k - 1}, \Ab_k, \bTheta_k)$ through a $(2 \times |\cX_{k - 1}|)$-dimensional matrix $\bLambda_{k, (i, j)}$, such that
\begin{equation}\label{eq:def_lambdakij}
\PP(X_{k, i, j} = 1_{\ell = 1} | \Xb_{k - 1}, \Ab, \bTheta)
=
\Lambda_{k, (i, j), \ell, \sX_{k - 1}(\Xb_{k - 1})}
\quad
\forall \Xb_{k - 1} \in \cX_{k - 1}
,
\ell \in \{1, 2\}
,
\end{equation}
for $\Lambda_{k, (i, j), \ell, r}$ denoting the $(\ell, r)$th entry of $\bLambda_{k, (i, j)}$.
That is, the first row of $\bLambda_{k, i, j}$ represents the probabilities of $X_{k, i, j} = 1$, the second row represents the probabilities of $X_{k, i, j} = 0$, and the $r$th column represents conditioning on $\Xb_{k - 1} = \sX_{k - 1}^{-1}(r)$.
Recall that $\langle p \rangle := \{(i, j):~ 1 \le i < j \le p\}$ denotes the collection of all pairs of distinct indices in $[p]$.
By taking Khatri-Rao product \citep{kolda2009tensor} of the matrices $\bLambda_{k, (i, j)}$ over all $(i, j) \in \langle p_k \rangle$, we obtain the $|\cX_k| \times |\cX_{k - 1}|$ matrix
\begin{equation}\label{eq:def_lambdak}
\bLambda_k
:=
\odot_{(i, j) \in \langle p_k \rangle} \bLambda_{k, (i, j)}
,
\end{equation}
which fully characterizes the conditional distribution $\law(\Xb_k | \Xb_{k - 1}, \Ab_k, \bTheta_k)$ due to the conditional independence of entries in $\Xb_k$.
Note that the order of pairs $(i, j)$'s in $\langle p_k \rangle$ when taking the Khatri-Rao product does not matter in our proof.

Noticing that
$$
\PP(\Xb_k | \Ab, \bTheta)
=
\sum_{\Xb_{k - 1} \in \cX_{k - 1}} \PP(\Xb_k | \Xb_{k - 1}, \Ab_k, \bTheta_k) \PP(\Xb_{k - 1} | \Ab, \bTheta)
,
$$
by viewing each $\vb_k$ as a column vector, for all $k \in [K]$ we have
\begin{equation}\label{eq:vk_lambdak_vk1}
\vb_k
=
\bLambda_k \vb_{k - 1}.
\end{equation}
Since $\vb_k$ and each column of $\bLambda_k$ are probability vectors, they satisfy the conditions $\vb_{k - 1}^\top \one = 1$ and $\bLambda_k^\top \one = \one$.
In the following lemma, we show that among the probability vectors and matrices satisfying these conditions, $\bLambda_k$ and $\vb_{k - 1}$ can be uniquely recovered from their product $\bLambda_k \vb_{k - 1}$ up to shuffles of columns in $\bLambda_k$ and entries in $\vb_{k - 1}$.

\begin{lemma}\label{lemm:strict_tensor}
Let parameter $(\Ab, \bTheta) \in \cA_1 \times \cT_0$ and layer $k \in [K]$.
For any matrix $\tilde\bLambda_k \in \RR^{|\cX_k| \times |\cX_{k - 1}|}$ and vector $\tilde\vb_{k - 1} \in \RR^{|\cX_{k - 1}| \times 1}$ satisfying
$$
\bLambda_k \vb_{k - 1}
=
\tilde\bLambda_k \tilde\vb_{k - 1}
,\quad
\tilde\vb_{k - 1}^\top \one
=
1
,\quad
\tilde\bLambda_k^\top \one
=
\one
,
$$
there exists a permutation matrix $\Pb \in \RR^{|\cX_{k - 1}| \times |\cX_{k - 1}|}$ such that
$$
\tilde\bLambda_k
=
\bLambda_k \Pb
,\quad
\tilde\vb_{k - 1}
=
\Pb^\top \vb_{k - 1}
.
$$
\end{lemma}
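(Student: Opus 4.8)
The plan is to recover $\bLambda_k$ and $\vb_{k-1}$ from the fixed marginal vector $\vb_k=\bLambda_k\vb_{k-1}$ by reshaping it into a three-way tensor and invoking Kruskal's theorem (Theorem \ref{theo:kruskal}). After relabeling the $p_k$ nodes of layer $k$ by the permutation that, by $\Ab\in\cA_1$, puts $\Ab_k$ into the form $(\Ib_{p_{k-1}}~\Ib_{p_{k-1}}~\Bb_k)^\top$, I split the edge set $\langle p_k\rangle$ of $\Xb_k$ into three blocks: $G_1$ (pairs inside the first $\Ib_{p_{k-1}}$), $G_2$ (pairs inside the second $\Ib_{p_{k-1}}$), and $G_3$ (all remaining pairs). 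Since the entries of $\Xb_k$ are conditionally independent given $\Xb_{k-1}$, writing the row index $\Xb_k$ as a triple $(a,b,c)$ of block-configurations turns $\vb_k$ into a tensor $\Tb$ with $T[a,b,c]=\sum_{r\in\cX_{k-1}}v_{k-1,r}\,U_1[a,r]U_2[b,r]U_3[c,r]$, i.e. a rank-$|\cX_{k-1}|$ CP decomposition $\Tb=\llbracket\Ub_1,\Ub_2,\Ub_3\diag(\vb_{k-1})\rrbracket$, where $\Ub_m$ collects the block-$m$ conditional laws $\PP(\cdot\mid\Xb_{k-1}=r)$ and $r$ ranges over $\cX_{k-1}$.

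The core computation is to verify Kruskal's inequality for this decomposition. For $(i,j)\in G_1$ one has $\ab_{k,i}=\eb_i$, $\ab_{k,j}=\eb_j$, so $\logit\PP(X_{k,i,j}=1\mid\Xb_{k-1})=C_k+\Gamma_{k,i,j}X_{k-1,i,j}$ takes one of two values according to $X_{k-1,i,j}\in\{0,1\}$; because $\Gamma_{k,i,j}>0$ these two values differ, so the $2\times2$ matrix of block-$1$ edge-laws is invertible. Hence $\Ub_1$ is a Kronecker product of invertible $2\times2$ blocks over $\langle p_{k-1}\rangle$ and is itself invertible, giving $\rank_K(\Ub_1)=|\cX_{k-1}|$; the same holds for $\Ub_2$. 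For $\Ub_3$ I would use that the cross-block edges $(i,p_{k-1}+j)$ with $i\ne j$ read off every off-diagonal entry of $\Xb_{k-1}$, so distinct latent states produce distinct block-$3$ laws; distinct probability vectors are pairwise linearly independent, whence $\rank_K(\Ub_3)\ge2$. Positivity of $\vb_{k-1}$ (the logistic factors never equal $0$ or $1$, so positivity propagates from the strictly positive $\bnu$) means rescaling columns by $\diag(\vb_{k-1})$ leaves Kruskal rank unchanged, and the three Kruskal ranks sum to at least $2|\cX_{k-1}|+2$.

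Next I would feed the competing pair into the same reshaping: $\tilde\bLambda_k\tilde\vb_{k-1}=\vb_k$ expresses $\Tb=\sum_r\tilde v_{k-1,r}\,\tilde\Lb_r$, where $\tilde\Lb_r$ is the reshape of the $r$th column of $\tilde\bLambda_k$ along the $G_1/G_2/G_3$ split. Provided each $\tilde\Lb_r$ is rank one, this is a second rank-$|\cX_{k-1}|$ CP decomposition $\Tb=\llbracket\tilde\Ub_1,\tilde\Ub_2,\tilde\Ub_3\diag(\tilde\vb_{k-1})\rrbracket$, and Kruskal's theorem supplies a permutation $\Pb$ and diagonal matrices with $\Qb_1\Qb_2\Qb_3=\Ib$ such that $\tilde\Ub_1=\Ub_1\Qb_1\Pb$, $\tilde\Ub_2=\Ub_2\Qb_2\Pb$, and $\tilde\Ub_3\diag(\tilde\vb_{k-1})=\Ub_3\diag(\vb_{k-1})\Qb_3\Pb$. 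Column-stochasticity of $\Ub_1,\tilde\Ub_1$ and of $\Ub_2,\tilde\Ub_2$ forces $\Qb_1=\Qb_2=\Ib$, hence $\Qb_3=\Ib$; comparing column sums in the last relation then gives $\tilde\vb_{k-1}=\Pb^\top\vb_{k-1}$ and $\tilde\Ub_3=\Ub_3\Pb$. Reassembling the three block-factors column by column (a Khatri--Rao product over the blocks) yields $\tilde\bLambda_k=\bLambda_k\Pb$, which is the claim.

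The hard part will be exactly the step that secures the stated generality over an \emph{arbitrary} column-stochastic $\tilde\bLambda_k$: Kruskal's theorem compares rank-one decompositions, so the competing factorization is governed by it only once each reshaped column $\tilde\Lb_r$ is rank one along the fixed block split, equivalently once the three edge-blocks are conditionally independent given the relabeled latent state. This conditional independence, and not any logistic form of $\tilde\bLambda_k$, is the structural input that makes uniqueness go through, and it is precisely what my earlier model-specific sketch assumed away by positing a logistic competitor. I therefore expect the two delicate points to be (a) securing this rank-one reduction so the argument applies beyond model-generated competitors, and (b) verifying Kruskal's rank inequality itself via the two identity blocks; the normalization that strips the diagonal scalings through column sums, and the Kronecker-invertibility of $\Ub_1,\Ub_2$, are routine once these are in hand.
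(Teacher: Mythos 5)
Your proposal follows essentially the same route as the paper's proof: the same relabeling of $\Ab_k$ into $\big(\Ib_{p_{k-1}}~\Ib_{p_{k-1}}~\Bb_k\big)^\top$, the same three-mode split of the edges of $\Xb_k$ (the paper formally uses four index sets $\cI_1,\cI_2,\cI_3,\cI_4$ but merges the last two into the third tensor mode via $\bLambda_{[3]}\odot\bLambda_{[4]}$, which is exactly your $G_3$), the same reshaping of $\bLambda_k\vb_{k-1}$ into the CP decomposition $\big\llbracket \bLambda_{[1]},\bLambda_{[2]},(\bLambda_{[3]}\odot\bLambda_{[4]})\diag(\vb_{k-1})\big\rrbracket$, the same Kruskal budget $|\cX_{k-1}|+|\cX_{k-1}|+2$, and the same endgame in which column-stochasticity forces the diagonal factors to be identities and column sums recover $\tilde\vb_{k-1}=\Pb^\top\vb_{k-1}$. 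One genuine (minor) difference: for modes 1 and 2 you observe that each block-1 edge law depends only on the single entry $X_{k-1,i,j}$, so $\Ub_1$ is, up to column ordering, a Kronecker product of invertible $2\times2$ logistic matrices and hence invertible; the paper reaches the same conclusion more laboriously, applying row operations to form $\bLambda_{[1]}'$ and exhibiting a lexicographic ordering under which the matrix is triangular with positive diagonal. Your shortcut is correct and cleaner. Your mode-3 argument via the cross edges $(i,p_{k-1}+j)$, the pairwise linear independence of distinct probability columns, and the strict positivity of $\vb_{k-1}$ is identical to the paper's.

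Regarding your flagged difficulty (a): your concern is well-founded, but it requires no additional work relative to what the paper itself does. The paper's proof applies Kruskal's theorem only to competitors already presented in factored form $\tilde\bLambda_{[1]}\odot\tilde\bLambda_{[2]}\odot\tilde\bLambda_{[34]}$, i.e. it too silently assumes the rank-one (conditional-independence) structure of each column of $\tilde\bLambda_k$ along the block split. Indeed, the lemma read literally over an arbitrary column-stochastic $\tilde\bLambda_k$ is false: take every column of $\tilde\bLambda_k$ equal to $\bLambda_k\vb_{k-1}$ (each then sums to one) and $\tilde\vb_{k-1}$ any vector summing to one; the hypotheses hold, yet $\tilde\bLambda_k$ cannot equal $\bLambda_k\Pb$ since the proof itself shows $\bLambda_k$ has pairwise distinct columns. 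The only downstream use of the lemma, in the proof of Theorem \ref{theo:strict}, instantiates $\tilde\bLambda_k$ as the conditional law of a competing parameter $(\tilde\Ab,\tilde\bTheta)\in\cA_1\times\cT_0$, for which the Khatri--Rao factorization across $G_1,G_2,G_3$ holds automatically by conditional independence of the edges. So your proof, restricted to such structured competitors exactly as the paper's is, is complete and matches the intended scope of the lemma.
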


The proof of Lemma \ref{lemm:strict_tensor} is based on transforming $\vb_k = \bLambda_k \vb_{k - 1}$ into a three-way tensor and viewing $\bLambda_k$ and $\vb_{k - 1}$ as a realization of its CP decomposition.
The uniqueness of $\bLambda_k$ and $\vb_{k - 1}$ follows from Theorem \ref{theo:kruskal} after showing the condition on the Kruskal ranks to hold.
Similar proof techniques can be found in \citet{gu2021bayesian} for their Proposition 2.
We provide the proof of Lemma \ref{lemm:strict_tensor} in Section \ref{ssec:iden_strict_aux}.

Given the vector $\bLambda_k \vb_{k - 1}$, Lemma \ref{lemm:strict_tensor} states that the choice of $\bLambda_k$ and $\vb_{k - 1}$ are unique up to shuffles of columns/entries.
Importantly, there are many more ways to shuffle the columns in $\bLambda_k$ and entries in $\vb_{k - 1}$ than to permute the columns of $\Ab_k$ and the rows and columns of $\Xb_{k - 1}, \bGamma_k$.
The former corresponds to permutations over $\cX_{k - 1}$, whereas the latter corresponds to permutations over the $p_{k - 1}$ nodes in the $(k - 1)$th layer of our Bayesian network.
Given a shuffled probability vector $\vb_{k - 1}$ and matrix $\bLambda_k$ from Lemma \ref{lemm:strict_tensor}, we further want to know the $\Xb_{k - 1}$ that each entry of $\vb_{k - 1}$ and each column of $\bLambda_k$ corresponds to in \eqref{eq:def_vk} and \eqref{eq:def_lambdakij}.
That is, we want to recover the bijective map $\sX_{k - 1}$.
The following lemma establishes that $\sX_{k - 1}$ can be uniquely recovered from a column shuffled $\bLambda_k$ up to permutations over the $p_{k - 1}$ nodes.

\begin{lemma}\label{lemm:strict_sX}
Let parameters $(\Ab, \bTheta), (\tilde\Ab, \tilde\bTheta) \in \cA_1 \times \cT_0$ and layer $k \in [K]$.
For any two bijective maps $\sX_{k - 1}, \tilde\sX_{k - 1}$ from $\cX_{k - 1}$ to $[|\cX_{k - 1}|]$, if for all $x \in [|\cX_{k - 1}|]$
$$
\law\big(
\Xb_k | \Xb_{k - 1} = \sX_{k - 1}^{-1}(x), \Ab_k, \bTheta_k
\big)
=
\law\big(
\Xb_k | \Xb_{k - 1} = \tilde\sX_{k - 1}^{-1}(x), \tilde\Ab_k, \tilde\bTheta_k
\big)
,
$$
then there exists a permutation matrix $\Pb \in \RR^{p_{k - 1} \times p_{k - 1}}$ such that for all $x \in [|\cX_{k - 1}|]$,
$$
\tilde\sX_{k - 1}^{-1}(x)
=
\Pb \sX_{k - 1}^{-1}(x) \Pb^\top
.
$$
\end{lemma}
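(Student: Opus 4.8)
The plan is to convert the hypothesis into a statement about edge log-odds and then exploit the \emph{two} pure nodes per community guaranteed by $\cA_1$. Writing $\beta := \tilde\sX_{k-1}^{-1} \circ \sX_{k-1}$, which is a bijection of $\cX_{k-1}$ since it is a composition of bijections, the hypothesis is equivalent to $\law(\Xb_k \mid \Xb_{k-1}, \Ab_k, \bTheta_k) = \law(\Xb_k \mid \beta(\Xb_{k-1}), \tilde\Ab_k, \tilde\bTheta_k)$ for every $\Xb_{k-1} \in \cX_{k-1}$, and the target $\tilde\sX_{k-1}^{-1}(x) = \Pb\,\sX_{k-1}^{-1}(x)\,\Pb^\top$ becomes the claim that $\beta(\Xb_{k-1}) = \Pb \Xb_{k-1} \Pb^\top$ for a single node permutation $\Pb$. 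Since the entries of $\Xb_k$ are conditionally independent Bernoullis, equality of these laws is equivalent to equality, for every layer-$k$ pair $(i,j)$, of the log-odds $\ell_{ij}(\Xb_{k-1}) := C_k + \ab_{k,i}^\top(\bGamma_k * \Xb_{k-1})\ab_{k,j}$ from \eqref{eq:model_entry}, computed under the two parameter sets with $\beta$ applied on the second. I would record this reduction and note that, because all entries of $\bGamma_k$ are strictly positive, each $\ell_{ij}$ is an affine and coordinatewise-nondecreasing function of the off-diagonal entries of its argument.

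The engine of the proof is the \emph{same-community} pure pair. By the definition of $\cA_1$, after a row permutation each $\Ab_k$ contains two copies of $\Ib_{p_{k-1}}$, so for every $s \in [p_{k-1}]$ there are two layer-$k$ nodes $i_s^1, i_s^2$ with $\ab_{k,i_s^1} = \ab_{k,i_s^2} = \eb_s$. For this pair, $\ell_{i_s^1 i_s^2}(\Xb_{k-1}) = C_k + \Gamma_{k,s,s}$ is \emph{constant} in $\Xb_{k-1}$ (the diagonal of every $\Xb_{k-1}$ is fixed to one). Hence the second-parameter log-odds of the same two nodes is constant over all of $\cX_{k-1}$; being affine with strictly positive coefficients $\tilde\Gamma_{k,a,b}$, every off-diagonal coefficient must vanish, and a short combinatorial check shows this forces $i_s^1$ and $i_s^2$ to have nonempty, equal, singleton community memberships under the second parameter set. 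This defines a map $\sigma: [p_{k-1}] \to [p_{k-1}]$ by letting $\sigma(s)$ be that common second-parameter community, for which $i_s^1, i_s^2$ are again pure.

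Next I would show $\sigma$ is a bijection and conjugates $\beta$. Applying the matching to a \emph{cross-community} pure pair $(i_s^1, i_t^1)$ with $s \ne t$: the first log-odds $C_k + \Gamma_{k,s,t}X_{k-1,s,t}$ genuinely varies with $\Xb_{k-1}$, whereas the second equals $\tilde C_k + \tilde\Gamma_{k,\sigma(s),\sigma(t)}\,(\beta(\Xb_{k-1}))_{\sigma(s),\sigma(t)}$ when $\sigma(s)\ne\sigma(t)$ and is constant when $\sigma(s)=\sigma(t)$; non-constancy thus forces $\sigma(s)\ne\sigma(t)$, so $\sigma$ is injective and hence a permutation. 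Equating the two expressions and using that both are strictly increasing in their single active entry yields $(\beta(\Xb_{k-1}))_{\sigma(s),\sigma(t)} = X_{k-1,s,t}$ for all $s<t$. Since $\sigma$ is onto, this determines every off-diagonal entry of $\beta(\Xb_{k-1})$; together with the all-ones diagonal this is exactly $\beta(\Xb_{k-1}) = \Pb\Xb_{k-1}\Pb^\top$ with $\Pb$ the permutation matrix of $\sigma$, which, after substituting $\Xb_{k-1} = \sX_{k-1}^{-1}(x)$, is the desired identity.

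The step I expect to be the crux is passing from ``all edge log-odds agree'' to ``the induced bijection $\beta$ of $\cX_{k-1}$ is conjugation by a single node permutation.'' A naive argument only shows that $\beta$ permutes the $\binom{p_{k-1}}{2}$ off-diagonal positions, and not every such permutation comes from a vertex relabelling (for small $p_{k-1}$, e.g. the line graph of $K_4$ admits the antipodal automorphism). The device that circumvents this is precisely the second pure node per community, i.e. the \emph{two} identity blocks in the definition of $\cA_1$: the constant-log-odds argument of the second paragraph, available only because each community has two pure members, directly produces the node-level map $\sigma$ and bypasses any appeal to automorphisms of edge graphs. This is also the structural reason two identity submatrices, rather than one, are required in $\cA_1$.
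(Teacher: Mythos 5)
Your proof is correct, but it takes a genuinely different route from the paper's. The paper never points at specific pure-node pairs; instead it builds invariants that are intrinsic to the collection of conditional laws: it partially orders the log-odds vectors $\gb(\Xb_{k-1}, \Ab_k, \bTheta_k)$, identifies $\gb(\Ib_{p_{k-1}}, \Ab_k, \bTheta_k)$ as the unique minimal element, shows that the minimal elements of what remains are exactly the vectors $\gb(\Ib_{p_{k-1}} + \Eb_{i,j}, \Ab_k, \bTheta_k)$, and then recovers the node permutation from the intersection pattern of certain row-indexed subsets $\cG_{m,r}$ of these minimal elements, finally extending to all of $\cX_{k-1}$ through the partial-order equivalence; this machinery needs $p_{k-1} \ge 3$ and a separate check at $p_{k-1} = 2$. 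Your argument instead exploits the two identity blocks of $\cA_1$ head-on: constancy of the log-odds of a same-community pure pair forces singleton membership under the second parameter set (producing $\sigma$), and non-constancy of cross-community pure pairs forces injectivity of $\sigma$ and the entrywise identification $(\beta(\Xb_{k-1}))_{\sigma(s),\sigma(t)} = X_{k-1,s,t}$. What your route buys: it is shorter, it treats $p_{k-1} = 2$ uniformly with no special case, and it only uses $\Ab \in \cA_1$ for the \emph{first} parameter (the second may sit anywhere in $\cA_0 \times \cT_0$), so it actually proves a slightly stronger statement than the lemma asserts. What the paper's route buys: because its invariants are defined without knowing where any pure nodes sit, the same skeleton transfers almost verbatim to the generic-identifiability analogue (Lemma \ref{lemm:generic_sX}), where the identity blocks are relaxed to $\cM_{p_{k-1}}$ blocks, pure pairs need not exist, and your trick has no counterpart.

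One exposition wrinkle you should fix: in your second paragraph, the vanishing of all off-diagonal coefficients does \emph{not} by itself force the second-parameter memberships to be nonempty --- a zero row of $\tilde\Ab_k$ also renders the pair's log-odds constant, so ``nonempty, equal, singleton'' overstates what that step alone delivers. Nonemptiness requires the cross-pair observation you only introduce in the following paragraph: if $\tilde\ab_{k,i_s^1} = \zero$, then every log-odds involving node $i_s^1$ under the second parameter is the constant $\tilde{C}_k$, contradicting the non-constant first-parameter log-odds $C_k + \Gamma_{k,s,t} X_{k-1,s,t}$ of the pair $(i_s^1, i_t^1)$ for any $t \ne s$. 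Dispatch the zero-row case first (or run the two steps in tandem), and the proof is complete.
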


Lemma \ref{lemm:strict_sX} states that, for any two bijective maps $\sX_{k - 1}, \tilde\sX_{k - 1}$ with possibly different model parameters, if they lead to the same probability matrix $\bLambda_k$, then the adjacency matrices $\Xb_{k - 1} = \sX_{k - 1}^{-1}(x)$ and $\tilde\Xb_{k - 1} = \tilde\sX_{k - 1}^{-1}(x)$ corresponding to the $x$th column of $\bLambda_k$ must be identical up to node permutations.
This implies that, given a column shuffled $\bLambda_k$ with model parameter $(\Ab, \bTheta)$ unknown, we can uniquely recover the bijective map $\sX_{k - 1}$ up to node permutations.
The proof is provided in Section \ref{ssec:iden_strict_aux}.

After recovering $\sX_{k - 1}$ from the probability matrix $\bLambda_k$, we know everything of the conditional distribution $\law(\Xb_k | \Xb_{k - 1}, \Ab_k, \bTheta_k)$.
It then remains to recover the model parameters $\Ab_k$ and $\bTheta_k$ from this conditional distribution.
The following lemma shows that the recovery of $\Ab_k, \bTheta_k$ is unique for $(\Ab, \bTheta)$ within the parameter space $\Ab_1 \times \cT_0$, with proof provided in Section \ref{ssec:iden_strict_aux}.

\begin{lemma}\label{lemm:strict_param}
Let parameters $(\Ab, \bTheta), (\tilde\Ab, \tilde\bTheta) \in \cA_1 \times \cT_0$ and layer $k \in [K]$.
If for all $\Xb_{k - 1} \in \cX_{k - 1}$
$$
\law(\Xb_k | \Xb_{k - 1}, \Ab_k, \bTheta_k)
=
\law(\Xb_k | \Xb_{k - 1}, \tilde\Ab_k, \tilde\bTheta_k)
,
$$
then
$$
\tilde\Ab_k
=
\Ab_k
,\quad
\tilde{C}_k
=
C_k
,\quad
\tilde\bGamma_k
=
\bGamma_k
.
$$
\end{lemma}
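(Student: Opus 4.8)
The plan is to reduce the hypothesis to a family of identities among the log-odds, use the global minimum of the log-odds to recover $C_k$, then exploit the two identity blocks guaranteed by $\cA_1$ (the \emph{pure nodes}) together with the strict positivity of $\bGamma_k$ to recover $\bGamma_k$, and finally read off $\Ab_k$ entrywise. First I would observe that, since the entries of $\Xb_k$ are conditionally independent given $\Xb_{k-1}$ and the logistic link is strictly increasing, the assumed equality of conditional laws is equivalent to
\begin{equation*}
C_k + \ab_{k, i}^\top (\bGamma_k * \Xb_{k - 1}) \ab_{k, j}
=
\tilde{C}_k + \tilde\ab_{k, i}^\top (\tilde\bGamma_k * \Xb_{k - 1}) \tilde\ab_{k, j}
\end{equation*}
for every $(i, j) \in \langle p_k \rangle$ and every $\Xb_{k - 1} \in \cX_{k - 1}$. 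Because each $\Xb_{k - 1}$ has an all-ones diagonal while its off-diagonal entries $\{X_{k - 1, s, t}\}_{(s, t) \in \langle p_{k - 1} \rangle}$ vary freely over $\{0, 1\}$, both sides are affine in these free coordinates: evaluating at $\Xb_{k - 1} = \Ib_{p_{k - 1}}$ isolates the constant term, while evaluating at matrices with a single off-diagonal edge isolates the coefficient of each $X_{k - 1, s, t}$.

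\textbf{Recovering $C_k$.} Since every entry of $\bGamma_k$ is strictly positive, each summand of $\ab_{k, i}^\top (\bGamma_k * \Xb_{k - 1}) \ab_{k, j}$ is nonnegative, so every log-odds value is at least $C_k$. Taking $(i, j)$ to be pure nodes of two distinct communities $s \ne t$ (which exist because $p_{k - 1} \ge p_0 \ge 2$ and each identity block supplies a pure node for every community) and $\Xb_{k - 1} = \Ib_{p_{k - 1}}$ makes the quadratic term vanish, attaining $C_k$. Hence $C_k$ is the global minimum of the log-odds over $\langle p_k \rangle \times \cX_{k - 1}$; applying the same reasoning to the right-hand side forces $C_k = \tilde{C}_k$, and the displayed identity reduces to equality of the two bilinear forms.

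\textbf{Recovering $\bGamma_k$ (the main obstacle).} The delicate step is to transfer the pure-node structure of $\Ab_k$ to $\tilde\Ab_k$. Fix rows $\ab_{k, i_s} = \eb_s$ and $\ab_{k, i_{s'}} = \eb_{s'}$ of $\Ab_k$ with $s \ne s'$. The left bilinear form collapses to $\Gamma_{k, s, s'} X_{k - 1, s, s'}$, a function of the \emph{single} coordinate $(s, s')$. Matching the right form at $\Xb_{k - 1} = \Ib_{p_{k - 1}}$ against $0$ forces, by positivity of $\tilde\bGamma_k$, that $\tilde\ab_{k, i_s}$ and $\tilde\ab_{k, i_{s'}}$ have disjoint supports; matching the off-diagonal coefficients forces the only cross pair between these supports to be $\{s, s'\}$. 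A short combinatorial argument then shows these two constraints are incompatible with either support strictly containing a single index, giving $\tilde\ab_{k, i_s} = \eb_s$, $\tilde\ab_{k, i_{s'}} = \eb_{s'}$, and $\Gamma_{k, s, s'} = \tilde\Gamma_{k, s, s'}$. Comparing the two bilinear forms on a same-community pure pair $\ab_{k, i_s} = \ab_{k, i_s'} = \eb_s$ (two such rows are available from the two identity blocks) then yields $\Gamma_{k, s, s} = \tilde\Gamma_{k, s, s}$, so $\bGamma_k = \tilde\bGamma_k$. I expect this support-matching step to be the crux, as it is where the identity-block hypothesis and the positivity of $\bGamma_k$ are jointly essential.

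\textbf{Recovering $\Ab_k$.} With $C_k, \bGamma_k$ now matched and every pure node of $\Ab_k$ shown (by the argument above, applied to each community) to be a pure node of $\tilde\Ab_k$ for the same community, fix an arbitrary node $i$ and pair it with a pure node $i_t$. A direct computation shows the coefficient of the free coordinate $X_{k - 1, r, t}$ in the left log-odds equals $\Gamma_{k, r, t} A_{k, i, r}$, which by positivity is nonzero precisely when $A_{k, i, r} = 1$; matching against the right log-odds gives $A_{k, i, r} = \tilde{A}_{k, i, r}$ for all $r \ne t$, and the constant term $C_k + \Gamma_{k, t, t} A_{k, i, t}$ then recovers $A_{k, i, t}$. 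Since $i$ was arbitrary, $\ab_{k, i} = \tilde\ab_{k, i}$ for every $i$, i.e. $\Ab_k = \tilde\Ab_k$, completing the proof.
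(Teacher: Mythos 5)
Your reduction to log-odds identities, your recovery of $C_k$ via the global minimum, and your final row-by-row recovery of $\Ab_k$ are all correct, and your overall route (pure pairs $\to \bGamma_k \to \Ab_k$) is a legitimate, somewhat more direct alternative to the paper's partial-order argument (which recovers $\Ab_k$ before $\bGamma_k$). However, the step you yourself call the crux has a genuine hole. The two constraints you extract from the single pure pair $(i_s, i_{s'})$ --- disjoint supports of $\tilde\ab_{k,i_s}, \tilde\ab_{k,i_{s'}}$, and $\{s,s'\}$ being the only cross pair --- are completely symmetric in $s$ and $s'$, so they can only yield the unordered conclusion $\{\tilde\ab_{k,i_s}, \tilde\ab_{k,i_{s'}}\} = \{\eb_s, \eb_{s'}\}$; they cannot rule out the swapped assignment $\tilde\ab_{k,i_s} = \eb_{s'}$, $\tilde\ab_{k,i_{s'}} = \eb_s$. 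If the swap occurs, your later steps return $\diag(\tilde\bGamma_k)$ and $\tilde\Ab_k$ equal to column/row-permuted versions of $\diag(\bGamma_k)$ and $\Ab_k$, not to themselves, so the exact-equality conclusion is not reached. The repair for $p_{k-1} \ge 3$ is a cross-pair consistency argument you have not supplied: writing $\tilde\ab_{k,i_s} = \eb_{\sigma(s)}$ for one fixed pure node $i_s$ per community, your pairwise result gives $\{\sigma(s), \sigma(s')\} = \{s, s'\}$ for every pair, and with at least three communities this forces $\sigma = \mathrm{id}$ (if $\sigma(s) = s' \ne s$, pick $s'' \notin \{s, s'\}$; then $\sigma(s) \notin \{s, s''\}$, contradicting the constraint for the pair $(s, s'')$).

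For $p_{k-1} = 2$ no repair is possible, because the statement itself fails there: take $\tilde\Ab_k = \Ab_k \Pb$, $\tilde\bGamma_k = \Pb^\top \bGamma_k \Pb$, $\tilde{C}_k = C_k$ with $\Pb$ the $2 \times 2$ transposition. Since $\tilde\ab_{k,i}^\top (\tilde\bGamma_k * \Xb_{k-1}) \tilde\ab_{k,j} = \ab_{k,i}^\top \big(\bGamma_k * (\Pb \Xb_{k-1} \Pb^\top)\big) \ab_{k,j}$ and both elements of $\cX_{k-1}$ (namely $\Ib_2$ and $\one\one^\top$) are invariant under conjugation by $\Pb$, the conditional laws agree for every $\Xb_{k-1}$, yet $\tilde\Ab_k \ne \Ab_k$ and $\tilde\bGamma_k \ne \bGamma_k$ whenever $\Gamma_{k,1,1} \ne \Gamma_{k,2,2}$; this pair lies in $\cA_1 \times \cT_0$, so it is a counterexample to the lemma as stated. (The paper's own proof shares this blind spot: its characterization of single-membership rows via the perturbations $\Ib_{p_{k-1}} + \Eb_{i,j}$ becomes vacuous when $p_{k-1} = 2$, since then $\Ib_{p_{k-1}} + \Eb_{1,2} = \one\one^\top$; the enclosing theorem is unharmed because there the swap is absorbed into the permutation equivalence $\perm_\cR$, but the lemma needs the restriction $p_{k-1} \ge 3$, under which your argument, amended as above, goes through.)
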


Lemmas \ref{lemm:strict_tensor}, \ref{lemm:strict_sX}, and \ref{lemm:strict_param} together provide a three-step procedure for recovering the parameters $\Ab_k, \bTheta_k$ and the probability vector $\vb_{k - 1}$ from $\vb_k = \bLambda_k \vb_{k - 1}$.
We now use an induction argument for $k = K, K - 1, \ldots, 1$ to sequentially recover all parameters $\Ab, \bTheta$ from the marginal distribution $\law(\Xb_K | \Ab, \bTheta)$ (or equivalently from the vector $\vb_K$) and prove Theorem \ref{theo:strict}.

\begin{proof}[Proof of Theorem \ref{theo:strict}]
We let parameter $(\Ab, \bTheta) \in \cA_1 \times \cT_0$ and pick an arbitrary $(\tilde\Ab, \tilde\bTheta) \in \marg_{\cA_1 \times \cT_0}(\Ab, \bTheta)$, then by definition \eqref{eq:pi_def},
\begin{equation}\label{eq:strict_theo_marg}
\law(\Xb_K | \Ab, \bTheta)
=
\law(\Xb_K | \tilde\Ab, \tilde\bTheta)
.
\end{equation}

We prove by induction for $k \in [K]$ in descending order.
Let $k \in [K]$, $\Pb_k \in \RR^{p_k \times p_k}$ be a permutation matrix, and $\sX_k$ be a bijective map from $\cX_k$ to $[|\cX_k|]$.
We also fix an arbitrary bijective map $\sX_{k - 1}$ from $\cX_{k - 1}$ to $[|\cX_{k - 1}|]$.
We let $\sN_k$ be the permutation map on the integers $[|\cX_k|]$ such that for all $x \in [|\cX_k|]$,
\begin{equation}\label{eq:strict_theo_sNk}
(\sX_k^{-1} \circ \sN_k)(x)
:=
\Pb_k \sX_k^{-1}(x) \Pb_k^\top
.
\end{equation}
Accordingly, we let $\Nb_k$ denote the permutation matrix corresponding to $\sN_k$, such that
$$
\big(
1 ~ 2 ~ \cdots ~ |\cX_k|
\big) \Nb_k
=
\big(
\sN_k(1) ~ \sN_k(2) ~ \cdots ~ \sN_k(|\cX_k|)
\big)
.
$$
Similar to \eqref{eq:def_vk}, \eqref{eq:def_lambdakij}, and \eqref{eq:def_lambdak}, we define the probability vectors $\vb_{k - 1}, \tilde\vb_{k - 1} \in \cS^{|\cX_{k - 1}| - 1}$ to characterize the marginal distributions $\law(\Xb_{k - 1} | \Ab, \bTheta)$ and $\law(\Xb_{k - 1} | \tilde\Ab, \tilde\bTheta)$ such that their entries satisfy
$$
v_{k - 1, \sX_{k - 1}(\Xb_{k - 1})}
=
\PP(\Xb_{k - 1} | \Ab, \bTheta)
,\quad
\tilde{v}_{k - 1, \sX_{k - 1}(\Xb_{k - 1})}
=
\PP(\Xb_{k - 1} | \tilde\Ab, \tilde\bTheta)
$$
for all $\Xb_{k - 1} \in \cX_{k - 1}$;
we define the probability matrices $\bLambda_k, \tilde\bLambda_k \in \RR^{|\cX_k| \times |\cX_{k - 1}|}$ to characterize the conditional distributions $\law(\Xb_k | \Xb_{k - 1}, \Ab_k, \bTheta_k)$ and $\law(\Xb_k | \Xb_{k - 1}, \tilde\Ab_k, \tilde\bTheta_k)$ such that their entries satisfy
$$
\Lambda_{k, \sX_k(\Xb_k), \sX_{k - 1}(\Xb_{k - 1})}
=
\PP(\Xb_k | \Xb_{k - 1}, \Ab, \bTheta)
,\quad
\tilde\Lambda_{k, \sX_k(\Xb_k), \sX_{k - 1}(\Xb_{k - 1})}
=
\PP(\Xb_k | \Xb_{k - 1}, \tilde\Ab, \tilde\bTheta)
$$
for all $\Xb_k \in \cX_k$ and $\Xb_{k - 1} \in \cX_{k - 1}$;
and we define the probability vectors $\vb_k, \tilde\vb_k \in \cS^{|\cX_k| - 1}$ to characterize the marginal distributions $\law(\Xb_k | \Ab, \bTheta)$ and $\law(\Xb_k | \tilde\Ab, \tilde\bTheta)$ such that their entries satisfy
$$
v_{k, \sX_k(\Xb_k)}
=
\PP(\Xb_k | \Ab, \bTheta)
,\quad
\tilde{v}_{k, \sX_k(\Xb_k)}
=
\PP(\Xb_k | \tilde\Ab, \tilde\bTheta)
$$
for all $\Xb_k \in \cX_k$.
Similar to \eqref{eq:vk_lambdak_vk1}, we have
$$
\vb_k = \bLambda_k \vb_{k - 1}
,\quad
\tilde\vb_k = \tilde\bLambda_k \tilde\vb_{k - 1}
.
$$

To do the induction, we suppose for layer $k$ that $\vb_k = \Nb_k^\top \tilde\vb_k$.
By Lemma \ref{lemm:strict_tensor}, there exists a permutation matrix $\Nb_{k - 1} \in \RR^{|\cX_{k - 1}| \times |\cX_{k - 1}|}$ such that
\begin{equation}\label{eq:strict_theo_step1}
\bLambda_k
=
\Nb_k^\top \tilde\bLambda_k \Nb_{k - 1}
,\quad
\vb_{k - 1}
=
\Nb_{k - 1}^\top \tilde\vb_{k - 1}
.
\end{equation}
We let $\sN_{k - 1}$ denote the permutation map on $[|\cX_{k - 1}|]$ corresponding to the permutation matrix $\Nb_{k - 1}$, such that
$$
\big(
1 ~ 2 ~ \cdots ~ |\cX_{k - 1}|
\big) \Nb_{k - 1}
=
\big(
\sN_{k - 1}(1) ~ \sN_{k - 1}(2) ~ \cdots ~ \sN_{k - 1}(|\cX_{k - 1}|)
\big)
.
$$
For simplicity, we slightly abuse notation and also let $\sN_{k - 1}$ denote the permutation map $\sX_{k - 1}^{-1} \circ \sN_{k - 1} \circ \sX_{k - 1}$ on $\cX_{k - 1}$ and $\sN_k$ denote the permutation map $\sX_k^{-1} \circ \sN_k \circ \sX_k$ on $\cX_k$.
Then \eqref{eq:strict_theo_step1} implies, for all $\Xb_k \in \cX_k$ and $\Xb_{k - 1} \in \cX_{k - 1}$,
\begin{equation}\label{eq:strict_theo_PsNkk1}
\PP(\Xb_k | \Xb_{k - 1}, \Ab_k, \bTheta_k)
=
\PP(\sN_k(\Xb_k) | \sN_{k - 1}(\Xb_{k - 1}), \tilde\Ab_k, \tilde\bTheta_k)
.
\end{equation}
By definition of $\sN_k$ in \eqref{eq:strict_theo_sNk} and our model formulation in \eqref{eq:model_entry} and \eqref{eq:model}, we have
\begin{align*}
\PP(\sN_k(\Xb_k) | \sN_{k - 1}(\Xb_{k - 1}), \tilde\Ab_k, \tilde\bTheta_k)
&=
\PP(\Pb_k \Xb_k \Pb_k^\top | \sN_{k - 1}(\Xb_{k - 1}), \tilde\Ab_k, \tilde\bTheta_k)
\\&=
\PP(\Xb_k | \sN_{k - 1}(\Xb_{k - 1}), \Pb_k^\top \tilde\Ab_k, \tilde\bTheta_k)
,
\end{align*}
which together with \eqref{eq:strict_theo_PsNkk1} gives for all $x \in [|\cX_{k - 1}|]$
$$
\law(\Xb_k | \sX_{k - 1}^{-1}(x), \Ab_k, \bTheta_k)
=
\law(\Xb_k | (\sX_{k - 1}^{-1} \circ \sN_{k - 1})(x), \Pb_k^\top \tilde\Ab_k, \tilde\bTheta_k)
.
$$
By Lemma \ref{lemm:strict_sX}, there exists a permutation matrix $\Pb_{k - 1} \in \RR^{p_{k - 1} \times p_{k - 1}}$, such that for all $x \in [|\cX_{k - 1}|]$,
\begin{equation}\label{eq:strict_theo_sNk1}
(\sX_{k - 1}^{-1} \circ \sN_{k - 1})(x)
=
\Pb_{k - 1} \sX_{k - 1}^{-1}(x) \Pb_{k - 1}^\top
.
\end{equation}
This implies, for all $\Xb_{k - 1} \in \cX_{k - 1}$,
\begin{equation}\label{eq:strict_theo_step2}
\law\big(
\Xb_k | \Xb_{k - 1}, \Ab_k, \bTheta_k
\big)
=
\law\big(
\Xb_k | \Pb_{k - 1} \Xb_{k - 1} \Pb_{k - 1}^\top, \Pb_k^\top \tilde\Ab_k, \tilde\bTheta_k
\big)
.
\end{equation}
Using our model formulation in \eqref{eq:model_entry}, for each $(i, j) \in \langle p_k \rangle$ we have
\begin{align*}
&\quad~
\PP(X_{k, i, j} | \Xb_{k - 1}, \Ab_k, \bTheta_k)
\\&=
\frac{\exp\left( X_{k, i, j} \left(
C_k + \ab_{k, i}^\top (\bGamma_k * \Xb_{k - 1}) \ab_{k, j}
\right) \right)}{1 + \exp\left(
C_k + \ab_{k, i}^\top (\bGamma_k * \Xb_{k - 1}) \ab_{k, j}
\right)}
\\&=
\frac{\exp\left( X_{k, i, j} \left(
C_k + (\ab_{k, i}^\top \Pb_{k - 1}^\top) \big((\Pb_{k - 1} \bGamma_k \Pb_{k - 1}^\top) * (\Pb_{k - 1} \Xb_{k - 1} \Pb_{k - 1}^\top)\big) (\Pb_{k - 1} \ab_{k, j})
\right) \right)}{1 + \exp\left(
C_k + (\ab_{k, i}^\top \Pb_{k - 1}^\top) \big((\Pb_{k - 1} \bGamma_k \Pb_{k - 1}^\top) * (\Pb_{k - 1} \Xb_{k - 1} \Pb_{k - 1}^\top)\big) (\Pb_{k - 1} \ab_{k, j})
\right)}
\\&=
\PP(X_{k, i, j} | \Pb_{k - 1} \Xb_{k - 1} \Pb_{k - 1}^\top, \Ab_k \Pb_{k - 1}^\top,~ \{C_k, \Pb_{k - 1} \bGamma_k \Pb_{k - 1}^\top\})
.
\end{align*}
Combined with \eqref{eq:strict_theo_step2}, we obtain for all $\Xb_{k - 1} \in \cX_{k - 1}$
$$
\law(\Xb_k | \Xb_{k - 1},~ \Ab_k \Pb_{k - 1}^\top,~ \{C_k, \Pb_{k - 1} \bGamma_k \Pb_{k - 1}^\top\})
=
\law(\Xb_k | \Xb_{k - 1}, \Pb_k^\top \tilde\Ab_k, \tilde\bTheta_k)
.
$$
By Lemma \ref{lemm:strict_param}, this implies
\begin{equation}\label{eq:strict_theo_step3}
\tilde\Ab_k
=
\Pb_k \Ab_k \Pb_{k - 1}^\top
,\quad
\tilde{C}_k
=
C_k
,\quad
\tilde\bGamma_k
=
\Pb_{k - 1} \bGamma_k \Pb_{k - 1}^\top
.
\end{equation}

The induction passes on from layer $k$ to layer $k - 1$, since $\vb_{k - 1} = \Nb_{k - 1}^\top \tilde\vb_{k - 1}$ follows from \eqref{eq:strict_theo_step1} and the permutation matrix $\Nb_{k - 1}$ satisfies \eqref{eq:strict_theo_sNk1}.
It remains to show that the induction can start at layer $k = K$.
We recall from \eqref{eq:sigma_def} that we have specified $\Pb_K := \Ib_{p_K}$.
For the permutation matrix $\Nb_K := \Ib_{|\cX_K|}$ and the permutation map $\sN_K := \mathrm{Id}_{[|\cX_K|]}$ being the identity map, we find that
$$
(\sX_K^{-1} \circ \sN_K)(x)
=
\Pb_K \sX_K^{-1}(x) \Pb_K^\top
$$
holds for all $x \in [|\cX_K|]$ and \eqref{eq:strict_theo_marg} is equivalent to
$$
\vb_K
=
\tilde\vb_K
=
\Nb_K^\top \tilde\vb_K
.
$$
This shows that the induction holds for $k = K$.
Therefore, through induction we have shown that \eqref{eq:strict_theo_step3} holds for all $k \in [K]$.
Additionally, we have $\vb_0 = \Nb_0^\top \tilde\vb_0$ for $\Nb_0$ and its associated $\sN_0$ satisfying \eqref{eq:strict_theo_sNk1} with $k = 1$, which implies
$$
\PP(\Xb_0 | \bnu)
=
v_{0, \sX_0(\Xb_0)}
=
\tilde{v}_{0, (\sN_0 \circ \sX_0)(\Xb_0)}
=
\tilde{v}_{0, (\sX_0^{-1} \circ \sN_0)(\Xb_0)}
=
\tilde{v}_{0, \Pb_0 \Xb_0 \Pb_0^\top}
=
\PP(\Pb_0 \Xb_0 \Pb_0^\top | \tilde\bnu)
.
$$
All above results together establish that $(\tilde\Ab, \tilde\bTheta) \in \perm_{\cA_1 \times \cT_0}(\Ab, \bTheta)$.
This suggests
$$
\perm_{\cA_1 \times \cT_0}(\Ab, \bTheta)
\supset
\marg_{\cA_1 \times \cT_0}(\Ab, \bTheta)
$$
for all $(\Ab, \bTheta) \in \cA_1 \times \cT_0$.
Since by definition $\perm_{\cA_1 \times \cT_0}(\Ab, \bTheta) \subset \marg_{\cA_1 \times \cT_0}(\Ab, \bTheta)$ holds, we have proven that our model with parameter space $\cA_1 \times \cT_0$ is strictly identifiable.
\end{proof}

\subsection{Proof of Auxiliary Lemmas for Theorem \ref{theo:strict}}\label{ssec:iden_strict_aux}

\begin{proof}[Proof of Lemma \ref{lemm:strict_tensor}]
Recalling the definition of parameter space $\cA_1$ in \eqref{eq:A1_def}, the matrix $\Ab_k$ satisfies the blockwise condition
\begin{equation}\label{eq:A_block}
\Pb_k \Ab_k
=
\big(
\Ib_{p_{k - 1}} ~ \Ib_{p_{k - 1}} ~ \Bb_k
\big)^\top
\end{equation}
for $\Pb_k \in \RR^{p_k \times p_k}$ a permutation matrix and $\Bb_k \in \RR^{p_{k - 1} \times (p_k - 2p_{k - 1})}$ an arbitrary matrix.
Without loss of generality, we can let $\Pb_k = \Ib_{p_k}$.
For a general layer $k$, the only probability vector and matrix involved in Lemma \ref{lemm:strict_tensor} are $\vb_{k - 1}$ and $\bLambda_k$, so we abbreviate them by $\vb$ and $\bLambda$ for notational simplicity.
Similarly, we abbreviate $\tilde\bLambda_k$ by $\tilde\bLambda$, $\tilde\vb_{k - 1}$ by $\tilde\vb$, and the matrix $\bLambda_{k, (i, j)}$ defined in \eqref{eq:def_lambdakij} by $\bLambda_{(i, j)}$.

We define four sets of index pairs in $\langle p_k \rangle$ as
\begin{equation}\label{eq:def_I1234}\begin{aligned}
&
\cI_1
:=
\big\{
(i, j):~
1 \le i < j \le p_{k - 1}
\big\}
,\\&
\cI_2
:=
\big\{
(i, j):~
p_{k - 1} + 1 \le i < j \le 2p_{k - 1}
\big\}
,\\&
\cI_3
:=
\big\{
(i, j):~
1 \le i \le p_{k - 1}
,~
p_{k - 1} + 1 \le j \le 2p_{k - 1}
\big\}
,\\&
\cI_4
:=
\big\{
(i, j):~
2p_{k - 1} + 1 \le i < j \le p_k
\big\}
,
\end{aligned}\end{equation}
such that together they form a partition of $\langle p_k \rangle$.
This partition of $\langle p_k \rangle$ corresponds to a partition of the upper-triangular off-diagonal entries of the adjacency matrix $\Xb_k$ into four blocks.
Similar to \eqref{eq:def_lambdak}, we characterize the conditional distributions of each block as
\begin{equation}\label{eq:def_lambda1234}\begin{aligned}
&
\bLambda_{[1]}
:=
\odot_{(i, j) \in \cI_1} \bLambda_{(i, j)}
\in
\RR^{|\cX_{k - 1}| \times |\cX_{k - 1}|}
,\\&
\bLambda_{[2]}
:=
\odot_{(i, j) \in \cI_2} \bLambda_{(i, j)}
\in
\RR^{|\cX_{k - 1}| \times |\cX_{k - 1}|}
,\\&
\bLambda_{[3]}
:=
\odot_{(i, j) \in \cI_3} \bLambda_{(i, j)}
\in
\RR^{2^{p_{k - 1}^2} \times |\cX_{k - 1}|}
,\\&
\bLambda_{[4]}
:=
\odot_{(i, j) \in \cI_4} \bLambda_{(i, j)}
\in
\RR^{2^{\big(\frac{p_k (p_k - 1)}{2} - p_{k - 1}(2p_{k - 1} - 1)\big)} \times |\cX_{k - 1}|}
.
\end{aligned}\end{equation}
In the case $p_k = 2p_{k - 1}$, the number of entries in the fourth block is $\frac{p_k (p_k - 1)}{2} - p_{k - 1}(2p_{k - 1} - 1) = 0$, so we replace the definition of $\bLambda_{[4]}$ by $\bLambda_{[4]} = \one^\top \in \RR^{1 \times |\cX_{k - 1}|}$.

Following the definition \eqref{eq:def_lambda1234}, by appropriately reordering the rows of $\bLambda$, we have $\bLambda = \odot_{1 \le \ell \le 4} \bLambda_{[\ell]}$.
Recall that $\vec(\cdot)$ denotes the vectorization of a tensor or matrix and $\llbracket \cdot, \cdot, \cdot \rrbracket$ is the three-way operator defined in \eqref{eq:cp_decom}.
By appropriately reordering the rows in $\bLambda$, we further have
\begin{equation}\label{eq:strict_tensor_cp}
\bLambda \vb
=
\vec\big( \big\llbracket
\bLambda_{[1]}, \bLambda_{[2]}, (\bLambda_{[3]} \odot \bLambda_{[4]}) \diag(\vb)
\big\rrbracket \big)
,
\end{equation}
which bridges the connection between Lemma \ref{lemm:strict_tensor} and the CP decomposition of three-way tensors.
To apply Theorem \ref{theo:kruskal} on the CP decomposition uniqueness to the three-way tensor $\big\llbracket \bLambda_{[1]}, \bLambda_{[2]}, (\bLambda_{[3]} \odot \bLambda_{[4]}) \diag(\vb) \big\rrbracket$, we need the Kruskal's condition
\begin{equation}\label{eq:strict_tensor_kruskal}
\rank_K\big(
\bLambda_{[1]}
\big)
+
\rank_K\big(
\bLambda_{[2]}
\big)
+
\rank_K\big(
(\bLambda_{[3]} \odot \bLambda_{[4]}) \diag(\vb)
\big)
\ge
2 |\cX_{k - 1}| + 2
.
\end{equation}
We now study the invertibility of $\bLambda_{[1]}, \bLambda_{[2]}$ and the pairwise linear independence of the columns in $\bLambda_{[3]}$, which together could ensure \eqref{eq:strict_tensor_kruskal}, as shown in the following.

We first show that $\bLambda_{[1]}$ is invertible, so that its Kruskal rank is $|\cX_{k - 1}|$.
We denote $\lambda^\star := e^{C_k} / (1 + e^{C_k})$ and define for each $(i, j) \in \cI_1$ the matrix
$$
\bLambda_{(i, j)}'
:=
\left(\begin{matrix}
1 & -\lambda^\star \\
0 & 1
\end{matrix}\right)
\left(\begin{matrix}
1 & 0 \\
1 & 1
\end{matrix}\right)
\bLambda_{(i, j)}
,
$$
which is a $2 \times |\cX_{k - 1}|$ matrix obtained by subtracting the first row of $\bLambda_{(i, j)}$ entrywise by $\lambda^\star$ and making the second row all-ones.
Similar to \eqref{eq:def_lambda1234}, we define
$$
\bLambda_{[1]}'
:=
\odot_{(i, j) \in \cI_1} \bLambda_{(i, j)}'
.
$$
Using the formula $\left(\otimes_{i = 1}^c \Cb_i\right) \left(\otimes_{i = 1}^c \Db_i\right) \left(\odot_{i = 1}^c \Eb_i\right) = \odot_{i = 1}^c (\Cb_i \Db_i \Eb_i)$ for matrices $\Cb_i, \Db_i, \Eb_i$ of compatible dimensions, we obtain
\begin{equation}\label{eq:Lambda_prime}
\bLambda_{[1]}'
=
\left(\begin{matrix}
1 & -\lambda^\star \\
0 & 1
\end{matrix}\right)^{\otimes \frac{p_{k - 1} (p_{k - 1} - 1)}{2}}
\left(\begin{matrix}
1 & 0 \\
1 & 1
\end{matrix}\right)^{\otimes \frac{p_{k - 1} (p_{k - 1} - 1)}{2}}
\bLambda_{[1]}
,
\end{equation}
where $\Db^{\otimes c}$ of a matrix $\Db$ denotes the Kronecker product $\otimes_{i \in [c]} \Db$.
On the right hand side of \eqref{eq:Lambda_prime}, we note that the first term is a upper-triangular square matrix with diagonal all ones and the second term is a lower-triangular square matrix with diagonal all ones, which are both invertible.
Therefore, $\bLambda_{[1]}$ is invertible if and only if $\bLambda_{[1]}'$ is invertible.

We define the \emph{lexicographical order} between any two distinct adjacency matrices $\Xb_{k - 1}$ and $\tilde\Xb_{k - 1}$ as $\Xb_{k - 1} \succ_{lex} \tilde\Xb_{k - 1}$ if and only if there exists $(s, t) \in \langle p_{k - 1} \rangle$ such that
\begin{equation}\label{eq:lex}
X_{k - 1, s, t}
>
\tilde{X}_{k - 1, s, t}
,\quad
X_{k - 1, i, j} = \tilde{X}_{k - 1, i, j}
~~
\forall (i, j) \in \big\{
(i, j) \in \langle p_{k - 1} \rangle:~
i < s \text{ or } (i = s, j < t)
\big\}
.
\end{equation}
If we vectorize the upper triangular off-diagonal entries $(i, j)$ of $\Xb_{k - 1}$ in ascending order of $p_{k - 1} i + j$ and view the resulting binary vector as a binary number, then the lexicographical order is exactly the order of this binary number.
Recall that we have previously let $\sX_{k - 1}$ be an arbitrary bijective map from $\cX_{k - 1}$ to $[|\cX_{k - 1}|]$.
Within this proof, we can specify it as the unique bijective map that is monotonically decreasing, such that
$$
\sX_{k - 1}(\Xb_{k - 1})
<
\sX_{k - 1}(\tilde\Xb_{k - 1})
~
\iff
~
\Xb_{k - 1}
\succ_{lex}
\tilde\Xb_{k - 1}
,\quad
\forall \Xb_{k - 1}, \tilde\Xb_{k - 1} \in \cX_{k - 1}
.
$$
Note that any other choice of bijective map $\cX_{k - 1}$ will only complicate our proof statement but not affect our result.
We also define a permutation matrix $\Nb \in \RR^{|\cX_{p - 1}| \times |\cX_{p - 1}|}$ such that
$$
(\Nb \bLambda_{[1]})_s
=
\odot_{(i, j) \in \langle p_{k - 1} \rangle} (\bLambda_{i, j})_{2 - \sX_{k - 1}^{-1}(s)_{i, j}}
$$
for $(\Bb)_s$ denoting the $s$th row of matrix $\Bb$.
Then the $(s, t)$th entry of $\Nb \bLambda_{[1]}'$ is given by
\begin{equation}\label{eq:NLp}
(\Nb \bLambda_{[1]}')_{s, t}
=
\prod_{(i, j) \in \langle p_{k - 1} \rangle} \left(
(\lambda_{(i, j), 1, t} - \lambda^\star) 1_{\sX_{k - 1}^{-1}(s)_{i, j} = 1} + 1_{\sX_{k - 1}^{-1}(s)_{i, j} = 0}
\right)
,
\end{equation}
where $\lambda_{(i, j), 1, t}$ denotes the $(1, t)$th entry of $\bLambda_{(i, j)}$.

Since $\bTheta \in \cT_0$ and $\Ab_k$ takes the blockwise form \eqref{eq:A_block} with $\Pb_k = \Ib_{p_k}$, we have each $\Gamma_{k, i, j} > 0$ and $\ab_{k, i} = \eb_i$ for $i \in [p_{k - 1}]$, which implies
$$
\lambda_{(i, j), 1, t}
=
\frac{\exp\left(
C_k + \ab_{k, i}^\top (\bGamma_k * \sX_{k - 1}^{-1}(t)) \ab_{k, j}
\right)}{1 + \exp\left(
C_k + \ab_{k, i}^\top (\bGamma_k * \sX_{k - 1}^{-1}(t)) \ab_{k, j}
\right)}
=
\frac{\exp\left(
C_k + \Gamma_{k, i, j} \sX_{k - 1}^{-1}(t)_{i, j}
\right)}{1 + \exp\left(
C_k + \Gamma_{k, i, j} \sX_{k - 1}^{-1}(t)_{i, j}
\right)}
>
\lambda^\star
$$
whenever $\sX_{k - 1}^{-1}(t)_{i, j} = 1$.
From \eqref{eq:NLp}, all the diagonal entries of $\Nb \bLambda_{[1]}'$ are therefore positive.
For $1 \le s < t \le |\cX_{k - 1}|$, since $\sX_{k - 1}^{-1}(s) \succ_{lex} \sX_{k - 1}^{-1}(t)$, there exists $(i, j) \in \langle p_{k - 1} \rangle$ such that $\sX_{k - 1}^{-1}(s)_{i, j} = 1$ and $\sX_{k - 1}^{-1}(t)_{i, j} = 0$, which implies
$$
(\lambda_{(i, j), 1, t} - \lambda^\star) 1_{\sX_{k - 1}^{-1}(s)_{i, j} = 1} + 1_{\sX_{k - 1}^{-1}(s)_{i, j} = 0}
=
\frac{\exp\left(
C_k + \Gamma_{k, i, j} \sX_{k - 1}^{-1}(t)_{i, j}
\right)}{1 + \exp\left(
C_k + \Gamma_{k, i, j} \sX_{k - 1}^{-1}(t)_{i, j}
\right)} - \lambda^\star
=
0
.
$$
From \eqref{eq:NLp}, all the upper-triangular off-diagonal entries $(\Nb \bLambda_{[1]}')_{s, t}$ are therefore zero.
As a result, the $|\cX_{k - 1}| \times |\cX_{k - 1}|$ matrix $\Nb \bLambda_{[1]}'$ is lower-triangular with positive diagonal entries.
This proves that $\Nb \bLambda_{[1]}'$ is invertible, as is $\bLambda_{[1]}$, and $\rank_K(\bLambda_{[1]}) = |\cX_{k - 1}|$.
With the same procedure, we can also prove that $\rank_K(\bLambda_{[2]}) = |\cX_{k - 1}|$.

We next show that $(\bLambda_{[3]} \odot \bLambda_{[4]}) \diag(\vb)$ has Kruskal rank $\ge 2$.
Since each column of $\bLambda_{[3]} \odot \bLambda_{[4]}$ sums to one and each entry of $\vb$ is positive by $\bTheta \in \cT_0$ and our model formulation \eqref{eq:model_entry}, it suffices to show that $\bLambda_{[3]} \odot \bLambda_{[4]}$ does not have two identical columns.
We prove by contradiction.
Suppose the two distinct columns $s, t \in [|\cX_{k - 1}|]$ of $\bLambda_{[3]} \odot \bLambda_{[4]}$ are identical.
By taking partial sums, we obtain that the $s$th and $t$th columns of $\bLambda_{(i, j)}$ are also identical for all $(i, j) \in \cI_3$.
Let $s < t$, then $\sX_{k - 1}^{-1}(s) \succ_{lex} \sX_{k - 1}^{-1}(t)$, so there exists $(i, j) \in \langle p_{k - 1} \rangle$ such that $\sX_{k - 1}^{-1}(s)_{i, j} = 1$ and $\sX_{k - 1}^{-1}(t)_{(i, j)} = 0$.
Since $\Ab_k$ takes the blockwise form of \eqref{eq:A_block} with $\Pb_k = \Ib_{p_k}$, we have $\ab_{k, i} = \eb_i$ and $\ab_{k, p_{k - 1} + j} = \eb_j$, which implies
$$
\lambda_{(i, p_{k - 1} + j), 1, s}
=
\frac{\exp\left(
C_k + \ab_{k, i}^\top (\bGamma_k * \sX_{k - 1}^{-1}(s)) \ab_{k, p_{k - 1} + j}
\right)}{1 + \exp\left(
C_k + \ab_{k, i}^\top (\bGamma_k * \sX_{k - 1}^{-1}(s)) \ab_{k, p_{k - 1} + j}
\right)}
=
\frac{
e^{C_k + \Gamma_{k, i, j}}
}{
1 + e^{C_k + \Gamma_{k, i, j}}
}
$$
and
$$
\lambda_{(i, p_{k - 1} + j), 1, t}
=
\frac{\exp\left(
C_k + \ab_{k, i}^\top (\bGamma_k * \sX_{k - 1}^{-1}(t)) \ab_{k, p_{k - 1} + j}
\right)}{1 + \exp\left(
C_k + \ab_{k, i}^\top (\bGamma_k * \sX_{k - 1}^{-1}(t)) \ab_{k, p_{k - 1} + j}
\right)}
=
\frac{
e^{C_k}
}{
1 + e^{C_k}
}
.
$$
We need to have $\lambda_{(i, p_{k - 1} + j), 1, s} = \lambda_{(i, p_{k - 1} + j), 1, t}$ due to the two identical columns, which implies $\Gamma_{k, i, j} = 0$, contradicting with $\bTheta \in \cT_0$.
Therefore, we have proven
$$
\rank_K((\bLambda_{[3]} \odot \bLambda_{[4]}) \diag(\vb))
\ge
2
.
$$

For simplicity, we denote $\bLambda_{[34]} := \bLambda_{[3]} \odot \bLambda_{[4]}$.
Now that
$$
\rank_K(\bLambda_{[1]}) + \rank_K(\bLambda_{[2]}) + \rank_K(\bLambda_{[34]} \diag(\vb))
\ge
2|\cX_{k - 1}| + 2
,
$$
we can apply Theorem \ref{theo:kruskal} and obtain that for any $\tilde\bLambda_{[1]}, \tilde\bLambda_{[2]}, \tilde\bLambda_{[34]}, \tilde\vb$ of the same dimensions as $\bLambda_{[1]}, \bLambda_{[2]}, \bLambda_{[34]}, \vb$ satisfying
$$
\big\llbracket
\tilde\bLambda_{[1]}, \tilde\bLambda_{[2]}, \tilde\bLambda_{[34]} \diag(\tilde\vb)
\big\rrbracket
=
\big\llbracket
\bLambda_{[1]}, \bLambda_{[2]}, \bLambda_{[34]} \diag(\vb)
\big\rrbracket
,
$$
there must exist diagonal matrices $\Qb_1, \Qb_2, \Qb_{34} \in \RR^{|\cX_{k - 1}| \times |\cX_{k - 1}|}$ satisfying $\Qb_1 \Qb_2 \Qb_{34} = \Ib_{|\cX_{k - 1}|}$ and permutation matrix $\Pb \in \RR^{|\cX_{k - 1}| \times |\cX_{k - 1}|}$ such that
$$
\tilde\bLambda_{[1]} = \bLambda_{[1]} \Qb_1 \Pb
,\quad
\tilde\bLambda_{[2]} = \bLambda_{[2]} \Qb_2 \Pb
,\quad
\tilde\bLambda_{[34]} \diag(\tilde\vb) = \bLambda_{[34]} \diag(\vb) \Qb_{34} \Pb
.
$$
Since each column of $\bLambda_{[i]}, \tilde\bLambda_{[i]}$ $(i \in \{1, 2, 34\})$ needs to sum to one, we must have $\Qb_1 = \Qb_2 = \Qb_{34} = \Ib$.
Therefore, we have
$$
\tilde\bLambda_{[1]}
=
\bLambda_{[1]} \Pb
,\quad
\tilde\bLambda_{[2]}
=
\bLambda_{[2]} \Pb
,
$$
$$
\tilde\vb
=
(\tilde\bLambda_{[34]} \diag(\tilde\vb))^\top \one
=
\Pb^\top \diag(\vb) \bLambda_{[34]}^\top \one
=
\Pb^\top \vb
,
$$
and
$$
\tilde\bLambda_{[34]}
=
\bLambda_{[34]} \diag(\vb) \Pb \diag(\tilde\vb)^{-1}
=
\bLambda_{[34]} \diag(\vb) \Pb (\Pb^\top \diag(\vb)^{-1} \Pb)
=
\bLambda_{[34]} \Pb
.
$$
Together they imply
$$
\tilde\bLambda
:=
\tilde\bLambda_{[1]} \odot \tilde\bLambda_{[2]} \odot \tilde\bLambda_{[34]}
=
\bLambda \Pb
,\quad
\tilde\vb
=
\Pb^\top \vb
.
$$
\end{proof}

\begin{proof}[Proof of Lemma \ref{lemm:strict_sX}]
We define $\sP := \tilde\sX_{k - 1}^{-1} \circ \sX_{k - 1}$, then $\sP$ is a permutation map on $\cX_{k - 1}$.
Lemma \ref{lemm:strict_sX} can be restated using $\sP$ as follows:
If for all $\Xb_{k - 1} \in \cX_{k - 1}$,
\begin{equation}\label{eq:strict_sX_equiv_cond}
\law(\Xb_k | \Xb_{k - 1}, \Ab_k, \bTheta_k)
=
\law(\Xb_k | \sP(\Xb_{k - 1}), \tilde\Ab_k, \tilde\bTheta_k)
,
\end{equation}
then there exists a permutation matrix $\Pb \in \RR^{p_{k - 1} \times p_{k - 1}}$ such that $\sP(\Xb_{k - 1}) = \Pb \Xb_{k - 1} \Pb^\top$ for all $\Xb_{k - 1} \in \cX_{k - 1}$.

To prove Lemma \ref{lemm:strict_sX}, we first discuss some properties of $\law(\Xb_k | \Xb_{k - 1}, \Ab_k, \bTheta_k)$ that hold for arbitrary $(\Ab, \bTheta) \in \cA_1 \times \cT_0$.
Notice that knowing $\law(\Xb_k | \Xb_{k - 1}, \Ab_k, \bTheta_k)$ is equivalent to knowing
$$
\law(X_{k, i, j} | \Xb_{k - 1}, \Ab_k, \bTheta_k)
\quad
\forall (i, j) \in \langle p_k \rangle
,
$$
which is further equivalent to knowing
$$
\logit~ \PP(X_{k, i, j} = 1 | \Xb_{k - 1}, \Ab_k, \bTheta_k)
=
C_k + \ab_{k, i}^\top (\bGamma_k * \Xb_{k - 1}) \ab_{k, j}
\quad
\forall (i, j) \in \langle p_k \rangle
,
$$
due to our model formulation \eqref{eq:model_entry} and \eqref{eq:model}.
We let $\gb(\Xb_{k - 1}, \Ab_k, \bTheta_k)$ denote the $\frac{p_k (p_k - 1)}{2}$-dimensional vector
\begin{equation}\label{eq:gb}
\gb(\Xb_{k - 1}, \Ab_k, \bTheta_k)
:=
\big(
\logit~ \PP(X_{k, i, j} = 1 | \Xb_{k - 1}, \Ab_k, \bTheta_k)
\big)_{(i, j) \in \langle p_k \rangle}
\end{equation}
with an arbitrarily fixed order of $(i, j)$ pairs in $\langle p_k \rangle$.
For two vectors $\gb, \tilde\gb \in \RR^d$ of the same dimension $d$, we define the partial order $\succeq$ by
$$
\gb \succeq \tilde\gb
\quad\iff\quad
g_i \ge \tilde{g}_i
\quad
\forall i \in [d]
,
$$
where $g_i, \tilde{g}_i$ denote the $i$th entries of $\gb, \tilde\gb$.
Similarly, we also define a partial order $\succeq$ on $\cX_{k - 1}$ by
$$
\Xb_{k - 1} \succeq \tilde\Xb_{k - 1}
\quad\iff\quad
X_{k - 1, i, j} \ge \tilde{X}_{k - 1, i, j}
\quad
\forall (i, j) \in \langle p_{k - 1} \rangle
$$
for any $\Xb_{k - 1}, \tilde\Xb_{k - 1} \in \cX_{k - 1}$.
We let $\cG(\Ab_k, \bTheta_k)$ denote the collection of all $\gb(\Xb_{k - 1}, \Ab_k, \bTheta_k)$ with $\Xb_{k - 1} \in \cX_{k - 1}$, i.e. the set of vectors
$$
\cG(\Ab_k, \bTheta_k)
:=
\big\{
\gb(\Xb_{k - 1}, \Ab_k, \bTheta_k):~
\Xb_{k - 1} \in \cX_{k - 1}
\big\}
.
$$

Let $(\Ab, \bTheta) \in \cA_1 \times \cT_0$ be an arbitrary parameter and $k \in [K]$.
Since each entry $\Gamma_{k, i, j}$ of $\bGamma_k$ is positive, the set $\cG(\Ab_k, \bTheta_k)$ obtains a unique minimal element $\gb(\Ib_{p_{k - 1}}, \Ab_k, \bTheta_k)$ under the partial order $\succeq$.
We denote the set after removing this minimal element by
$$
\cG_0(\Ab_k, \bTheta_k)
:=
\cG(\Ab_k, \bTheta_k) \cap \{\gb(\Ib_{p_{k - 1}}, \Ab_k, \bTheta_k)\}^c
.
$$
Since $\Ab \in \cA_1$, for each $i \in [p_{k - 1}]$ there exists row $r_i \in [p_k]$ such that $\ab_{k, r_i} = \eb_i$.
Therefore, for any $\Xb_{k - 1}, \tilde\Xb_{k - 1} \in \cX_{k - 1}$, we have the equivalence
\begin{equation}\label{eq:g_to_X}
\gb(\Xb_{k - 1}, \Ab_k, \bTheta_k)
\succeq
\gb(\tilde\Xb_{k - 1}, \Ab_k, \bTheta_k)
\quad
\iff
\quad
\Xb_{k - 1}
\succeq
\tilde\Xb_{k - 1}
.
\end{equation}
For $(i, j) \in \langle p_{k - 1} \rangle$, let $\Eb_{i, j} := \eb_i \eb_j^\top + \eb_j \eb_i^\top \in \RR^{p_{k - 1} \times p_{k - 1}}$.
We notice that under the partial order $\succeq$, the set $\cG_0(\Ab_k, \bTheta_k)$ admits $\frac{p_{k - 1} (p_{k - 1} - 1)}{2}$ minimal elements, forming the subset
$$
\cG_m(\Ab_k, \bTheta_k)
:=
\left\{
\gb(\Ib_{p_{k - 1}} + \Eb_{i, j}, \Ab_k, \bTheta_k)
:~
(i, j) \in \langle p_{k - 1} \rangle
\right\}
.
$$
For $(r, s) \in \langle p_k \rangle$, let $\gb(\Xb_{k - 1}, \Ab_k, \bTheta_k)_{(r, s)}$ denote the $(r, s)$th entry of the vector $\gb(\Xb_{k - 1}, \Ab_k, \bTheta_k)$ defined in \eqref{eq:gb}.
For each row $r \in [p_k]$ of $\Ab_k$, we further define a subset of $\cG_m(\Ab_k, \bTheta_k)$ as
\begin{equation}\label{eq:Gmr}
\cG_{m, r}(\Ab_k, \bTheta_k)
:=
\big\{
\gb \in \cG_m(\Ab_k, \bTheta_k):~
\exists s \in [p_k] \cap \{r\}^c \text{ s.t. }
\gb_{(r, s)} > \gb(\Ib_{p_{k - 1}}, \Ab_k, \bTheta_k)_{(r, s)}
\big\}
,
\end{equation}
which essentially contains all vectors $\gb(\Ib_{p_{k - 1}} + \Eb_{i, j}, \Ab_k, \bTheta_k)$'s with $(i, j) \in \langle p_{k - 1} \rangle$ satisfying $A_{k, r, i} = 1$ or $A_{k, r, j} = 1$.
For any $(r, s) \in \langle p_k \rangle$, we notice that the following equivalence holds:
\begin{equation}\label{eq:Gmr_equiv_ab}
\cG_{m, r}(\Ab_k, \bTheta_k)
\supset
\cG_{m, s}(\Ab_k, \bTheta_k)
\quad\iff\quad
\ab_{k, r}
\succeq
\ab_{k, s}
\quad\text{or}\quad
\one^\top \ab_{k, r} \ge p_{k - 1} - 1
.
\end{equation}

We consider the case $p_{k - 1} \ge 3$ for now.
The collection of all rows in $\Ab_k$ after removing the zero vector is the set $\big\{ \ab_{k, r}:~ r \in [p_k] \big\} \cap \{\zero\}^c$, and since $\Ab \in \cA_1$, it contains $p_{k - 1}$ minimal elements $\eb_1, \ldots, \eb_{p_{k - 1}}$ under the partial order $\succeq$.
Correspondingly, \eqref{eq:Gmr_equiv_ab} suggests that the collection of sets $\big\{ \cG_{m, r}(\Ab_k, \bTheta_k):~ r \in [p_k] \big\} \cap \{\varnothing\}^c$ also contains $p_{k - 1}$ minimal elements under the partial order of set inclusion, which we denote by $\cG_m^{(1)}(\Ab_k, \bTheta_k), \ldots, \cG_m^{(p_{k - 1})}(\Ab_k, \bTheta_k)$.
For each $\cG_m^{(\ell)}(\Ab_k, \bTheta_k)$, there exists some row $r_\ell \in [p_k]$ and some $i_\ell \in [p_{k - 1}]$ satisfying $\ab_{k, r_\ell} = \eb_{i_\ell}$, such that
$$
\cG_m^{(\ell)}(\Ab_k, \bTheta_k)
=
\cG_{m, r_\ell}(\Ab_k, \bTheta_k)
=
\big\{
\gb(\Ib_{p_{k - 1}} + \Eb_{i_\ell, j}, \Ab_k, \bTheta_k):~
j \in [p_{k - 1}] \cap \{i_\ell\}^c
\big\}
,
$$
which contains exactly $p_{k - 1} - 1$ elements.
Additionally, we also note that each $\gb(\Ib_{p_{k - 1}} + \Eb_{i, j}, \Ab_k, \bTheta_k)$ belongs to exactly two of the sets $\cG_m^{(\ell)}(\Ab_k, \bTheta_k)$'s.
Therefore, there exists a permutation map $\sQ$ on $[p_{k - 1}]$ satisfying
\begin{equation}\label{eq:perm_Q}
\cG_m^{(\sQ(i))}(\Ab_k, \bTheta_k) \cap \cG_m^{(\sQ(j))}(\Ab_k, \bTheta_k)
=
\{\gb(\Ib_{p_{k - 1}} + \Eb_{i, j}, \Ab_k, \bTheta_k)\}
\end{equation}
for all $(i, j) \in \langle p_{k - 1} \rangle$.

We note that all above definitions of $\gb, \cG, \cG_0, \cG_m, \cG_m^{(\ell)}$ are made based on the distributions $\law(\Xb_k | \Xb_{k - 1}, \Ab_k, \bTheta_k)$'s for $\Xb_{k - 1} \in \cX_{k - 1}$, without explicitly knowing $\Ab_k, \bTheta_k$. 
Therefore, for any $(\Ab, \bTheta), (\tilde\Ab, \tilde\bTheta) \in \cA_1 \times \cT_0$, if \eqref{eq:strict_sX_equiv_cond} holds for all $\Xb_{k - 1} \in \cX_{k - 1}$, then we have
$$
\cG(\Ab_k, \bTheta_k)
=
\cG(\tilde\Ab_k, \tilde\bTheta_k)
,\quad
\cG_0(\Ab_k, \bTheta_k)
=
\cG_0(\tilde\Ab_k, \tilde\bTheta_k)
,\quad
\cG_m(\Ab_k, \bTheta_k)
=
\cG_m(\tilde\Ab_k, \tilde\bTheta_k)
,
$$
and if $p_{k - 1} \ge 3$, we further have
$$
\big\{
\cG_m^{(\ell)}(\Ab_k, \bTheta_k):~
\ell \in [p_{k - 1}]
\big\}
=
\big\{
\cG_m^{(\ell)}(\tilde\Ab_k, \tilde\bTheta_k):~
\ell \in [p_{k - 1}]
\big\}
.
$$
When $p_{k - 1} \ge 3$, from \eqref{eq:perm_Q} we know there further exists a permutation map $\tilde\sQ$ on $[p_{k - 1}]$ such that
$$
\gb(\Ib_{p_{k - 1}} + \Eb_{i, j}, \Ab_k, \bTheta_k)
=
\gb(\Ib_{p_{k - 1}} + \Eb_{\tilde\sQ(i), \tilde\sQ(j)}, \tilde\Ab_k, \tilde\bTheta_k)
.
$$
The above results together suggest
$$
\sP(\Ib_{p_{k - 1}})
=
\Ib_{p_{k - 1}}
,\quad
\sP(\Ib_{p_{k - 1}} + \Eb_{i, j})
=
\Ib_{p_{k - 1}} + \Eb_{\tilde\sQ(i), \tilde\sQ(j)}
\quad
\forall (i, j) \in \langle p_{k - 1} \rangle
.
$$
Combining with \eqref{eq:strict_sX_equiv_cond} and \eqref{eq:g_to_X}, for all $\Xb_{k - 1} \in \cX_{k - 1}$, we know the following equivalence holds for $\sP(\Xb_{k - 1})$:
\begin{align*}
\sP(\Xb_{k - 1})
\succeq
\Ib_{p_{k - 1}} + \Eb_{\tilde\sQ(i), \tilde\sQ(j)}
&\iff
\sP(\Xb_{k - 1})
\succeq
\sP(\Ib_{p_{k - 1}} + \Eb_{i, j})
\\&\iff
\gb(\sP(\Xb_{k - 1}), \tilde\Ab_k, \tilde\bTheta_k)
\succeq
\gb(\sP(\Ib_{p_{k - 1}} + \Eb_{i, j}), \tilde\Ab_k, \tilde\bTheta_k)
\\&\iff
\gb(\Xb_{k - 1}, \Ab_k, \bTheta_k)
\succeq
\gb(\Ib_{p_{k - 1}} + \Eb_{i, j}, \Ab_k, \bTheta_k)
\\&\iff
\Xb_{k - 1}
\succeq
\Ib_{p_{k - 1}} + \Eb_{i, j}
.
\end{align*}
This suggests that the permutation map $\sP$ on $\cX_{k - 1}$ is uniquely determined through the permutation map $\tilde\sQ$ on $[p_{k - 1}]$, 
and the entries of $\sP(\Xb_{k - 1})$ satisfy the equivalence
\begin{equation}\label{eq:sP_sQ}
(\sP(\Xb_{k - 1}))_{\tilde\sQ(i), \tilde\sQ(j)} = 1
\quad
\iff
X_{k - 1, i, j} = 1
,\quad
\forall (i, j) \in \langle p_{k - 1} \rangle
.
\end{equation}
When $p_{k - 1} = 2$, the set $\cG_0(\Ab_k, \bTheta_k)$ contains the single element $\gb(\one \one^\top, \Ab_k, \bTheta_k)$ and the set $\cG_0(\tilde\Ab_k, \tilde\bTheta_k)$ contains the single element $\gb(\one \one^\top, \tilde\Ab_k, \tilde\bTheta_k)$, so we must have
$$
\sP(\Ib_{p_{k - 1}})
=
\Ib_{p_{k - 1}}
,\quad
\sP(\one \one^\top)
=
\one \one^\top
,
$$
suggesting that \eqref{eq:sP_sQ} holds with the permutation map $\tilde\sQ$ being the identity map on $\cX_{k - 1} = \{\Ib_{p_{k - 1}}, \one \one^\top\}$.

In either case of $p_{k - 1}$, letting $\Pb \in \RR^{p_{k - 1} \times p_{k - 1}}$ denote the permutation matrix associated with the permutation map $\tilde\sQ$ in \eqref{eq:sP_sQ}, i.e.
$$
\big(
1 ~ 2 ~ \cdots p_{k - 1}
\big) \Pb
=
\big(
\tilde\sQ(1) ~ \tilde\sQ(2) ~ \cdots \tilde\sQ(p_{k - 1})
\big)
,
$$
then we have
$$
\sP(\Xb_{k - 1})
=
\Pb \Xb_{k - 1} \Pb^\top
,\quad
\forall \Xb_{k - 1} \in \cX_{k - 1}
.
$$
\end{proof}

\begin{proof}[Proof of Lemma \ref{lemm:strict_param}]
We notice that
$$
\law(\Xb_k | \Xb_{k - 1}, \Ab_k, \bTheta_k)
=
\law(\Xb_k | \Xb_{k - 1}, \tilde\Ab_k, \tilde\bTheta_k)
,\quad
\forall \Xb_{k - 1} \in \cX_{k - 1}
$$
is equivalent to
$$
\law(X_{k, i, j} | \Xb_{k - 1}, \Ab_k, \bTheta_k)
=
\law(X_{k, i, j} | \Xb_{k - 1}, \tilde\Ab_k, \tilde\bTheta_k)
,\quad
\forall \Xb_{k - 1} \in \cX_{k - 1},~ (i, j) \in \langle p_k \rangle
,
$$
which is further equivalent to
\begin{equation}\label{eq:strict_param_equiv}
C_k + \ab_{k, i}^\top (\bGamma_k * \Xb_{k - 1}) \ab_{k, j}
=
\tilde{C}_k + \tilde\ab_{k, i}^\top (\tilde\bGamma_k * \Xb_{k - 1}) \tilde\ab_{k, j}
,\quad
\forall \Xb_{k - 1} \in \cX_{k - 1},~ (i, j) \in \langle p_k \rangle
.
\end{equation}
We let $\Xb_{k - 1} = \Ib_{p_{k - 1}}$ and take minimums
$$
\min_{(i, j) \in \langle p_k \rangle} \big(
C_k + \ab_{k, i}^\top (\bGamma_k * \Ib_{p_{k - 1}}) \ab_{k, j}
\big)
,\quad
\min_{(i, j) \in \langle p_k \rangle} \big(
\tilde{C}_k + \tilde\ab_{k, i}^\top (\tilde\bGamma_k * \Ib_{p_{k - 1}}) \tilde\ab_{k, j}
\big)
.
$$
Since $\Ab, \tilde\Ab \in \cA_1$, the minimums are respectively $C_k, \tilde{C}_k$ and needs to equal by \eqref{eq:strict_param_equiv}.
Hence we obtain $\tilde{C}_k = C_k$.

We define a partial order $\succeq$ on $\{0, 1\}^{p_{k - 1}}$ as
$$
\ab_{k, r}
\succeq
\ab_{k, s}
\quad\iff\quad
A_{k, r, j}
\ge
A_{k, s, j}
\quad
\forall j \in [p_{k - 1}]
.
$$
We notice for any $(r, s) \in \langle p_k \rangle$, since $\Ab, \tilde\Ab \in \cA_1$ and $\bTheta, \tilde\bTheta \in \cT_0$, using \eqref{eq:strict_param_equiv} we have the equivalence
\begin{equation}\label{eq:a_partord}\begin{aligned}
&\qquad\quad
\tilde\ab_{k, r}
\succeq
\tilde\ab_{k, s}
\\&\iff
\tilde{C}_k + \tilde\ab_{k, r}^\top (\tilde\bGamma_k * \Xb_{k - 1}) \tilde\ab_{k, t}
\ge
\tilde{C}_k + \tilde\ab_{k, s}^\top (\tilde\bGamma_k * \Xb_{k - 1}) \tilde\ab_{k, t}
\quad
\forall \Xb_{k - 1} \in \cX_{k - 1}, t \in [p_k] \cap \{r, s\}^c
\\&\iff
C_k + \ab_{k, r}^\top (\bGamma_k * \Xb_{k - 1}) \ab_{k, t}
\ge
C_k + \ab_{k, s}^\top (\bGamma_k * \Xb_{k - 1}) \ab_{k, t}
\quad
\forall \Xb_{k - 1} \in \cX_{k - 1}, t \in [p_k] \cap \{r, s\}^c
\\&\iff
\ab_{k, r}
\succeq
\ab_{k, s}
.
\end{aligned}\end{equation}
This suggests that $\tilde\ab_{k, r} = \zero$ if and only if $\ab_{k, r} = \zero$.
For the collections of non-zero rows in $\Ab_k$ and $\tilde\Ab_k$ given by
$$
\cR_k
:=
\big\{
\ab_{k, r}:~ r \in [p_k], \ab_{k, r} \ne \zero
\big\}
,\quad
\tilde\cR_k
:=
\big\{
\tilde\ab_{k, r}:~ r \in [p_k], \tilde\ab_{k, r} \ne \zero
\big\}
,
$$
each collection has the same $p_{k - 1}$ minimal elements $\eb_1, \ldots, \eb_{p_{k - 1}}$.
From \eqref{eq:a_partord} we know any non-zero $\ab_{k, r}$ is a minimal element of $\cR_k$ if and only if $\tilde\ab_{k, r}$ is a minimal element of $\tilde\cR_k$.
This implies that for all $r \in [p_k]$, we have $\tilde\ab_{k, r} = \eb_i$ for some $i \in [p_{k - 1}]$ if and only if $\ab_{k, r} = \eb_j$ for some $j \in [p_{k - 1}]$.
We let $\cE$ denote the collection of row index $r$'s satisfying $\one^\top \ab_{k, r} = \one^\top \tilde\ab_{k, r} = 1$, then for $r \in \cE$ and $i \in [p_{k - 1}]$ we have the equivalence
\begin{align*}
\tilde{A}_{k, r, i} = 1
&\iff
\forall s \in \cE \cap \{r\}^c,~
\exists j \in [p_{k - 1}] \cap \{i\}^c \text{ s.t. }
\tilde{C}_k + \tilde\ab_{k, r}^\top (\tilde\bGamma_k * (\Ib_{p_{k - 1}} + \Eb_{i, j})) \tilde\ab_{k, s}
>
\tilde{C}_k
\\&\iff
\forall s \in \cE \cap \{r\}^c,~
\exists j \in [p_{k - 1}] \cap \{i\}^c \text{ s.t. }
C_k + \ab_{k, r}^\top (\bGamma_k * (\Ib_{p_{k - 1}} + \Eb_{i, j})) \ab_{k, s}
>
C_k
\\&\iff
A_{k, r, i} = 1
.
\end{align*}
This further suggests that for all $r \in [p_k]$ and any $i \in [p_{k - 1}]$, $\tilde\ab_{k, r} = \eb_i$ if and only if $\ab_{k, r} = \eb_i$.
Since $\Ab, \tilde\Ab \in \cA_1$, we can find $2p_{k - 1}$ distinct rows $r_1, \ldots, r_{p_{k - 1}}, r_1', \ldots, r_{p_{k - 1}}'$ of $\Ab_k$ and $\tilde\Ab_k$ satisfying
$$
\ab_{k, r_j}
=
\tilde\ab_{k, r_j}
=
\ab_{k, r_j'}
=
\tilde\ab_{k, r_j'}
=
\eb_j
,\quad
\forall j \in [p_{k - 1}]
.
$$
Then for all $r \in [p_k]$ and $i \in [p_{k - 1}]$, we have the equivalence
\begin{align*}
\tilde{A}_{k, r, i} = 1
&\iff
\tilde\ab_{k, r} \succeq \tilde\ab_{k, r_i}
\\&\iff
\ab_{k, r} \succeq \ab_{k, r_i}
\\&\iff
A_{k, r, i} = 1
.
\end{align*}
This shows that $\tilde\Ab_k = \Ab_k$.
We notice that \eqref{eq:a_partord} has implicitly assumed $p_k \ge 3$.
In the corner case of $p_k = 2$, we automatically have $\tilde\Ab_k = (1 ~ 1)^\top = \Ab_k$ due to the definition of $\cA_1$.

By letting $\Xb_{k - 1} = \one \one^\top$ and using \eqref{eq:strict_param_equiv}, we have for all $i \in [p_k]$
$$
\tilde{C}_k + \tilde\Gamma_{k, i, i}
=
\tilde{C}_k + \tilde\ab_{k, r_i}^\top (\tilde\bGamma * (\one \one^\top)) \tilde\ab_{k, r_i'}
=
C_k + \ab_{k, r_i}^\top (\bGamma * (\one \one^\top)) \ab_{k, r_i'}
=
C_k + \Gamma_{k, i, i}
,
$$
which suggests $\diag(\tilde\bGamma_k) = \diag(\bGamma_k)$.
We also have for all $(i, j) \in \langle p_k \rangle$
$$
\tilde{C}_k + \tilde\Gamma_{k, i, j}
=
\tilde{C}_k + \tilde\ab_{k, r_i}^\top (\tilde\bGamma * (\one \one^\top)) \tilde\ab_{k, r_j}
=
C_k + \ab_{k, r_i}^\top (\bGamma * (\one \one^\top)) \ab_{k, r_j}
=
C_k + \Gamma_{k, i, j}
,
$$
which suggests $\offdiag(\tilde\bGamma_k) = \offdiag(\bGamma_k)$.
Together they imply $\tilde\bGamma_k = \bGamma_k$.
\end{proof}

\subsection{Holomorphic Function and Its Zero Set}\label{ssec:holo}

Theory of generic identifiability has been established for latent class models \citep{allman2009identifiability} and constrained latent class models \citep{gu2021bayesian}, where the notion of algebraic variety from algebraic geometry is used.
An \emph{algebraic variety} refers to the set of simultaneous zeros of a finite number of polynomial functions \citep{cox2013ideals}.
In both models, generic identifiability is defined in terms of continuous parameters and established in two steps:
We first show that the set of non-identifiable parameters lies in an affine algebraic variety, then find the existence of an identifiable parameter within the parameter space.
Using the fact that the Lebesgue measure of an algebraic variety is positive if and only if all the polynomial functions defining the algebraic variety are constant zero functions, we obtain that the parameters are generically identifiable.

Our model contains both discrete 
$\Ab$ and continuous $\bTheta$ parameters, so we have extended the definition of generic identifiability as in Definition \ref{defi:generic}.
We aim at showing for any fixed parameter $\Ab \in \cA_2$, the set of $\bTheta$ making parameter $(\Ab, \bTheta)$ non-identifiable has Lebesgue measure zero in $\cT_0$.
These sets of $\bTheta$, however, are hard to include in algebraic varieties.
To handle them properly, we use the notion of a holomorphic function and the identity theorem from complex analysis \citep{stein2010complex}, which studies the zeros of holomorphic functions.

A function of several complex variables is said to be \emph{holomorphic} if it is complex-differentiable within its domain, or equivalently, if it satisfies Cauchy-Riemann equations in each variable within its domain.
The following theorem, widely known as the \emph{identity theorem}, states that a holomorphic function of one complex variable is identically zero if its zero set contains a limit point.

\begin{theorem}[Identity Theorem]\label{theo:holo}
Let $f(z)$ be a holomorphic function in a connected region $\Omega \subset \CC$.
If the zero set of $f$ has a limit point in $\Omega$, then $f \equiv 0$ in $\Omega$.
\end{theorem}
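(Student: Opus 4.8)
The plan is to prove this classical identity theorem through the local power series representation of holomorphic functions, followed by a connectedness argument to propagate the vanishing from a neighborhood of the limit point to all of $\Omega$.

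First I would fix a limit point $z_0$ of the zero set of $f$ inside $\Omega$ and choose a sequence $z_n \to z_0$ with $z_n \ne z_0$ and $f(z_n) = 0$. By continuity of $f$ this forces $f(z_0) = 0$. Since $f$ is holomorphic, it admits a convergent power series expansion $f(z) = \sum_{k = 0}^\infty a_k (z - z_0)^k$ on some disk $D(z_0, r) \subset \Omega$. The heart of the local step is to show that every coefficient $a_k$ vanishes. I would argue by contradiction: if not all $a_k = 0$, let $m$ be the smallest index with $a_m \ne 0$, so that $f(z) = (z - z_0)^m g(z)$ where $g$ is holomorphic and $g(z_0) = a_m \ne 0$. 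By continuity $g$ stays nonzero on a punctured neighborhood of $z_0$, forcing $f(z) \ne 0$ for $z$ near $z_0$ with $z \ne z_0$, which contradicts the fact that $z_0$ is a limit point of zeros. Hence all $a_k = 0$ and $f \equiv 0$ throughout $D(z_0, r)$.

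To upgrade this local conclusion to all of $\Omega$, I would invoke connectedness. Define $U := \{z \in \Omega :~ f^{(k)}(z) = 0 ~\forall k \ge 0\}$, the set of points at which $f$ and all its derivatives vanish. The set $U$ is nonempty since it contains $z_0$; it is open because the vanishing of all derivatives at a point forces the local power series there to be identically zero, so $f$ vanishes on a whole neighborhood; and it is closed in $\Omega$ because each derivative $f^{(k)}$ is continuous, so the defining conditions are preserved under limits within $\Omega$. Since $\Omega$ is connected, any nonempty subset that is both open and closed must be all of $\Omega$, giving $U = \Omega$ and therefore $f \equiv 0$ on $\Omega$.

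The main subtlety, and essentially the only place requiring care, is the local minimal-order argument: one must verify that the factorization $f(z) = (z - z_0)^m g(z)$ genuinely produces a holomorphic $g$ with $g(z_0) \ne 0$, which is precisely where the hypothesis that $z_0$ is a \emph{limit point} of zeros (rather than merely an isolated zero) becomes essential. Everything else reduces to the standard interplay between continuity of the holomorphic derivatives and the elementary topological fact that a connected set admits no nontrivial subset that is simultaneously open and closed.
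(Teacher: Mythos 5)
Your proof is correct and complete: the local power-series argument (factoring out the minimal-order term to show every Taylor coefficient at the limit point vanishes) followed by the open-and-closed connectedness argument is the standard proof of the identity theorem. The paper does not prove this theorem itself but defers to the textbook of Stein and Shakarchi (2010), where essentially this same argument appears, so your proposal matches the intended reference proof.
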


For proof of Theorem \ref{theo:holo}, see \citet{stein2010complex}.
Using Theorem \ref{theo:holo}, we can obtain a theorem on the Lebesgue measure of the zero set, which can be applied to the theory of generic identifiability.
While Theorem \ref{theo:holo} only holds for holomorphic functions of one complex variable (e.g. $f(z_1, z_2) = z_1 z_2$ is a counterexample with two complex variables), our theorem can be established for holomorphic functions of several complex variables.

\begin{theorem}\label{theo:holo_mz}
Let $\Omega_1, \ldots, \Omega_n \subset \CC$ be connected regions and $f(z_1, \ldots, z_n)$ be a holomorphic function in the region $\Omega = \Omega_1 \times \cdots \times \Omega_n \subset \CC^n$.
If its zero set in $\RR^n \cap \Omega$ has Lebesgue measure
$$
\uplambda\big( \big\{
x \in \RR^n \cap \Omega:~ f(x) = 0
\big\} \big)
>
0
,
$$
then $f \equiv 0$ in $\Omega$.
\end{theorem}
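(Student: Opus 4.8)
The plan is to prove the statement by induction on the number of complex variables $n$, using Theorem \ref{theo:holo} as the base case and Fubini's theorem to transfer the positive-measure hypothesis down to lower-dimensional slices. At the outset I would note that the real zero set $\{x \in \RR^n \cap \Omega:~ f(x) = 0\}$ is closed in $\RR^n \cap \Omega$ since $f$ is continuous, hence Lebesgue measurable, so every measure appearing below is well defined.

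For the base case $n = 1$, suppose $E := \{x \in \RR \cap \Omega_1:~ f(x) = 0\}$ satisfies $\uplambda(E) > 0$. By inner regularity of Lebesgue measure I would pick a compact $K \subseteq E$ with $\uplambda(K) > 0$; then $K$ is uncountable, hence an infinite compact subset of $\RR$, so it possesses an accumulation point $z_0 \in K \subseteq \Omega_1$. Since $f$ vanishes on $K$ and $z_0$ is a limit of points of $K$, the zero set of $f$ has a limit point in $\Omega_1$, and Theorem \ref{theo:holo} yields $f \equiv 0$ on $\Omega_1$.

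For the inductive step I would assume the claim for $n - 1$ and split coordinates as $x = (x', x_n)$ with $x' = (x_1, \ldots, x_{n - 1})$, writing $\Omega = \Omega' \times \Omega_n$ for $\Omega' := \Omega_1 \times \cdots \times \Omega_{n - 1}$. Applying Fubini's theorem to the indicator of the zero set, the assumption $\uplambda(\{x \in \RR^n \cap \Omega:~ f(x) = 0\}) > 0$ forces a positive-measure set $S \subseteq \RR^{n - 1} \cap \Omega'$ of real slices $x'$ for which $\{x_n \in \RR \cap \Omega_n:~ f(x', x_n) = 0\}$ has positive one-dimensional measure. For each fixed real $x' \in S$, the map $z_n \mapsto f(x', z_n)$ is holomorphic on the connected region $\Omega_n$ and has a real zero set of positive measure, so the base case gives $f(x', z_n) = 0$ for all $z_n \in \Omega_n$. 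Now fix an arbitrary $z_n \in \Omega_n$ and set $g(z') := f(z', z_n)$, which is holomorphic on $\Omega'$; by the previous sentence $g$ vanishes on the positive-measure set $S$, so the induction hypothesis gives $g \equiv 0$ on $\Omega'$. Since $z_n \in \Omega_n$ was arbitrary, $f \equiv 0$ on $\Omega$, closing the induction.

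The conceptual obstacle the argument must overcome is that Theorem \ref{theo:holo} is genuinely one-dimensional, as the counterexample $f(z_1, z_2) = z_1 z_2$ mentioned just before the statement shows: coordinatewise reasoning alone cannot conclude $f \equiv 0$. The device that makes the induction succeed is the Fubini step, which produces not a single good slice but a whole positive-measure family $S$ of them — precisely what is needed to reactivate the induction hypothesis in the remaining $n - 1$ variables after collapsing the last one. The remaining points, namely measurability of the zero set and of its slice-measure function, and holomorphicity of the restrictions $z_n \mapsto f(x', z_n)$ and $z' \mapsto f(z', z_n)$, are routine consequences of the definitions and require no separate work.
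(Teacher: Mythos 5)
Your proof is correct and takes essentially the same route as the paper's: induction on $n$, with the one-variable identity theorem (Theorem \ref{theo:holo}) handling the base case, and a Fubini argument in the inductive step producing a positive-measure family of real slices on which the base case is applied fiberwise and the inductive hypothesis is then applied to $z' \mapsto f(z', z_n)$ for each fixed $z_n$. The only difference is cosmetic and confined to the base case: you produce a limit point of the zero set directly via inner regularity and compactness, whereas the paper argues contrapositively that a not-identically-zero holomorphic function has only isolated real zeros, hence a countable and thus null real zero set; both arguments are valid.
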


Importantly, Theorem \ref{theo:holo_mz} suggests a general technique for showing generic identifiability, which involves two steps:
We first show that the set of non-identifiable parameters are contained within the zero set of a holomorphic function.
We then prove the existence of an identifiable parameter within the parameter space.
Together they would imply that the set of non-identifiable parameters are contained within the zero set of a non-constant holomorphic function, which by Theorem \ref{theo:holo_mz} has Lebesgue measure zero.

Since holomorphic functions not only include polynomial functions, but also rational functions, exponential functions, trigonometric functions, as well as their compositions, this method is a generalization of the algebraic variety method described earlier.
Specifically, the simultaneous zero set of polynomials $f_1, \ldots, f_n$ in $\RR^n$ can be written as $\left\{x \in \RR^n:~ (f_1^2 + \ldots + f_n^2)(x) = 0\right\}$, where $f_1^2 + \ldots + f_n^2$ is a holomorphic function over $\CC^n$.

We note that many results exist on the zero set of holomorphic functions and Theorem \ref{theo:holo_mz} should not be anything new.
However, these results typically come from complex analysis and do not appear in the form of our theorem.
For completeness, we present an elementary proof as below.
We also note that other results from complex analysis, e.g. the Hadamard factorization theorem, also have applications in statistical theory \citep{gassiat2020identifiability, gassiat2022deconvolution}.

\begin{proof}[Proof of Theorem \ref{theo:holo_mz}]
We prove by induction.
For $n = 1$, let $f$ be a holomorphic function in $\Omega \subset \CC$ that is not constantly zero.
Using Theorem \ref{theo:holo} we know the zero set of $f$ in $\RR \cap \Omega$, denoted by $E := \big\{ x \in \RR \cap \Omega:~ f(x) = 0 \big\}$, does not have a limit point in $\RR \cap \Omega$.
Therefore, for any $x \in E$, there exists $r_x > 0$ such that $E \cap (x - r_x, x + r_x) = \{x\}$.
This implies that the intervals $\big\{ \big( x - \frac{r_x}{2}, x + \frac{r_x}{2} \big):~ x \in E \big\}$ are disjoint.
Since each interval contains a distinct rational number, we know $E$ is countable.
Let $\{e_n\}_{n = 1}^N$ be an enumeration of $E$ with $N \in \NN \cup \{\infty\}$.
For any $\epsilon > 0$, we notice that the open intervals $\big\{ \big( e_n - \frac{\epsilon}{2^{n + 1}}, e_n + \frac{\epsilon}{2^{n + 1}} \big) \big\}_{n = 1}^N$ cover $E$, which implies $\uplambda(E) \le \sum_{n = 1}^N \frac{\epsilon}{2^n} \le \epsilon$.
Therefore, the Lebesgue measure of $E$ is $\uplambda(E) = 0$, suggesting that Theorem \ref{theo:holo_mz} holds for $n = 1$.

We now suppose that Theorem \ref{theo:holo_mz} holds for $n - 1$ and consider a holomorphic function $f$ of $n$ complex variables in $\Omega \subset \CC^n$ satisfying
\begin{equation}\label{eq:holo_n}
\uplambda\big( \big\{
x \in \RR^n \cap \Omega:~ f(x) = 0
\big\} \big)
>
0
.
\end{equation}
For notational simplicity, we denote $x_{-n} := (x_1, \ldots, x_{n - 1}) \in \RR^{n - 1}$, $z_{-n} := (z_1, \ldots, z_{n - 1}) \in \CC^{n - 1}$, and $\Omega_{-n} := \Omega_1 \times \cdots \Omega_{n - 1} \subset \CC^{n - 1}$.
We define the set
\begin{equation}\label{eq:holo_B}
B
:=
\Big\{
x_{-n} \in \RR^{n - 1} \cap \Omega_{-n}:~
\uplambda\big(\big\{
x_n \in \RR \cap \Omega_n:~
f(x_{-n}, x_n) = 0
\big\}\big) > 0
\Big\}
,
\end{equation}
which contains all $x_{-n} \in \RR^{n - 1} \cap \Omega_{-n}$ such that the function $g(x_n) := f(x_{-n}, x_n)$ has zero set of positive Lebesgue measure in the region $\RR \cap \Omega_n$.
By Fubini's theorem,
\begin{align*}
\uplambda\big( \big\{
x \in \RR^n \cap \Omega:~ f(x) = 0
\big\} \big)
&=
\int_{\RR^n \cap \Omega} 1_{f(x) = 0} ~\ud x
\\&=
\int_{\RR^{n - 1} \cap \Omega_{-n}}
\int_{\RR \cap \Omega_n}
1_{f(x_{-n}, x_n) = 0}
~\ud x_n ~\ud x_{-n}
\\&=
\int_{\RR^{n - 1} \cap \Omega_{-n}}
\uplambda\big(\big\{
x_n \in \RR \cap \Omega_n:~
f(x_{-n}, x_n) = 0
\big\}\big)
~\ud x_{-n}
\\&=
\int_B
\uplambda\big(\big\{
x_n \in \RR \cap \Omega_n:~
f(x_{-n}, x_n) = 0
\big\}\big)
~\ud x_{-n}
.
\end{align*}
Therefore, \eqref{eq:holo_n} implies that $B$ must have positive Lebesgue measure in $\RR^{n - 1}$.

For each $x_{-n} \in B$, the function $g(z_n) := f(x_{-n}, z_n)$ is a holomorphic function of $z_n$ in region $\Omega_n$.
By \eqref{eq:holo_B}, we have
$$
\uplambda\big( \big\{
x_n \in \RR \cap \Omega_n:~
g(x_n) = 0
\big\} \big)
>
0
.
$$
Applying Theorem \ref{theo:holo_mz} to $g$ in region $\Omega_n$ implies that $g(z_n) \equiv 0$ in $\Omega_n$, that is,
$$
f(x_{-n}, z_n)
=
0
,\quad
\forall
x_{-n} \in B,
z_n \in \Omega_n
.
$$

Now for each $z_n \in \Omega_n$, the function $h(z_{-n}) := f(z_{-n}, z_n)$ is a holomorphic function of $z_{-n}$ in region $\Omega_{-n}$.
Since
$$
\uplambda\big( \big\{
x_{-n} \in \RR^{n - 1} \cap \Omega_{-n}:~
h(x_{-n}) = 0
\big\} \big)
\ge
\uplambda(B)
>
0
,
$$
applying Theorem \ref{theo:holo_mz} to $h$ in region $\Omega_{-n}$, we obtain $h(z_{-n}) \equiv 0$ on $\Omega_{-n}$, that is,
$$
f(z)
=
0
,\quad
\forall z \in \Omega
.
$$
Hence Theorem \ref{theo:holo_mz} holds for $n$ complex variables as well.
\end{proof}

\subsection{An Application of Geometric Measure Theory}\label{ssec:geom}

Following Section \ref{ssec:holo} on the Lebesgue measure of the zero sets of holomorphic functions, we further develop some theory along this direction using tools from \emph{geometric measure theory} \citep{federer2014geometric, nicolaescu2011coarea}, which will be useful in the later proof of generic identifiability.

To present the results, we need suitable measures on fibers of smooth maps between spaces of arbitrary dimensions.
We hereby recall the definition of the \emph{Hausdorff measure}.
Let $(X, \rho)$ be a separable metric space and $d \ge 0$ be a fixed dimension.
For any $\delta > 0$ and set $E \subset X$, we define
$$
\uH_\delta^d(U)
:=
\frac{\pi^{\frac{d}{2}}}{2^d \Gamma\big(\frac{d}{2} + 1\big)} \inf\left\{
\sum_{j \ge 1} (\diam B_j)^d:~
U \subset \bigcup_{j \ge 1} B_j,~
\diam B_j < \delta
\right\}
,
$$
where $B_j$ are arbitrary subsets of $X$.
Since $\uH_\delta^d(U)$ is non-increasing in $\delta$, we can define the $d$-dimensional Hausdorff measure of $U$ as
$$
\uH^d(U)
:=
\lim_{\delta \to 0} \uH_\delta^d(U)
.
$$
We note that any Borel set of $X$ is measurable with respect to $\uH^d$.
When $X$ is a $d$-dimensional Riemannian manifold with $\rho$ induced by its metric tensor $g$, $\uH^d$ coincides with the Lebesgue measure $\uplambda$ on $(X, g)$.
When $d = 0$, $\uH^0(U)$ equals the cardinality of the set $U$, or equivalently, $\uH^0$ coincides with the counting measure $\upmu$.

We also recall that a function on $\RR^n$ is an \emph{analytic function} if it has convergent Taylor series at every point in its domain.
An analytic function on $\RR^n$ is smooth and can be locally extended to a holomorphic function of $n$ complex variables.
Following Theorem \ref{theo:holo_mz}, we further have the following result for analytic functions on $\RR^n$.

\begin{theorem}\label{theo:geom}
Let $f, g:~ \RR^n \to \RR^m$ $(m \ge n)$ be analytic functions.
Let $B \subset \RR^n$ be a set satisfying
$$
\sup_{y \in \RR^m} \upmu(f^{-1}(y) \cap B)
<
\infty
.
$$
Then for any set $A \subset \RR^n$ with measure $\uplambda(A) = 0$, we have
$$
\uplambda\left(
f^{-1}(g(A)) \cap B
\right)
=
0
.
$$
\end{theorem}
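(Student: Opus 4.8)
The plan is to prove the statement via the \emph{area formula} from geometric measure theory \citep{federer2014geometric, nicolaescu2011coarea}, reducing everything to controlling the locus where the Jacobian of $f$ degenerates. Write $J_f(x) := \sqrt{\det\big(Df(x)^\top Df(x)\big)}$ for the $n$-dimensional Jacobian of $f$, which is well-defined because $f$ is analytic, hence $C^1$ and locally Lipschitz. First I would record two preliminaries. Since $g$ is locally Lipschitz and $\uplambda(A) = \uH^n(A) = 0$, the image $g(A)$ has $\uH^n(g(A)) = 0$ in $\RR^m$: cover $A$ by countably many pieces on which $g$ is Lipschitz with constant $L_j$ and use $\uH^n(g(A_j)) \le L_j^n\, \uH^n(A_j) = 0$. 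Let $M := \sup_y \upmu(f^{-1}(y) \cap B) < \infty$ be the uniform fiber bound.

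Next I would apply the area formula to $f$ with the nonnegative test function $h := \mathbf{1}_{f^{-1}(g(A)) \cap B}$. Since $f(x) \in g(A)$ is equivalent to $y \in g(A)$ for every $x \in f^{-1}(y)$, this yields
\begin{equation*}
\int_{f^{-1}(g(A)) \cap B} J_f(x)\, \ud x
=
\int_{g(A)} \upmu\big(f^{-1}(y) \cap B\big)\, \ud \uH^n(y)
\le
M\, \uH^n(g(A))
=
0 .
\end{equation*}
Let $Z := \{x \in \RR^n:~ J_f(x) = 0\}$. On $Z^c$ we have $J_f > 0$, so the vanishing of the integral forces $\uplambda\big(f^{-1}(g(A)) \cap B \cap Z^c\big) = 0$. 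It therefore remains to bound the portion lying in $Z$.

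The degenerate locus $Z$ is the main obstacle, since $J_f$ can vanish on a set of positive --- even full --- Lebesgue measure, where the area formula says nothing. I would split into two cases according to the generic rank $r^\star := \max_x \rank\, Df(x) \le n$. If $r^\star = n$, then $J_f \not\equiv 0$, and since $J_f^2 = \det(Df^\top Df)$ is a polynomial in the analytic entries of $Df$, hence analytic, Theorem \ref{theo:holo_mz} applied to its local holomorphic extension (equivalently, the classical fact that a nontrivial real-analytic function has Lebesgue-null zero set) gives $\uplambda(Z) = \uplambda(\{J_f^2 = 0\}) = 0$; thus $\uplambda(f^{-1}(g(A)) \cap B \cap Z) \le \uplambda(Z) = 0$ and the proof concludes. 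If instead $r^\star < n$, then $J_f \equiv 0$ and $Z = \RR^n$, and I would show that the fiber bound forces $\uplambda(B) = 0$ outright. On the open set $U := \{x:~ \rank\, Df(x) = r^\star\}$ the rank is locally constant, so the constant-rank theorem furnishes local analytic coordinates $(u, v) \in \RR^{r^\star} \times \RR^{n - r^\star}$ in which $f$ depends only on $u$. Each slice $\{u = \text{const}\}$ has dimension $n - r^\star \ge 1$ and is mapped by $f$ to a single point $y_0$, so it lies in $f^{-1}(y_0)$; hence its intersection with $B$ has at most $M$ points and is $\uplambda_v$-null, and Fubini over $u$ gives $\uplambda(B \cap U) = 0$. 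Finally $U^c$ is contained in the zero set of some $r^\star \times r^\star$ minor of $Df$ that is not identically zero (such a minor exists because $r^\star$ is the generic rank), so $\uplambda(U^c) = 0$ by the same analytic-zero-set fact. Therefore $\uplambda(B) = 0$, and a fortiori $\uplambda(f^{-1}(g(A)) \cap B) = 0$. Combining the two cases completes the argument.
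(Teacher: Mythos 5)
Your proof is correct, and it rests on the same two pillars as the paper's own argument: the area formula for (locally) Lipschitz maps and the fact, via Theorem \ref{theo:holo_mz}, that a real-analytic function which is not identically zero has Lebesgue-null zero set. The differences in execution are worth recording. (i) You obtain $\uH^n(g(A)) = 0$ directly from local Lipschitzness of $g$, via $\uH^n(g(A_j)) \le L_j^n\, \uH^n(A_j)$, whereas the paper derives the same fact by applying the area formula to $g$; the paper's accompanying upper bound $J_g < D$ is unnecessary in any case, since $\int_A J_g \, \ud\uH^n = 0$ already follows from $\uplambda(A) = 0$ (and, strictly, the paper only establishes that bound on $B \cap E^c$ rather than on $A$, so your route is also tidier). (ii) You avoid the paper's reduction to bounded $B$ and its $\epsilon$-truncation $\tilde B = B \cap E^c$: integrating $J_f$ over $f^{-1}(g(A)) \cap B$ and invoking that a nonnegative function with zero integral vanishes a.e. disposes of the nondegenerate part in one stroke. (iii) Most substantively, you treat the degenerate case $J_f \equiv 0$, which the paper's proof overlooks: its appeal to Theorem \ref{theo:holo_mz} to conclude $\uplambda(\{J_f = 0\}) = 0$ is valid only when $J_f^2 \not\equiv 0$, and when $f$ has everywhere-deficient rank the truncation step collapses, since no $\delta > 0$ with $J_f > \delta$ off a small set can exist unless $\uplambda(B)$ is already small. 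Your constant-rank argument --- on the open maximal-rank set $U$ the fibers contain positive-dimensional slices, so the finite-fiber hypothesis plus Fubini force $\uplambda(B \cap U) = 0$, while $U^c$ lies in the zero set of a nontrivial analytic minor and is null --- closes exactly this case, so your write-up is, if anything, more complete than the paper's. The one loose end, shared with the paper, is the Lebesgue measurability of $f^{-1}(g(A)) \cap B$ and of $g(A)$; both arguments implicitly pass to outer measure or a measurable hull, which is standard.
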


Previously, Theorem \ref{theo:holo_mz} suggests that we can establish the generic identifiability of our model with parameter space $\cR$ by showing that all parameters $r \in \cR$ with $\perm_\cR(r) \subsetneq \marg_\cR(r)$ lie in zero sets of holomorphic functions that are not constantly zero.
By appropriately choosing a set $E \subset \cR$ of measure zero, comparing $\perm_{\cR \cap E^c}(r)$ with $\marg_{\cR \cap E^c}(r)$ is often a lot easier than comparing $\perm_\cR(r)$ with $\marg_\cR(r)$, as seen in Section \ref{ssec:iden_generic}.
However, it is important to note that
$$
\perm_{\cR \cap E^c}(r)
=
\marg_{\cR \cap E^c}(r)
\quad
\forall r \in \cR \cap E^c
$$
does not necessarily imply
$$
\perm_\cR(r)
=
\marg_\cR(r)
\quad
\forall r \in \cR \cap E^c
.
$$
For instance, the set of parameters in $R \cap E^c$ that have marginal distributions coinciding with the parameters in $E$ could have positive measure despite $E$ having measure zero (e.g. examples constructed with the Cantor function).
Significantly, under the conditions on analyticity of functions and bounded cardinality of fibers ($\sup_{y \in \RR^m} \upmu(f^{-1}(y) \cap B) < \infty$), Theorem \ref{theo:geom} rules out these possibilities and suggests that
$$
\perm_{\cR \cap E^c}(r)
=
\marg_{\cR \cap E^c}(r)
\quad
\forall r \in \cR \cap E^c
$$
can imply
$$
\perm_\cR(r)
=
\marg_\cR(r)
\quad
\forall r \in \cR \cap E^c \cap \tilde{E}^c
$$
for some additional set $\tilde{E} \subset \cR$ of measure zero.
This provides a simple strategy of establishing generic identifiability by proving generic identifiability in a smaller parameter space that excludes an undesirable set of measure zero.

The proof of Theorem \ref{theo:geom} utilizes Theorem \ref{theo:holo_mz} and the \emph{area formula} for Lipschitz functions.
Formal introductions and proofs of the area formula can be found in standard geometric measure theory references, e.g. \citet{federer2014geometric, krantz2008geometric, simon2014introduction}.

\begin{proof}[Proof of Theorem \ref{theo:geom}]
We start by considering the case when the set $B$ is bounded.
We denote the Jacobians of functions $f$ and $g$ by $J_f$ and $J_g$.
Since $f, g$ are analytic functions on $\RR^n$, by the Jacobian definition we know $J_f^2$ and $J_g^2$ are also analytic functions on $\RR^n$, which can be locally extended to holomorphic functions of $n$ complex variables.
Applying Theorem \ref{theo:holo_mz}, we know
$$
\lambda(\{J_f = 0\})
=
\lambda(\{J_g = 0\})
=
0
.
$$
Along with the boundedness of $B$, this implies that for an arbitrary $\epsilon > 0$, there exists $\delta > 0$ and a set $E \subset B$ satisfying $\lambda(E) < \epsilon$ and
$$
J_f(x) > \delta
,\quad
J_g(x) > \delta
,\quad
\forall x \in B \cap E^c
.
$$
We denote $\tilde{B} := B \cap E^c$ for simplicity.
Again using the boundedness of $B$ and the smoothness of $f, g$, we can find $D > 0$ such that
$$
J_f(x) < D
,\quad
J_g(x) < D
,\quad
\forall x \in B \cap E^c
.
$$
We denote the constant
$$
C
:=
\sup_{y \in \RR^m} \upmu\big(
f^{-1}(y) \cap B
\big)
<
\infty
.
$$
Additionally, we notice that the functions $f$ and $g$ are Lipschitz on $B$.

Now by applying the area formula to the function $f$, we have
\begin{align*}
\int_{f^{-1}(g(A)) \cap \tilde{B}} J_f(x) ~\uH^n(\ud x)
&=
\int_{\RR^m} \uH^0\left(
(f^{-1}(g(A)) \cap \tilde{B}) \cap f^{-1}(y)
\right) ~\uH^n(\ud y)
\\&=
\int_{g(A)} \uH^0\left(
\tilde{B} \cap f^{-1}(y)
\right) \uH^n(\ud y)
\\&\le
C \uH^n(g(A))
,
\end{align*}
since $\uH^0(\tilde{B} \cap f^{-1}(y)) \le \upmu(f^{-1}(y) \cap B) \le C$.
By applying the area formula to the function $g$, we have
$$
\int_A J_g(x) ~\uH^n(\ud x)
=
\int_{\RR^m} \uH^0\left(
A \cap g^{-1}(y)
\right) ~\uH^n(\ud y)
\ge
\uH^n(g(A))
,
$$
since $\uH^0(A \cap g^{-1}(y)) \ge 1$ for all $y \in g(A)$.

Putting everything together, we obtain
\begin{align*}
\uplambda\left(
f^{-1}(g(A)) \cap B
\right)
&\le
\uplambda\left(
f^{-1}(g(A)) \cap \tilde{B}
\right)
+
\epsilon
\\&\le
\frac{1}{\delta} \int_{f^{-1}(g(A)) \cap \tilde{B}} J_f(x) ~\uH^n(\ud x)
+
\epsilon
\\&\le
\frac{C}{\delta} \uH^n(g(A)) + \epsilon
\\&\le
\frac{C}{\delta} \int_A J_g(x) ~\uH^n(\ud x) + \epsilon
\\&\le
\frac{C D}{\delta} \uplambda(A) + \epsilon
\\&=
\epsilon
.
\end{align*}
Since $\epsilon > 0$ can be chosen arbitrarily, this implies
$$
\uplambda\left(
f^{-1}(g(A)) \cap B
\right)
=
0
.
$$

We now consider the general case where $B \subset \RR^n$ is not necessarily bounded.
We can choose a monotonically increasing sequence of sets $\{\cO_i\}_{i = 1}^\infty \subset \RR^n$ such that each $\cO_i$ is bounded and $\lim_{i \to \infty} \cO_i = \RR^n$.
By defining $B_i := B \cap \cO_i$, previous arguments for bounded $B$ implies that
$$
\uplambda\left(
f^{-1}(g(A)) \cap B_i
\right)
=
0
,\quad
\forall i \in \NN
.
$$
We therefore obtain
$$
\uplambda\left(
f^{-1}(g(A)) \cap B
\right)
=
\lim_{i \to \infty} \uplambda\left(
f^{-1}(g(A)) \cap B_i
\right)
=
0
.
$$
\end{proof}

\subsection{Proof of Theorem \ref{theo:generic}}\label{ssec:iden_generic}

Similar to the proof of Theorem \ref{theo:strict}, transforming the problem of model identifiability to the uniqueness of three-way tensor decomposition is an essential step of our proof.
We recall our previously defined probability matrices $\bLambda_k$'s and probability vectors $\vb_k$'s from \eqref{eq:def_vk}, \eqref{eq:def_lambdakij}, \eqref{eq:def_lambdak}, and \eqref{eq:vk_lambdak_vk1}, which characterize the conditional distributions $\law(\Xb_k | \Xb_{k - 1}, \Ab_k, \bTheta_k)$'s and the marginal distributions $\law(\Xb_k | \Ab, \bTheta)$'s.

In studying generic identifiability, excluding sets of $\upmu \times \uplambda$ measure zero will not affect the result, so in the following we define a few such sets.
Let $\cT_1$ be a subset of $\cT_0$ defined as
\begin{equation}\label{eq:T1_def}
\cT_1
:=
\left\{
\bTheta \in \cT_0:~
\exists k \in [K],~ \zb \in \ZZ^{1 + \frac{p_{k - 1} (p_{k - 1} + 1)}{2}} \cap \{\zero\}^c \text{ s.t. }
z_C C_k + \sum_{1 \le i \le j \le p_{k - 1}} z_{(i, j)} \Gamma_{k, i, j} = 0
\right\}
,
\end{equation}
where $\zb$ is a non-zero $\big(1 + \frac{p_{k - 1} (p_{k - 1} + 1)}{2}\big)$-dimensional vector of integers with $z_C$ and $z_{(i, j)}$ $(1 \le i \le j \le p_{k - 1})$ denoting its entries.
Essentially, $\cT_1$ includes all the parameters $\bTheta$ satisfying, for some $k \in [K]$, a non-trivial linear combination with integer coefficients of $C_k$ and the upper-triangular entries in $\bGamma_k$ equals zero.
For each fixed layer $k$ and non-zero integer vector $\zb \in \ZZ^{1 + \frac{p_{k - 1} (p_{k - 1} + 1)}{2}}$, the collection of $C_k, \bGamma_k$ with
$$
z_C C_k + \sum_{1 \le i \le j \le p_{k - 1}} z_{(i, j)} \Gamma_{k, i, j}
=
0
$$
is a hyperplane in $\cT_0$ and has Lebesgue measure zero.
Since there are only countably many pairs of layers $k$ and integer vectors $\zb$, the Lebesgue measure of $\cT_1$ is $\uplambda(\cT_1) = 0$.

Recall the definition of $\cA_2$ from \eqref{eq:A2_def}.
For any $\Ab \in \cA_2$ and each layer $k \in [K]$, there exists a permutation matrix $\Pb_k \in \RR^{p_k \times p_k}$ and matrices $\Mb_k, \Mb_k' \in \cM_{p_{k - 1}}$, $\Bb_k \in \RR^{p_{k - 1} \times (p_k - 2p_{k - 1})}$ such that
\begin{equation}\label{eq:A_block_M}
\Pb_k \Ab_k
=
\big(
\Mb_k ~ \Mb_k' ~ \Bb_k
\big)^\top
.
\end{equation}
Without loss of generality, we can let $\Pb_k = \Ib_{p_k}$.
We also recall from Section \ref{ssec:iden_strict} that $\cX_k$ is an arbitrarily fixed bijective map from the collection of adjacency matrices $\cX_k$ to the integers $[|\cX_k|]$.
For each layer $k \in [K]$, we define two $|\cX_{k - 1}| \times |\cX_{k - 1}|$ matrices $\bPhi_k$ and $\bPhi_k'$ entrywise by
\begin{equation}\label{eq:H_def}\begin{aligned}
\Phi_{k, s, t}
&:=
\exp\left(
\sum_{(i, j) \in \langle p_{k - 1} \rangle} (\sX_{k - 1}^{-1}(s))_{i, j} \big(
C_k + \ab_{k, i}^\top (\bGamma_k * \sX_{k - 1}^{-1}(t)) \ab_{k, j}
\big) \right)
\quad
\forall s, t \in [|\cX_{k - 1}|]
,\\
\Phi_{k, s, t}'
&:=
\exp\left(
\sum_{(i, j) \in \langle p_{k - 1} \rangle} (\sX_{k - 1}^{-1}(s))_{i, j} \big(
C_k + \ab_{k, p_{k - 1} + i}^\top (\bGamma_k * \sX_{k - 1}^{-1}(t)) \ab_{k, p_{k - 1} + j}
\big) \right)
\quad
\forall s, t \in [|\cX_{k - 1}|]
,
\end{aligned}\end{equation}
where $\Phi_{k, s, t}, \Phi_{k, s, t}'$ denote the $(s, t)$th entry of $\bPhi_k, \bPhi_k'$ and $(\sX_{k - 1}^{-1}(s))_{i, j}$ denotes the $(i, j)$th entry of the matrix $\sX_{k - 1}^{-1}(s) \in \cX_{k - 1}$.
We notice that by definition, $\bPhi_k$ and $\bPhi_k'$ are closely connected to the matrices $\bLambda_{[1]}$ and $\bLambda_{[2]}$ defined in \eqref{eq:def_lambda1234}.
Specifically, $\bPhi_k$ is obtained by multiplying the $t$th column of $\bLambda_{[1]}$ by the positive value
\begin{equation}\label{eq:Phi_Lambda1_pos}
\prod_{(i, j) \in \langle p_{k - 1} \rangle} \Big( 1 + \exp\big(
C_k + \ab_{k, i}^\top (\bGamma_k * \sX_{k - 1}^{-1}(t)) \ab_{k, j}
\big) \Big)
\end{equation}
for $t \in [|\cX_{k - 1}|]$, and $\bPhi_k'$ is obtained by multiplying the $t$th column of $\bLambda_{[2]}$ by the positive value
\begin{equation}\label{eq:Phip_Lambda2_pos}
\prod_{(i, j) \in \langle p_{k - 1} \rangle} \Big( 1 + \exp\big(
C_k + \ab_{k, p_{k - 1} + i}^\top (\bGamma_k * \sX_{k - 1}^{-1}(t)) \ab_{k, p_{k - 1} + j}
\big) \Big)
\end{equation}
for $t \in [|\cX_{k - 1}|]$.
As seen earlier in the proof of Lemma \ref{lemm:strict_tensor}, to apply Kruskal's theorem we want to establish that the matrices $\bLambda_{[1]}$ and $\bLambda_{[2]}$ are invertible.
Therefore, for each fixed $\Ab \in \cA_2$, we define the functions
\begin{equation}\label{eq:func_psi_phi}
\psi_{\Ab, k}(\bTheta)
:=
\det(\bPhi_k)
,\quad
\phi_{\Ab, k}(\bTheta)
:=
\det(\bPhi_k')
,
\end{equation}
which are functions of all active entries $\nu_i$ ($1 \le i \le |\cX_0| - 1$) in $\bnu$, all $C_k$, and all active entries $\Gamma_{k, i, j}$ ($1 \le i \le j \le p_{k - 1}$) in $\bGamma_k$ for $k \in [K]$.
By their definitions in \eqref{eq:func_psi_phi}, we notice that each $\psi_{\Ab, k}$ and $\phi_{\Ab, k}$ for $k \in [K]$ are holomorphic functions in the complex domain $\CC^d$ of dimension
$$
d
=
(|\cX_0| - 1)
+
\sum_{k = 1}^K \big(
1 + \frac{p_{k - 1}(p_{k - 1} + 1)}{2}
\big)
,
$$
where $d$ is the dimension of all active parameters in $\bTheta$, or equivalently the dimension of $\cT_0$ when viewed as a Riemannian manifold.
By excluding the zero sets of these holomorphic functions from $\cT_0 \cap \cT_1^c$ for each $\Ab \in \cA_2$, we have the following lemma.

\begin{lemma}\label{lemm:generic_tensor}
Let parameter $(\Ab, \bTheta) \in \cA_2 \times (\cT_0 \cap \cT_1^c)$ and layer $k \in [K]$.
For any matrix $\tilde\bLambda_k \in \RR^{|\cX_k| \times |\cX_{k - 1}|}$ and vector $\tilde\vb_{k - 1} \in \RR^{|\cX_{k - 1}| \times 1}$ satisfying
$$
\bLambda_k \vb_{k - 1}
=
\tilde\bLambda_k \tilde\vb_{k - 1}
,\quad
\tilde\vb_{k - 1}^\top \one
=
1
,\quad
\tilde\bLambda_k^\top \one
=
\one
,
$$
if $\psi_{\Ab, k}(\bTheta) \ne 0$ and $\phi_{\Ab, k}(\bTheta) \ne 0$, then there exists a permutation matrix $\Pb \in \RR^{|\cX_{k - 1}| \times |\cX_{k - 1}|}$ such that
$$
\tilde\bLambda_k
=
\bLambda_k \Pb
,\quad
\tilde\vb_{k - 1}
=
\Pb^\top \vb_{k - 1}
.
$$
\end{lemma}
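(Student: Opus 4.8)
The plan is to follow the same tensor-decomposition strategy as in the proof of Lemma~\ref{lemm:strict_tensor}, reusing verbatim the block partition of $\langle p_k\rangle$ into $\cI_1,\cI_2,\cI_3,\cI_4$ induced by the decomposition $\Ab_k=(\Mb_k~\Mb_k'~\Bb_k)^\top$ in \eqref{eq:A_block_M}, the associated Khatri--Rao factors $\bLambda_{[1]},\bLambda_{[2]},\bLambda_{[3]},\bLambda_{[4]}$ from \eqref{eq:def_lambda1234}, and the reshaping identity \eqref{eq:strict_tensor_cp} that expresses $\bLambda_k\vb_{k-1}$ as the vectorization of $\llbracket \bLambda_{[1]},\bLambda_{[2]},(\bLambda_{[3]}\odot\bLambda_{[4]})\diag(\vb_{k-1})\rrbracket$. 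Since this reshaping depends only on the index partition and not on any special structure of $\Mb_k,\Mb_k'$, it carries over unchanged. The whole statement then reduces, via Kruskal's Theorem~\ref{theo:kruskal} with $R=|\cX_{k-1}|$, to verifying the Kruskal condition
$$
\rank_K(\bLambda_{[1]})+\rank_K(\bLambda_{[2]})+\rank_K\big((\bLambda_{[3]}\odot\bLambda_{[4]})\diag(\vb_{k-1})\big)\ge 2|\cX_{k-1}|+2,
$$
after which the diagonal--permutation normalization from the end of Lemma~\ref{lemm:strict_tensor} (forcing $\Qb_1=\Qb_2=\Qb_{34}=\Ib$ by column-stochasticity) produces the desired $\Pb$ with $\tilde\bLambda_k=\bLambda_k\Pb$ and $\tilde\vb_{k-1}=\Pb^\top\vb_{k-1}$.

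The first two Kruskal ranks are now handed to us directly by the hypotheses. By \eqref{eq:Phi_Lambda1_pos}, the matrix $\bPhi_k$ equals $\bLambda_{[1]}$ times a strictly positive diagonal matrix, so $\det\bLambda_{[1]}\ne0\iff\psi_{\Ab,k}(\bTheta)=\det\bPhi_k\ne0$. The assumption $\psi_{\Ab,k}(\bTheta)\ne0$ therefore gives $\rank_K(\bLambda_{[1]})=|\cX_{k-1}|$, and the identical argument with $\bPhi_k'$, $\phi_{\Ab,k}$ and \eqref{eq:Phip_Lambda2_pos} gives $\rank_K(\bLambda_{[2]})=|\cX_{k-1}|$. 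This is exactly where the argument departs from the strict case: rather than exhibiting a lexicographic triangular structure that relied on the identity submatrices of $\cA_1$, we simply invoke the nonvanishing of the two holomorphic determinants that $\cA_2$ is designed to make generically nonzero.

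The remaining and genuinely new step is to show $\rank_K\big((\bLambda_{[3]}\odot\bLambda_{[4]})\diag(\vb_{k-1})\big)\ge 2$. Since $\vb_{k-1}$ has strictly positive entries (by $\bTheta\in\cT_0$) and each column of $\bLambda_{[3]}\odot\bLambda_{[4]}$ is column-stochastic, this is equivalent to showing that no two columns of $\bLambda_{[3]}$, the cross-block factor, coincide. Suppose columns $s\ne t$ were equal; writing $\Db:=\sX_{k-1}^{-1}(s)-\sX_{k-1}^{-1}(t)$, a nonzero symmetric matrix with zero diagonal and entries in $\{-1,0,1\}$, equality of the cross-block conditional logits for every pair $(i,\,p_{k-1}+j)$ translates into the matrix identity $\Mb_k^\top(\bGamma_k*\Db)\Mb_k'=\zero$. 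The plan is to extract from this a nontrivial integer linear relation among the off-diagonal entries of $\bGamma_k$: choosing an off-diagonal index $(a,b)$ with $D_{a,b}\ne0$ and reading off the $(a,b)$ entry of the identity, the all-ones diagonals of $\Mb_k,\Mb_k'$ (condition (i) of Definition~\ref{defi:M}) guarantee that the coefficient of $\Gamma_{k,a,b}$ in the resulting $\{-1,0,1\}$-combination has absolute value at least one, so the combination is not identically trivial. This forces $\bTheta\in\cT_1$ as defined in \eqref{eq:T1_def}, contradicting $\bTheta\in\cT_0\cap\cT_1^c$.

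I expect this last step --- converting the degenerate identity $\Mb_k^\top(\bGamma_k*\Db)\Mb_k'=\zero$ into a genuine contradiction --- to be the main obstacle, precisely because $\Mb_k,\Mb_k'$ need not be invertible, so one cannot simply cancel them to deduce $\Db=\zero$. The resolution leans on the two ingredients that together characterize membership in $\cA_2$ and $\cT_1^c$: the unit diagonals of the $\cM_{p_{k-1}}$ blocks, which pin down a surviving coefficient, and the exclusion of integer linear relations among the continuous parameters, which turns that surviving coefficient into a contradiction. Once the three Kruskal ranks are in hand, the conclusion follows mechanically from Theorem~\ref{theo:kruskal} and the column-sum normalization, exactly as at the end of Lemma~\ref{lemm:strict_tensor}.
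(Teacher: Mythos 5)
Your proposal is correct and follows essentially the same route as the paper's proof: the same Kruskal reshaping from Lemma \ref{lemm:strict_tensor}, the same use of $\psi_{\Ab,k}(\bTheta)\ne 0$ and $\phi_{\Ab,k}(\bTheta)\ne 0$ (via the positive column rescalings \eqref{eq:Phi_Lambda1_pos}--\eqref{eq:Phip_Lambda2_pos}) to get full Kruskal rank of $\bLambda_{[1]},\bLambda_{[2]}$, and the same $\cT_1$-contradiction using the unit diagonals of $\Mb_k,\Mb_k'$ for the third factor. Your matrix identity $\Mb_k^\top(\bGamma_k*\Db)\Mb_k'=\zero$ is just an aggregated restatement of the paper's single-pair equation \eqref{eq:generic_tensor_rank2_contra} (the paper reads off exactly one entry $(i,p_{k-1}+j)$ with $D_{i,j}\ne 0$), so the two arguments coincide in substance.
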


We notice that Lemma \ref{lemm:generic_tensor} is almost identical to Lemma \ref{lemm:strict_tensor}, except that the parameter space is changed from $\cA_1 \times \cT_0$ to $\cA_2 \times (\cT_0 \cap \cT_1^c)$ and we have additionally imposed the condition on each of $\psi_{\Ab, k}(\bTheta)$ and $\phi_{\Ab, k}(\bTheta)$ being non-zero.
Its proof similarly uses the Kruskal Theorem \ref{theo:kruskal} and is provided in Section \ref{ssec:iden_generic_aux}.

From Lemma \ref{lemm:generic_tensor}, for some parameter $(\Ab, \bTheta) \in \cA_2 \times (\cT_0 \cap \cT_1^c)$, if $\bLambda_k$ and $\vb_{k - 1}$ can not be uniquely recovered up to shuffles from the product $\bLambda_k \vb_{k - 1}$, then $\bTheta$ must lie in the zero sets of the holomorphic functions $\psi_{\Ab, k}(\bTheta)$ or $\phi_{\Ab, k}(\bTheta)$ for some $k \in [K]$.
We want to further establish that these zero sets have Lebesgue measure zero using Theorem \ref{theo:holo_mz}.
The following lemma shows that these holomorphic functions $\psi_{\Ab, k}(\bTheta)$ and $\phi_{\Ab, k}(\bTheta)$ are non-constant.

\begin{lemma}\label{lemm:generic_holo}
Let parameter $\Ab \in \cA_2$ and layer $k \in [K]$.
Then there exist $\bTheta, \bTheta' \in \cT_0 \cap \cT_1^c$ such that $\psi_{\Ab, k}(\bTheta) \ne 0$ and $\phi_{\Ab, k}(\bTheta') \ne 0$.
\end{lemma}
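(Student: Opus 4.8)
The plan is to reduce the non-vanishing of $\psi_{\Ab,k}$ (and symmetrically $\phi_{\Ab,k}$) to the non-vanishing of a single structured determinant, and then to exhibit a parameter at which that determinant is nonzero via a \emph{unique optimal assignment} argument driven by the class-$\cM_{p_{k-1}}$ property.

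\textbf{Reduction.} Fix $\Ab\in\cA_2$ with $\Pb_k=\Ib_{p_k}$ in \eqref{eq:A_block_M}, write $\mb_i:=\ab_{k,i}$ ($i\in[p_{k-1}]$, the columns of $\Mb_k$), and abbreviate $\Xb^{(s)}:=\sX_{k-1}^{-1}(s)$. Since $\Xb^{(t)}$ has unit diagonal, $\ab_{k,i}^\top(\bGamma_k*\Xb^{(t)})\ab_{k,j}$ splits into a part involving only $\diag(\bGamma_k)$ (hence only $s$) and its off-diagonal part. Absorbing the $s$-only factors into a positive diagonal row scaling gives $\det\bPhi_k=\big(\prod_s\kappa_s\big)\det\Psib$ with $\kappa_s>0$ and
\[
\Psib_{s,t}=\exp\Big(\sum_{(a,b)\in\langle p_{k-1}\rangle}W^{(s)}_{a,b}\,\Gamma_{k,a,b}\,X^{(t)}_{a,b}\Big),
\]
where $W^{(s)}_{a,b}$ is the $(a,b)$ entry of $\offdiag\big(\sum_{(i,j)\in\langle p_{k-1}\rangle:\,X^{(s)}_{i,j}=1}(\mb_i\mb_j^\top+\mb_j\mb_i^\top)\big)$. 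Thus $\psi_{\Ab,k}(\bTheta)\ne0$ iff $\det\Psib\ne0$, and $\det\Psib$ depends only on the off-diagonals $\delta:=(\Gamma_{k,a,b})_{a<b}$; in particular $C_k$, $\bnu$, $\diag(\bGamma_k)$ and all other layers are free.

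\textbf{The engine.} Writing $\chi^{(t)}$ for the edge-indicator vector of $\Xb^{(t)}$ and $\wb^{(s)}:=(W^{(s)}_{a,b})$, we have $\wb^{(s)}=\mathcal L\chi^{(s)}$ for the linear map $\mathcal L_{(a,b),(i,j)}=(\mb_i)_a(\mb_j)_b+(\mb_j)_a(\mb_i)_b$, and condition (ii) of Definition~\ref{defi:M} is exactly the injectivity of $\mathcal L$ on $\{0,1\}^{\langle p_{k-1}\rangle}$, so the $\wb^{(s)}$ are pairwise distinct. Expanding,
\[
\det\Psib=\sum_{\pi}\sgn(\pi)\exp\big(\langle\delta,\Eb(\pi)\rangle\big),\qquad \Eb(\pi):=\sum_t \wb^{(\pi(t))}*\chi^{(t)},
\]
over bijections $\pi$ of the column onto the row index set. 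Since distinct $\Eb$ give linearly independent exponentials $\delta\mapsto e^{\langle\delta,\Eb\rangle}$, it suffices to find one value of $\Eb(\pi)$ attained with nonzero signed multiplicity. I will produce such a value by choosing $\delta$ so that $\langle\delta,\Eb(\pi)\rangle=\sum_t(\chi^{(\pi(t))})^\top\Qb\,\chi^{(t)}$, with $\Qb:=\mathcal L^\top\diag(\delta)$, is \emph{uniquely} maximized at $\pi=\mathrm{id}$; then $\Eb(\mathrm{id})$ is attained by no other $\pi$ and carries coefficient $\sgn(\mathrm{id})=1\ne0$.

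\textbf{Achieving the unique maximizer, and the obstacle.} Here I exploit that $\diag(\Mb_k)=\Ib$ forces $\mathcal L_{(a,b),(a,b)}=1+(\Mb_k)_{a,b}(\Mb_k)_{b,a}\ge1$, so $\mathcal L$ is nonnegative with strictly positive diagonal. Splitting $\Qb=\Qb_{\mathrm{sym}}+\Qb_{\mathrm{anti}}$, a rearrangement/AM--GM estimate gives, for every $\pi$ displacing $m\ge1$ indices,
\[
\langle\delta,\Eb(\mathrm{id})\rangle-\langle\delta,\Eb(\pi)\rangle\ \ge\ \tfrac12\,m\,\lambda_{\min}(\Qb_{\mathrm{sym}})-m\,\tbinom{p_{k-1}}{2}\|\Qb_{\mathrm{anti}}\|,
\]
using that distinct $0/1$ vectors $\chi^{(\pi(t))}\ne\chi^{(t)}$ differ in at least one coordinate and that $\Qb_{\mathrm{anti}}$ vanishes on the diagonal terms. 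This is strictly positive once $\delta>0$ is chosen with $\Qb_{\mathrm{sym}}$ positive definite and $\lambda_{\min}(\Qb_{\mathrm{sym}})$ dominating $\|\Qb_{\mathrm{anti}}\|$. \emph{This last step is the main obstacle}: unlike the strict case, $\Mb_k\ne\Ib$ makes $\mathcal L$ asymmetric, so the symmetric part of $\Qb$ is not read off directly and must be controlled through the freedom in $\delta$, the diagonal lower bound from $\diag(\Mb_k)=\Ib$, and the injectivity of $\mathcal L$; making $\Qb_{\mathrm{sym}}$ diagonally dominant with the needed spectral gap is the technical heart and may invoke the finer structural facts about $\cM_{p_{k-1}}$ (distinctness, rank, invertibility) developed in Section~\ref{sec:class_M}. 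Once a single $\delta>0$ with $\det\Psib\ne0$ is found, $\det\bPhi_k$ is holomorphic in $\bTheta$, so by Theorem~\ref{theo:holo_mz} its zero set has measure zero; since $\cT_1$ is also null, almost every positive off-diagonal $\bGamma_k$ (with any valid $\bnu,C_k,\diag(\bGamma_k)$) lands in $\cT_0\cap\cT_1^c$ with $\psi_{\Ab,k}\ne0$. The claim for $\phi_{\Ab,k}$ is identical, with $\Mb_k$ replaced by $\Mb_k'\in\cM_{p_{k-1}}$ and a corresponding $\bTheta'$.
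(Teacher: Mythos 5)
Your reduction and setup are faithful to the paper's proof: your factorization of $\det\bPhi_k$ into positive row/column scalings times a reduced determinant is exactly the paper's decomposition $\lambda_{[1],s,t}=\beta_s^{(1)}\delta_{s,t}\beta_t^{(3)}$ leading to $\det\bDelta$, your $\Eb(\pi)$ is the paper's $\Gb_\iota$, and your observation that it suffices to exhibit one exponent value attained by a single permutation is correct. Your endgame is also valid and even a little cleaner than the paper's on one point: given any $\delta>0$ with nonvanishing reduced determinant, Theorem \ref{theo:holo_mz} plus $\uplambda(\cT_1)=0$ yields a valid $\bTheta\in\cT_0\cap\cT_1^c$, whereas the paper builds $\cT_1$-avoidance directly into its construction. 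The genuine gap is the middle step, which you yourself flag as unresolved: you never establish that some $\delta>0$ makes the identity the unique maximizer of $\pi\mapsto\langle\delta,\Eb(\pi)\rangle$, and the spectral route you sketch for it appears unworkable, not merely technical.

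Concretely, two things break. First, $\Qb_{\mathrm{sym}}$ and $\Qb_{\mathrm{anti}}$ are both homogeneous of degree one in $\delta$, so no rescaling of $\delta$ can manufacture the gap $\lambda_{\min}(\Qb_{\mathrm{sym}})\gg\|\Qb_{\mathrm{anti}}\|$; one would need strongly non-uniform $\delta$, and no mechanism for producing it is given. Second, positive definiteness of $\Qb_{\mathrm{sym}}$ already forces $\cL$ to be nonsingular: if $\Qb v=0$ with $v\ne0$ (which happens whenever $\cL$ is singular, taking $v=\diag(\delta)^{-1}u$ for $u\in\ker\cL^\top$), then $v^\top\Qb_{\mathrm{sym}}v=v^\top\Qb v=0$. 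But condition (ii) of Definition \ref{defi:M} is only injectivity of $\cL$ on the binary hypercube, which is strictly weaker than linear invertibility (a rank-one map such as $(x_1,x_2)\mapsto(x_1+2x_2)(1,2)$ separates all of $\{0,1\}^2$), and Section \ref{sec:class_M} stresses that $\cM_d$ membership carries no invertibility guarantee. So your argument needs hypotheses that $\Ab\in\cA_2$ does not supply. The paper's proof sidesteps all of this by never asking the identity to be optimal: it orders the pairwise-distinct matrices $\Hb_s$ lexicographically, takes $\iota^*$ to be the unique permutation aligning them with the lexicographic order on $\cX_{k-1}$, applies the rearrangement inequality $\sum_i x_iy_{\pi(i)}\le\sum_i x_iy_i$ entrywise to get strict dominance $\Gb_\iota\prec_{lex}\Gb_{\iota^*}$ for every $\iota\ne\iota^*$, and then chooses the off-diagonal entries of $\bGamma_k$ as logarithms of primes growing fast enough (condition \eqref{eq:def_q}) that the $\iota^*$ term exceeds the sum of all $|\cX_{k-1}|!-1$ others in absolute value; the primes simultaneously give $\bTheta\notin\cT_1$. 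The only structural input is distinctness of the $\Hb_s$, which is exactly condition (ii). If you replace your spectral argument with this lexicographic-dominance argument, keeping your measure-theoretic endgame, your proof closes.
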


The proof of Lemma \ref{lemm:generic_holo} is deferred to Section \ref{ssec:iden_generic_aux}.
In the proof, we exploit the structure of $\bPhi_k, \bPhi_k'$, utilize the Leibniz formula of their determinants, and explicitly construct valid choices of $\bTheta, \bTheta' \notin \cT_1$ using properties of prime numbers.

Using Theorem \ref{theo:holo_mz} with Lemmas \ref{lemm:generic_tensor} and \ref{lemm:generic_holo}, we can establish that by excluding sets of measure zero from $\cA_2 \times \cT_0$, we can uniquely recover $\bLambda_k$ and $\vb_{k - 1}$ from their product $\bLambda_k \vb_{k - 1}$ up to shuffling of columns in $\bLambda_k$ and entries in $\vb_{k - 1}$.
It is desirable to mimic the proof of Theorem \ref{theo:strict} and use Lemmas \ref{lemm:strict_sX} and \ref{lemm:strict_param}, so in the following we provide similar lemmas that work with the parameter space $\cA_2 \times (\cT_0 \cap \cT_1^c)$.

\begin{lemma}\label{lemm:generic_sX}
Let parameters $(\Ab, \bTheta), (\tilde\Ab, \tilde\bTheta) \in \cA_2 \times (\cT_0 \cap \cT_1^c)$ and layer $k \in [K]$.
For any two bijective maps $\sX_{k - 1}, \tilde\sX_{k - 1}$ from $\cX_{k - 1}$ to $[|\cX_{k - 1}|]$, if for all $x \in [|\cX_{k - 1}|]$
$$
\law\big(
\Xb_k | \Xb_{k - 1} = \sX_{k - 1}^{-1}(x), \Ab_k, \bTheta_k
\big)
=
\law\big(
\Xb_k | \Xb_{k - 1} = \tilde\sX_{k - 1}^{-1}(x), \tilde\Ab_k, \tilde\bTheta_k
\big)
,
$$
then there exists a permutation matrix $\Pb \in \RR^{p_{k - 1} \times p_{k - 1}}$ such that for all $x \in [|\cX_{k - 1}|]$,
$$
\tilde\sX_{k - 1}^{-1}(x)
=
\Pb \sX_{k - 1}^{-1}(x) \Pb^\top
.
$$
\end{lemma}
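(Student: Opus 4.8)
The plan is to mirror the three-step structure used for Theorem~\ref{theo:strict}, replacing every appeal to the pure-node structure of $\cA_1$ by an appeal to the class-$\cM_{p_{k-1}}$ structure of $\cA_2$ together with the rational independence afforded by $\cT_1^c$. As in the proof of Lemma~\ref{lemm:strict_sX}, I would first set $\sP := \tilde\sX_{k - 1}^{-1} \circ \sX_{k - 1}$, a permutation of $\cX_{k-1}$, and restate the hypothesis as $\gb(\Xb_{k-1}, \Ab_k, \bTheta_k) = \gb(\sP(\Xb_{k-1}), \tilde\Ab_k, \tilde\bTheta_k)$ for every $\Xb_{k-1} \in \cX_{k-1}$, where $\gb$ is the logit vector of \eqref{eq:gb}. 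The goal is to produce a node permutation $\Pb \in \RR^{p_{k-1}\times p_{k-1}}$ with $\sP(\Xb_{k-1}) = \Pb\Xb_{k-1}\Pb^\top$. Throughout, the argument must use only quantities intrinsic to the family of conditional laws (as were $\cG, \cG_0, \cG_m, \cG_m^{(\ell)}$ in the strict proof), since the two parameter triples are allowed to differ.

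First I would establish the genericity substitute for pure nodes. Writing $\Ab \in \cA_2$ in block form \eqref{eq:A_block_M} with $\Mb_k \in \cM_{p_{k-1}}$, the increment $\gb(\Xb_{k-1}) - \gb(\Ib_{p_{k-1}})$ restricted to the $\Mb_k$-block pairs equals $\sum_{(i,j)\in \cS}\Gamma_{k,i,j}\,\offdiag(\mb_i\mb_j^\top + \mb_j\mb_i^\top)$, where $\cS$ is the off-diagonal edge set of $\Xb_{k-1}$ and $\mb_i$ denotes the $i$th column of $\Mb_k$. If two edge sets produced the same real increment, then $\bTheta \in \cT_1^c$ (rational-linear independence of $C_k$ and the $\Gamma_{k,i,j}$) would force their integer off-diagonal coefficient matrices to coincide, contradicting condition~(ii) of Definition~\ref{defi:M}. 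Hence $\Xb_{k-1}\mapsto \gb(\Xb_{k-1},\Ab_k,\bTheta_k)$ is injective, so $\sP$ is a genuine bijection; and since $\Gamma_{k,i,j}>0$ makes $\gb$ monotone along the adjacency order $\succeq$, the matrix $\Ib_{p_{k-1}}$ is the unique $\succeq$-minimal element on each side, giving $\sP(\Ib_{p_{k-1}}) = \Ib_{p_{k-1}}$.

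Next I would recover the atoms and the incidence structure. Call $\Xb_{k-1}$ an \emph{atom} if $\gb(\Ib_{p_{k-1}}) \prec \gb(\Xb_{k-1})$ with no intermediate value; the monotonicity-plus-injectivity argument shows every atom carries at most one off-diagonal edge, and a short argument using $\cT_1^c$ (ruling out a multi-edge increment lying componentwise below a single-edge increment) shows conversely that every single-edge matrix $\Ib_{p_{k-1}}+\Eb_{i,j}$, with $\Eb_{i,j}:=\eb_i\eb_j^\top+\eb_j\eb_i^\top$, is an atom. Thus $\sP$ restricts to a bijection $\tau$ of $\langle p_{k-1}\rangle$. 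The crux is to show $\tau$ is induced by a permutation of $[p_{k-1}]$: I would characterize intrinsically, from the atom increments alone, when two edges $(i,j),(i,j')$ share a node, by reading off the $\cM_{p_{k-1}}$-signatures $\offdiag(\mb_i\mb_j^\top+\mb_j\mb_i^\top)$ and using $\cT_1^c$ to separate the per-edge contributions at each block pair. This reconstructs the incidence graph and hence a permutation $\sQ$ on $[p_{k-1}]$ satisfying $\sP(\Ib_{p_{k-1}}+\Eb_{i,j}) = \Ib_{p_{k-1}}+\Eb_{\sQ(i),\sQ(j)}$, paralleling \eqref{eq:perm_Q} but with $\cM_{p_{k-1}}$ rows in place of the rows $\eb_i$. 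Taking $\Pb$ to be the permutation matrix of $\sQ$ and matching atom decompositions under $\sP$ via the injectivity of $\gb$ (the generic analogue of \eqref{eq:g_to_X}) then yields $\sP(\Xb_{k-1})=\Pb\Xb_{k-1}\Pb^\top$, after which the small-$p_{k-1}$ corner cases ($p_{k-1}=2$ and $p_k=2$) are dispatched by hand as in the strict proof.

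The main obstacle is precisely this incidence-recovery step. In the strict proof the rows $\eb_i$ let one read off a single entry $X_{k-1,i,j}$, so the edges ``seen'' by a pure node isolate individual nodes and their pairwise intersections \eqref{eq:perm_Q} return the node structure. With general $\cM_{p_{k-1}}$ rows each pair $(r,s)$ superposes contributions from several edges, so the edge-position bijection $\tau$ a priori need not respect incidence; indeed the Boolean-lattice order alone admits all $\binom{p_{k-1}}{2}!$ position permutations, vastly more than the $p_{k-1}!$ node permutations. Overcoming this requires using $\cT_1^c$ \emph{quantitatively}—not merely the induced order but the actual real increments—to deconvolve the superposed signatures and detect shared endpoints, and then verifying that the class-$\cM_{p_{k-1}}$ distinctness is strong enough to make this deconvolution unambiguous for every $\Ab\in\cA_2$.
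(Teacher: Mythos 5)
Your setup (intrinsic invariants of the family of conditional laws, $\sP(\Ib_{p_{k-1}})=\Ib_{p_{k-1}}$, injectivity of $\Xb_{k-1}\mapsto\gb(\Xb_{k-1},\Ab_k,\bTheta_k)$ from Definition \ref{defi:M}(ii) plus $\cT_1^c$) is sound, but your atom step contains a genuine error: the claim that every single-edge matrix $\Ib_{p_{k-1}}+\Eb_{i,j}$ covers the minimum (equivalently, is a minimal element of $\cG_0$) is \emph{false} on $\cA_2\times(\cT_0\cap\cT_1^c)$. Concretely, take $p_{k-1}=3$, $p_k=6$, and $\Ab_k=\big(\Mb ~ \Mb\big)^\top$ with $\Mb=\left(\begin{matrix}1&1&0\\0&1&0\\0&0&1\end{matrix}\right)\in\cM_3$; one checks this $\Ab_k$ lies in $\cA_{2,1}\cap\cA_{2,2}\cap\cA_{2,3}$, yet node $2$ has no pure row: every row of $\Ab_k$ containing node $2$ equals $\eb_1+\eb_2$. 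Consequently the increment of $\gb(\Ib+\Eb_{2,3})$ is $\Gamma_{k,2,3}$ on the pairs $\{2,5\}\times\{3,6\}$ and zero elsewhere, while the increment of $\gb(\Ib+\Eb_{1,3})$ is $\Gamma_{k,1,3}$ on the strictly larger support $\{1,2,4,5\}\times\{3,6\}$ (all coefficients equal $1$ since no row contains both endpoints of either edge). Hence whenever $\Gamma_{k,2,3}<\Gamma_{k,1,3}$ — an open, full-dimensional condition compatible with $\cT_1^c$ — we get $\gb(\Ib)\prec\gb(\Ib+\Eb_{2,3})\prec\gb(\Ib+\Eb_{1,3})$, so $\Ib+\Eb_{1,3}$ is not an atom, your edge bijection $\tau$ is not defined on all of $\langle p_{k-1}\rangle$, and the rest of the plan collapses. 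This is exactly the failure the paper flags when it notes that under $\cA_2$ the vectors $\gb(\Ib_{p_{k-1}}+\Eb_{i,j},\Ab_k,\bTheta_k)$ ``are no longer necessarily the minimal elements of the set $\cG_0$.''

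The paper's proof avoids the order relation entirely at this step and uses arithmetic structure instead. Every entry of every increment $\gb-\gb(\Ib_{p_{k-1}})$, $\gb\in\cG_0$, is a non-negative-integer combination of the $\Gamma_{k,i,j}$'s; using $\cA_{2,1}$, $\cA_{2,2}$ and $\cT_1^c$, the paper shows each $\Gamma_{k,i,j}$ itself occurs as such an entry, so $\cH_\gamma=\{\Gamma_{k,i,j}\}$ is recoverable as the unique subset of $\cH$ of size $\frac{p_{k-1}(p_{k-1}-1)}{2}$ that generates $\cH$ under non-negative-integer combinations, and then $\gb(\Ib_{p_{k-1}}+\Eb_{i,j})$ is recoverable as the \emph{unique} element of $\cG_0$ all of whose increment entries are integer multiples of a single element of $\cH_\gamma$ (a multi-edge matrix always produces, at a witness pair, an entry involving a different $\Gamma$ with positive coefficient, which $\cT_1^c$ forbids). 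The incidence step you correctly identified as the main obstacle is then solved combinatorially, not by quantitative deconvolution of real increments: the edge-stars $\cG_m^{(\ell)}$ form the unique $p_{k-1}$-element family from which every $\cG_{m,r}$ (rows' positivity patterns) is a union, using distinctness of columns ($\cA_{2,2}$) and the fact that each column has both a zero and a one. Finally, the extension from single-edge matrices to general $\Xb_{k-1}$ also avoids the partial order (your generic analogue of \eqref{eq:g_to_X} is unavailable, as the counterexample shows) and instead uses the edge-set invariant $\cH(\gb)$, i.e.\ which $\Gamma_{k,i,j}$'s appear with positive coefficient in the unique decomposition of the increment. So your instinct that $\cT_1^c$ must be exploited beyond the induced order is right, but what carries the proof is integrality and rational independence, not the real ordering of increments.
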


\begin{lemma}\label{lemm:generic_param}
Let parameters $(\Ab, \bTheta), (\tilde\Ab, \tilde\bTheta) \in \cA_2 \times (\cT_0 \cap \cT_1^c)$ and layer $k \in [K]$.
If for all $\Xb_{k - 1} \in \cX_{k - 1}$
$$
\law(\Xb_k | \Xb_{k - 1}, \Ab_k, \bTheta_k)
=
\law(\Xb_k | \Xb_{k - 1}, \tilde\Ab_k, \tilde\bTheta_k)
,
$$
then
$$
\tilde\Ab_k
=
\Ab_k
,\quad
\tilde{C}_k
=
C_k
,\quad
\tilde\bGamma_k
=
\bGamma_k
.
$$
\end{lemma}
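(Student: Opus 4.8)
The plan is to follow the architecture of the proof of Lemma~\ref{lemm:strict_param}, replacing each use of pure nodes (the identity blocks defining $\cA_1$) by arguments built on the block structure of $\cA_{2,1}$, the distinct-column condition $\cA_{2,2}$, the rank condition $\cA_{2,3}$, and the genericity $\bTheta,\tilde\bTheta\in\cT_0\cap\cT_1^c$. First I would reduce the matched conditional laws to the pointwise logit identity, exactly as in \eqref{eq:strict_param_equiv}:
$$ C_k + \ab_{k,i}^\top(\bGamma_k * \Xb_{k-1})\ab_{k,j} = \tilde C_k + \tilde\ab_{k,i}^\top(\tilde\bGamma_k * \Xb_{k-1})\tilde\ab_{k,j}, \qquad \forall\,\Xb_{k-1}\in\cX_{k-1},\ (i,j)\in\langle p_k\rangle. $$
Comparing $\Xb_{k-1}$ with $\Xb_{k-1}+\eb_s\eb_t^\top+\eb_t\eb_s^\top$ (flipping a single off-diagonal entry $(s,t)$ from $0$ to $1$) isolates, for each $(s,t)\in\langle p_{k-1}\rangle$, the relation
$$ (A_{k,i,s}A_{k,j,t}+A_{k,i,t}A_{k,j,s})\,\Gamma_{k,s,t} = (\tilde A_{k,i,s}\tilde A_{k,j,t}+\tilde A_{k,i,t}\tilde A_{k,j,s})\,\tilde\Gamma_{k,s,t}, $$
which I will call $(\star)$; evaluating the logit identity at $\Xb_{k-1}=\Ib_{p_{k-1}}$ isolates the baseline relation
$$ C_k + \sum_{s=1}^{p_{k-1}}A_{k,i,s}A_{k,j,s}\Gamma_{k,s,s} = \tilde C_k + \sum_{s=1}^{p_{k-1}}\tilde A_{k,i,s}\tilde A_{k,j,s}\tilde\Gamma_{k,s,s}, $$
which I will call $(\star\star)$.

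Next I would recover the discrete matrix $\Ab_k$. Since $\Ab_k,\tilde\Ab_k\in\cA_{2,1}$, the all-ones diagonals of the two $\cM_{p_{k-1}}$ blocks guarantee that every community has at least two member nodes on both sides, so for each $(s,t)$ one can choose a member of $s$ and a distinct member of $t$; hence the symmetric matrix with $(i,j)$ entry $A_{k,i,s}A_{k,j,t}+A_{k,i,t}A_{k,j,s}$ is nonzero, and likewise on the tilde side. Because $\Gamma_{k,s,t},\tilde\Gamma_{k,s,t}>0$, $(\star)$ then forces these two interaction matrices to share the same support and to have off-diagonal entries proportional with ratio $\tilde\Gamma_{k,s,t}/\Gamma_{k,s,t}$. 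Using that all entries lie in $\{0,1,2\}$ together with the $\cM_d$-distinctness property of Definition~\ref{defi:M}(ii) applied to $\Mb_k,\Mb_k'$ and their tilde analogues, I would rule out the degenerate ratios in $\{1/2,2\}$ and upgrade proportionality to equality, obtaining $\Gamma_{k,s,t}=\tilde\Gamma_{k,s,t}$ and an entrywise match of the interaction matrices for every $(s,t)$. Feeding these matched interaction matrices, the anchor members, and the distinct-column condition $\cA_{2,2}$ (which removes the ambiguity of permuting equal columns) into a reconstruction of the memberships $A_{k,i,s}$, I conclude $\tilde\Ab_k=\Ab_k$ and $\offdiag(\tilde\bGamma_k)=\offdiag(\bGamma_k)$.

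With $\tilde\Ab_k=\Ab_k$ in hand, the matrices $\Db_k$ and $\tilde\Db_k$ of \eqref{eq:A23_def} coincide, so $(\star\star)$ becomes a single linear system in which $\Db_k$ sends the unknown $(C_k,\bgamma_k)$ and $(\tilde C_k,\tilde\bgamma_k)$ to the same vector. The full-column-rank condition $\cA_{2,3}$ then yields $\tilde C_k=C_k$ and $\diag(\tilde\bGamma_k)=\diag(\bGamma_k)$, which with the off-diagonal match gives $\tilde\bGamma_k=\bGamma_k$ and completes the proof. The role of the genericity $\cT_1^c$ is to license reading the combinatorial pattern off the matched logits: by \eqref{eq:T1_def}, no nontrivial integer combination of $C_k$ and the entries of $\bGamma_k$ vanishes, so distinct membership or interaction patterns cannot cancel coincidentally into equal logit values.

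I expect the main obstacle to be the middle step, the recovery of $\Ab_k$. In Lemma~\ref{lemm:strict_param} the identity blocks of $\cA_1$ furnish pure nodes, which power the minimal-element and partial-order equivalences around \eqref{eq:a_partord}; none of this is available in $\cA_2$. Two points demand genuinely new work: eliminating the spurious proportionality ratios $\tilde\Gamma_{k,s,t}/\Gamma_{k,s,t}\in\{1/2,2\}$ created by the possibly $2$-valued interaction entries, and disentangling overlapping, non-pure memberships from the rank-at-most-two interaction matrices. This is precisely where the enlarged class $\cM_d$, through its distinctness axiom, together with $\cA_{2,2}$ and $\cA_{2,3}$, must replace the work done by the identity blocks in the strict setting.
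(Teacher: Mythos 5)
Your overall skeleton matches the paper's proof: reduce the matched laws to the pointwise logit identity \eqref{eq:generic_param_equiv}, extract the single-flip relations (your $(\star)$, which by additivity of the logit in the entries of $\Xb_{k-1}$ carry all the information), recover the off-diagonal entries of $\bGamma_k$, then recover $\Ab_k$, and finally use the full-column-rank condition $\cA_{2,3}$ on $\Db_k$ to pin down $C_k$ and $\diag(\bGamma_k)$ — this last step is exactly the paper's. Your route to $\Gamma_{k,s,t}=\tilde\Gamma_{k,s,t}$ (support matching plus proportionality with ratio in $\{1/2,1,2\}$, then excluding $1/2$ and $2$) is a workable alternative to the paper's argument, which instead deduces the \emph{set} equality $\{\Gamma_{k,s,t}\}=\{\tilde\Gamma_{k,s,t}\}$ from the $\cH$-set machinery of Lemma~\ref{lemm:generic_sX} and then invokes $\bTheta\notin\cT_1$. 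One correction there: Definition~\ref{defi:M}(ii) is not the instrument that kills the degenerate ratios. What actually works is distinctness of columns ($\cA_{2,2}$): if the ratio were $2$, then one interaction matrix would take values in $\{0,1\}$ and the other would be exactly twice it, hence valued in $\{0,2\}$; since each community has at least two members, an all-$\{0,2\}$ interaction pattern for the pair $(s,t)$ forces columns $s$ and $t$ of that connection matrix to coincide, contradicting $\cA_{2,2}$. Definition~\ref{defi:M}(ii) concerns distinctness of subset sums and is used elsewhere (for invertibility of $\bPhi_k$ in Lemma~\ref{lemm:generic_tensor}), not here.

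The genuine gap is the middle step, and you concede it yourself: "feeding these matched interaction matrices \ldots into a reconstruction of the memberships" is not an argument, and this reconstruction is the bulk of the paper's proof. The difficulty is that the matched data are the quadratic patterns $A_{k,i,s}A_{k,j,t}+A_{k,i,t}A_{k,j,s}$; a value of $1$ at $(i,j)$ certifies only that one of $i,j$ lies in $s$ and the other in $t$, not which, and without pure nodes there is no anchor row to break this symmetry. The paper resolves it in three sub-steps: (a) matching the all-zero rows via $\zb_{(i,j),\one\one^\top}$; (b) constructing for each nonzero row $i$ and column $s$ the indicator $y_{i,s}$ (existence, for every $t\ne s$, of a partner $j$ with positive $\Gamma_{k,s,t}$-coefficient), which is shown to equal $1$ iff $\ab_{k,i}\succeq\eb_s$ \emph{or} every other nonzero row lies in $s$ — an intrinsically identifiable event; and (c) a two-case analysis on columns: when $\one^\top\yb_s\le|\cJ|-2$ the indicator reads off the column directly, but in the degenerate case $\one^\top\yb_s=|\cJ|$ (column $s$ has at most one zero among nonzero rows) the indicator is uninformative for the exceptional row, and membership must instead be detected by whether the coefficient of $\Gamma_{k,s,t}$ in $\zb_{(i,j),\one\one^\top}$ \emph{varies} across partners $j$ — which requires $\cA_{2,2}$ to guarantee two partners differing in some column $t\ne s$. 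Your proposal contains none of this machinery, and nothing in the rest of your sketch substitutes for it; as written, the proof does not go through from the matched interaction matrices to $\tilde\Ab_k=\Ab_k$.
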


The proofs of Lemmas \ref{lemm:generic_sX} and \ref{lemm:generic_param} are closely connected to the proofs of Lemmas \ref{lemm:strict_sX} and \ref{lemm:strict_param}.
In generalization of space $\cA_1$ to $\cA_2$, each connection matrix $\Ab_k$ no longer necessarily contains rows of indicator vectors.
The newly imposed condition $\bTheta \notin \cT_1$ helps us get around this obstacle in the proofs.
We defer the proofs to Section \ref{ssec:iden_generic_aux}.

Putting Lemmas \ref{lemm:generic_tensor}, \ref{lemm:generic_holo}, \ref{lemm:generic_sX}, and \ref{lemm:generic_param} together, we now provide a proof of Theorem \ref{theo:generic}.
Similar to the proof of Theorem \ref{theo:strict}, for each layer $k$ we take a three-step procedure to recover the parameters $\Ab_k, \bTheta_k$ and the probability vector $\vb_{k - 1}$ from $\vb_k$.
An induction argument then allows us to sequentially recover all parameters $\Ab, \bTheta$ from the marginal distribution $\law(\Xb_K | \Ab, \bTheta)$ for $k = K, K - 1, \ldots, 1$.
Compared to the proof of Theorem \ref{theo:strict}, we need to take additional care for the zero measure sets excluded from the parameter space $\cA_2 \times \cT_0$ using Theorems \ref{theo:holo_mz} and \ref{theo:geom}.

\begin{proof}[Proof of Theorem \ref{theo:generic}]
By excluding some sets of measure zero from $\cA_2 \times \cT_0$, we define a smaller parameter space
$$
\cR
:=
\big\{
(\Ab, \bTheta) \in \cA_2 \times (\cT_0 \cap \cT_1^c):~
\forall k \in [K],~
\psi_{\Ab, k}(\bTheta) \ne 0,~
\phi_{\Ab, k}(\bTheta) \ne 0
\big\}
$$
and denote for each $\Ab \in \cA_2$
$$
\cR_\Ab
:=
\big\{
\bTheta \in \cT_0 \cap \cT_1^c:~
\forall k \in [K],~
\psi_{\Ab, k}(\bTheta) \ne 0,~
\phi_{\Ab, k}(\bTheta) \ne 0
\big\}
.
$$
For the difference between $\cA_2 \times \cT_0$ and $\cR$, we have
$$
(\upmu \times \uplambda)\big(
(\cA_2 \times \cT_0) \cap \cR^c
\big)
\le
\upmu(\cA_2) \uplambda(\cT_1)
+
\sum_{\Ab \in \cA_2} \sum_{k = 1}^K \big(
\uplambda(\psi_{\Ab, k}^{-1}(0)) + \uplambda(\phi_{\Ab, k}^{-1}(0))
\big)
.
$$
By definition of $\cT_1$ in \eqref{eq:T1_def}, we have $\uplambda(\cT_1) = 0$.
By Lemma \ref{lemm:generic_holo} and the definition of $\psi_{\Ab, k}, \phi_{\Ab, k}$ in \eqref{eq:func_psi_phi}, for any $\Ab \in \cA_2$ and $k \in [K]$, the functions $\psi_{\Ab, k}$ and $\phi_{\Ab, k}$ are holomorphic functions in domain $\CC^d$ that are not constantly zero, with $d$ being the dimension of $\cT_0$.
Therefore, we can apply Theorem \ref{theo:holo_mz} to each $\psi_{\Ab, k}$ and $\phi_{\Ab, k}$ to obtain that $\uplambda(\psi_{\Ab, k}^{-1}(0)) = \uplambda(\phi_{\Ab, k}^{-1}(0)) = 0$ for all $\Ab \in \cA_2$ and $k \in [K]$.
Since $\cA_2$ has finite cardinality, we therefore obtain
\begin{equation}\label{eq:measure_A2T0Rc}
(\upmu \times \uplambda)\big(
(\cA_2 \times \cT_0) \cap \cR^c
\big)
=
0
.
\end{equation}

We notice that the proof of Theorem \ref{theo:strict} does not explicitly use the properties of the parameter space $\cA_1 \times \cT_0$ other than Lemmas \ref{lemm:strict_tensor}, \ref{lemm:strict_sX}, \ref{lemm:strict_param}.
Therefore, by working with the parameter space $\cR$ and applying Lemmas \ref{lemm:generic_tensor}, \ref{lemm:generic_sX}, \ref{lemm:generic_param}, using the very same proof we can obtain
\begin{equation}\label{eq:generic_mid_proof_R}
\perm_\cR(\Ab, \bTheta)
=
\marg_\cR(\Ab, \bTheta)
,\quad
\forall (\Ab, \bTheta) \in \cR
.
\end{equation}

To extend \eqref{eq:generic_mid_proof_R} to the parameter space $\cA_2 \times \cT_0$, we note that the definitions of $\cT_1$ in \eqref{eq:T1_def} and $\psi_{\Ab, k}(\bTheta), \phi_{\Ab, k}(\bTheta)$ in \eqref{eq:func_psi_phi} are invariant to permutation of nodes in each layer of our Bayesian network.
This implies that
$$
\perm_{\cA_2 \times \cT_0}(\Ab, \bTheta)
\subset
(\cA_2 \times \cT_0) \cap \cR^c
,\quad
\forall (\Ab, \bTheta) \in (\cA_2 \times \cT_0) \cap \cR^c
,
$$
which suggests
\begin{equation}\label{eq:generic_perm_same}
\perm_{\cA_2 \times \cT_0}(\Ab, \bTheta)
=
\perm_\cR(\Ab, \bTheta)
,\quad
\forall (\Ab, \bTheta) \in \cR
.
\end{equation}
However, it is important to note that we do not necessarily have $\marg_{\cA_2 \times \cT_0}(\Ab, \bTheta) = \marg_\cR(\Ab, \bTheta)$ for all $(\Ab, \bTheta) \in \cR$.
We instead establish in the following the weaker result that the collection of $(\Ab, \bTheta) \in \cR$ for which $\marg_\cR(\Ab, \bTheta) \subsetneq \marg_{\cA_2 \times \cT_0}(\Ab, \bTheta)$ has measure zero, i.e.
\begin{equation}\label{eq:need_geom}
(\upmu \times \uplambda)\left\{
(\Ab, \bTheta) \in \cR:~
\marg_\cR(\Ab, \bTheta) \subsetneq \marg_{\cA_2 \times \cT_0}(\Ab, \bTheta)
\right\}
=
0
.
\end{equation}

We notice that for any $\Ab \in \cA_2$, the mapping of parameter $\bTheta \in \cT_0$ to its corresponding probability vector $\vb_K$ (i.e. the marginal probabilities of the observed adjacency matrix $\Xb_K$) is a mapping from a lower-dimensional Euclidean space to a higher-dimensional Euclidean space, which we denote by $L_\Ab$.
By our model formulations in \eqref{eq:model_entry} and \eqref{eq:model}, we can easily verify that $L_\Ab$ is an analytic function in $\bTheta$.
We notice that the nodes at layer $k$ of our Bayesian network can be permuted in at most $p_k!$ ways, which suggests the cardinalities
$$
\left|
\marg_\cR(\Ab, \bTheta)
\right|
=
\left|
\perm_\cR(\Ab, \bTheta)
\right|
\le
\prod_{k = 0}^{K - 1} (p_k!)
$$
for all parameters $(\Ab, \bTheta) \in \cR$, due to \eqref{eq:generic_mid_proof_R}.
This implies
$$
\sup_\ub \upmu\left(
L_\Ab^{-1}(\ub) \cap \cR_\Ab
\right)
\le
\prod_{k = 0}^{K - 1} (p_k!)
<
\infty
$$
for all vector $\ub$ having the same dimension as $\vb_K$.
Additionally, similar to our previous derivations we have
$$
\uplambda(\cT_0 \cap \cR_\Ab^c)
\le
\uplambda(\cT_1) + \sum_{k = 1}^K \left(
\uplambda(\psi_{\Ab, k}^{-1}(0)) + \uplambda(\phi_{\Ab, k}^{-1}(0))
\right)
=
0
.
$$
Therefore, we have verified all conditions of Theorem \ref{theo:geom}.
For any $\Ab, \Ab' \in \cA_2$, by applying Theorem \ref{theo:geom} to the analytic functions $L_\Ab, L_{\Ab'}$ and the sets $\cT_0 \cap \cR_{\Ab'}^c, \cR_\Ab$, we obtain
$$
\uplambda\left(
L_\Ab^{-1}(L_{\Ab'}(\cT_0 \cap \cR_{\Ab'}^c)) \cap \cR_\Ab
\right)
=
0
.
$$
Taking summation over all $\Ab, \Ab' \in \cA_2$, we obtain
\begin{align*}
&\qquad~
(\upmu \times \uplambda)\left\{
(\Ab, \bTheta) \in \cR:~
\marg_\cR(\Ab, \bTheta) \subsetneq \marg_{\cA_2 \times \cT_0}(\Ab, \bTheta)
\right\}
\\&\le
\sum_{\Ab \in \cA_2} \sum_{\Ab' \in \cA_2} \uplambda\left\{
\bTheta \in \cR_\Ab:~
\exists \bTheta' \in \cT_0 \cap \cR_{\Ab'}^c~
\text{s.t. } (\Ab', \bTheta') \in \marg_{\cA_2 \times \cT_0}(\Ab, \bTheta)
\right\}
\\&=
\sum_{\Ab \in \cA_2} \sum_{\Ab' \in \cA_2} \uplambda\left(
L_\Ab^{-1}(L_{\Ab'}(\cT_0 \cap \cR_{\Ab'}^c)) \cap \cR_\Ab
\right)
\\&=
0
,
\end{align*}
which proves our desired \eqref{eq:need_geom}.

By combining \eqref{eq:measure_A2T0Rc}, \eqref{eq:generic_mid_proof_R}, \eqref{eq:generic_perm_same}, and \eqref{eq:need_geom} together, we obtain
\begin{align*}
&\quad~
(\upmu \times \uplambda)\left\{
(\Ab, \bTheta) \in \cA_2 \times \cT_0:~
\perm_{\cA_2 \times \cT_0}(\Ab, \bTheta) \subsetneq \marg_{\cA_2 \times \cT_0}(\Ab, \bTheta)
\right\}
\\&\le
(\upmu \times \uplambda)\left\{
(\Ab, \bTheta) \in \cA_2 \times \cT_0:~
\perm_\cR(\Ab, \bTheta) \subsetneq \marg_\cR(\Ab, \bTheta)
\right\}
\\&\quad+
(\upmu \times \uplambda)\left\{
(\Ab, \bTheta) \in \cA_2 \times \cT_0:~
\perm_\cR(\Ab, \bTheta) \subsetneq \perm_{\cA_2 \times \cT_0}(\Ab, \bTheta)
\right\}
\\&\quad+
(\upmu \times \uplambda)\left\{
(\Ab, \bTheta) \in \cA_2 \times \cT_0:~
\marg_\cR(\Ab, \bTheta) \subsetneq \marg_{\cA_2 \times \cT_0}(\Ab, \bTheta)
\right\}
\\&=
(\upmu \times \uplambda)\left\{
(\Ab, \bTheta) \in \cA_2 \times \cT_0:~
\marg_\cR(\Ab, \bTheta) \subsetneq \marg_{\cA_2 \times \cT_0}(\Ab, \bTheta)
\right\}
\\&\le
(\upmu \times \uplambda)\left(
(\cA_2 \times \cT_0) \cap \cR^c
\right)
\\&\quad+
(\upmu \times \uplambda)\left\{
(\Ab, \bTheta) \in \cR:~
\marg_\cR(\Ab, \bTheta) \subsetneq \marg_{\cA_2 \times \cT_0}(\Ab, \bTheta)
\right\}
\\&=
0
,
\end{align*}
which proves that our model with parameter space $\cA_2 \times \cT_0$ is generically identifiable.
\end{proof}

\subsection{Proof of Auxiliary Lemmas for Theorem \ref{theo:generic}}\label{ssec:iden_generic_aux}

\begin{proof}[Proof of Lemma \ref{lemm:generic_tensor}]
For a general layer $k \in [K]$, since the only probability matrix and vector involved in Lemma \ref{lemm:generic_tensor} are $\bLambda_k, \vb_{k - 1}$, for simplicity we abbreviate $\bLambda_k$ as $\bLambda$ and $\vb_{k - 1}$ as $\vb$.
Similarly, we also abbreviate $\tilde\bLambda_k$ as $\tilde\bLambda$, $\tilde\vb_{k - 1}$ as $\tilde\vb$, and $\bLambda_{k, (i, j)}$ as $\bLambda_{(i, j)}$ for any index pair $(i, j) \in \langle p_k \rangle$.
From the proof of Lemma \ref{lemm:strict_tensor}, we recall the definition of index pairs $\cI_1, \cI_2, \cI_3, \cI_4 \subset \langle p_k \rangle$ in \eqref{eq:def_I1234} and the definition of matrices $\bLambda_{[1]}, \bLambda_{[2]}, \bLambda_{[3]}, \bLambda_{[4]}$ in \eqref{eq:def_lambda1234}.
We have established in \eqref{eq:strict_tensor_cp} that, by appropriately reordering the rows in $\bLambda$, we have
$$
\bLambda \vb
=
\vec\big( \big\llbracket
\bLambda_{[1]}, \bLambda_{[2]}, (\bLambda_{[3]} \odot \bLambda_{[4]}) \diag(\vb)
\big\rrbracket \big)
,
$$
which bridges the connection between the uniqueness of $\bLambda, \vb$ and the uniqueness of three-way tensor CP decomposition.
A sufficient condition to apply Kruskal's Theorem \ref{theo:kruskal} has been given earlier in \eqref{eq:strict_tensor_kruskal}, which holds if we have
\begin{equation}\label{eq:generic_tensor_kruskal_suff}
\rank_K(\bLambda_{[1]})
=
\rank_K(\bLambda_{[2]})
=
|\cX_{k - 1}|
,\quad
\rank_K\big(
(\bLambda_{[3]} \odot \bLambda_{[4]}) \diag(\vb)
\big)
\ge
2
.
\end{equation}
Let $(\Ab, \bTheta) \in \cA_2 \times (\cT_0 \cap \cT_1^c)$ satisfy $\psi_{\Ab, k}(\bTheta) \ne 0$ and $\phi_{\Ab, k}(\bTheta) \ne 0$.
It is therefore sufficient to prove \eqref{eq:generic_tensor_kruskal_suff} for such an $(\Ab, \bTheta)$.

Recalling the definition of $\psi_{\Ab, k}(\bTheta)$ and $\phi_{\Ab, k}(\bTheta)$ from \eqref{eq:func_psi_phi}, we know that the matrices $\bPhi_k, \bPhi_k'$ are both invertible, which further implies that the matrices $\bLambda_{[1]}, \bLambda_{[2]}$ are invertible due to \eqref{eq:Phi_Lambda1_pos} and \eqref{eq:Phip_Lambda2_pos}.
Therefore, we have
$$
\rank_K(\bLambda_{[1]})
=
\rank(\bLambda_{[1]})
=
|\cX_{k - 1}|
,\quad
\rank_K(\bLambda_{[2]})
=
\rank(\bLambda_{[2]})
=
|\cX_{k - 1}|
$$
as a direct result of $\psi_{\Ab, k}(\bTheta) \ne 0, \phi_{\Ab, k}(\bTheta) \ne 0$.

Now since $\bTheta \in \cT_0 \cap \cT_1^c$, we know each entry of $\vb$ is positive.
We also notice that each column of $\bLambda_{[3]} \odot \bLambda_{[4]}$ sums to one by the definitions of $\bLambda_{[3]}$ and $\bLambda_{[4]}$ in \eqref{eq:def_lambda1234}.
Therefore, showing
$$
\rank_K\big(
(\bLambda_{[3]} \odot \bLambda_{[4]}) \diag(\vb)
\big)
\ge
2
$$
is equivalent to showing that the columns of $\bLambda_{[3]} \odot \bLambda_{[4]}$ are distinct.
We recall the lexicographic order of adjacency matrices in $\cX_{k - 1}$ defined in \eqref{eq:lex}.
While we have previously let $\cX_{k - 1}$ be an arbitrary bijective map from $\cX_{k - 1}$ to $[|\cX_{k - 1}|]$, within this proof we specify it to be the unique monotonically decreasing map such that
$$
\cX_{k - 1}(\Xb_{k - 1})
<
\cX_{k - 1}(\tilde\Xb_{k - 1})
\quad\iff\quad
\Xb_{k - 1}
\succ_{lex}
\tilde\Xb_{k - 1}
,\quad
\forall \Xb_{k - 1}, \tilde\Xb_{k - 1} \in \cX_{k - 1}
.
$$
This specification of the map $\cX_{k - 1}$ works solely to simplify our proof statement, while using any other choice of bijective map will not affect our result.

Suppose that the $s$th and $t$th columns of $(\bLambda_{[3]} \odot \bLambda_{[4]}) \diag(\vb)$ are identical, with $s < t$.
Then there must exist $(i, j) \in \langle p_{k - 1} \rangle$ such that the entries $(\sX_{k - 1}^{-1}(s))_{i, j} = 1$ and $(\sX_{k - 1}^{-1}(t))_{i, j} = 0$.
Recall that $\langle \Xb, \Yb \rangle := \tr(\Xb \Yb^\top)$ is the inner product between matrices of the same size.
By the definition in \eqref{eq:def_lambdakij}, the $(1, s)$th and $(1, t)$th entries of the probability matrix $\bLambda_{(i, p_{k - 1} + j)}$ are
\begin{align*}
&\qquad~
\lambda_{(i, p_{k - 1} + j), 1, s}
\\&=
\frac{\exp\left(
C_k + \ab_{k, i}^\top (\bGamma_k * \sX_{k - 1}^{-1}(s)) \ab_{k, p_{k - 1} + j}
\right)}{1 + \exp\left(
C_k + \ab_{k, i}^\top (\bGamma_k * \sX_{k - 1}^{-1}(s)) \ab_{k, p_{k - 1} + j}
\right)}
\\&=
\frac{\exp\left(
C_k + \tr\big(
\bGamma_k * (\ab_{k, p_{k - 1} + j} \ab_{k, i}^\top)
\big)
+
\left\langle
\bGamma_k * \sX_{k - 1}^{-1}(s)
,~
\offdiag(\ab_{k, p_{k - 1} + j} \ab_{k, i}^\top)
\right\rangle
\right)}{1 + \exp\left(
C_k + \tr\big(
\bGamma_k * (\ab_{k, p_{k - 1} + j} \ab_{k, i}^\top)
\big)
+
\left\langle
\bGamma_k * \sX_{k - 1}^{-1}(s)
,
\offdiag(\ab_{k, p_{k - 1} + j} \ab_{k, i}^\top)
\right\rangle
\right)}
\end{align*}
and
\begin{align*}
&\qquad~
\lambda_{(i, p_{k - 1} + j), 1, t}
\\&=
\frac{\exp\left(
C_k + \ab_{k, i}^\top (\bGamma_k * \sX_{k - 1}^{-1}(t)) \ab_{k, p_{k - 1} + j}
\right)}{1 + \exp\left(
C_k + \ab_{k, i}^\top (\bGamma_k * \sX_{k - 1}^{-1}(t)) \ab_{k, p_{k - 1} + j}
\right)}
\\&=
\frac{\exp\left(
C_k + \tr\big(
\bGamma_k * (\ab_{k, p_{k - 1} + j} \ab_{k, i}^\top)
\big)
+
\left\langle
\bGamma_k * \sX_{k - 1}^{-1}(t)
,~
\offdiag(\ab_{k, p_{k - 1} + j} \ab_{k, i}^\top)
\right\rangle
\right)}{1 + \exp\left(
C_k + \tr\big(
\bGamma_k * (\ab_{k, p_{k - 1} + j} \ab_{k, i}^\top)
\big)
+
\left\langle
\bGamma_k * \sX_{k - 1}^{-1}(t)
,~
\offdiag(\ab_{k, p_{k - 1} + j} \ab_{k, i}^\top)
\right\rangle
\right)}
.
\end{align*}
Since the $s$th and $t$th columns of $\bLambda_{[3]} \odot \bLambda_{[4]}$ are identical, the $s$th and $t$th columns of $\bLambda_{(i, j)}$ must also be identical.
This implies $\lambda_{(i, p_{k - 1} + j), 1, s} = \lambda_{(i, p_{k - 1} + j), 1, t}$, which gives
\begin{equation}\label{eq:generic_tensor_rank2_contra}
\left\langle
\bGamma_k * \sX_{k - 1}^{-1}(s)
,~
\offdiag(\ab_{k, p_{k - 1} + j} \ab_{k, i}^\top)
\right\rangle
=
\left\langle
\bGamma_k * \sX_{k - 1}^{-1}(t)
,~
\offdiag(\ab_{k, p_{k - 1} + j} \ab_{k, i}^\top)
\right\rangle
.
\end{equation}
We notice that both the left hand side and the right hand side of \eqref{eq:generic_tensor_rank2_contra} are linear combinations with integer coefficients of the upper-triangular entries in $\bGamma_k$.
Since $\Ab \in \cA_2$ has $\Ab_k$ taking the form of \eqref{eq:A_block_M} with $\Pb_k$ assumed to be $\Ib_{p_k}$ for simplicity, we know the entries $A_{k, i, i} = A_{k, p_{k - 1} + j, j} = 1$.
Along with $(\sX_{k - 1}^{-1}(s))_{i, j} = 1$ and $(\sX_{k - 1}^{-1}(t))_{i, j} = 0$, this implies that the left hand side of \eqref{eq:generic_tensor_rank2_contra} involves a term of $\Gamma_{k, i, j}$ with positive integer coefficient whereas the right hand side of \eqref{eq:generic_tensor_rank2_contra} does not.
Therefore, \eqref{eq:generic_tensor_rank2_contra} suggests that a non-trivial linear combination with integer coefficients of the upper-triangular entries in $\bGamma_k$ is zero, which contradicts with $\bTheta \notin \cT_1$.
Therefore, the columns of $(\bLambda_{[3]} \odot \bLambda_{[4]}) \diag(\vb)$ must be distinct, implying
$$
\rank_K\big(
(\bLambda_{[3]} \odot \bLambda_{[4]}) \diag(\vb)
\big)
\ge
2
.
$$

We have now established the conditions in \eqref{eq:generic_tensor_kruskal_suff}, under which Kruskal's Theorem \ref{theo:kruskal} can be applied to the three-way tensor CP decomposition $\big\llbracket \bLambda_{[1]}, \bLambda_{[3]}, (\bLambda_{[3]} \odot \bLambda_{[4]}) \diag(\vb) \big\rrbracket$.
The remaining proof is the same as in the proof of Lemma \ref{lemm:strict_tensor} and hence omitted here.
\end{proof}

\begin{proof}[Proof of Lemma \ref{lemm:generic_holo}]
We fix an arbitrary $\Ab \in \cA_2$ and layer $k \in [K]$.
Since each $\Ab_k$ takes the form of \eqref{eq:A_block_M} with $\Pb_k = \Ib_{p_k}$, the proof for existence of $\bTheta'$ satisfying $\phi_{\Ab, k}(\bTheta') \ne 0$ is similar to the proof for existence of $\bTheta$ satisfying $\psi_{\Ab, k}(\bTheta) \ne 0$.
We therefore only prove for $\bTheta$ in the following.

We recall the definition of $\bLambda_{[1]}$ from \eqref{eq:def_lambda1234} and $\sX_{k - 1}$ as a bijective map from $\cX_{k - 1}$ to $[|\cX_{k - 1}|]$.
Within this proof, we denote the $p_{k - 1} \times p_{k - 1}$ upper-left principal submatrix of $\Xb_k$ by $\tilde\Xb_k \in \cX_{k - 1}$.
By appropriately reordering the rows of $\bLambda_{[1]}$, we can have the $(s, t)$th entry of $\bLambda_{[1]}$ given by
\begin{equation}\label{eq:generic_beta_delta}\begin{aligned}
\lambda_{[1], s, t}
&=
\PP\big(
\tilde\Xb_k = \sX_{k - 1}^{-1}(s) \big| \Xb_{k - 1} = \sX_{k - 1}^{-1}(t), \Ab_k, \bTheta_k
\big)
\\&=
\prod_{(i, j) \in \langle p_{k - 1} \rangle} \frac{\exp\big(
(\sX_{k - 1}^{-1}(s))_{i, j} (C_k + \ab_{k, i}^\top (\bGamma_k * \sX_{k - 1}^{-1}(t)) \ab_{k, j})
\big)}{1 + \exp\big(
C_k + \ab_{k, i}^\top (\bGamma_k * \sX_{k - 1}^{-1}(t)) \ab_{k, j}
\big)}
\\&=
\exp\left( \sum_{(i, j) \in \langle p_{k - 1} \rangle} (\sX_{k - 1}^{-1}(s))_{i, j} \left(
C_k + \left\langle \diag(\ab_{k, i} \ab_{k, j}^\top),~ \bGamma_k \right\rangle
\right) \right)
\\&\quad\cdot
\exp\left( \sum_{(i, j) \in \langle p_{k - 1} \rangle} \frac{(\sX_{k - 1}^{-1}(s))_{i, j}}{2} \left\langle
\offdiag(\ab_{k, i} \ab_{k, j}^\top + \ab_{k, j} \ab_{k, i}^\top)
,~
\bGamma_k * \sX_{k - 1}^{-1}(t)
\right\rangle \right)
\\&\quad\cdot
\prod_{(i, j) \in \langle p_{k - 1} \rangle} \frac{1}{1 + \exp\left(
C_k + \ab_{k, i}^\top (\bGamma_k * \sX_{k - 1}^{-1}(t)) \ab_{k, j}
\right)}
\\&:=
\beta_s^{(1)} \cdot \delta_{s, t} \cdot \beta_t^{(3)}
,
\end{aligned}\end{equation}
which decomposes into the product of three terms $\beta_s^{(1)}, \delta_{s, t}, \beta_t^{(3)}$, with the first term $\beta_s^{(1)}$ being invariant of column $t$ and the third term $\beta_t^{(3)}$ being invariant of row $s$.
Accordingly, we define the $|\cX_{k - 1}| \times |\cX_{k - 1}|$ diagonal matrices 
$$
\Bb_1
:=
\diag(\beta_1^{(1)}, \ldots, \beta_{|\cX_{k - 1}|}^{(1)})
,\quad
\Bb_3
:=
\diag(\beta_1^{(3)}, \ldots, \beta_{|\cX_{k - 1}|}^{(3)})
,
$$
and let $\bDelta$ be the $|\cX_{k - 1}| \times |\cX_{k - 1}|$ matrix with its $(s, t)$th entry being $\delta_{s, t}$, then
$$
\bLambda_{[1]}
=
\Bb_1 \bDelta \Bb_3
.
$$
Since all values of $\beta_s^{(1)}$ and $\beta_t^{(3)}$ are positive, both matrices $\Bb_1, \Bb_3$ are invertible, which implies that $\psi_\Ab(\bTheta) = \det(\bLambda_{[1]}) \ne 0$ is equivalent to $\det\bDelta \ne 0$.

Let $\iota$ denote any permutation on the integers $[|\cX_{k - 1}|]$ and $\sI$ denote the collection of all $|\cX_{k - 1}|!$ permutations on $[|\cX_{k - 1}|]$.
The Leibniz formula for determinants gives the expansion
\begin{equation}\label{eq:Delta_expan}
\det \bDelta
=
\sum_{\iota \in \sI} \sgn(\iota) \prod_{s = 1}^{|\cX_{k - 1}|} \delta_{s, \iota(s)}
=
\sum_{\iota \in \sI} \sgn(\iota) \exp\left( \frac12 \left\langle
\bGamma_k
,~
\sum_{s = 1}^{|\cX_{k - 1}|} \big(
\Hb_s *  (\sX_{k - 1}^{-1} \circ \iota)(s)
\big) \right\rangle \right)
,
\end{equation}
where $\sgn(\iota)$ is the parity (oddness or evenness) of permutation $\iota$ and $\Hb_s$ is the $p_{k - 1} \times p_{k - 1}$ matrix
\begin{equation}\label{eq:Hs}
\Hb_s
:=
\offdiag\left(
\sum_{(i, j) \in \langle p_{k - 1} \rangle} (\sX_{k - 1}^{-1}(s))_{i, j} (\ab_{k, i} \ab_{k, j}^\top + \ab_{k, j} \ab_{k, i}^\top)
\right)
.
\end{equation}
Recalling the definition of matrices in class $\cM_d$ from Definition \ref{defi:M} and the blockwise form of connection matrix $\Ab_k$ from \eqref{eq:A_block_M}, we know
$$
\Hb_s
\ne
\Hb_r
,\quad
\forall s, r \in [|\cX_{k - 1}|],~
s \ne r
.
$$

We also recall the definition of lexicographical order between adjacency matrices from \eqref{eq:lex}.
Note that this definition extends to a complete order on any collection of symmetric matrices with zero diagonals, e.g. the collection
$$
\big\{
\Hb_s:~
s \in [|\cX_{k - 1}|]
\big\}
.
$$
For any two distinct adjacency matrices $\Xb_{k - 1}, \tilde\Xb_{k - 1} \in \cX_{k - 1}$, we have either $\Xb_{k - 1} \prec_{lex} \tilde\Xb_{k - 1}$ or $\Xb_{k - 1} \succ_{lex} \tilde\Xb_{k - 1}$.
For any two distinct integers $s \ne r$, similarly we have either $\Hb_s \prec_{lex} \Hb_r$ or $\Hb_s \succ_{lex} \Hb_r$.
Therefore, among all the permutations in $\sI$, there exists a unique permutation $\iota^*$ such that the following equivalence holds:
$$
\Hb_s
\prec_{lex}
\Hb_t
\quad\iff\quad
(\sX_{k - 1}^{-1} \circ \iota^*)(s)
\prec_{lex}
(\sX_{k - 1}^{-1} \circ \iota^*)(t)
,\quad
\forall s, t \in [|\cX_{k - 1}|]
.
$$
For any permutation $\iota \in \sI$, we define the $|\cX_{k - 1}| \times |\cX_{k - 1}|$ matrix
\begin{equation}\label{eq:Giota}
\Gb_\iota
:=
\sum_{s = 1}^{|\cX_{k - 1}|} \big(
\Hb_s * (\sX_{k - 1}^{-1} \circ \iota)(s)
\big)
.
\end{equation}
We notice for any non-decreasing sequences of non-negative values $\{x_i\}_{i = 1}^n$ and $\{y_i\}_{i = 1}^n$, we have
$$
\sum_{i = 1}^n x_i y_{\pi(i)}
\le
\sum_{i = 1}^n x_i y_{i}
$$
for all permutations $\pi$ on the integers $[n]$.
When the sequences $\{x_i\}_{i = 1}^n, \{y_i\}_{i = 1}^n$ are strictly increasing, the inequality becomes equality if and only if $\pi$ is the identity permutation on $[n]$.
Since all entries of $\Hb_s$'s and $(\sX_{k - 1}^{-1} \circ \iota)(s)$'s are non-negative, using this elementary inequality above, we can obtain
\begin{equation}\label{eq:Giota_lex}
\Gb_\iota
\prec_{lex}
\Gb_{\iota^*}
,\quad
\forall \iota \in \sI,~
\iota \ne \iota^*
.
\end{equation}

We now explicitly construct the parameter $\bTheta \in \cT_0 \cap \cT_1^c$ such that its corresponding $\det\bDelta \ne 0$.
From the definition of $\delta_{s, t}$ in \eqref{eq:generic_beta_delta}, we know that the parameter $C_k$, the diagonal entries of $\bGamma_k$, the parameter $\bnu$, and the parameters of other layers $\bTheta_\ell$ for $\ell \ne k$ do not affect the invertibility of $\bDelta$.
Therefore, we can choose them to be any valid value such that, together with the off-diagonal entries of $\bGamma_k$ specified in the following, we can have $\bTheta \in \cT_0 \cap \cT_1^c$.

For the off-diagonal entries of $\bGamma_k$, we choose them sequentially by pairs $(i, j) \in \langle p_{k - 1} \rangle$ in descending order of $p_{k - 1} i + j$, that is, in the order
$$
(p_{k - 1} - 1, p_{k - 1})
,~
(p_{k - 1} - 2, p_{k - 1})
,~
(p_{k - 1} - 2, p_{k - 1} - 1)
,~
\ldots
,~
(2, 3)
,~
(1, p_{k - 1})
,~
\ldots
,~
(1, 2)
.
$$
We first construct a sequence of prime numbers
$$
q_1
,~
\ldots
,~
q_{\frac{p_{k - 1} (p_{k - 1} - 1)}{2}}
$$
satisfying that $q_1 \ge 3$ and
\begin{equation}\label{eq:def_q}
q_{t + 1}
>
|\cX_{k - 1}|! \left(
\prod_{s = 1}^t q_s
\right)^{|\cX_{k - 1}| p_{k - 1} (p_{k - 1} - 1)}
\quad
\forall t \in \left[ \frac{p_{k - 1} (p_{k - 1} - 1)}{2} - 1 \right]
,
\end{equation}
which always exists since there are infinitely many prime numbers.
The off-diagonal entries of $\bGamma_k$ are then assigned in descending order of $p_{k - 1} i + j$ to the sequence
\begin{equation}\label{eq:def_theta_q}
\big\{
\log q_t
\big\}_{t = 1}^{\frac{p_{k - 1} (p_{k - 1} - 1)}{2}}
.
\end{equation}

We claim that the our constructed parameter $\bTheta$ satisfies $\det\bDelta \ne 0$.
To see this, we notice that the cardinality of the collection of permutations $\sI$ is $|\cX_{k - 1}|!$ and each entry of the matrix $\Gb_\iota$ satisfies
$$
0
\le
G_{\iota, s, t}
\le
|\cX_{k - 1}| p_{k - 1} (p_{k - 1} - 1)
\quad
\forall s, t \in [|\cX_{k - 1}|]
$$
by the definitions of $\Hb_s$ in \eqref{eq:Hs} and $\Gb_\iota$ in \eqref{eq:Giota}.
Using \eqref{eq:Giota_lex}, for each permutation $\iota \ne \iota^*$ there exists an index pair $(i, j) \in \langle p_{k - 1} \rangle$ such that entrywise
$$
G_{\iota, s, t}
=
G_{\iota^*, s, t}
\quad
\forall (s, t) \in \{(s, t) < (i, j)\}
,\quad
G_{\iota, i, j}
<
G_{\iota^*, i, j}
,
$$
where for simplicity we denote the set of index pairs
$$
\{(s, t) < (i, j)\}
:=
\big\{
(s, t) \in \langle p_{k - 1} \rangle:~
s < i \text{ or } (s = i, t < j)
\big\}
.
$$
Similarly, we also denote the set of index pairs
$$
\{(s, t) > (i, j)\}
:=
\langle p_{k - 1} \rangle \cap \{(s, t) < (i, j)\}^c \cap \{(i, j)\}^c
,
$$
then our choice of $\bGamma_k$ satisfies
\begin{align*}
&\quad~
\exp\left(
\frac12 \langle \bGamma_k, \Gb_{\iota^*} \rangle - \frac12 \langle \bGamma_k, \Gb_\iota \rangle
\right)
\\&=
\exp\big(
\Gamma_{k, i, j} (G_{\iota^*, i, j} - G_{\iota, i, j})
\big)
\cdot
\exp\left(
\sum_{(s, t) \in \{(s, t) > (i, j)\}} \Gamma_{k, s, t} (G_{\iota^*, s, t} - G_{\iota, s, t})
\right)
\\&\ge
\exp(\Gamma_{k, i, j})
\cdot
\exp\left(
- \sum_{\{(s, t) > (i, j)\}} \Gamma_{k, s, t} |\cX_{k - 1}| p_{k - 1} (p_{k - 1} - 1)
\right)
\\&>
|\cX_{k - 1}|!
,
\end{align*}
where we have used the definitions in \eqref{eq:def_q} and \eqref{eq:def_theta_q}.
This implies that among the terms in the expansion \eqref{eq:Delta_expan} of $\det\bDelta$, we have
$$
\exp\left(
\frac12 \langle \bGamma_k, \Gb_{\iota^*} \rangle
\right)
>
\sum_{\iota \in \cJ \cap \{\iota^*\}^c} \exp\left(
\frac12 \langle \bGamma_k, \Gb_\iota \rangle
\right)
,
$$
which implies that $\det\bDelta \ne 0$.

As a last step of our proof, we show that $\bTheta \notin \cT_1$.
Since we can choose the parameter $C_k$ and the diagonal entries of $\bGamma_k$ arbitrarily, it suffices to prove that any non-trivial linear combination with integer coefficients of the upper-triangular, off-diagonal entries of $\bGamma_k$ is non-zero.
Suppose there exists an integer vector $\zb \in \ZZ^{\frac{p_{k - 1} (p_{k - 1} - 1)}{2}}$ satisfying
\begin{equation}\label{eq:generic_tensor_T1}
\sum_{(i, j) \in \langle p_{k - 1} \rangle} z_{(i, j)} \Gamma_{k, i, j}
=
0
.
\end{equation}
In \eqref{eq:def_theta_q}, we have associated the sequence $\{q_t\}_{t = 1}^{\frac{p_{k - 1} (p_{k - 1} - 1)}{2}}$ with the sequence of index pairs in $\langle p_{k - 1} \rangle$ in descending order of $p_{k - 1} i + j$, so we will also denote the corresponding $q_t$ by $q_{(i, j)}$ for simplicity.
Then \eqref{eq:generic_tensor_T1} implies
$$
\prod_{(i, j) \in \langle p_{k - 1} \rangle} q_{(i, j)}^{z_{(i, j)}}
=
1
.
$$
Since $q_{(i, j)}$ for $(i, j) \in \langle p_{k - 1} \rangle$ are chosen to be distinct prime numbers in \eqref{eq:def_q}, we must have
$$
z_{(i, j)}
=
0
,\quad
\forall (i, j) \in \langle p_{k - 1} \rangle
.
$$
Therefore, by choosing the other parameters in $\bTheta$ appropriately (e.g. as logarithms of other distinct prime numbers), we have constructed a choice of $\bTheta \in \cT_0 \cap \cT_1^c$.

We note that our proof above works for all $\Ab \in \cA_2$.
That is, for any $\Ab \in \cA_2$, through the above procedure we can construct a choice of parameter $\bTheta \in \cT_0 \cap \cT_1^c$ such that $\psi_{\Ab, k}(\bTheta) \ne 0$.
\end{proof}

\begin{proof}[Proof of Lemma \ref{lemm:generic_sX}]
Similar to the proof of Lemma \ref{lemm:strict_sX}, we define $\sP := \tilde\sX_{k - 1}^{-1} \circ \sX_{k - 1}$, then $\sP$ is a permutation map on $\cX_{k - 1}$.
Lemma \ref{lemm:generic_sX} can be restated using $\sP$ as follows:
If for all $\Xb_{k - 1} \in \cX_{k - 1}$,
\begin{equation}\label{eq:generic_sX_equiv_cond}
\law(\Xb_k | \Xb_{k - 1}, \Ab_k, \bTheta_k)
=
\law(\Xb_k | \sP(\Xb_{k - 1}), \tilde\Ab_k, \tilde\bTheta_k)
,
\end{equation}
then there exists a permutation matrix $\Pb \in \RR^{p_{k - 1} \times p_{k - 1}}$ such that $\sP(\Xb_{k - 1}) = \Pb \Xb_{k - 1} \Pb^\top$ for all $\Xb_{k - 1} \in \cX_{k - 1}$.

To prove Lemma \ref{lemm:generic_sX}, we first recall some properties of $\law(\Xb_k | \Xb_{k - 1}, \Ab_k, \bTheta_k)$ discussed in the proof of Lemma \ref{lemm:strict_sX}, which extends to hold for arbitrary $(\Ab, \bTheta) \in \cA_2 \times (\cT_0 \cap \cT_1^c)$.
Notice that knowing $\law(\Xb_k | \Xb_{k - 1}, \Ab_k, \bTheta_k)$ is equivalent to knowing
$$
\law(X_{k, i, j} | \Xb_{k - 1}, \Ab_k, \bTheta_k)
\quad
\forall (i, j) \in \langle p_k \rangle
,
$$
which is further equivalent to knowing
$$
\logit~ \PP(X_{k, i, j} = 1 | \Xb_{k - 1}, \Ab_k, \bTheta_k)
=
C_k + \ab_{k, i}^\top (\bGamma_k * \Xb_{k - 1}) \ab_{k, j}
\quad
\forall (i, j) \in \langle p_k \rangle
$$
due to our model formulation \eqref{eq:model_entry}.
Recall that we have defined $\gb(\Xb_{k - 1}, \Ab_k, \bTheta_k)$ in \eqref{eq:gb} to be the $2^{\frac{p_k (p_k - 1)}{2}}$-dimensional vector
$$
\gb(\Xb_{k - 1}, \Ab_k, \bTheta_k)
:=
\big(
\logit~ \PP(X_{k, i, j} = 1 | \Xb_{k - 1}, \Ab_k, \bTheta_k)
\big)_{(i, j) \in \langle p_k \rangle}
$$
with an arbitrary fixed order of $(i, j)$ pairs in $\langle p_k \rangle$.
For two vectors $\gb, \tilde\gb \in \RR^d$ of the same dimension $d$, a partial order $\succeq$ between them is defined by
$$
\gb \succeq \tilde\gb
\quad\iff\quad
g_i \ge \tilde{g}_i
\quad
\forall i \in [d]
,
$$
where $g_i, \tilde{g}_i$ denote the $i$th entries of $\gb, \tilde\gb$.
A partial order $\succeq$ on $\cX_{k - 1}$ is similarly defined by
$$
\Xb_{k - 1} \succeq \tilde\Xb_{k - 1}
\quad\iff\quad
X_{k - 1, i, j} \ge \tilde{X}_{k - 1, i, j}
\quad
\forall (i, j) \in \langle p_{k - 1} \rangle
$$
for any $\Xb_{k - 1}, \tilde\Xb_{k - 1} \in \cX_{k - 1}$.
We also recall that $\cG(\Ab_k, \bTheta_k)$ is defined to be the collection of all $\gb(\Xb_{k - 1}, \Ab_k, \bTheta_k)$ with $\Xb_{k - 1} \in \cX_{k - 1}$, i.e. the set of vectors
$$
\cG(\Ab_k, \bTheta_k)
:=
\big\{
\gb(\Xb_{k - 1}, \Ab_k, \bTheta_k):~
\Xb_{k - 1} \in \cX_{k - 1}
\big\}
.
$$

Let $(\Ab, \bTheta) \in \cA_2 \times (\cT_0 \cap \cT_1^c)$ be an arbitrary parameter and $k \in [K]$.
Since each entry $\Gamma_{k, i, j}$ of $\bGamma_k$ is positive, the set $\cG(\Ab_k, \bTheta_k)$ obtains a unique minimal element $\gb(\Ib_{p_{k - 1}}, \Ab_k, \bTheta_k)$ under the partial order $\succeq$.
We denote the set after removing this minimal element by
$$
\cG_0(\Ab_k, \bTheta_k)
:=
\cG(\Ab_k, \bTheta_k) \cap \{\gb(\Ib_{p_{k - 1}}, \Ab_k, \bTheta_k)\}^c
.
$$
If $p_{k - 1} = 2$, then the proof of Lemma \ref{lemm:generic_sX} is the same as the proof of Lemma \ref{lemm:strict_sX}, so we only consider the case $p_{k - 1} \ge 3$ in the following.
Since $\Ab \in \cA_2 \subset \cA_{2, 1}$, there exists distinct row indices
\begin{equation}\label{eq:generic_rows}
r_1, \ldots, r_{p_{k - 1}}, r_1', \ldots, r_{p_{k - 1}}'
\in
[p_k]
\end{equation}
such that $\ab_{k, r_i} \succeq \eb_i$ and $\ab_{k, r_i'} \succeq \eb_i$ for each $i \in [p_{k - 1}]$.
We denote $\Eb_{i, j} := \eb_i \eb_j^\top + \eb_j \eb_i^\top \in \RR^{p_{k - 1} \times p_{k - 1}}$ for $(i, j) \in \langle p_{k - 1} \rangle$.
Importantly, under the partial order $\succeq$, the $\frac{p_{k - 1} (p_{k - 1} - 1)}{2}$ elements
$$
\gb(\Ib_{p_{k - 1}} + \Eb_{i, j}, \Ab_k, \bTheta_k)
,\quad
(i, j) \in \langle p_{k - 1} \rangle
$$
are no longer necessarily the minimal elements of the set $\cG_0(\Ab_k, \bTheta_k)$, contrary to the proof of Lemma \ref{lemm:strict_sX} with $p_{k - 1} \ge 3$ and $\Ab \in \cA_1$.
However, we are still able to pick out the set of these $\frac{p_{k - 1} (p_{k - 1} - 1)}{2}$ elements from the set $\cG_0(\Ab_k, \bTheta_k)$, as discussed in the following.

We consider the collection $\cH$ of all entries in the vectors
$$
\gb - \gb(\Ib_{p_{k - 1}}, \Ab_k, \bTheta_k)
,\quad
\gb \in \cG_0(\Ab_k, \bTheta_k)
,
$$
then each element of $\cH$ is a linear combination with non-negative integer coefficients of the values $\Gamma_{k, i, j}$ for $(i, j) \in \langle p_{k - 1} \rangle$.
Recall the definition of rows $r_i, r_i'$ from \eqref{eq:generic_rows}.
For any $(i, j) \in \langle p_{k - 1} \rangle$, we have $\ab_{k, r_i} \succeq \eb_i$ and $\ab_{k, r_j} \succeq \eb_j$.
If either $\ab_{k, r_i} \nsucceq \eb_j$ or $\ab_{k, r_j} \nsucceq \eb_i$, then we have
\begin{equation}\label{eq:cH_Gammakij_case1}
\left(
\gb(\Ib_{p_{k - 1}} + \Eb_{i, j}, \Ab_k, \bTheta_k) - \gb(\Ib_{p_{k - 1}}, \Ab_k, \bTheta_k)
\right)_{r_i, r_j}
=
\Gamma_{k, i, j}
.
\end{equation}
Otherwise, we have $\ab_{k, r_i} \succeq \eb_i + \eb_j$ and $\ab_{k, r_j} \succeq \eb_i + \eb_j$.
Since $\Ab \in \cA_2 \subset \cA_{2, 2}$, the columns of $\Ab_k$ are distinct, which implies the existence of a third row $s \in [p_k]$ such that either $A_{k, s, i} = 1, A_{k, s, j} = 0$ or $A_{k, s, i} = 0, A_{k, s, j} = 1$.
Without loss of generality, we let the former case hold, then we have
\begin{equation}\label{eq:cH_Gammakij_case2}
\left(
\gb(\Ib_{p_{k - 1}} + \Eb_{i, j}, \Ab_k, \bTheta_k) - \gb(\Ib_{p_{k - 1}}, \Ab_k, \bTheta_k)
\right)_{s, r_j}
=
\Gamma_{k, i, j}
.
\end{equation}
By combining the two cases \eqref{eq:cH_Gammakij_case1} and \eqref{eq:cH_Gammakij_case2}, we obtain that each upper-triangular, off-diagonal entry $\Gamma_{k, i, j}$ is an element of the set $\cH$, i.e.
\begin{equation}\label{eq:cH_Gammakij}
\cH
\supset
\big\{
\Gamma_{k, i, j}:~
(i, j) \in \langle p_{k - 1} \rangle
\big\}
.
\end{equation}
Since $\bTheta \notin \cT_1$, each $\Gamma_{k, i, j}$ can not be expressed as a linear combination with integer coefficients of other $\Gamma_{k, i, j}$'s, which implies that $\big\{ \Gamma_{k, i, j}:~ (i, j) \in \langle p_{k - 1} \rangle \big\}$ is the unique subset $\cH_\gamma$ of $\cH$ of size $\frac{p_{k - 1} (p_{k - 1} - 1)}{2}$ satisfying
$$
\forall h \in \cH,~
\exists 0 \le m \le \frac{p_{k - 1} (p_{k - 1} - 1)}{2},~
h_1, \ldots, h_m \in \cH_\gamma,~
z_1, \ldots, z_m \in \ZZ_+
\text{ s.t. }
h
=
\sum_{\ell = 1}^m h_\ell z_\ell
,
$$
where $\ZZ_+$ denotes the set of all positive integers.
This essentially says that
\begin{equation}\label{eq:cH_gamma}
\cH_\gamma
:=
\big\{
\Gamma_{k, i, j}:~
(i, j) \in \langle p_{k - 1} \rangle
\big\}
\end{equation}
is the only subset of $\cH$ of its size such that each element of $\cH$ can be written as a linear combination with non-negative integer coefficients of the subset elements.
Importantly, this implies that $\cH_\gamma$ is well-defined given $\cH$, without knowing the parameters $\Ab_k, \bTheta_k$ or the matrix $\Xb_{k - 1} \in \cX_{k - 1}$ to which each $\gb \in \cG_0$ corresponds.
To proceed, notice that among all elements of $\cG_0$, $\gb(\Ib_{p_{k - 1}} + \Eb_{i, j}, \Ab_k, \bTheta_k)$ is the unique element $\gb$ satisfying that each entry of $\gb - \gb(\Ib_{p_{k - 1}}, \Ab_k, \bTheta_k)$ is an integer multiple of $\Gamma_{k, i, j}$.
Therefore, by defining the set
\begin{equation}\label{eq:generic_cGm}
\cG_m(\Ab_k, \bTheta_k)
:=
\big\{
\gb \in \cG_0:~
\exists \gamma \in \cH_\gamma,~
\zb \in \ZZ^{\frac{p_{k - 1} (p_{k - 1} - 1)}{2}}
\text{ s.t. }
\gb - \gb(\Ib_{p_{k - 1}}, \Ab_k, \bTheta_k)
=
\gamma \zb
\big\},
\end{equation}
we have
$$
\cG_m(\Ab_k, \bTheta_k)
=
\big\{
\gb(\Ib_{p_{k - 1}} + \Eb_{i, j}, \Ab_k, \bTheta_k):~
(i, j) \in \langle p_{k - 1} \rangle
\big\}
.
$$

Based on the well-defined set $\cG_m(\Ab_k, \bTheta_k)$, we can define its subset $\cG_{m, r}(\Ab_k, \bTheta_k)$ for each row $r \in [p_k]$ of $\Ab_k$ as in \eqref{eq:Gmr}, given by
$$
\cG_{m, r}(\Ab_k, \bTheta_k)
:=
\big\{
\gb \in \cG_m:~
\exists s \in [p_k] \cap \{r\}^c \text{ s.t. }
\big(\gb - \gb(\Ib_{p_{k - 1}}, \Ab_k, \bTheta_k)\big)_{(r, s)} > 0
\big\}
.
$$
By definition, we find that $\cG_{m, r}(\Ab_k, \bTheta_k)$ consists of all vectors $\gb(\Ib_{p_{k - 1}} + \Eb_{i, j}, \Ab_k, \bTheta_k)$'s with $(i, j) \in \langle p_{k - 1} \rangle$ satisfying $A_{k, r, i} = 1$ or $A_{k, r, j} = 1$, i.e.
$$
\cG_{m, r}(\Ab_k, \bTheta_k)
=
\big\{
\gb(\Ib_{p_{k - 1}} + \Eb_{i, j}, \Ab_k, \bTheta_k):~
(i, j) \in \langle p_{k - 1} \rangle,~
(A_{k, r, i} = 1 \text{ or } A_{k, r, j} = 1)
\big\}
.
$$
For any $(r, s) \in \langle p_k \rangle$, we notice that
$$
\cG_{m, r}(\Ab_k, \bTheta_k)
\supset
\cG_{m, s}(\Ab_k, \bTheta_k)
\quad\iff\quad
\ab_{k, r}
\succeq
\ab_{k, s}
\quad\text{or}\quad
\one^\top \ab_{k, r} \ge p_{k - 1} - 1
.
$$
Since $\Ab \in \cA \subset \cA_{2, 1} \cap \cA_{2, 2} \cap \cA_{2, 3}$, we know that the columns of $\Ab_k$ are distinct and each column contains both an entry of zero and an entry of one.
We let $\cG_m^{(\ell)}(\Ab_k, \bTheta_k)$ for $\ell \in [p_{k - 1}]$ denote the subset of $\cG_m$ given by
$$
\big\{
\gb(\Ib_{p_{k - 1}} + \Eb_{i, j}, \Ab_k, \bTheta_k):~
(i, j) \in \langle p_{k - 1} \rangle,~
(i = \ell \text{ or } j = \ell)
\big\}
,
$$
then it satisfies that for all $\ell \in [p_{k - 1}]$, there exists $r, s \in [p_k]$ such that
$$
\cG_m^{(\ell)}(\Ab_k, \bTheta_k)
\subset
\cG_{m, r}(\Ab_k, \bTheta_k)
,\quad
\cG_m^{(\ell)}(\Ab_k, \bTheta_k)
\not\subset
\cG_{m, s}(\Ab_k, \bTheta_k)
,
$$
and for any two distinct $\ell, \ell' \in [p_{k - 1}]$, there exists $r \in [p_k]$ such that either
$$
\cG_m^{(\ell)}(\Ab_k, \bTheta_k)
\subset
\cG_{m, r}(\Ab_k, \bTheta_k)
,\quad
\cG_m^{(\ell')}(\Ab_k, \bTheta_k)
\not\subset
\cG_{m, r}(\Ab_k, \bTheta_k)
$$
or
$$
\cG_m^{(\ell')}(\Ab_k, \bTheta_k)
\subset
\cG_{m, r}(\Ab_k, \bTheta_k)
,\quad
\cG_m^{(\ell)}(\Ab_k, \bTheta_k)
\not\subset
\cG_{m, r}(\Ab_k, \bTheta_k)
.
$$
Additionally, we notice for each $r \in [p_k]$ that
$$
\cG_{m, r}(\Ab_k, \bTheta_k)
=
\bigcup_{\ell \in [p_{k - 1}]:~ A_{k, r, \ell} = 1} \cG_m^{(\ell)}(\Ab_k, \bTheta_k)
.
$$
Therefore, the collection of sets $\big\{ \cG_m^{(\ell)}(\Ab_k, \bTheta_k):~ \ell \in [p_{k - 1}] \big\}$ is the unique subset of size $p_{k - 1}$ of the power set $2^{\cG_m(\Ab_k, \bTheta_k)}$ satisfying that each $\cG_{m, r}$ can be written as a finite union of the sets in this collection.
This implies that the collection $\big\{ \cG_m^{(\ell)}(\Ab_k, \bTheta_k):~ \ell \in [p_{k - 1}] \big\}$ is also well-defined given $\cG_{m, r}$, without knowing $\Ab_k$ or which $\Xb_{k - 1}$ each $\gb \in \cG_m$ corresponds to.
While $\cG_{m, r}(\Ab_k, \bTheta_k)$ differs for different $\Ab_k$, by its definition, the collection $\big\{ \cG_m^{(\ell)}(\Ab_k, \bTheta_k):~ \ell \in [p_{k - 1}] \big\}$ is identical for all $\Ab_k, \bTheta_k$ with $(\Ab, \bTheta) \in \cA_1 \times \cT_0$.
Additionally, we also note that each $\gb(\Ib_{p_{k - 1}} + \Eb_{i, j}, \Ab_k, \bTheta_k)$ belongs to exactly two of the sets $\cG_m^{(\ell)}(\Ab_k, \bTheta_k)$'s.
Therefore, there exists a permutation map $\sQ$ on $[p_{k - 1}]$ satisfying
\begin{equation}\label{eq:perm_Q2}
\{\gb(\Ib_{p_{k - 1}} + \Eb_{i, j}, \Ab_k, \bTheta_k)\}
=
\cG_m^{(\sQ(i))}(\Ab_k, \bTheta_k) \cap \cG_m^{(\sQ(j))}(\Ab_k, \bTheta_k)
\end{equation}
for all $(i, j) \in \langle p_{k - 1} \rangle$.

Now for any $(\Ab, \bTheta), (\tilde\Ab, \tilde\bTheta) \in \cA_2 \times (\cT_0 \cap \cT_1^c)$, if \eqref{eq:generic_sX_equiv_cond} holds for all $\Xb_{k - 1} \in \cX_{k - 1}$, then we have
$$
\cG(\Ab_k, \bTheta_k)
=
\cG(\tilde\Ab_k, \tilde\bTheta_k)
,\quad
\cG_0(\Ab_k, \bTheta_k)
=
\cG_0(\tilde\Ab_k, \tilde\bTheta_k)
,\quad
\cG_m(\Ab_k, \bTheta_k)
=
\cG_m(\tilde\Ab_k, \tilde\bTheta_k)
,
$$
and additionally,
$$
\big\{
\cG_m^{(\ell)}(\Ab_k, \bTheta_k):~
\ell \in [p_{k - 1}]
\big\}
=
\big\{
\cG_m^{(\ell)}(\tilde\Ab_k, \tilde\bTheta_k):~
\ell \in [p_{k - 1}]
\big\}
.
$$
From \eqref{eq:perm_Q2} we know there further exists a permutation map $\tilde\sQ$ on $[p_{k - 1}]$ such that
$$
\gb(\Ib_{p_{k - 1}} + \Eb_{i, j}, \Ab_k, \bTheta_k)
=
\gb(\Ib_{p_{k - 1}} + \Eb_{\tilde\sQ(i), \tilde\sQ(j)}, \tilde\Ab_k, \tilde\bTheta_k)
,
$$
which suggests
$$
\sP(\Ib_{p_{k - 1}})
=
\Ib_{p_{k - 1}}
,\quad
\sP(\Ib_{p_{k - 1}} + \Eb_{i, j})
=
\Ib_{p_{k - 1}} + \Eb_{\tilde\sQ(i), \tilde\sQ(j)}
\quad
\forall (i, j) \in \langle p_{k - 1} \rangle
.
$$
From \eqref{eq:cH_gamma} and $\bTheta \notin \cT_1$, for every vector $\gb \in \cG_0(\Ab_k, \bTheta_k)$, we can decompose the entries of $\gb - \gb(\Ib_{p_{k - 1}}, \Ab_k, \bTheta_k)$ into linear combinations with non-negative integer coefficients of $\Gamma_{k, i, j}$'s and denote the collection of all $\Gamma_{k, i, j}$'s with at least one positive coefficient by $\cH(\gb)$, i.e.
\begin{align*}
\cH(\gb)
:=
\Bigg\{
\Gamma_{k, i, j}:~
(i, j) \in \langle p_{k - 1} \rangle,~
\exists (s, t) \in \langle p_k \rangle,~
\zb \in \ZZ_+^{\frac{p_{k - 1} (p_{k - 1} - 1)}{2}}
\text{ s.t. }
z_{(i, j)} > 0,~
\\
(\gb - \gb(\Ib_{p_{k - 1}}, \Ab_k, \bTheta_k))_{(s, t)}
=
\sum_{(u, v) \in \langle p_{k - 1} \rangle} z_{(u, v)} \Gamma_{k, u, v}
\Bigg\}
.
\end{align*}
Similarly, for every vector $\gb \in \cG_0(\tilde\Ab_k, \tilde\bTheta_k)$, we can decompose its entries into linear combinations with non-negative integer coefficients of $\tilde\Gamma_{k, i, j}$'s and denote the collection of all $\tilde\Gamma_{k, i, j}$'s with at least one positive coefficient by $\tilde\cH(\gb)$.
Then for all $\Xb_{k - 1} \in \cX_{k - 1}$, we have the following equivalence:
\begin{align*}
\sP(\Xb_{k - 1})
\succeq
\Ib_{p_{k - 1}} + \Eb_{\tilde\sQ(i), \tilde\sQ(j)}
&\iff
\sP(\Xb_{k - 1})
\succeq
\sP(\Ib_{p_{k - 1}} + \Eb_{i, j})
\\&\iff
\tilde\cH\left(
\gb(\sP(\Xb_{k - 1}), \tilde\Ab_k, \tilde\bTheta_k)
\right)
\supset
\tilde\cH\left(
\gb(\sP(\Ib_{p_{k - 1}} + \Eb_{i, j}), \tilde\Ab_k, \tilde\bTheta_k)
\right)
\\&\iff
\cH\left(
\gb(\Xb_{k - 1}, \Ab_k, \bTheta_k)
\right)
\supset
\cH\left(
\gb(\Ib_{p_{k - 1}} + \Eb_{i, j}, \Ab_k, \bTheta_k)
\right)
\\&\iff
\Xb_{k - 1}
\succeq
\Ib_{p_{k - 1}} + \Eb_{i, j}
.
\end{align*}
This suggests that the permutation map $\sP$ on $\cX_{k - 1}$ is uniquely determined through the permutation map $\tilde\sQ$ on $[p_{k - 1}]$, 
and the entries of $\sP(\Xb_{k - 1})$ satisfy the equivalence
$$
(\sP(\Xb_{k - 1}))_{\tilde\sQ(i), \tilde\sQ(j)} = 1
\quad
\iff
X_{k - 1, i, j} = 1
,\quad
\forall (i, j) \in \langle p_{k - 1} \rangle
.
$$
Letting $\Pb \in \RR^{p_{k - 1} \times p_{k - 1}}$ denote the permutation matrix associated with the permutation map $\tilde\sQ$, i.e.
$$
\big(
1 ~ 2 ~ \cdots p_{k - 1}
\big) \Pb
=
\big(
\tilde\sQ(1) ~ \tilde\sQ(2) ~ \cdots \tilde\sQ(p_{k - 1})
\big)
,
$$
then we have
$$
\sP(\Xb_{k - 1})
=
\Pb \Xb_{k - 1} \Pb^\top
,\quad
\forall \Xb_{k - 1} \in \cX_{k - 1}
.
$$
\end{proof}

\begin{proof}[Proof of Lemma \ref{lemm:generic_param}]
Part of this proof is the same as in the proof of Lemma \ref{lemm:strict_param}, which we repeat in the following for completeness.
We notice that
$$
\law(\Xb_k | \Xb_{k - 1}, \Ab_k, \bTheta_k)
=
\law(\Xb_k | \Xb_{k - 1}, \tilde\Ab_k, \tilde\bTheta_k)
,\quad
\forall \Xb_{k - 1} \in \cX_{k - 1}
$$
is equivalent to
$$
\law(X_{k, i, j} | \Xb_{k - 1}, \Ab_k, \bTheta_k)
=
\law(X_{k, i, j} | \Xb_{k - 1}, \tilde\Ab_k, \tilde\bTheta_k)
,\quad
\forall \Xb_{k - 1} \in \cX_{k - 1},~ (i, j) \in \langle p_k \rangle
,
$$
which is further equivalent to
\begin{equation}\label{eq:generic_param_equiv}
C_k + \ab_{k, i}^\top (\bGamma_k * \Xb_{k - 1}) \ab_{k, j}
=
\tilde{C}_k + \tilde\ab_{k, i}^\top (\tilde\bGamma_k * \Xb_{k - 1}) \tilde\ab_{k, j}
,\quad
\forall \Xb_{k - 1} \in \cX_{k - 1},~ (i, j) \in \langle p_k \rangle
.
\end{equation}
We define the two sets of real values
$$
\cH
:=
\Big\{
(C_k + \ab_{k, i}^\top (\bGamma_k * \Xb_{k - 1}) \ab_{k, j}) - (C_k + \ab_{k, i}^\top (\bGamma_k * \Ib_{p_{k - 1}}) \ab_{k, j}):~
\Xb_{k - 1} \in \cX_{k - 1},~
(i, j) \in \langle p_{k} \rangle
\Big\}
$$
and
$$
\tilde\cH
:=
\Big\{
(\tilde{C}_k + \tilde\ab_{k, i}^\top (\tilde\bGamma_k * \Xb_{k - 1}) \tilde\ab_{k, j}) - (\tilde{C}_k + \tilde\ab_{k, i}^\top (\tilde\bGamma_k * \Ib_{p_{k - 1}}) \tilde\ab_{k, j}):~
\Xb_{k - 1} \in \cX_{k - 1},~
(i, j) \in \langle p_{k} \rangle
\Big\}
,
$$
then by \eqref{eq:generic_param_equiv} we have $\cH = \tilde\cH$.
As discussed earlier in \eqref{eq:cH_Gammakij} and \eqref{eq:cH_gamma} in the proof of Lemma \ref{lemm:generic_sX}, since $\Ab \in \cA_2$ and $\bTheta \in \cT_0 \cap \cT_1^c$, the collection
$$
\big\{
\Gamma_{k, s, t}:~
(s, t) \in \langle p_{k - 1} \rangle
\big\}
$$
is the unique subset of $\cH$ of size $\frac{p_{k - 1} (p_{k - 1} - 1)}{2}$ such that every element of $\cH$ is a linear combination with non-negative integers of the subset elements.
Similarly, this relation holds for $\tilde\cH$ and its subset $\{\tilde\Gamma_{k, s, t}:~ (s, t) \in \langle p_{k - 1} \rangle\}$, which implies that the two subsets
\begin{equation}\label{eq:generic_gamma_set}
\big\{
\Gamma_{k, s, t}:~
(s, t) \in \langle p_{k - 1} \rangle
\big\}
=
\big\{
\tilde\Gamma_{k, s, t}:~
(s, t) \in \langle p_{k - 1} \rangle
\big\}
.
\end{equation}

We further notice that for any $(s, t) \in \langle p_{k - 1} \rangle$,
$$
(C_k + \ab_{k, i}^\top (\bGamma_k * (\Ib_{p_{k - 1}} + \Eb_{s, t})) \ab_{k, j}) - (C_k + \ab_{k, i}^\top (\bGamma_k * \Ib_{p_{k - 1}}) \ab_{k, j})
=
(A_{k, i, s} A_{k, j, t} + A_{k, i, t} A_{k, j, s}) \Gamma_{k, s, t}
$$
holds for all $(i, j) \in \langle p_k \rangle$.
This implies that the set
$$
\big\{
(C_k + \ab_{k, i}^\top (\bGamma_k * (\Ib_{p_{k - 1}} + \Eb_{s, t})) \ab_{k, j}) - (C_k + \ab_{k, i}^\top (\bGamma_k * \Ib_{p_{k - 1}}) \ab_{k, j}):~
(i, j) \in \langle p_k \rangle
\big\}
$$
is a subset of $\{0, \Gamma_{k, s, t}, 2\Gamma_{k, s, t}\}$.
Similarly, the set
$$
\big\{
(\tilde{C}_k + \tilde\ab_{k, i}^\top (\tilde\bGamma_k * (\Ib_{p_{k - 1}} + \Eb_{s, t})) \tilde\ab_{k, j}) - (\tilde{C}_k + \tilde\ab_{k, i}^\top (\tilde\bGamma_k * \Ib_{p_{k - 1}}) \tilde\ab_{k, j}):~
(i, j) \in \langle p_k \rangle
\big\}
$$
is a subset of $\{0, \tilde\Gamma_{k, s, t}, 2\tilde\Gamma_{k, s, t}\}$.
Along with \eqref{eq:generic_param_equiv}, \eqref{eq:generic_gamma_set}, and $\bTheta \notin \cT_1$, we must have
$$
\Gamma_{k, s, t}
=
\tilde\Gamma_{k, s, t}
,\quad
\forall (s, t) \in \langle p_{k - 1} \rangle
.
$$

We define a partial order $\succeq$ on $\{0, 1\}^{p_{k - 1}}$ as
$$
\ab_{k, r}
\succeq
\ab_{k, s}
\quad\iff\quad
A_{k, r, j}
\ge
A_{k, s, j}
\quad
\forall j \in [p_{k - 1}]
.
$$
Now for each $(i, j) \in \langle p_k \rangle$ and $\Xb_{k - 1} \in \cX_{k - 1}$, we let $\zb_{(i, j), \Xb_{k - 1}}$ denote the integer coefficients of $\Gamma_{k, s, t}$'s with which the value
$$
(C_k + \ab_{k, i}^\top (\bGamma_k * \Xb_{k - 1}) \ab_{k, j}) - (C_k + \ab_{k, i}^\top (\bGamma_k * \Ib_{p_{k - 1}}) \ab_{k, j})
$$
can be uniquely expressed as a linear combination of $\Gamma_{k, s, t}$'s.
For instance, if $(C_k + \ab_{k, i}^\top (\bGamma_k * \Xb_{k - 1}) \ab_{k, j}) - (C_k + \ab_{k, i}^\top (\bGamma_k * \Ib_{p_{k - 1}}) \ab_{k, j}) = \Gamma_{k, 2, 1} + 2\Gamma_{k, 3, 1}$, then $\zb_{(i, j), \Xb_{k - 1}}$ is the $\frac{p_{k - 1}(p_{k - 1} - 1)}{2}$-dimensional vector with its entry corresponding to $\Gamma_{k, 2, 1}$ being 1, its entry corresponding to $\Gamma_{k, 3, 1}$ being 2, and all other entries being 0.
Similarly, we let $\tilde\zb_{(i, j), \Xb_{k - 1}}$ denote the integer coefficients of $\tilde\Gamma_{k, s, t}$'s of which the value
$$
(\tilde{C}_k + \tilde\ab_{k, i}^\top (\tilde\bGamma_k * \Xb_{k - 1}) \tilde\ab_{k, j}) - (\tilde{C}_k + \tilde\ab_{k, i}^\top (\tilde\bGamma_k * \Ib_{p_{k - 1}}) \tilde\ab_{k, j})
$$
can be uniquely expressed as a linear combination of $\tilde\Gamma_{k, s, t}$'s.
From \eqref{eq:generic_param_equiv}, we have
\begin{equation}\label{eq:generic_zb_equal}
\zb_{(i, j), \Xb_{k - 1}}
=
\tilde\zb_{(i, j), \Xb_{k - 1}}
,\quad
\forall (i, j) \in \langle p_k \rangle,~
\Xb_{k - 1} \in \cX_{k - 1}
.
\end{equation}
Then for each $i \in [p_k]$, we have the equivalence
\begin{align*}
\ab_{k, i} = \zero
&\iff
\forall j \in [p_k], j \ne i,~
\zb_{(i, j), \one \one^\top} = \zero
\\&\iff
\forall j \in [p_k], j \ne i,~
\tilde\zb_{(i, j), \one \one^\top} = \zero
\\&\iff
\tilde\ab_{k, i} = \zero
,
\end{align*}
which suggests that $\Ab_k$ and $\tilde\Ab_k$ have the same all-zero rows.

We exclude these all-zero rows and let $\cJ$ and $\tilde\cJ$ denote the collection of indices corresponding to the not all-zero rows in $\Ab_k$ and $\tilde\Ab_k$, then $\cJ = \tilde\cJ$.
For each row $i \in \cJ$ and each column $s \in [p_{k - 1}]$, we let $y_{i, s}$ be the zero-one indicator of the event
$$
\forall t \in [p_{k - 1}], t \ne s,~
\exists j \in \cJ, j \ne i,~
\text{s.t. }
(\zb_{(i, j), \Ib_{p_{k - 1}} + \Eb_{s, t}})_{(s, t)} > 0
,
$$
where $(\zb_{(i, j), \Ib_{p_{k - 1}} + \Eb_{s, t}})_{(s, t)}$ denotes the $(s, t)$th entry of the vector $\zb_{(i, j), \Ib_{p_{k - 1}} + \Eb_{s, t}}$, i.e. the integer coefficient of $\Gamma_{k, s, t}$ when writing the value
$$
(C_k + \ab_{k, i}^\top (\bGamma_k * (\Ib_{p_{k - 1}} + \Eb_{s, t})) \ab_{k, j}) - (C_k + \ab_{k, i}^\top (\bGamma_k * \Ib_{p_{k - 1}}) \ab_{k, j})
$$
as a linear combination of the entries in $\bGamma_k$.
Similarly, for each row $i \in \tilde\cJ$ and each column $s \in [p_{k - 1}]$, we let $\tilde{y}_{i, s}$ be the zero-one indicator of the event
$$
\forall t \in [p_{k - 1}], t \ne s,~
\exists j \in \tilde\cJ, j \ne i,~
\text{s.t. }
(\tilde\zb_{(i, j), \Ib_{p_{k - 1}} + \Eb_{s, t}})_{(s, t)} > 0
.
$$
From \eqref{eq:generic_zb_equal} we know that
$$
y_{i, s}
=
\tilde{y}_{i, s}
,\quad
\forall i \in \cJ = \tilde\cJ,~
s \in [p_{k - 1}]
.
$$
Importantly, we notice that $y_{i, s} = 1$ is equivalent to saying that for any other row $j \in \cJ, j \ne i$ at least one of the two rows $\ab_{k, i}, \ab_{k, j}$ has its $s$th entry equal to one.
That is, we have $y_{i, s} = 1$ if and only if either
\begin{equation}\label{eq:generic_yis1_cond1}
\ab_{k, i}
\succeq
\eb_s
\end{equation}
or
\begin{equation}\label{eq:generic_yis1_cond2}
\forall j \in \cJ, j \ne i,~
\ab_{k, j} \succeq \eb_s
.
\end{equation}

For each column $s \in [p_{k - 1}]$, we let $\yb_s$ and $\tilde\yb_s$ be the $|\cJ|$-dimensional vectors with its entries being $y_{i, s}$ and $\tilde{y}_{i, s}$ respectively, with $i \in \cJ$, then $\yb_s = \tilde\yb_s$.
We notice that if \eqref{eq:generic_yis1_cond1} holds for at least $|\cJ| - 1$ rows in $\cJ$, then we have \eqref{eq:generic_yis1_cond2} holding for the remaining row in $\cJ$ (if there is one remaining), such that all entries $y_{i, s} = 1$ and $\one^\top \yb_s = |\cJ|$.
If \eqref{eq:generic_yis1_cond1} holds for no more than $|\cJ| - 2$ rows in $\cJ$, then \eqref{eq:generic_yis1_cond2} can not hold for any row in $\cJ$, such that we have $\one^\top \yb_s \le |\cJ| - 2$.
In the latter case, for each column $s \in [p_{k - 1}]$ of $\Ab_k$ with $\one^\top \yb_s \le |\cJ| - 2$, we have for all $i \in \cJ = \tilde\cJ$ that
$$
\ab_{k, i} \succeq \eb_s
\quad\iff\quad
y_{i, s} = 1
\quad\iff\quad
\tilde{y}_{i, s} = 1
\quad\iff\quad
\tilde\ab_{k, i} \succeq \eb_s
,
$$
which implies that the matrices $\Ab_k$ and $\tilde\Ab_k$ are identical for all columns $s$ with $\one^\top \yb_s \le |\cJ| - 2$.
In the former case, for a column $s$ with $\one^\top \yb_s = |\cJ|$, restricting to the not all-zero rows in $\cJ$ of $\Ab_k$, there is at most one entry of zero in the column $s$ of $\Ab_k$.
In such a column $s$, we notice for all $i \in \cJ = \tilde\cJ$ that
\begin{align*}
\ab_{k, i} \succeq \eb_s
&\iff
\exists j, \ell \in \cJ \cap \{i\}^c, j \ne \ell,~
\exists t \in [p_{k - 1}], t \ne s,~
\text{s.t. }
(\zb_{(i, j), \one \one^\top})_{(s, t)}
\ne
(\zb_{(i, \ell), \one \one^\top})_{(s, t)}
\\&\iff
\exists j, \ell \in \tilde\cJ \cap \{i\}^c, j \ne \ell,~
\exists t \in [p_{k - 1}], t \ne s,~
\text{s.t. }
(\tilde\zb_{(i, j), \one \one^\top})_{(s, t)}
\ne
(\tilde\zb_{(i, \ell), \one \one^\top})_{(s, t)}
\\&\iff
\tilde\ab_{k, i} \succeq \eb_s
.
\end{align*}
This is because for a column $s$ with $\one^\top \yb_s = |\cJ|$, if for some row $i \in \cJ$ we have $y_{i, s} = 0$, then for any other row $j \in \cJ, j \ne i$, the integer coefficients of $\Gamma_{k, s, t}$'s for $t \in [p_{k - 1}], t \ne s$ in $\zb_{(i, j), \one \one^\top}$ are always the same.
If $y_{i, s} = 1$, these integer coefficients will not always be the same because the columns of $\Ab_k$ are distinct due to $\Ab \in \cA_{2, 2}$, which suggests that we can always find two distinct rows $j, \ell \in \cJ$ other than $i$ such that the rows $\ab_{k, j}$ and $\ab_{k, \ell}$ differ in some column $t \in [p_{k - 1}], t \ne s$, resulting in different coefficients of $\Gamma_{k, s, t}$ in $\zb_{(i, j), \one \one^\top}$ and $\zb_{(i, \ell), \one \one^\top}$ since $\ab_{k, i} \succeq \eb_s$.
Therefore, the above shows that the matrices $\Ab_k$ and $\tilde\Ab_k$ are also identical for all columns $s$ with $\one^\top \yb_s = |\cJ|$.
Combining the two arguments above that separately consider the columns with $\one^\top \yb_s \le |\cJ| - 2$ and the columns with $\one^\top \yb_s = |\cJ|$, we therefore have established that
$$
\Ab_k
=
\tilde\Ab_k
.
$$

It now remains to verify that $C_k = \tilde{C}_k$ and the diagonal entries $\diag(\bGamma_k) = \diag(\tilde\bGamma_k)$.
We recall the definition of the $\frac{p_k (p_k - 1)}{2} \times (1 + p_{k - 1})$ matrix $\Db_k$ from \eqref{eq:A23_def} as the stack of row vectors $\big( 1 ~ \ab_{k, i}^\top * \ab_{k, j}^\top \big)$ for $(i, j) \in \langle p_k \rangle$.
From \eqref{eq:generic_param_equiv}, for all index pairs $(i, j) \in \langle p_k \rangle$ we have
\begin{align*}
C_k + \sum_{s = 1}^{p_{k - 1}} A_{k, i, s} A_{k, j, s} \Gamma_{k, s, s}
&=
C_k + \ab_{k, i}^\top (\bGamma_k * \Ib_{p_{k - 1}}) \ab_{k, j}
\\&=
\tilde{C}_k + \tilde\ab_{k, i}^\top (\tilde\bGamma_k * \Ib_{p_{k - 1}}) \tilde\ab_{k, j}
\\&=
\tilde{C}_k + \sum_{s = 1}^{p_{k - 1}} \tilde{A}_{k, i, s} \tilde{A}_{k, j, s} \tilde\Gamma_{k, s, s}
.
\end{align*}
Letting $\ub_k$ denote the $(1 + p_{k - 1})$-dimensional vector $(C_k ~ \Gamma_{k, 1, 1} ~ \cdots ~ \Gamma_{k, p_{k - 1}, p_{k - 1}})$ and $\tilde\ub_k$ denote the $(1 + p_{k - 1})$-dimensional vector $(\tilde{C}_k ~ \tilde\Gamma_{k, 1, 1} ~ \cdots ~ \tilde\Gamma_{k, p_{k - 1}, p_{k - 1}})$, then the above implies that $\Db_k \ub_k = \Db_k \tilde\ub_k$, where we have used the earlier result of $\Ab_k = \tilde\Ab_k$.
Since $\Ab \in \cA_{2, 3}$, we have $\rank(\Db_k) = 1 + p_{k - 1}$, which implies that we must have $\ub_k = \tilde\ub_k$.
This implies
$$
C_k
=
\tilde{C}_k
,\quad
\Gamma_{k, s, s}
=
\tilde\Gamma_{k, s, s}
\quad
\forall
s \in [p_{k - 1}]
.
$$
This completes the proof of Lemma \ref{lemm:generic_param}.
\end{proof}

\subsection{Necessary Conditions for Identifiability}\label{ssec:iden_nece}

In the previous subsections, we have proven the strict identifiability of our model with parameter space $\cA_1 \times \cT_0$ and the generic identifiability of our model with parameter space $\cA_2 \times \cT_0$.
We now discuss the necessity of restricting the connection matrices $\Ab$ to $\cA_1$ and $\cA_2$.

An essential step of our proofs has been transforming the model identifiability problem into the uniqueness of three-way tensor CP decomposition.
While the Kruskal Theorem \ref{theo:kruskal} provides only sufficient but not necessary conditions for CP decomposition uniqueness, the conditions are relatively weak.
The parameter space $\cA_1$ is chosen to guarantee the Kruskal's conditions at every parameter $(\Ab, \bTheta)$, and the parameter space $\cA_{2, 1}$ is chosen to guarantee the Kruskal's conditions at almost every parameter $(\Ab, \bTheta)$ excluding a null set of zero measure under $\upmu \times \uplambda$.
While restricting the connection matrices $\Ab$ to $\cA_1$ or $\cA_{2, 1}$ is not exactly a necessary condition, it is hard to find any weaker conditions within our current proof framework utilizing the three-way tensor CP decomposition.

We recall from \eqref{eq:A2_def} the parameter space $\cA_2 = \cA_{2, 1} \cap \cA_{2, 2} \cap \cA_{2, 3}$ and $\cA_1 \subset \cA_2$.
In the following proposition, we show that restricting the connection matrices $\Ab$ to $\cA_{2, 3}$ is a necessary condition for both the strict and generic model identifiability.

\begin{proposition}\label{prop:iden_nece_A23}
Let space $\cA_* \subset \cA_0$ satisfy $\cA_* \cap \cA_{2, 3}^c \ne \varnothing$.
For the model with parameter space $\cA_* \times \cT_0$, we have
$$
(\upmu \times \uplambda)\big( \big\{
(\Ab, \bTheta) \in \cA_* \times \cT_0:~
\perm_{\cA_* \times \cT_0}(\Ab, \bTheta)
\subsetneq
\marg_{\cA_* \times \cT_0}(\Ab, \bTheta)
\big\} \big)
>
0
.
$$
\end{proposition}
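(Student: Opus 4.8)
The plan is to exploit the fact, already flagged in the discussion preceding \eqref{eq:A23_def}, that the likelihood at layer $k$ sees the pair $(C_k,\diag(\bGamma_k))$ only through the linear map $\Db_k$. A rank deficiency of $\Db_k$ therefore yields a genuine direction of non-identifiability among the continuous parameters, and I will turn this into a positive-measure set of non-identifiable points. First I would fix an arbitrary $\Ab \in \cA_* \cap \cA_{2, 3}^c$. By the definition of $\cA_{2, 3}$, there is a layer $k \in [K]$ with $\rank(\Db_k) \le p_{k-1}$, so $\Db_k$ has a nontrivial kernel; pick a nonzero $\wb = (w_C,\ \wb_\gamma^\top)^\top \in \RR^{1+p_{k-1}}$ with $\Db_k \wb = \zero$, i.e.
\[
w_C + (\ab_{k,i}^\top * \ab_{k,j}^\top)\,\wb_\gamma = 0 \qquad \forall\, (i,j)\in\langle p_k\rangle .
\]
I would first note that $\wb_\gamma=\zero$ would force $w_C=0$ (the first column of $\Db_k$ is all ones), hence $\wb=\zero$; so necessarily $\wb_\gamma\ne\zero$, and the perturbation below genuinely moves the parameter.

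Next, for any base parameter $\bTheta\in\cT_0$ and any scalar $\epsilon$, I would define $\bTheta_\epsilon$ by leaving $\bnu$, all other layers, the off-diagonal entries of $\bGamma_k$, and the discrete matrices untouched, and setting $C_k\mapsto C_k+\epsilon w_C$ together with $\bgamma_k\mapsto \bgamma_k+\epsilon\wb_\gamma$, where $\bgamma_k=\diag(\bGamma_k)$. Because every $\Xb_{k-1}\in\cX_{k-1}$ has an all-ones diagonal, the logit in \eqref{eq:model_entry} depends on $(C_k,\bgamma_k)$ precisely through $C_k+(\ab_{k,i}^\top * \ab_{k,j}^\top)\bgamma_k$, and the kernel identity makes this quantity invariant under the perturbation for every $(i,j)$ and every $\Xb_{k-1}$. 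Hence $\law(\Xb_k\mid\Xb_{k-1},\Ab_k,\bTheta_{k,\epsilon})$ is unchanged; since no other conditional factor in \eqref{eq:model} involves $\bTheta_k$, the marginal law is preserved throughout the hierarchy, giving $\PP(\Xb_K\mid\Ab,\bTheta_\epsilon)=\PP(\Xb_K\mid\Ab,\bTheta)$. For $|\epsilon|<\epsilon_0$ small the perturbed diagonal entries stay positive, so $\bTheta_\epsilon\in\cT_0$, and the continuum $\{(\Ab,\bTheta_\epsilon):|\epsilon|<\epsilon_0\}$ is an uncountable subset of $\marg_{\cA_*\times\cT_0}(\Ab,\bTheta)$ consisting of genuinely distinct parameters (distinct since $\wb_\gamma\ne\zero$).

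To conclude, I would contrast this with the permutation orbit: each of the at most $\prod_{k=0}^{K-1}p_k!$ tuples of permutation matrices $(\Pb_0,\ldots,\Pb_{K-1})$ determines at most one element of $\perm_{\cA_*\times\cT_0}(\Ab,\bTheta)$, so this set is finite. The uncountable family just constructed therefore cannot be contained in it, whence $\perm_{\cA_*\times\cT_0}(\Ab,\bTheta)\subsetneq\marg_{\cA_*\times\cT_0}(\Ab,\bTheta)$ for \emph{every} $\bTheta\in\cT_0$. Thus the non-identifiable set contains the whole slab $(\cA_*\cap\cA_{2,3}^c)\times\cT_0$, and by Tonelli
\[
(\upmu\times\uplambda)\big((\cA_*\cap\cA_{2,3}^c)\times\cT_0\big)
= \upmu(\cA_*\cap\cA_{2,3}^c)\,\uplambda(\cT_0)
\ge \uplambda(\cT_0) > 0 ,
\]
since $\cA_*\cap\cA_{2,3}^c\ne\varnothing$ makes the counting measure at least $1$ and $\cT_0$ is a full-dimensional open region. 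This is exactly the claim.

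The construction is elementary, so I do not expect a hard analytic obstacle; the care points are verifying $\wb_\gamma\ne\zero$ (so the perturbation is effective), confirming $\bTheta_\epsilon$ stays in the open region $\cT_0$ for small $\epsilon$, and the bookkeeping that a single fixed marginal is preserved across all $K+1$ layers rather than only at layer $k$. The one genuinely conceptual step — and the reason $\cA_{2,3}$ is the correct necessary condition — is the recognition that $C_k$ and $\diag(\bGamma_k)$ enter the observed-data likelihood only via $\Db_k$, so a kernel vector of $\Db_k$ is literally a direction along which the model is flat.
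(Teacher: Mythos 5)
Your proposal is correct and follows essentially the same argument as the paper: both exploit the rank deficiency of $\Db_k$ to produce a continuum of parameters (a perturbation of $(C_k,\diag(\bGamma_k))$ along the kernel of $\Db_k$) sharing the same marginal law, contrast this with the finiteness of the permutation orbit, and conclude via the measure of the slab $(\cA_*\cap\cA_{2,3}^c)\times\cT_0$. The only cosmetic difference is that the paper parametrizes the flat directions as the intersection of the null space of $\Db_k$ with a small neighborhood of $\tilde\bgamma_k$, whereas you use a one-parameter segment along a single kernel vector; the substance is identical.
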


Proposition \ref{prop:iden_nece_A23} suggests that a positive measure set of parameter $(\Ab, \bTheta)$ with $\Ab \notin \cA_{2, 3}$ is not identifiable, therefore showing the necessity of restricting $\Ab$ to $\cA_{2, 3}$.
Its proof is provided at the end of this subsection.

We also provide the following proposition, which establishes a condition slightly weaker than $\Ab \in \cA_{2, 2}$ that is necessary for strict and generic model identifiability.
We define the space
\begin{equation}\label{eq:tA22_def}
\tilde\cA_{2, 2}
:=
\Big\{
\Ab \in \cA_0:~
\forall k \in [K],~
\forall \Pb_{k - 1},~
\Ab_k \Pb_{k - 1} \ne \Ab_k
\text{ or }
\Pb_{k - 1} \Ab_{k - 1} \ne \Ab_{k - 1} (k \ge 2)
\Big\}
,
\end{equation}
where each $\Pb_{k - 1} \in \RR^{p_{k - 1} \times p_{k - 1}}$ is an arbitrary permutation matrix.
Note for $k = 1$, there is no matrix $\Ab_0$ and the condition of $\Ab_k \Pb_{k - 1} \ne \Ab_k$ or $\Pb_{k - 1} \Ab_{k - 1} \ne \Ab_{k - 1}$ reduces to just $\Ab_k \Pb_{k - 1} \ne \Ab_k$.
Recalling the definition of $\cA_{2, 2}$ from \eqref{eq:A22_def}, we notice that $\cA_{2, 2} \subseteq \tilde\cA_{2, 2} \subseteq \cA_0$.
For $K = 1$, we further have $\cA_{2, 2} = \tilde\cA_{2, 2}$.

\begin{proposition}\label{prop:iden_nece_tA22}
Let space $\cA_* \subset \cA_0$ satisfy $\cA_* \cap \tilde\cA_{2, 2}^c \ne \varnothing$.
For the model with parameter space $\cA_* \times \cT_0$, we have
$$
(\upmu \times \uplambda)\big( \big\{
(\Ab, \bTheta) \in \cA_* \times \cT_0:~
\perm_{\cA_* \times \cT_0}(\Ab, \bTheta)
\subsetneq
\marg_{\cA_* \times \cT_0}(\Ab, \bTheta)
\big\} \big)
>
0
.
$$
\end{proposition}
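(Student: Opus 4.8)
The plan is to deduce Proposition~\ref{prop:iden_nece_tA22} from the already-proved Proposition~\ref{prop:iden_nece_A23} via the set inclusion $\tilde\cA_{2,2}^c \subseteq \cA_{2,3}^c$, and to accompany this with an explicit one-parameter family of observationally equivalent parameters that pinpoints the mechanism of non-identifiability. First I would unpack the hypothesis: if $\Ab \in \cA_* \cap \tilde\cA_{2,2}^c$, then by \eqref{eq:tA22_def} there is a layer $k \in [K]$ and a non-identity permutation matrix $\Pb_{k-1}$ with $\Ab_k \Pb_{k-1} = \Ab_k$. Because $\Pb_{k-1}$ is non-identity, right-multiplication permutes columns and this equality forces two distinct columns of $\Ab_k$, say the $s$th and $t$th, to coincide, i.e. $A_{k,i,s} = A_{k,i,t}$ for all $i \in [p_k]$. (The companion clause $\Pb_{k-1}\Ab_{k-1}=\Ab_{k-1}$ will play no role in the argument below.)

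Next I would show this places $\Ab$ in $\cA_{2,3}^c$. Recall from \eqref{eq:A23_def} that $\Db_k$ stacks the rows $(1 ~ \ab_{k,i}^\top * \ab_{k,j}^\top)$ over $(i,j) \in \langle p_k \rangle$. Since $A_{k,i,s} = A_{k,i,t}$ for every $i$, the $s$th and $t$th coordinates of each Hadamard product $\ab_{k,i}^\top * \ab_{k,j}^\top$ agree, so the two columns of $\Db_k$ indexed by $s$ and $t$ are identical; hence $\rank(\Db_k) \le p_{k-1} < p_{k-1} + 1$ and $\Ab \in \cA_{2,3}^c$. This proves $\tilde\cA_{2,2}^c \subseteq \cA_{2,3}^c$, so $\cA_* \cap \cA_{2,3}^c \supseteq \cA_* \cap \tilde\cA_{2,2}^c \ne \varnothing$, and Proposition~\ref{prop:iden_nece_A23} immediately delivers the stated positive-measure non-identifiability.

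For insight --- and to obtain a self-contained proof not routed through Proposition~\ref{prop:iden_nece_A23} --- I would exhibit the degeneracy directly. Fix any $\bTheta \in \cT_0$ and, for small real $\theta$, let $\bTheta'_\theta$ be obtained from $\bTheta$ by replacing $\Gamma_{k,s,s}$ with $\Gamma_{k,s,s} + \theta$ and $\Gamma_{k,t,t}$ with $\Gamma_{k,t,t} - \theta$, all other coordinates unchanged. Using $X_{k-1,s,s} = X_{k-1,t,t} = 1$ together with $A_{k,i,s} = A_{k,i,t}$ and $A_{k,j,s} = A_{k,j,t}$, the induced change in each logit $C_k + \ab_{k,i}^\top (\bGamma_k * \Xb_{k-1}) \ab_{k,j}$ equals $\theta (A_{k,i,s} A_{k,j,s} - A_{k,i,t} A_{k,j,t}) = 0$. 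Thus every conditional distribution at layer $k$, and therefore the marginal law of $\Xb_K$, is left unchanged, while $\bTheta'_\theta \in \cT_0$ for $|\theta|$ small enough to preserve positivity of $\bGamma_k$. This yields a continuum contained in $\marg_{\cA_* \times \cT_0}(\Ab, \bTheta)$; since $\perm_{\cA_* \times \cT_0}(\Ab, \bTheta)$ is finite, of cardinality at most $\prod_{\ell = 0}^{K-1} p_\ell!$, we conclude $\perm_{\cA_* \times \cT_0}(\Ab, \bTheta) \subsetneq \marg_{\cA_* \times \cT_0}(\Ab, \bTheta)$ for every $\bTheta$. Hence the entire slice $\{\Ab\} \times \cT_0$ consists of non-identifiable parameters, which has positive $\upmu \times \uplambda$ measure.

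The conceptual crux, and the thing to get right, is to recognize that the transformation the symmetry $\Pb_{k-1}$ most naturally suggests --- conjugating $\bGamma_k$ to $\Pb_{k-1} \bGamma_k \Pb_{k-1}^\top$, i.e. swapping the $s$- and $t$-indexed rows and columns --- is \emph{not} a witness to non-identifiability: it coincides with the node-relabeling action and therefore already lies in $\perm_{\cA_* \times \cT_0}(\Ab, \bTheta)$. The genuine non-identifiability must instead come from a continuous family, which is precisely the diagonal degeneracy above, equivalently the rank drop of $\Db_k$. The remaining routine points are the finite-versus-continuum comparison used to pass from equal marginals to $\perm \subsetneq \marg$, and verifying that the perturbed parameter stays in $\cT_0$; both are straightforward.
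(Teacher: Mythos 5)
Your proof is correct, and it reaches the conclusion by a genuinely different and shorter route than the paper's. Both halves of your argument are really one mechanism: identical columns $s,t$ of $\Ab_k$ force the columns of $\Db_k$ indexed by $s$ and $t$ to coincide, so $\rank(\Db_k) \le p_{k-1}$ and hence $\tilde\cA_{2,2}^c \subseteq \cA_{2,3}^c$, while your explicit continuum $(\Gamma_{k,s,s}, \Gamma_{k,t,t}) \mapsto (\Gamma_{k,s,s}+\theta,\, \Gamma_{k,t,t}-\theta)$ is exactly a direction in the null space of $\Db_k$, i.e.\ the same degeneracy exploited in the proof of Proposition \ref{prop:iden_nece_A23}. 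The paper's own proof does not go through the rank condition at all (except to dispose of the corner case $p_{k-1}=2$): it splits into case (C1) ($k=1$) and case (C2) ($k \ge 2$) and builds an \emph{off-diagonal} swap witness, exchanging $\Gamma_{k,r,s}$ with $\Gamma_{k,r,t}$ for a third index $r$ (plus a shuffle of $\bnu$ in case (C1)), certified to lie outside $\perm_{\cA_*\times\cT_0}(\Ab,\bTheta)$ by restricting to $\bTheta \in \cT_0 \cap \cT_1^c$. Your route buys brevity; it needs no third index (so no $p_{k-1}\ge 3$ restriction), no exclusion of $\cT_1$, and no use of the companion clause on $\Ab_{k-1}$; and it exposes a structural fact the paper never states: $\cA_{2,3} \subseteq \cA_{2,2} \subseteq \tilde\cA_{2,2}$, so the necessity asserted here is formally subsumed by Proposition \ref{prop:iden_nece_A23} (and $\cA_{2,2}$ is redundant in the definition of $\cA_2$). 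What the paper's longer construction buys is a qualitatively different witness: the off-diagonal swap genuinely uses both clauses defining $\tilde\cA_{2,2}^c$ --- the identical rows of $\Ab_{k-1}$ are what make $\law(\Xb_{k-1})$ invariant under the corresponding entry swap --- showing that the non-identifiability tied to $\tilde\cA_{2,2}$ is not only of the diagonal null-space type, which is relevant to the paper's discussion of whether $\tilde\cA_{2,2}$ could replace $\cA_{2,2}$ in sufficient conditions. Finally, your aside that the full conjugation $\Pb_{k-1}\bGamma_k\Pb_{k-1}^\top$ is a node relabeling, hence already in $\perm_{\cA_*\times\cT_0}(\Ab,\bTheta)$, is correct; note only that for $k \ge 2$ this uses the companion clause $\Pb_{k-1}\Ab_{k-1} = \Ab_{k-1}$ to keep $\Ab_{k-1}$ fixed under the relabeling, which is indeed granted by the definition of $\tilde\cA_{2,2}^c$.
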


Proposition \ref{prop:iden_nece_tA22} suggests that a positive measure set of parameters $(\Ab, \bTheta)$ with $\Ab \notin \tilde\cA_{2, 2}$ is not identifiable, therefore showing the necessity of restricting $\Ab$ to $\tilde\cA_{2, 2}$.
Comparing the definitions of $\cA_{2, 2}$ and $\tilde\cA_{2, 2}$ in \eqref{eq:A22_def} and \eqref{eq:tA22_def}, conditions $\Ab \in \cA_{2, 2}$ and $\Ab \in \tilde\cA_{2, 2}$ have the same restriction on $\Ab_1$ of having distinct columns.
However, for $K \ge 2$, while the condition $\Ab \in \cA_{2, 2}$ still requires each $\Ab_k (k \ge 2)$ to have distinct columns, the condition $\Ab \in \tilde\cA_{2, 2}$ only requires for each pair of indices $(s, t) \in \langle p_{k - 1} \rangle (k \ge 2)$ that either the $s$th and $t$th column of $\Ab_k$ are distinct or the $s$th and $t$th row of $\Ab_{k - 1}$ are distinct.
Intuitively, the weaker requirement for layers $k \ge 2$ in $\tilde\cA_{2, 2}$ is due to the difference that $\Xb_{k - 1} (k = 1)$ is modeled by an unstructured marginal distribution with parameter $\bnu$, whereas $\Xb_{k - 1} (k \ge 2)$ is modeled by a structured marginal distribution through our model formulations \eqref{eq:model_entry} and \eqref{eq:model}.
While it is possible that $\tilde\cA_{2, 2}$ could replace $\cA_{2, 2}$ as part of the sufficient conditions for model identifiability, this is hard to establish within our current proof framework.

We now provide the proofs of Propositions \ref{prop:iden_nece_A23} and \ref{prop:iden_nece_tA22}.

\begin{proof}[Proof of Proposition \ref{prop:iden_nece_A23}]
For any $(\Ab, \bTheta) \in \cA_0 \times \cT_0$, there are only finitely many ways of permuting the nodes at each layer of our Bayesian network.
Specifically, we have
\begin{equation}\label{eq:iden_nece_A23_perm_fi}
\left|
\perm_{\cA_* \times \cT_0}(\Ab, \bTheta)
\right|
\le
\left|
\perm_{\cA_0 \times \cT_0}(\Ab, \bTheta)
\right|
\le
\prod_{k = 1}^K (p_{k - 1}!)
,\quad
\forall (\Ab, \bTheta) \in \cA_0 \times \cT_0
.
\end{equation}
Since $\cA_* \cap \cA_{2, 3}^c \ne \varnothing$, we can find $\Ab \in \cA_*$ such that $\Ab \notin \cA_{2, 3}$.
We focus on this choice of $\Ab$ for the remaining proof.

We recall from \eqref{eq:A23_def} that $\Db_k$ denotes the $\frac{p_k (p_k - 1)}{2} \times (1 + p_{k - 1})$ matrix obtained by stacking the rows $\big( 1 ~ \ab_{k, i}^\top * \ab_{k, j}^\top \big)$ for $(i, j) \in \langle p_k \rangle$ together.
By definition of $\cA_{2, 3}$, there exists some $k \in [K]$ such that $\rank(\Db_k) < p_{k - 1} + 1$.
We denote the $p_{k - 1}$-dimensional vector consisting of the diagonal entries of $\bGamma_k$ by $\bgamma_k$ and define the extended $(1 + p_{k - 1})$-dimensional vector $\tilde\bgamma_k^\top := \big( C_k ~ \bgamma_k^\top \big)$.
For the rows of the matrix $\Db_k$, we denote them by the vectors $\db_{k, (i, j)}^\top := \big( 1 ~ \ab_{k, i}^\top * \ab_{k, j}^\top \big)$ for $(i, j) \in \langle p_k \rangle$.

From our model formulation \eqref{eq:model_entry}, we can rewrite the entrywise conditional distribution of $X_{k, i, j}$ given $\Xb_{k - 1}, \Ab_k, \bTheta_k$ as
$$
\PP(X_{k, i, j} | \Xb_{k - 1}, \Ab_k, \bTheta_k)
=
\frac{\exp\left( X_{k, i, j} \left(
\db_{k, (i, j)}^\top \tilde\bgamma_k + \left\langle
\offdiag(\bGamma_k) * \Xb_{k - 1}
,~
\ab_{k, i} \ab_{k, j}^\top
\right\rangle \right) \right)}{1 + \exp\left(
\db_{k, (i, j)}^\top \tilde\bgamma_k + \left\langle
\offdiag(\bGamma_k) * \Xb_{k - 1}
,~
\ab_{k, i} \ab_{k, j}^\top
\right\rangle \right)}
.
$$
We notice that the conditional distribution of $X_{k, i, j} | \Xb_{k - 1}, \Ab_k, \bTheta_k$ has dependency on $C_k$ and the diagonal entries of $\bGamma_k$ solely through the term $\db_{k, (i, j)}^\top \tilde\bgamma_k$.
Let $\bTheta' \in \cT_0$ be any parameter identical to $\bTheta$ except for $C_k$ and the diagonal entries of $\bGamma_k$, i.e. satisfying
\begin{equation}\label{eq:nece_iden_except_k}
\bnu'
=
\bnu
,\quad
\offdiag(\bGamma_k')
=
\offdiag(\bGamma_k)
,\quad
\forall \ell \in [K], \ell \ne k,\quad
C_\ell'
=
C_\ell
,\quad
\bGamma_\ell'
=
\bGamma_\ell
.
\end{equation}
For such a parameter $\bTheta'$, we automatically have
$$
\law(\Xb_\ell | \Xb_{\ell - 1}, \Ab_\ell, \bTheta_\ell')
=
\law(\Xb_\ell | \Xb_{\ell - 1}, \Ab_\ell, \bTheta_\ell)
$$
for all layers $\ell \in [K]$ other than $k$.
If $\bTheta_k'$ further satisfies for some $(i, j) \in \langle p_k \rangle$ that $\db_{k, (i, j)}^\top \tilde\bgamma_k' = \db_{k, (i, j)}^\top \tilde\bgamma_k$, then we have
$$
\law(X_{k, i, j} | \Xb_{k - 1}, \Ab_k, \bTheta_k')
=
\law(X_{k, i, j} | \Xb_{k - 1}, \Ab_k, \bTheta_k)
.
$$
Therefore, if parameter $\bTheta' \in \cT_0$ satisfies \eqref{eq:nece_iden_except_k} and the condition $\Db_k \tilde\bgamma_k' = \Db_k \tilde\bgamma_k$, we will have
$$
\law(\Xb_\ell | \Xb_{\ell - 1}, \Ab_\ell, \bTheta_\ell')
=
\law(\Xb_\ell | \Xb_{\ell - 1}, \Ab_\ell, \bTheta_\ell)
,\quad
\forall \ell \in [K]
,
$$
which suggests $(\Ab, \bTheta') \in \marg_{\cA_0 \times \cT_0}(\Ab, \bTheta)$.

Since the matrix $\Db_k$ does not have full column rank for $\Ab \notin \cA_{2, 3}$, its null space is at least one-dimensional.
Letting $\epsilon = \frac12 \min\{\Gamma_{k, 1, 1}, \ldots, \Gamma_{k, p_{k - 1}, p_{k - 1}}\}$, then the intersection of the null space of $\Db_k$ and the $\epsilon$-neighborhood of $\tilde\bgamma_k$, denoted by
$$
\cO_{k, \epsilon}
:=
\text{Null-Space}\left(
\tilde\Db_k
\right) \cap \left\{
\tilde\bgamma_k' \in \RR^{(1 + p_{k - 1}) \times 1}:~
\|\tilde\bgamma_k' - \tilde\bgamma_k\|_1 < \epsilon
\right\}
,
$$
has infinitely many elements.
We notice that any parameter $\bTheta_k'$ satisfying \eqref{eq:nece_iden_except_k} and $\tilde\bgamma_k' \in \cO_{k, \epsilon}$ belongs to the space $\cT_0$.
Therefore, $\marg_{\cA_* \times \cT_0}(\Ab, \bTheta)$ also contains infinitely many elements.

Combined with $\perm_{\cA_* \times \cT_0}(\Ab, \bTheta)$ having only finitely many elements due to \eqref{eq:iden_nece_A23_perm_fi}, we know
$$
\perm_{\cA_* \times \cT_0}(\Ab, \bTheta)
\subsetneq
\marg_{\cA_* \times \cT_0}(\Ab, \bTheta)
,\quad
\forall \Ab \in \cA_* \cap \cA_{2, 3}^c, \bTheta \in \cT_0
.
$$
This implies
\begin{align*}
&\quad~
(\upmu \times \uplambda)\big( \big\{
(\Ab, \bTheta) \in \cA_* \times \cT_0:~
\perm_{\cA_* \times \cT_0}(\Ab, \bTheta)
\subsetneq
\marg_{\cA_* \times \cT_0}(\Ab, \bTheta)
\big\} \big)
\\&\ge
(\upmu \times \uplambda)((\cA_* \cap \cA_{2, 3}^c) \times \cT_0)
\ge
\uplambda(\cT_0)
>
0
.
\end{align*}
\end{proof}

\begin{proof}[Proof of Proposition \ref{prop:iden_nece_tA22}]
Recalling the definition of space $\cT_1$ from \eqref{eq:T1_def}, we consider any parameter $\bTheta \in \cT_0 \cap \cT_1^c$.
Since $\cA_* \cap \tilde\cA_{2, 2}^c \ne \varnothing$, we can find $\Ab \in \cA_*$ such that $\Ab \notin \tilde\cA_{2, 2}$.
By definition of $\tilde\cA_{2, 2}$ from \eqref{eq:tA22_def}, we know there exists some layer $k \in [K]$ and two columns $s, t \in [p_{k - 1}]$ such that one of the following two conditions hold:
\begin{enumerate}
\item[(C1)]
For layer $k = 1$, the $s$th and $t$th columns in $\Ab_1$ are identical;
\item[(C2)]
For a layer $k \ge 2$, the $s$th and $t$th columns in $\Ab_k$ are identical and the $s$th and $t$th rows in $\Ab_{k - 1}$ are identical.
\end{enumerate}
We further notice that if $p_{k - 1} = 2$, then each of the vectors $\ab_{k, i} * \ab_{k, j}$ for $(i, j) \in \langle p_k \rangle$ can only be either the all-one vector $\one$ or the all-zero vector $\zero$, due to the two identical columns in $\Ab_k$.
This suggests that the matrix $\Db_k$ from \eqref{eq:A23_def} satisfies $\rank(\Db_k) \le 2 < p_{k - 1} + 1$, implying that $\Ab \notin \cA_{2, 3}$.
This is already covered by Proposition \ref{prop:iden_nece_A23}, so we restrict attention to $p_{k - 1} \ge 3$ in the following.
We now separately consider the cases of (C1) and (C2).

Suppose (C1) is true and consider layer $k = 1$.
Since $p_0 \ge 3$, we can find a third index $r \in [p_0] \cap \{s, t\}^c$.
We now construct a new $\bTheta' = \{\bnu'\} \cup \{C_\ell', \bGamma_\ell'\}_{\ell = 1}^K \in \cT_0$ based on $\bTheta$.
For all layers $\ell \ge 2$, we use the same parameters as in $\bTheta$, i.e.
$$
C_\ell'
:=
C_\ell
,\quad
\bGamma_\ell'
:=
\bGamma_\ell
,\quad
\forall \ell \in [2, K]
.
$$
For the layer $k = 1$, we use the same $C_1$ and swap the $(r, s)$th entry of $\bGamma_1$ with its $(r, t)$th entry, i.e.
$$
C_1'
:=
C_1
,\quad
\bGamma_1'
:=
\bGamma_1 + (\Gamma_{1, r, t} - \Gamma_{1, r, s})(\Eb_{r, s} - \Eb_{r, t})
,
$$
where $\Eb_{r, s} := \eb_r \eb_s^\top + \eb_s \eb_r^\top$ denotes the $p_0 \times p_0$ symmetric indicator matrix for entry $(r, s)$.
We let the probability vector $\bnu'$ be a shuffle of entries in $\bnu$ given by
$$
\nu_{\sP_0(\Xb_0)}'
:=
\nu_{\Xb_0}
,\quad
\forall \Xb_0 \in \cX_0
,
$$
where $\sP_0$ denotes the permutation on $\cX_0$ such that $\sP_0(\Xb_0)$ is obtained from $\Xb_0$ by exchanging its $(r, s)$th and $(r, t)$th entries (and also its $(s, r)$th and $(t, r)$th entries by symmetry), i.e.
$$
\sP_0(\Xb_0)
:=
\Xb_0 + (X_{0, r, t} - X_{0, r, s}) (\Eb_{r, s} - \Eb_{r, t})
,\quad
\forall \Xb_0 \in \cX_0
.
$$
Note that if we view $\sP_0$ as a general function on the space of symmetric matrices, then we also have $\bGamma_1' = \sP_0(\bGamma_1)$.
Through the above definitions, we have $\bTheta' \notin \cT_1$ as well, which suggests that any upper-triangular off-diagonal entries of $\bGamma_1$ are distinct.
Therefore, $\bTheta'$ can not be obtained from $\bTheta$ through permutation of nodes at each layer in our Bayesian network, which implies
\begin{equation}\label{eq:iden_nece_tA22_perm_C1}
(\Ab, \bTheta')
\notin
\perm_{\cA_* \times \cT_0}(\Ab, \bTheta)
,\quad
\forall \bTheta \in \cT_0 \cap \cT_1^c
.
\end{equation}
Since the $s$th and $t$th columns of $\Ab_1$ are identical, for any $(i, j) \in \langle p_1 \rangle$ we have
\begin{align*}
&\quad~
\left(
C_1' + \ab_{1, i}^\top (\sP_0(\Xb_0) * \bGamma_1') \ab_{1, j}
\right)
-
\left(
C_1 + \ab_{1, i}^\top (\Xb_0 * \bGamma_1) \ab_{1, j}
\right)
\\&=
\left\langle
\big(
\sP_0(\Xb_0 * \bGamma_1) - \Xb_0 * \bGamma_1
\big)
,~
\ab_{1, i} \ab_{1, j}^\top
\right\rangle
\\&=
\left(
(\Xb_0 * \bGamma_1)_{r, t} - (\Xb_0 * \bGamma_1)_{r, s}
\right) \left(
\offdiag\big( \ab_{1, i} \ab_{1, j}^\top \big)_{r, s}
-
\offdiag\big( \ab_{1, i} \ab_{1, j}^\top \big)_{r, t}
\right)
\\&=
(\Gamma_{1, r, t} X_{0, r, t} - \Gamma_{1, r, s} X_{0, r, s}) (A_{1, i, r} A_{1, j, s} + A_{1, j, r} A_{1, i, s} - A_{1, i, r} A_{1, j, t} - A_{1, j, r} A_{1, i, t})
\\&=
0
,
\end{align*}
which suggests the conditional distributions
\begin{align*}
\PP(X_{1, i, j} | \sP(\Xb_0), \Ab_1, \bTheta_1')
&=
\frac{\exp\left( X_{1, i, j} \left(
C_1' + \ab_{1, i}^\top (\sP_0(\Xb_0) * \bGamma_1') \ab_{1, j}
\right) \right)}{1 + \exp\left(
C_1' + \ab_{1, i}^\top (\sP_0(\Xb_0) * \bGamma_1') \ab_{1, j}
\right)}
\\&=
\frac{\exp\left( X_{1, i, j} \left(
C_1 + \ab_{1, i}^\top (\Xb_0 * \bGamma_1) \ab_{1, j}
\right) \right)}{1 + \exp\left(
C_1 + \ab_{1, i}^\top (\Xb_0 * \bGamma_1) \ab_{1, j}
\right)}
\\&=
\PP(X_{1, i, j} | \Xb_0, \Ab_1, \bTheta_1)
.
\end{align*}
Since $\PP(\sP_0(\Xb_0) | \bnu') = \PP(\Xb_0 | \bnu)$ is implied by the definition of $\bnu'$, taking summations we have
\begin{align*}
\PP(\Xb_1 | \Ab, \bTheta')
&=
\sum_{\Xb_0 \in \cX_0} \left(
\PP(\sP(\Xb_0) | \bnu') \prod_{(i, j) \in \langle p_1 \rangle} \PP(X_{1, i, j} | \sP(\Xb_0), \Ab_1, \bTheta_1')
\right)
\\&=
\sum_{\Xb_0 \in \cX_0} \left(
\PP(\Xb_0 | \bnu) \prod_{(i, j) \in \langle p_1 \rangle} \PP(X_{1, i, j} | \Xb_0, \Ab_1, \bTheta_1)
\right)
\\&=
\PP(\Xb_1 | \Ab, \bTheta)
.
\end{align*}
Since $\bTheta'$ and $\bTheta$ agree on all parameters for layer $\ell \ge 2$, the marginal distributions of $\Xb_K$ therefore satisfy $\law(\Xb_K | \Ab, \bTheta') = \law(\Xb_K | \Ab, \bTheta)$, which implies
$$
(\Ab, \bTheta')
\in
\marg_{\cA_* \times \cT_0}(\Ab, \bTheta)
,\quad
\forall \bTheta \in \cT_0 \cap \cT_1^c
.
$$
Combined with \eqref{eq:iden_nece_tA22_perm_C1}, we know for all $\Ab$ satisfying (C1) with $p_0 \ge 3$ and $\bTheta \in \cT_0 \cap \cT_1^c$, we have $\perm_{\cA_* \times \cT_0}(\Ab, \bTheta) \subsetneq \marg_{\cA_* \times \cT_0}(\Ab, \bTheta)$.

Now suppose (C2) is true for some layer $k \ge 2$.
Since $p_{k - 1} \ge 3$, we can find a third index $r \in [p_{k - 1}] \cap \{s, t\}^c$.
Contrary to the case of (C1), we now construct a new $\bTheta' = \{\bnu'\} \cup \{C_\ell', \bGamma_\ell'\}_{\ell = 1}^K \in \cT_0$ based on $\bTheta$ such that only the $(r, s)$th entry and $(r, t)$th entry in $\bGamma_k$ of layer $k$ are swapped (also swapping the entries $(s, r)$ with $(t, r)$ in $\bGamma_k$ by symmetry), while all other parameters including $\bnu$, $C_\ell (\ell \in [K])$, and $\bGamma_\ell (\ell \in [K], \ell \ne k)$ are kept the same.
Letting $\sP_{k - 1}$ be a function defined on the space of $p_{k - 1} \times p_{k - 1}$ symmetric matrices given by
$$
\sP_{k - 1}(\Bb)
:=
\Bb + (B_{r, t} - B_{r, s}) (\Eb_{r, s} - \Eb_{r, t})
,
$$
then our definition of $\bTheta'$ can be formally written as
$$
\bnu'
:=
\bnu
,\quad
C_\ell'
:=
C_\ell
\quad
\forall \ell \in [K]
,\quad
\bGamma_k'
:=
\sP_{k - 1}(\bGamma_k)
,\quad
\bGamma_\ell'
:=
\bGamma_\ell
\quad
\forall \ell \in [K] \cap \{k\}^c
.
$$
Since $\bTheta, \bTheta' \in \cT_0 \cap \cT_1^c$, we know $\bTheta'$ can not be obtained from $\bTheta$ through permutations, suggesting
\begin{equation}\label{eq:iden_nece_tA22_perm_C2}
(\Ab, \bTheta')
\notin
\perm_{\cA_* \times \cT_0}(\Ab, \bTheta)
,\quad
\forall \bTheta \in \cT_0 \cap \cT_1^c
.
\end{equation}
Since the $s$th and $t$th rows in $\Ab_{k - 1}$ are identical, we have
\begin{align*}
\PP\left(
X_{k - 1, r, s} = 1
~\big|~
\Xb_{k - 2}, \Ab_{k - 1}, \bTheta_{k - 1}'
\right)
&=
\frac{\exp\left(
C_{k - 1}' + \ab_{k - 1, r}^\top (\Xb_{k - 2} * \bGamma_{k - 2}') \ab_{k - 1, s}
\right)}{1 + \exp\left(
C_{k - 1}' + \ab_{k - 1, r}^\top (\Xb_{k - 2} * \bGamma_{k - 2}') \ab_{k - 1, s}
\right)}
\\&=
\frac{\exp\left(
C_{k - 1} + \ab_{k - 1, r}^\top (\Xb_{k - 2} * \bGamma_{k - 2}) \ab_{k - 1, t}
\right)}{1 + \exp\left(
C_{k - 1} + \ab_{k - 1, r}^\top (\Xb_{k - 2} * \bGamma_{k - 2}) \ab_{k - 1, t}
\right)}
\\&=
\PP\left(
X_{k - 1, r, t} = 1
~\big|~
\Xb_{k - 2}, \Ab_{k - 1}, \bTheta_{k - 1}
\right)
.
\end{align*}
Since $\bTheta'$ and $\bTheta$ agree for all layers $\ell \le k - 1$, we have marginal distributions $\law(\Xb_{k - 2} | \Ab, \bTheta') = \law(\Xb_{k - 2} | \Ab, \bTheta)$ and conditional distributions $\law(\Xb_{k - 1} | \Xb_{k - 2}, \Ab, \bTheta') = \law(\Xb_{k - 1} | \Xb_{k - 2}, \Ab, \bTheta)$.
Along with the above, this further suggests
$$
\PP(\sP_{k - 1}(\Xb_{k - 1}) | \Ab, \bTheta')
=
\PP(\Xb_{k - 1} | \Ab, \bTheta)
,\quad
\forall \Xb_{k - 1} \in \cX_{k - 1}
.
$$
Since the $s$th and $t$th columns in $\Ab_k$ are identical, using the same arguments as in the case of (C1), we can establish
$$
\PP(\Xb_k | \Ab, \bTheta')
=
\PP(\Xb_k | \Ab, \bTheta)
,\quad
\forall \Xb_k \in \cX_k
.
$$
Since $\bTheta'$ and $\bTheta$ agree for all layers $\ell \ge k + 1$, we can obtain $\law(\Xb_K | \Ab, \bTheta') = \law(\Xb_K | \Ab, \bTheta)$, which implies
$$
(\Ab, \bTheta')
\in
\marg_{\cA_* \times \cT_0}(\Ab, \bTheta)
,\quad
\forall \bTheta \in \cT_0 \cap \cT_1^c
.
$$
Combined with \eqref{eq:iden_nece_tA22_perm_C1}, we know for all $\Ab$ satisfying (C1) with $p_{k - 1} \ge 3$ and $\bTheta \in \cT_0 \cap \cT_1^c$, we have $\perm_{\cA_* \times \cT_0}(\Ab, \bTheta) \subsetneq \marg_{\cA_* \times \cT_0}(\Ab, \bTheta)$.

Putting together the cases of (C1) and (C2), we obtain
\begin{align*}
&\quad~
(\upmu \times \uplambda)\big( \big\{
(\Ab, \bTheta) \in \cA_* \times \cT_0:~
\perm_{\cA_* \times \cT_0}(\Ab, \bTheta)
\subsetneq
\marg_{\cA_* \times \cT_0}(\Ab, \bTheta)
\big\} \big)
\\&\ge
(\upmu \times \uplambda)((\cA_* \cap \tilde\cA_{2, 2}^c) \times (\cT_0 \cap \cT_1^c))
\\&\ge
\uplambda(\cT_0) - \uplambda(\cT_1)
>
0
.
\end{align*}

\end{proof}

\section{Matrices of Class \texorpdfstring{$\cM_d$}{}}\label{sec:class_M}

As introduced in Section \ref{sec:theo}, the binary square matrices from class $\cM_d$ serve as the appropriate extension of identity submatrices in each connection matrix $\Ab_k$ when relaxing the conditions of strict identifiability \eqref{eq:A1_def} to generic identifiability \eqref{eq:A2_def}.
In this section, we provide the intuition behind its Definition \ref{defi:M} in Section \ref{ssec:M_int}, discuss its connections with commonly used linear algebra properties in Section \ref{ssec:M_linalg}, and present numerous examples of matrices that belong to and do not belong to $\cM_d$ in Section \ref{ssec:M_eg}.
The proof of Proposition \ref{prop:simp_A2} is provided Section \ref{ssec:proof_prop_simp_A2}, which is largely based on properties of matrices in class $\cM_d$.

\subsection{Interpretations and Intuitions}\label{ssec:M_int}

We start by interpreting the definition of class $\cM_d$ through the space $\cA_{2, 1}$ of adjacency matrices.
For any adjacency matrix $\Ab \in \cA_{2, 1}$, the condition (i) in Definition \ref{defi:M} ensures that each community of our Bayesian network (node in the upper/deeper layer) contains at least two nodes in the lower/shallower layer.
This is a natural relaxation from $\Ab \in \cA_1$, where each $\Mb_k, \Mb_k' = \Ib_{p_{k - 1}}$ in \eqref{eq:A21_def} and each community contains at least two pure nodes.
The existence of nodes or pure nodes belonging to each community is a standard requirement for identifiability of community detection models, see e.g. \citet{jin2015fast}.

We now delve into the intuitive understanding of condition (ii) in Definition \ref{defi:M}.
Without loss of generality, we let $\Pb_k = \Ib_{p_k}$ in \eqref{eq:A21_def} and denote the $p_{k - 1} \times p_{k - 1}$ leading principal submatrix of $\Xb_k$ by $\tilde\Xb_k$.
For each $\Xb_{k - 1}, \tilde\Xb_k \in \cX_{k - 1}$, we notice that the likelihood of $\tilde\Xb_k | \Xb_{k - 1}, \Ab_k, \bTheta_k$ can be factorized as
\begin{align*}
\PP(\tilde\Xb_k | \Xb_{k - 1}, \Ab_k, \bTheta_k)
&=
\prod_{1 \le i < j \le p_{k - 1}} \frac{\exp\left( X_{k, i, j} \left(
C_k + \ab_{k, i}^\top (\bGamma_k * \Xb_{k - 1}) \ab_{k, j}
\right) \right)}{1 + \exp\left(
C_k + \ab_{k, i}^\top (\bGamma_k * \Xb_{k - 1}) \ab_{k, j}
\right)}
\\&=
\prod_{1 \le i < j \le p_{k - 1}} \frac{\exp\left( X_{k, i, j} \left(
C_k + \frac12 \left\langle
\bGamma_k * \Xb_{k - 1}
,~
(\ab_{k, j} \ab_{k, i}^\top + \ab_{k, i} \ab_{k, j}^\top)
\right\rangle \right) \right)}{1 + \exp\left(
C_k + \ab_{k, i}^\top (\bGamma_k * \Xb_{k - 1}) \ab_{k, j}
\right)}
\\&=
\exp\left(
\sum_{1 \le i < j \le p_{k - 1}} X_{k, i, j} \left(
C_k + \tr\left(
\diag(\ab_{k, i} \ab_{k, j}^\top) \bGamma_k
\right) \right) \right)
\\&\quad \times
\exp\left( \frac12 \left\langle
\bGamma_k * \Xb_{k - 1}
,
\sum_{(i, j) \in \cS(\tilde\Xb_k)} \offdiag(\ab_{k, i} \ab_{k, j}^\top + \ab_{k, j} \ab_{k, i}^\top)
\right\rangle \right)
\\&\quad \times
\prod_{1 \le i < j \le p_{k - 1}} \left( 1 + \exp\left(
C_k + \ab_{k, i}^\top (\bGamma_k * \Xb_{k - 1}) \ab_{k, j}
\right) \right)^{-1}
\\&:=
\kT_1(\tilde\Xb_k) \cdot \kT_2(\Xb_{k - 1}, \tilde\Xb_k) \cdot \kT_3(\Xb_{k - 1})
,
\end{align*}
where $\cS(\tilde\Xb_k) \in 2^{\langle p_{k - 1} \rangle}$ denotes the collection of all index pairs $(i, j) \in \langle p_{k - 1} \rangle$ with $X_{k, i, j} = 1$, i.e.
$$
\cS(\tilde\Xb_k)
:=
\big\{
(i, j) \in \langle p_{k - 1} \rangle:~
X_{k, i, j} = 1
\big\}
.
$$
We consider the $|\cX_{k - 1}| \times |\cX_{k - 1}|$ probability matrix $\tilde\bLambda_k$, with each column corresponding to a distinct $\Xb_{k - 1} \in \cX_{k - 1}$, each column corresponding to a distinct $\tilde\Xb_k \in \cX_{k - 1}$, and each entry corresponding to the probability $\PP(\tilde\Xb_k | \Xb_{k - 1}, \Ab_k, \bTheta_k)$.
We similarly define a matrix $\tilde\bDelta_k$ with each column corresponding to a distinct $\Xb_{k - 1} \in \cX_{k - 1}$, each column corresponding to a distinct $\tilde\Xb_k \in \cX_{k - 1}$, and each entry corresponding to the term $\kT_2(\Xb_{k - 1}, \tilde\Xb_k)$.
For instance, let $\sX$ be a bijection from $\cX_{k - 1}$ to $[|\cX_{k - 1}|]$, then we can let the entries
$$
(\tilde\bLambda_k)_{\sX(\tilde\Xb_k), \sX(\Xb_{k - 1})}
:=
\PP(\tilde\Xb_k | \Xb_{k - 1}, \Ab_k, \bTheta_k)
,\quad
(\tilde\bDelta_k)_{\sX(\tilde\Xb_k), \sX(\Xb_{k - 1})}
:=
\kT_2(\Xb_{k - 1}, \tilde\Xb_k)
.
$$
To apply Kruskal Theorem \ref{theo:kruskal}, it is desirable to establish $\rank_K(\tilde\bLambda_k) = |\cX_{k - 1}|$, that is, $\tilde\bLambda_k$ being invertible.
Since the term $\kT_1(\tilde\Xb_k)$ does not depend on the $\Xb_{k - 1}$ corresponding to each column and the term $\kT_3(\Xb_{k - 1})$ does not depend on the $\tilde\Xb_k$ corresponding to each row, we find that $\tilde\bLambda_k$ is invertible if and only if $\tilde\bDelta_k$ is invertible.
We further notice that the matrix $\tilde\bDelta_k$ has distinct rows if and only if the submatrix $\Mb_k$ of $\Ab_k$ in \eqref{eq:A21_def} satisfies the condition (ii) in Definition \ref{defi:M}.
While having distinct rows is in general weaker than being invertible, it allows us to establish invertibility under at least one specific choice of continuous parameters $C_k, \bGamma_k$.
This is enough for proving generic identifiability using the theory of holomorphic function and its zero sets.
We refer to Section \ref{sec:iden} for the details.

\subsection{Connections with Other Linear Algebra Properties}\label{ssec:M_linalg}

We now discuss the connections between binary square matrices in class $\cM_d$ and other linear algebra properties, including distinct rows, distinct columns, symmetry, invertibility, rank, and Kruskal rank.
For $d \le 2$, since the set $2^{\langle d \rangle}$ contains at most one pair of indices, the condition (ii) in Definition \ref{defi:M} holds automatically.
We restrict attention to $d \ge 3$ in the following.

\paragraph{Distinct columns}
Any binary square matrix $\Mb \in \cM_d$ must have distinct columns.

\begin{proof}
We prove by contradiction.
Let $s, t \in [d]$ be the indices of two identical columns $\mb_s, \mb_t$ in $\Mb \in \cM_d$ and let $\ell \in [d]$ be a distinct third index.
Then for the two distinct subsets $\cS_1, \cS_2 \in 2^{\langle d \rangle}$ defined as $\cS_1 := \{(s, \ell)\}$, $\cS_2 := \{(t, \ell)\}$, we have
$$
\offdiag\Big(
\sum_{(i, j) \in \cS_1} (\mb_i \mb_j^\top + \mb_j \mb_i^\top)
\Big)
=
\offdiag\Big(
\sum_{(i, j) \in \cS_2} (\mb_i \mb_j^\top + \mb_j \mb_i^\top)
\Big)
,
$$
contradicting the condition (ii) of class $\cM_d$ in Definition \ref{defi:M}.
\end{proof}

Interestingly, the property of having distinct columns is also not too much weaker than the condition (ii) of class $\cM_d$.
For binary square matrices with an all-ones diagonal and distinct columns, a large proportion of them belong to class $\cM_d$.
For $d = 3, 4, 5$, the proportions are approximately 53\%, 88\%, and 86\%.

\paragraph{Distinct rows}
Neither the conditions of class $\cM_d$ nor the property of having distinct rows leads to the other.
The following are four examples of binary square matrices, where only $\Mb_1, \Mb_2$ belong to class $\cM_d$ and only $\Mb_1, \Mb_3$ have distinct rows.
$$
\Mb_1
:=
\left(\begin{matrix}
1 & 0 & 0 \\
0 & 1 & 0 \\
0 & 0 & 1
\end{matrix}\right)
,\quad
\Mb_2
:=
\left(\begin{matrix}
1 & 1 & 0 & 0 & 1 \\
0 & 1 & 1 & 1 & 0 \\
1 & 0 & 1 & 0 & 0 \\
1 & 0 & 1 & 1 & 0 \\
1 & 1 & 0 & 0 & 1
\end{matrix}\right)
,\quad
\Mb_3
:=
\left(\begin{matrix}
1 & 1 & 0 \\
0 & 1 & 1 \\
1 & 0 & 1
\end{matrix}\right)
,\quad
\Mb_4
:=
\left(\begin{matrix}
1 & 1 & 0 \\
1 & 1 & 0 \\
0 & 0 & 1
\end{matrix}\right)
.
$$
Specifically for $d = 3$ and $4$, we find through exhaustive enumeration that all matrices in $\cM_3$ and $\cM_4$ have distinct rows.
However, this no longer holds for $d \ge 5$, with $\Mb_2$ being a counterexample.

\paragraph{Symmetry}
Neither the conditions of class $\cM_d$ nor the property of symmetry leads to the other.
The following are four examples of binary square matrices, where only $\Mb_1, \Mb_2$ belong to class $\cM_d$ and only $\Mb_1, \Mb_3$ are symmetric.
$$
\Mb_1
:=
\left(\begin{matrix}
1 & 1 & 1 \\
1 & 1 & 0 \\
1 & 0 & 1
\end{matrix}\right)
,\quad
\Mb_2
:=
\left(\begin{matrix}
1 & 1 & 0 \\
0 & 1 & 0 \\
0 & 0 & 1
\end{matrix}\right)
,\quad
\Mb_3
:=
\left(\begin{matrix}
1 & 1 & 0 \\
1 & 1 & 0 \\
0 & 0 & 1
\end{matrix}\right)
,\quad
\Mb_4
:=
\left(\begin{matrix}
1 & 1 & 0 \\
1 & 1 & 0 \\
0 & 1 & 1
\end{matrix}\right)
.
$$

\paragraph{Invertibility}
Neither the conditions of class $\cM_d$ nor the property of invertibility leads to the other.
The following are four examples of binary square matrices, where only $\Mb_1, \Mb_2$ belong to class $\cM_d$ and only $\Mb_1, \Mb_3$ are invertible.
$$
\Mb_1
:=
\left(\begin{matrix}
1 & 0 & 0 \\
0 & 1 & 0 \\
0 & 0 & 1
\end{matrix}\right)
,\quad
\Mb_2
:=
\left(\begin{matrix}
1 & 0 & 1 & 0 \\
0 & 1 & 0 & 1 \\
0 & 1 & 1 & 0 \\
1 & 0 & 0 & 1
\end{matrix}\right)
,\quad
\Mb_3
:=
\left(\begin{matrix}
1 & 1 & 0 \\
0 & 1 & 1 \\
1 & 0 & 1
\end{matrix}\right)
,\quad
\Mb_4
:=
\left(\begin{matrix}
1 & 1 & 0 \\
1 & 1 & 0 \\
0 & 0 & 1
\end{matrix}\right)
.
$$
Specifically for $d = 3$, we find through exhaustive enumeration that all matrices in $\cM_3$ are invertible.
However, this no longer holds for $d \ge 4$, with $\Mb_2$ being a counterexample.

\paragraph{Rank and Kruskal Rank}
Any binary square matrix $\Mb \in \cM_d$ must have Kruskal rank $\rank_K(\Mb) \ge 3$.
This also implies $\rank(\Mb) \ge 3$, since the rank is never small than the Kruskal rank.
See Section \ref{ssec:cp_kruskal} for the definition of Kruskal rank.

\begin{proof}
The case $d = 3$ follows since all matrices in $\cM_3$ are invertible.
We prove for $d \ge 4$ by contradiction.
Suppose there exists distinct indices $i, j, k \in [d]$ such that the columns $\mb_i, \mb_j, \mb_k$ of $\Mb \in \cM_d$ are linearly dependent, i.e.
$$
a \mb_i + b \mb_j + c \mb_k
=
0
$$
for some coefficients $a, b, c \in \RR$ not all being zero.
From the connection between class $\cM_d$ and having distinct columns, we know $\mb_i, \mb_j, \mb_k$ must be distinct.
Therefore, there exists $r, s, t \in [d]$ such that the binary entries of $\Mb$ satisfy
$$
M_{r, i} \ne M_{r, j}
,\quad
M_{s, i} \ne M_{s, k}
,\quad
M_{t, j} \ne M_{t, k}
.
$$
Through permuting indices $i, j, k$ and indices $r, s, t$, without loss of generality we can let
$$
M_{r, i} = 1
,\quad
M_{r, j} = 0
,\quad
M_{s, i} = 0
,\quad
M_{s, k} = 1
,
$$
which implies
$$
a + M_{r, k} c
=
0
,\quad
M_{s, j} b + c
=
0
.
$$
We cannot have $a = 0$, otherwise $M_{t, i} a + M_{t, j} b + M_{t, k} c = 0$ implies one of $b, c$ is 0 and then all $a, b, c$ will become 0.
Similarly, we cannot have $c = 0$.
Hence we have $M_{r, k} = M_{s, j} = 1$ and $a = b = -c$ with $c \ne 0$, which implies
$$
\mb_k
=
\mb_i + \mb_j
.
$$
Let $\ell \in [d]$ be a fourth index distinct from $i, j, k$.
For the two distinct subsets $\cS_1, \cS_2 \in 2^{\langle d \rangle}$ defined as $\cS_1 := \{(i, \ell), (j, \ell)\}$ and $\cS_2 := \{(k, \ell)\}$, we have
$$
\offdiag\Big(
\sum_{(i, j) \in \cS_1} (\mb_i \mb_j^\top + \mb_j \mb_i^\top)
\Big)
=
\offdiag\Big(
\sum_{(i, j) \in \cS_2} (\mb_i \mb_j^\top + \mb_j \mb_i^\top)
\Big)
,
$$
which contradicts with the definition of class $\cM_d$.
\end{proof}

We note that the connections we have discussed so far are for general binary square matrices in class $\cM_d$.
When sparsity constraints are imposed such that the number of ones in each column of $\Mb \in \cM_d$ are bounded by some constant $S$ (e.g. $S = 2$), there exist additional connections between class $\cM_d$ and the common linear algebra properties.
Some of these are explored and used in the proof of Proposition \ref{prop:simp_A2} later this section.

\subsection{Examples}\label{ssec:M_eg}

A trivial example of class $\cM_d$ is the identity matrix $\Ib_d$.
For binary square matrices not in class $\cM_d$, trivial examples are those with diagonal entries containing 0.
In the following, we provide non-trivial examples of binary square matrices belonging to $\cM_3, \cM_4, \cM_5$ and non-trivial examples of binary square matrices not belonging to $\cM_3, \cM_4, \cM_5$.

\begin{example}
The following matrices belong to $\cM_3$:
$$
\left(\begin{matrix}
1 & 0 & 0 \\
0 & 1 & 0 \\
0 & 0 & 1
\end{matrix}\right)
,\quad
\left(\begin{matrix}
1 & 1 & 0 \\
0 & 1 & 0 \\
0 & 0 & 1
\end{matrix}\right)
,\quad
\left(\begin{matrix}
1 & 1 & 1 \\
0 & 1 & 0 \\
0 & 0 & 1
\end{matrix}\right)
,\quad
\left(\begin{matrix}
1 & 0 & 1 \\
1 & 1 & 0 \\
0 & 0 & 1
\end{matrix}\right)
,\quad
\left(\begin{matrix}
1 & 1 & 1 \\
0 & 1 & 1 \\
0 & 0 & 1
\end{matrix}\right)
.
$$
\end{example}

\begin{example}
The following matrices do not belong to $\cM_3$:
$$
\left(\begin{matrix}
1 & 1 & 0 \\
1 & 1 & 0 \\
1 & 0 & 1
\end{matrix}\right)
,\quad
\left(\begin{matrix}
1 & 0 & 1 \\
1 & 1 & 0 \\
1 & 0 & 1
\end{matrix}\right)
,\quad
\left(\begin{matrix}
1 & 1 & 0 \\
1 & 1 & 0 \\
0 & 0 & 1
\end{matrix}\right)
,\quad
\left(\begin{matrix}
1 & 1 & 1 \\
1 & 1 & 1 \\
1 & 1 & 1
\end{matrix}\right)
,\quad
\left(\begin{matrix}
1 & 1 & 0 \\
0 & 1 & 1 \\
1 & 0 & 1
\end{matrix}\right)
,
$$
\end{example}

\begin{example}
The following matrices belong to $\cM_4$:
$$
\left(\begin{matrix}
1 & 1 & 1 & 0 \\
0 & 1 & 1 & 0 \\
1 & 0 & 1 & 1 \\
0 & 1 & 1 & 1
\end{matrix}\right)
,\quad
\left(\begin{matrix}
1 & 1 & 0 & 0 \\
0 & 1 & 0 & 0 \\
0 & 0 & 1 & 0 \\
0 & 0 & 0 & 1
\end{matrix}\right)
,\quad
\left(\begin{matrix}
1 & 1 & 1 & 0 \\
0 & 1 & 0 & 0 \\
0 & 0 & 1 & 0 \\
0 & 0 & 0 & 1
\end{matrix}\right)
,\quad
\left(\begin{matrix}
1 & 0 & 1 & 1 \\
1 & 1 & 0 & 0 \\
0 & 0 & 1 & 0 \\
0 & 0 & 0 & 1
\end{matrix}\right)
,\quad
\left(\begin{matrix}
1 & 1 & 1 & 1 \\
0 & 1 & 1 & 0 \\
0 & 0 & 1 & 0 \\
0 & 0 & 0 & 1
\end{matrix}\right)
.
$$
\end{example}

\begin{example}
The following matrices do not belong to $\cM_4$:
$$
\left(\begin{matrix}
1 & 0 & 1 & 0 \\
0 & 1 & 1 & 0 \\
1 & 0 & 1 & 0 \\
0 & 0 & 0 & 1
\end{matrix}\right)
,\quad
\left(\begin{matrix}
1 & 1 & 0 & 0 \\
1 & 1 & 0 & 0 \\
0 & 0 & 1 & 0 \\
0 & 0 & 0 & 1
\end{matrix}\right)
,\quad
\left(\begin{matrix}
1 & 1 & 0 & 0 \\
1 & 1 & 0 & 1 \\
0 & 1 & 1 & 1 \\
0 & 0 & 0 & 1
\end{matrix}\right)
,\quad
\left(\begin{matrix}
1 & 1 & 1 & 0 \\
1 & 1 & 0 & 1 \\
0 & 1 & 1 & 0 \\
0 & 0 & 1 & 1
\end{matrix}\right)
,\quad
\left(\begin{matrix}
1 & 1 & 1 & 0 \\
1 & 1 & 0 & 1 \\
1 & 1 & 1 & 0 \\
1 & 0 & 1 & 1
\end{matrix}\right)
.
$$
\end{example}

\begin{example}
The following matrices belong to $\cM_5$:
$$
\left(\begin{matrix}
1 & 1 & 1 & 0 & 0 \\
1 & 1 & 0 & 0 & 0 \\
0 & 1 & 1 & 1 & 0 \\
0 & 0 & 1 & 1 & 0 \\
0 & 0 & 1 & 1 & 1
\end{matrix}\right)
,\quad
\left(\begin{matrix}
1 & 1 & 1 & 1 & 1 \\
1 & 1 & 1 & 0 & 1 \\
1 & 1 & 1 & 1 & 0 \\
0 & 1 & 0 & 1 & 0 \\
1 & 1 & 0 & 0 & 1
\end{matrix}\right)
,\quad
\left(\begin{matrix}
1 & 1 & 0 & 0 & 0 \\
1 & 1 & 0 & 0 & 1 \\
1 & 1 & 1 & 1 & 0 \\
0 & 1 & 0 & 1 & 1 \\
0 & 1 & 1 & 1 & 1
\end{matrix}\right)
,\quad
\left(\begin{matrix}
1 & 1 & 0 & 0 & 0 \\
0 & 1 & 1 & 0 & 0 \\
0 & 0 & 1 & 1 & 0 \\
0 & 0 & 0 & 1 & 1 \\
0 & 0 & 0 & 0 & 1
\end{matrix}\right)
.
$$
\end{example}

\begin{example}
The following matrices do not belong to $\cM_5$:
$$
\left(\begin{matrix}
1 & 1 & 0 & 0 & 0 \\
0 & 1 & 1 & 1 & 0 \\
1 & 0 & 1 & 0 & 1 \\
0 & 1 & 0 & 1 & 0 \\
0 & 1 & 0 & 0 & 1
\end{matrix}\right)
,\quad
\left(\begin{matrix}
1 & 1 & 1 & 0 & 1 \\
1 & 1 & 1 & 1 & 0 \\
0 & 0 & 1 & 1 & 1 \\
1 & 0 & 1 & 1 & 1 \\
0 & 1 & 0 & 0 & 1
\end{matrix}\right)
,\quad
\left(\begin{matrix}
1 & 0 & 0 & 1 & 0 \\
0 & 1 & 0 & 1 & 1 \\
1 & 0 & 1 & 0 & 1 \\
1 & 0 & 0 & 1 & 1 \\
0 & 1 & 1 & 0 & 1
\end{matrix}\right)
,\quad
\left(\begin{matrix}
1 & 1 & 0 & 0 & 0 \\
0 & 1 & 1 & 0 & 1 \\
0 & 0 & 1 & 0 & 0 \\
0 & 1 & 0 & 1 & 1 \\
1 & 0 & 0 & 1 & 1
\end{matrix}\right)
.
$$
\end{example}

\subsection{Proof of Proposition \ref{prop:simp_A2}}\label{ssec:proof_prop_simp_A2}

We divide the proof of Proposition \ref{prop:simp_A2} into the following lemmas.

\begin{lemma}\label{lemm:A21_A3_to_A22}
Let $S \in \{1, 2\}$.
For any $\Ab \in \cA_{2, 1} \cap \cA_3$ and $k \in [K]$ with $p_{k - 1} \ge 3$, $\Ab_k$ must have distinct columns.
\end{lemma}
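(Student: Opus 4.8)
The plan is to prove the contrapositive directly: assume that $\Ab_k$ has two identical columns and derive a contradiction with $\Ab \in \cA_{2,1}$, given the sparsity constraint $S \in \{1,2\}$ and $p_{k-1} \ge 3$. Suppose columns $s$ and $t$ of $\Ab_k$ coincide, i.e. $A_{k,i,s} = A_{k,i,t}$ for every row $i \in [p_k]$. Because $\Ab \in \cA_{2,1}$, there exist (after a permutation $\Pb_k$) two submatrices $\Mb_k, \Mb_k' \in \cM_{p_{k-1}}$ stacked as the first $2p_{k-1}$ rows of $\Ab_k$; by condition (i) of Definition \ref{defi:M}, the diagonal entries of $\Mb_k$ are ones, so $\Mb_k$ contains a row equal to $\eb_s^\top$ (the $s$th canonical row in $\Mb_k$, which has a one in position $s$) and a row equal to $\eb_t^\top$. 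These rows of $\Mb_k$ appear as rows of $\Ab_k$, and on such a row the $s$th and $t$th entries differ, which immediately contradicts the assumption that columns $s$ and $t$ are identical.

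More carefully, I would argue as follows. First I would reduce without loss of generality to $\Pb_k = \Ib_{p_k}$, so that the top $p_{k-1}$ rows of $\Ab_k$ form $\Mb_k^\top$ with $\diag(\Mb_k) = \Ib_{p_{k-1}}$. The $s$th row of $\Mb_k$ then has its $s$th entry equal to $1$, and I would examine whether its $t$th entry can also be forced to equal $1$ whenever columns $s, t$ agree. Since the sparsity constraint $S \le 2$ bounds the number of ones in each row of $\Ab_k$, and each of $\Mb_k, \Mb_k'$ has ones on its diagonal, I would track how many ones the rows $\eb_s$-type and $\eb_t$-type can carry. The key step is to show that identical columns $s,t$ would force these diagonal-witnessing rows of $\Mb_k$ to violate either condition (ii) of $\cM_{p_{k-1}}$ (via the distinct-columns consequence established in Section \ref{ssec:M_linalg}) or the sparsity bound itself.

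The cleanest route, which I would take, is to invoke the result already proved in Section \ref{ssec:M_linalg} that any $\Mb \in \cM_d$ has distinct columns. If columns $s$ and $t$ of $\Ab_k$ are identical, then in particular the columns $s$ and $t$ of the submatrix $\Mb_k$ (the top $p_{k-1}$ rows) are identical, since restricting identical columns to a subset of rows preserves equality. But $\Mb_k \in \cM_{p_{k-1}}$ must have distinct columns, a contradiction. This makes the proof short and avoids the sparsity constraint almost entirely — suggesting that the hypotheses $S \in \{1,2\}$ and $p_{k-1} \ge 3$ may be needed only to rule out degenerate small cases or to match the hypotheses of the overall Proposition \ref{prop:simp_A2}; I would verify at the end that $p_{k-1} \ge 3$ is exactly what guarantees the distinct-columns property holds (recall that for $d \le 2$ condition (ii) is vacuous). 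The main obstacle I anticipate is confirming that the distinct-columns consequence of $\cM_d$ does restrict to the relevant submatrix cleanly, and checking that no edge case with $p_{k-1} = 3$ and $S = 2$ slips through; but since column equality is inherited by any row-subset, this reduction should be routine.
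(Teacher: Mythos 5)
Your ``cleanest route'' contains a genuine gap, and it comes from a transpose confusion. By the definition of $\cA_{2,1}$ in \eqref{eq:A21_def}, $\Pb_k \Ab_k = \big(\Mb_k ~ \Mb_k' ~ \Bb_k\big)^\top$, so (taking $\Pb_k = \Ib_{p_k}$) the top $p_{k-1}$ rows of $\Ab_k$ form $\Mb_k^\top$, \emph{not} $\Mb_k$. Consequently, restricting two identical columns $s, t$ of $\Ab_k$ to those rows gives identical columns of $\Mb_k^\top$, i.e.\ identical \emph{rows} $s, t$ of $\Mb_k$ --- not identical columns. This matters because membership in $\cM_d$ does not force distinct rows: Section \ref{ssec:M_linalg} exhibits a matrix in $\cM_5$ whose first and fifth rows coincide (distinct rows hold for $d = 3, 4$ only by exhaustive enumeration, and fail for $d \ge 5$). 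So your contradiction evaporates whenever $p_{k-1} \ge 5$. Your opening paragraph has a related error: in $\cA_{2,1}$ the submatrices are general elements of $\cM_{p_{k-1}}$, so $\Mb_k$ need not contain any canonical row $\eb_s^\top$; only its diagonal entries are pinned to one (that stronger structure belongs to $\cA_1$, not $\cA_{2,1}$).

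Relatedly, your suggestion that $S \in \{1,2\}$ is needed ``only to rule out degenerate small cases'' is wrong --- the sparsity hypothesis is what makes the lemma true. Indeed, take $p_k = 2p_{k-1}$ with $p_{k-1} = 5$ and $\Mb_k = \Mb_k'$ equal to the $\cM_5$ matrix with two identical rows from Section \ref{ssec:M_linalg}; then $\Ab_k = \big(\Mb_k ~ \Mb_k'\big)^\top$ lies in $\cA_{2,1}$ and has two identical columns, and it is excluded only because some of its rows carry more than two ones, violating $\cA_3$. The paper's proof uses sparsity at exactly the missing step: identical columns $i, j$ together with the unit diagonals force $A_{k,i,j} = A_{k,j,i} = 1$, and then $S \le 2$ forces rows $i$ and $j$ of the top block to both equal $\eb_i^\top + \eb_j^\top$, hence to be identical; since those rows are the columns $i, j$ of $\Mb_k$, this contradicts the distinct-columns property of $\cM_{p_{k-1}}$ (which is where $p_{k-1} \ge 3$ enters). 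To repair your argument you must insert this sparsity step; the restriction-of-columns shortcut alone cannot work.
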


\begin{proof}[Proof of Lemma \ref{lemm:A21_A3_to_A22}]
We prove by contradiction.
Suppose for some $\Ab \in \cA_{2, 1} \cap \cA_3$, the $i$th and $j$th columns of $\Ab_k$ are identical.
Without loss of generality, we can let $\Pb_k = \Ib_{p_k}$ in \eqref{eq:A21_def}.
Letting $\tilde\Ab_k \in \RR^{p_{k - 1} \times p_{k - 1}}$ denote the submatrix of $\Ab_k$ consisting of its first $p_{k - 1}$ rows and first $p_{k - 1}$ columns, by definition in \eqref{eq:A21_def} we have $\tilde\Ab_k^\top \in \cM_{p_{k - 1}}$.

By definition of class $\cM_{p_{k - 1}}$, we know the entries $A_{k, i, i} = A_{k, j, j} = 1$, which implies $A_{k, i, j} = A_{k, j, i} = 1$ due to the identical columns.
For $\Ab \in \cA_3$ with $S \le 2$, each row of $\tilde\Ab_k$ contains at most two non-zero entries, so the $i$th and $j$th rows of $\tilde\Ab_k$ are also identical.
It is shown in Section \ref{ssec:M_linalg} that any $d \times d$ binary square matrix $(d \ge 3)$ containing identical columns can not belong to the class $\cM_d$, which implies $\tilde\Ab_k^\top \notin \cM_{p_{k - 1}}$.
This raises a contradiction.
\end{proof}

To state the following lemmas, we recall that the $i$th row of a connection matrix $\Ab_k$ is denoted by $\ab_{k, i}^\top$, for $\ab_{k, i}$ being a $p_{k - 1}$-dimensional column vector.
We also recall that $*$ denotes the Hadamard product of matrices or vectors of the same dimensions.

\begin{lemma}\label{lemm:A21_A3_to_A23_part1}
Let $S \in \{1, 2\}$.
For any $\Ab \in \cA_{2, 1} \cap \cA_3$ and $k \in [K]$ with $p_{k - 1} \ge 4$, if $\Ab_k$ does not contain any all-ones column, then there exists $(i_0, j_0) \in \langle p_k \rangle$ such that $\ab_{k, i_0} * \ab_{k, j_0} = \zero$.
\end{lemma}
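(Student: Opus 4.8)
The plan is to argue by contradiction: I assume that $\Ab_k$ has no all-ones column and yet $\ab_{k,i} * \ab_{k,j} \ne \zero$ for every $(i,j) \in \langle p_k \rangle$, i.e. every pair of rows of $\Ab_k$ has overlapping support, and I derive that $\Ab_k$ must in fact contain an all-ones column. First I would set $\Pb_k = \Ib_{p_k}$ in \eqref{eq:A21_def} without loss of generality, so that the rows of $\Ab_k$ include the columns $\mb_1, \ldots, \mb_{p_{k-1}}$ of $\Mb_k$ and $\mb_1', \ldots, \mb_{p_{k-1}}'$ of $\Mb_k'$, together with the rows coming from $\Bb_k$. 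Since $\Ab \in \cA_3$ with $S \le 2$, every row of $\Ab_k$ has support of size $1$ or $2$; since $\Mb_k, \Mb_k' \in \cM_{p_{k-1}}$ their diagonals are all ones (so $i \in \mathrm{supp}(\mb_i)$ and $i \in \mathrm{supp}(\mb_i')$), and by the ``distinct columns'' property of class $\cM_{p_{k-1}}$ established in Section \ref{ssec:M_linalg} their columns are pairwise distinct.

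The key step, which I expect to be the main obstacle, is a combinatorial classification of the support family $\{\mathrm{supp}(\mb_i)\}_{i \in [p_{k-1}]}$ under the assumption that it is pairwise intersecting. Each $\mathrm{supp}(\mb_i)$ is a set of size at most $2$ containing $i$. If some $\mathrm{supp}(\mb_{i_0})$ is the singleton $\{i_0\}$, then every other support must contain $i_0$, forcing $\mathrm{supp}(\mb_j) = \{i_0, j\}$ for all $j \ne i_0$. If instead all supports have size exactly $2$, they form a pairwise-intersecting family of distinct edges; by the elementary sunflower/Helly dichotomy such a family is either a triangle (which has at most three edges) or a star, and since there are $p_{k-1} \ge 4$ distinct edges the triangle case is impossible, leaving a star with a common vertex $v$. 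In either case I obtain a hub index $v$ with $\mathrm{supp}(\mb_i) = \{i, v\}$ for every $i \ne v$, which is precisely the statement that row $v$ of $\Mb_k$ is all ones, equivalently that column $v$ of $\Ab_k$ restricted to the $\Mb_k^\top$ block is all ones.

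Finally I would stitch the two identity-type blocks and the remaining rows together. Applying the same dichotomy to $\Mb_k'$ yields a hub $v'$ with $\mathrm{supp}(\mb_j') = \{j, v'\}$ for $j \ne v'$. Using $p_{k-1} \ge 4$ I can pick distinct indices $i, j \notin \{v, v'\}$; the required nonempty intersection $\mathrm{supp}(\mb_i) \cap \mathrm{supp}(\mb_j') \ne \varnothing$ of $\{i,v\}$ with $\{j,v'\}$ then forces $v = v'$, so column $v$ of $\Ab_k$ is all ones on both the $\Mb_k^\top$ and $\Mb_k'^\top$ blocks. For any remaining row $\ab_{k,r}$ coming from $\Bb_k$, if $v \notin \mathrm{supp}(\ab_{k,r})$ then pairwise intersection with each $\mb_i$ for $i \ne v$ would force $\mathrm{supp}(\ab_{k,r})$ to contain all $p_{k-1} - 1 \ge 3$ indices $i \ne v$, contradicting $|\mathrm{supp}(\ab_{k,r})| \le 2$; hence $v \in \mathrm{supp}(\ab_{k,r})$ as well. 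Therefore column $v$ of $\Ab_k$ is all ones, contradicting the hypothesis, and the lemma follows; the main difficulty throughout is organizing the intersecting-family dichotomy cleanly enough that the same hub $v$ survives across all three blocks.
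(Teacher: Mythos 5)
Your proof is correct, but it takes a noticeably more global route than the paper's. The paper also argues by contradiction and splits on support sizes, but its singleton case is applied to all of $\Ab_k$ at once (a singleton row forces its column to be all ones immediately), and its size-two case is purely local: after relabeling, row $1$ has support $\{1,2\}$ and rows $3,4$ contain $3,4$ by the diagonal of $\Mb_k$, so pairwise intersection forces a common vertex $r \in \{1,2\}$ shared by rows $3$ and $4$; the paper then uses the no-all-ones-column hypothesis \emph{mid-proof} to produce a witness row $s$ with $A_{k,s,r} = 0$, and intersection with rows $1,3,4$ forces $\mathrm{supp}(\ab_{k,s}) \supseteq \{3-r, 3, 4\}$, contradicting $S \le 2$. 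In particular, the paper never touches the $\Mb_k'$ block, the $\Bb_k$ rows, or the distinct-columns property of $\cM_d$ -- only the diagonal of $\Mb_k$. You instead classify the entire structure via the triangle/star dichotomy, obtaining a hub $v$ for $\Mb_k$ and $v'$ for $\Mb_k'$, matching the hubs, and then showing every remaining row contains $v$, so that the all-ones column contradicts the hypothesis only at the very end. This buys a sharper structural picture (under the contradiction hypothesis, $\Ab_k$ is literally a star matrix), at the cost of extra bookkeeping. Two small remarks: first, your hub-matching step is actually redundant -- once the hub $v$ for $\Mb_k$ is established, your final sparsity argument (a row avoiding $v$ must contain all $p_{k-1} - 1 \ge 3$ indices $i \ne v$) applies verbatim to the $\Mb_k'$ rows just as to the $\Bb_k$ rows, so $v'$ never needs to be introduced; second, the triangle case can be excluded without invoking distinct columns, since the diagonal property forces the supports to cover at least $p_{k-1} \ge 4$ vertices while a triangle covers only three.
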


\begin{proof}[Proof of Lemma \ref{lemm:A21_A3_to_A23_part1}]
We prove by contradiction.
Suppose $\ab_{k, i} * \ab_{k, j} \ne \zero$ for all $(i, j) \in \langle p_k \rangle$.

If one of the rows $\ab_{k, i}^\top$ has only one non-zero entry $A_{k, i, \ell} = 1$, then any other row $\ab_{k, j}^\top$ also contains the non-zero entry $A_{k, j, \ell} = 1$.
This implies that $\Ab_k$ contains the all-ones column $\ell$, raising a contradiction.

If all the rows each have two non-zero entries, without loss of generality we can assume $A_{k, 1, 1} = A_{k, 1, 2} = 1$ for the first row and $\Pb_k = \Ib_{p_k}$ in \eqref{eq:A21_def}.
From the definition in \eqref{eq:A21_def}, we know that $A_{k, 3, 3} = A_{k, 4, 4} = 1$.
In order to satisfy $\ab_{k, 1} * \ab_{k, 3} \ne \zero$, $\ab_{k, 1} * \ab_{k, 4} \ne \zero$, $\ab_{k, 3} * \ab_{k, 4} \ne \zero$, there must exist $r \in \{1, 2\}$ such that the entries $A_{k, 3, r} = A_{k, 4, r} = 1$.
Since $\Ab_k$ does not contain any all-ones column, there must also exist $s \in [p_k] \cap \{1, 3, 4\}^c$ such that the entry $A_{k, s, r} = 0$.
Since $\ab_{k, 1} * \ab_{k, s} \ne \zero$, $\ab_{k, 3} * \ab_{k, s} \ne \zero$, $\ab_{k, 4} * \ab_{k, s} \ne \zero$, we know the entries $A_{k, s, 3 - r} = A_{k, s, 3} = A_{k, s, 4} = 1$.
This shows that the row $\ab_{k, s}^\top$ has at least three non-zero entries, again raising a contradiction.
\end{proof}

The following lemma addresses the case of $p_{k - 1} = 3$.

\begin{lemma}\label{lemm:A3_to_A23_p3}
Let $S \in \{1, 2\}$.
For any $\Ab \in \cA_3$ and $k \in [K]$ with $p_{k - 1} = 3$, if $\Ab_k$ does not contain any all-ones column, then at least one of the following two results holds:
\begin{enumerate}[(i)]
\item
there exists $(i_0, j_0) \in \langle p_k \rangle$ such that $\ab_{k, i_0} * \ab_{k, j_0} = 0$;
\item
each row of $\Ab_k$ has two non-zero entries and $\Pb_k \Ab_k = \big( \Mb_k ~ \Bb_k \big)^\top$ for some permutation matrix $\Pb_k \in \RR^{p_k \times p_k}$, some arbitrary matrix $\Bb_k \in \RR^{p_{k - 1} \times (p_k - p_{k - 1})}$, and
$$
\Mb_k
:=
\left(\begin{matrix}
1 & 1 & 0 \\
0 & 1 & 1 \\
1 & 0 & 1
\end{matrix}\right)
.
$$
\end{enumerate}
\end{lemma}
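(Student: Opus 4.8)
The plan is to prove the stated dichotomy by assuming that conclusion (i) fails and deriving conclusion (ii). Concretely, I would suppose that $\ab_{k, i} * \ab_{k, j} \ne \zero$ for every pair $(i, j) \in \langle p_k \rangle$, i.e. every two rows of $\Ab_k$ share a common nonzero coordinate, and show this forces the structure in (ii). Since $\Ab \in \cA_3$ has sparsity level $S \le 2$ and $p_{k - 1} = 3$, each row $\ab_{k, i}^\top$ is a nonzero binary vector in $\{0, 1\}^3$ with at most two ones, hence one of the six vectors $\eb_1, \eb_2, \eb_3$ (weight one) or $\eb_1 + \eb_2, \eb_1 + \eb_3, \eb_2 + \eb_3$ (weight two). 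The argument then proceeds by first excluding weight-one rows and then determining which weight-two vectors must occur.

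First I would rule out any weight-one row. Suppose some row satisfies $\ab_{k, i} = \eb_\ell$. For any other row $j$, the product $\ab_{k, i} * \ab_{k, j} = \eb_\ell * \ab_{k, j}$ is nonzero if and only if $A_{k, j, \ell} = 1$; since conclusion (i) is assumed to fail, this forces $A_{k, j, \ell} = 1$ for all $j \ne i$, and together with $A_{k, i, \ell} = 1$ it makes the $\ell$th column of $\Ab_k$ all-ones, contradicting the hypothesis. Hence no row has weight one, and since $\cA_3$ forbids the zero row, every row of $\Ab_k$ has weight exactly two, which establishes the first half of (ii).

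Next I would show that all three weight-two vectors actually occur as rows. The key observation is an exact equivalence: each weight-two vector omits exactly one coordinate (for instance $\eb_1 + \eb_2$ omits coordinate $3$), and the $c$th column of $\Ab_k$ is all-ones precisely when no row omits coordinate $c$, that is, precisely when the unique weight-two vector omitting $c$ is absent. Since by hypothesis no column of $\Ab_k$ is all-ones, all three vectors $\eb_1 + \eb_2$, $\eb_1 + \eb_3$, $\eb_2 + \eb_3$ must appear among the rows. Selecting one row of each type and moving them to the top three positions via a permutation matrix $\Pb_k$, the leading $3 \times 3$ block of $\Pb_k \Ab_k$ becomes a matrix whose rows are exactly these three weight-two vectors; since those coincide with the rows of $\Mb_k^\top$, we may arrange $\Pb_k$ so that this block equals $\Mb_k^\top$, with the remaining $p_k - 3$ rows forming $\Bb_k^\top$. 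This yields $\Pb_k \Ab_k = (\Mb_k ~ \Bb_k)^\top$, which is conclusion (ii).

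I expect no serious obstacle here: the statement reduces to finite casework over the six admissible row types, and both nontrivial steps are short pigeonhole arguments invoking the no-all-ones-column hypothesis. The only points requiring care are (a) recognizing that once all rows have weight two the intersection condition defining the negation of (i) holds automatically, since any two $2$-element subsets of $\{1,2,3\}$ meet, so that the genuine content of (ii) is forced entirely by the no-all-ones-column assumption; and (b) the bookkeeping of orienting $\Mb_k$ correctly, which is harmless because $\Mb_k$ is circulant and its rows and columns are both exactly the set of three weight-two vectors, so the required permutation $\Pb_k$ exists in either orientation.
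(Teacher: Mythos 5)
Your proof is correct and follows essentially the same route as the paper's: assume (i) fails, rule out weight-one rows by showing they would force an all-ones column, conclude every row has weight exactly two, and then use the no-all-ones-column hypothesis to force all three weight-two vectors $(1,1,0)$, $(0,1,1)$, $(1,0,1)$ to appear as rows, which yields the $\Mb_k$ block after permutation. Your added observations---that the negation of (i) becomes automatic once all rows have weight two, and that the circulant structure of $\Mb_k$ makes the orientation bookkeeping harmless---are accurate refinements of the same argument.
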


\begin{proof}[Proof of Lemma \ref{lemm:A3_to_A23_p3}]
We assume that result (i) does not hold and prove that result (ii) must hold then.
As discussed in the proof of Lemma \ref{lemm:A21_A3_to_A23_part1}, contradiction arises if some row of $\Ab_k$ has only one non-zero entry.
Since $\Ab_k$ does not contain an all-ones column, there exists rows $i, j, \ell \in [p_k]$ such that the entries $A_{k, i, 1} = A_{k, j, 2} = A_{k, \ell, 3} = 0$, which implies that the remaining two entries in each of these rows are both ones.
This gives the form of $\Mb_k$ through appropriately permuting the rows of $\Ab_k$.
\end{proof}

We recall that $\eb_r$ denotes the indicator vector with its $r$th entry equal to one and all other entries equal to zero.

\begin{lemma}\label{lemm:A21_A22_A3_to_A23_part2}
Let $S \in \{1, 2\}$.
For any $\Ab \in \cA_{2, 1} \cap \cA_{2, 2} \cap \cA_3$ and $k \in [K]$, at least one of the following results holds for each column $r \in [p_{k - 1}]$:
\begin{enumerate}[(i)]
\item
there exists $(i, j) \in \langle p_k \rangle$ such that $\ab_{k, i} * \ab_{k, j} = \eb_r$;
\item
there exists a different column $s \in [p_{k - 1}]$ and three distinct indices $i, j, \ell \in [p_k]$ such that $\ab_{k, i} * \ab_{k, j} = \eb_r + \eb_s$ and $\ab_{k, i} * \ab_{k, \ell} = \eb_s$.
\end{enumerate}
\end{lemma}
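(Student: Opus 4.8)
The plan is to fix the layer $k$ and a column index $r \in [p_{k-1}]$, and to exploit the sparsity bound $S \le 2$ together with the block structure of $\Ab \in \cA_{2,1}$. Since relabeling the rows of $\Ab_k$ permutes the indices $i,j,\ell$ but leaves both the collection of Hadamard products $\{\ab_{k,i} * \ab_{k,j}\}$ and the column indicators $\eb_r$ unchanged, I may assume without loss of generality that $\Pb_k = \Ib_{p_k}$, so that $\Ab_k = (\Mb_k ~ \Mb_k' ~ \Bb_k)^\top$ with $\Mb_k, \Mb_k' \in \cM_{p_{k-1}}$. Because every matrix in $\cM_{p_{k-1}}$ has $\diag = \Ib$, the first block gives $\ab_{k,t} \succeq \eb_t$ and the second block gives $\ab_{k, p_{k-1}+t} \succeq \eb_t$ for each $t \in [p_{k-1}]$; thus for every coordinate there are two canonical rows whose support contains it. The sparsity constraint forces each row's support to have size one or two, so each canonical row for $r$ equals either $\eb_r$ or $\eb_r + \eb_c$ for a single $c \ne r$.

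First I would dispose of the easy cases yielding result (i). Looking at the two canonical rows $\ab_{k,r}$ and $\ab_{k,p_{k-1}+r}$, if at least one equals $\eb_r$, or if both have size two but with different second coordinates, then their Hadamard product is exactly $\eb_r$, and since $r < p_{k-1}+r$ this provides a valid pair in $\langle p_k \rangle$. The only remaining possibility is the degenerate case $\ab_{k,r} = \ab_{k,p_{k-1}+r} = \eb_r + \eb_c$ for a common $c \ne r$. Here the first half of (ii) is immediate: with $i = r$, $j = p_{k-1}+r$, $s = c$ I already have $\ab_{k,i} * \ab_{k,j} = \eb_r + \eb_s$. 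It then remains to produce a third row $\ell$, distinct from $i$ and $j$, with $\ab_{k,i} * \ab_{k,\ell} = \eb_c$, i.e.\ a row whose support contains $c$ but not $r$.

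To find such an $\ell$ I would turn to the canonical rows of the coordinate $c$, namely rows $c$ and $p_{k-1}+c$, both of which contain $c$. If either of them omits $r$, it serves directly as $\ell$ and completes (ii). Otherwise both equal $\eb_c + \eb_r$, so the four rows indexed by $r, p_{k-1}+r, c, p_{k-1}+c$ (which are four distinct indices since $p_k \ge 2p_{k-1}$) all equal $\eb_r + \eb_c$, and on each of them columns $r$ and $c$ agree. At this point the distinct-columns condition $\Ab \in \cA_{2,2}$ becomes essential: columns $r$ and $c$ must differ on some row $\ell_0$ lying outside these four indices, and there exactly one of the two positions is occupied. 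If $\ell_0$ has $c$ but not $r$, then $\ab_{k,r} * \ab_{k,\ell_0} = \eb_c$ and (ii) holds; if instead $\ell_0$ has $r$ but not $c$, then $\ab_{k,c} * \ab_{k,\ell_0} = \eb_r$, which delivers result (i) for the pair $\{c, \ell_0\}$. Either branch closes the argument. The main obstacle I anticipate is precisely this degenerate case: the bookkeeping needed to guarantee that the extracted index $\ell_0$ is genuinely distinct from all canonical rows, and the observation that the distinctness of columns $r$ and $c$ can only be witnessed by a row outside the four coincident ones, which is exactly what forces the split into a (ii)-branch and an (i)-branch.
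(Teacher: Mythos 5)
Your proof is correct and follows essentially the same route as the paper's: reduce to $\Pb_k = \Ib_{p_k}$, observe that the two canonical rows $r$ and $p_{k-1}+r$ (from the $\cM_{p_{k-1}}$ blocks plus sparsity $S \le 2$) either have Hadamard product $\eb_r$, giving (i), or both equal $\eb_r + \eb_s$, and then invoke the distinct-columns condition $\cA_{2,2}$ to produce a row distinguishing columns $r$ and $s$, splitting into an (i)-branch and a (ii)-branch. Your intermediate inspection of the canonical rows $c$ and $p_{k-1}+c$ of the second coordinate is a harmless but unnecessary detour: the paper applies column-distinctness of $r$ and $s$ directly, and any canonical row of $c$ omitting $r$ is just one instance of such a distinguishing row.
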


\begin{proof}[Proof of Lemma \ref{lemm:A21_A22_A3_to_A23_part2}]
Without loss of generality, we let $\Pb_k = \Ib_{p_k}$ in \eqref{eq:A21_def}, then we have entries $A_{k, t, t} = A_{k, p_{k - 1} + t, t} = 1$ for all $t \in [p_{k - 1}]$.
If $\ab_{k, r} * \ab_{k, p_{k - 1} + r} = \eb_r$, then result (i) holds with $i = r$ and $j = p_{k - 1} + r$.
Otherwise, we must have $\ab_{k, r} = \ab_{k, p_{k - 1} + r} = \eb_r + \eb_s$ for a second column $s \in [p_{k - 1}]$.
Since $\Ab \in \cA_{2, 2}$, the $r$th and $s$th columns of $\Ab_k$ are distinct, implying that there exists a row $w \in [p_k]$ such that $A_{k, w, r} \ne A_{k, w, s}$.
If $A_{k, w, r} = 1$ and $A_{k, w, s} = 0$, then we have $\ab_{k, r} * \ab_{k, w} = \eb_r$, so result (i) holds with $i = r$ and $j = w$.
If $A_{k, w, r} = 0$ and $A_{k, w, s} = 1$, then we have $\ab_{k, r} * \ab_{k, w} = \eb_s$, so result (ii) holds with $i = r$, $j = p_{k - 1} + r$, and $\ell = w$.
\end{proof}

Lemma \ref{lemm:A21_A22_A3_to_A23_part2} leads to the following lemma.

\begin{lemma}\label{lemm:A21_A22_A3_to_A23_part3}
Let $S \in \{1, 2\}$.
For any $\Ab \in \cA_{2, 1} \cap \cA_{2, 2} \cap \cA_3$ and $k \in [K]$, there exist $p_{k - 1}$ pairs of indices $(i_1, j_1), \ldots, (i_{p_{k - 1}}, j_{p_{k - 1}}) \in \langle p_k \rangle$ such that the $p_{k - 1} \times p_{k - 1}$ matrix
$$
\Gb_k
:=
\left(\begin{matrix}
\ab_{k, i_1} * \ab_{k, j_1} & \ab_{k, i_2} * \ab_{k, j_2} & \ldots & \ab_{k, i_{p_{k - 1}}} * \ab_{k, j_{p_{k - 1}}}
\end{matrix}\right)
$$
is invertible.
\end{lemma}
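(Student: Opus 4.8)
The plan is to build the $p_{k-1}$ pairs one per column index $r \in [p_{k-1}]$ using Lemma \ref{lemm:A21_A22_A3_to_A23_part2}, and then verify invertibility of $\Gb_k$ by a block-triangular argument. First I would apply Lemma \ref{lemm:A21_A22_A3_to_A23_part2} to every column $r$, which is the engine of the proof: for each $r$, either we directly obtain a pair $(i,j) \in \langle p_k \rangle$ with $\ab_{k,i} * \ab_{k,j} = \eb_r$ (case (i)), or we obtain pairs with $\ab_{k,i} * \ab_{k,j} = \eb_r + \eb_s$ and $\ab_{k,i} * \ab_{k,\ell} = \eb_s$ for some column $s \neq r$ (case (ii)).

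The crucial observation is that case (ii) is self-completing: whenever case (ii) holds for $r$ with partner $s$, the product $\ab_{k,i} * \ab_{k,\ell} = \eb_s$ witnesses that case (i) holds for column $s$. Hence, defining $I := \{r \in [p_{k-1}]:~ \text{case (i) holds for } r\}$ and $II := [p_{k-1}] \cap I^c$, every $r \in II$ admits a partner $s(r) \in I$ with $s(r) \neq r$ such that $\eb_r + \eb_{s(r)}$ is an achievable Hadamard product. I would then select, as the pair assigned to target index $r$, the pair producing $\eb_r$ when $r \in I$ and the pair producing $\eb_r + \eb_{s(r)}$ when $r \in II$. Since the Hadamard product is symmetric in its two arguments, each chosen pair can be ordered to lie in $\langle p_k \rangle$, and the construction yields exactly $p_{k-1}$ pairs $(i_1, j_1), \ldots, (i_{p_{k-1}}, j_{p_{k-1}})$, one per target index.

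To conclude, I would examine the resulting matrix $\Gb_k$ whose $r$th column is $\eb_r$ for $r \in I$ and $\eb_r + \eb_{s(r)}$ for $r \in II$. Fix a coordinate $r \in II$: the column assigned to any $I$-target is a basis vector $\eb_{r'}$ with $r' \in I \neq r$, and the column assigned to any $II$-target $r'' \neq r$ equals $\eb_{r''} + \eb_{s(r'')}$ with $r'' \neq r$ and $s(r'') \in I \neq r$, so neither is nonzero at coordinate $r$. Thus the only column nonzero at coordinate $r$ is the one assigned to target $r$. Ordering rows and columns so that $II$ precedes $I$, this shows the $(II, II)$ block of $\Gb_k$ is the identity and the $(II, I)$ block is zero, while the $(I, I)$ block is the identity; the matrix is block lower triangular with identity diagonal blocks, so $\det \Gb_k = 1$ and $\Gb_k$ is invertible.

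I expect the only real obstacle to be the bookkeeping of the selection rather than the linear algebra: one must check carefully that the partner index $s(r)$ furnished by case (ii) always lands in $I$, so that the off-diagonal contributions $\eb_{s(r)}$ never pollute the coordinates indexed by $II$, and that the assignment is a genuine bijection from target indices to selected pairs rather than collapsing two targets onto one pair. Once the self-completing property of case (ii) is in place, the block-triangular structure follows immediately and invertibility is automatic.
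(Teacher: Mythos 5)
Your proposal is correct and follows essentially the same route as the paper's proof: both partition the target columns into those directly realizable as $\eb_r$ (case (i)) and the rest, use the self-completing nature of case (ii) of Lemma \ref{lemm:A21_A22_A3_to_A23_part2} to guarantee the partner index $s(r)$ lies in the first group, and conclude by a triangularity argument after reordering. The only cosmetic difference is that the paper orders the case-(i) indices first to exhibit $\Gb_k$ as upper triangular with unit diagonal, whereas you order them last to get a block lower triangular matrix with identity diagonal blocks.
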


\begin{proof}[Proof of Lemma \ref{lemm:A21_A22_A3_to_A23_part3}]
We classify the columns $1, \ldots, p_{k - 1}$ of $\Ab_k$ into two groups $\cI_1$ and $\cI_2$ defined as
$$
\cI_1
:=
\big\{
r \in [p_{k - 1}]:~
\exists (i, j) \in \langle p_k \rangle \text{ s.t. } \ab_{k, i} * \ab_{k, j} = \eb_r
\big\}
,\quad
\cI_2
=
[p_{k - 1}] \cap \cI_1^c
.
$$
For each $r \in \cI_1$, we choose $(i_r, j_r) \in \langle p_k \rangle$ to satisfy $\ab_{k, i_r} * \ab_{k, j_r} = \eb_r$.
For each $r \in \cI_2$, Lemma \ref{lemm:A21_A22_A3_to_A23_part2} suggests that we can choose $(i_r, j_r) \in \langle p_k \rangle$ such that $\ab_{k, i_r} * \ab_{k, j_r} = \eb_r + \eb_s$ for some $s \in \cI_1$.
By permuting the rows and columns of $\Gb_k$, without loss of generality we can assume that each index in $\cI_1$ is smaller than each index in $\cI_2$.
This implies that $\Gb_k$ is an upper triangular matrix with all diagonal entries being ones.
Hence $\Gb_k$ is invertible.
\end{proof}

The following lemma combines Lemmas \ref{lemm:A21_A3_to_A23_part1}, \ref{lemm:A3_to_A23_p3}, and \ref{lemm:A21_A22_A3_to_A23_part3}.
To state it, we recall from \eqref{eq:A23_def} that the $\frac{p_k (p_k - 1)}{2} \times (1 + p_{k - 1})$ matrix $\Db_k$ denotes the stack of all row vectors $\big( 1 ~ \ab_{k, i}^\top * \ab_{k, j}^\top \big)$ with $(i, j) \in \langle p_k \rangle$.

\begin{lemma}\label{lemm:A21_A22_A3_to_A23}
Let $S \in \{1, 2\}$.
For any $\Ab \in \cA_{2, 1} \cap \cA_{2, 2} \cap \cA_3$ and $k \in [K]$ with $p_{k - 1} \ge 3$, if $\Ab_k$ does not contain any all-ones column, then $\rank(\Db_k) = p_{k - 1} + 1$.
\end{lemma}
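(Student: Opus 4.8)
The plan is to show the stronger statement that $\Db_k$ has full column rank; since $\Db_k$ has exactly $p_{k-1}+1$ columns, one always has $\rank(\Db_k)\le p_{k-1}+1$, so it suffices to exhibit a $(p_{k-1}+1)$-row submatrix of $\Db_k$ that is invertible. Recall that each row of $\Db_k$ has the form $\big(1~~(\ab_{k,i}*\ab_{k,j})^\top\big)$ for $(i,j)\in\langle p_k\rangle$. First I would invoke Lemma \ref{lemm:A21_A22_A3_to_A23_part3} to obtain $p_{k-1}$ index pairs whose Hadamard products $\gb_s:=\ab_{k,i_s}*\ab_{k,j_s}$ assemble into an invertible matrix $\Gb_k=(\gb_1~\cdots~\gb_{p_{k-1}})$. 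The corresponding $p_{k-1}$ rows $(1~\gb_s^\top)$ of $\Db_k$ are then linearly independent, and their row span is exactly $\{(\one^\top\Gb_k^{-1}\xb,~\xb^\top):\xb\in\RR^{p_{k-1}}\}$. Consequently, adjoining any further row $(1~\hb^\top)$ with $\hb=\ab_{k,i}*\ab_{k,j}$ raises the rank to $p_{k-1}+1$ precisely when $\one^\top\Gb_k^{-1}\hb\ne 1$. The whole proof thus reduces to supplying one such extra row.

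The easiest supply is a \emph{zero pair}, i.e.\ a pair $(i,j)$ with $\hb=\ab_{k,i}*\ab_{k,j}=\zero$: then $\one^\top\Gb_k^{-1}\zero=0\ne 1$, so the $(p_{k-1}+1)\times(p_{k-1}+1)$ submatrix with last row $(1~\zero^\top)$ is invertible (expanding along that last row gives $\pm\det\Gb_k^\top\ne 0$). For $p_{k-1}\ge 4$, Lemma \ref{lemm:A21_A3_to_A23_part1} guarantees a zero pair under the hypothesis that $\Ab_k$ has no all-ones column, so the result follows immediately. For $p_{k-1}=3$, I would apply Lemma \ref{lemm:A3_to_A23_p3}, whose conclusion (i) again furnishes a zero pair and lets me conclude identically; the only remaining configuration is conclusion (ii), where every row of $\Ab_k$ has exactly two nonzero entries and $\Ab_k$ contains, up to row permutation, the block $\Mb_k=(1,1,0;~0,1,1;~1,0,1)$.

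The main obstacle is precisely this degenerate case $p_{k-1}=3$ with structure (ii), where zero pairs need not exist. Here I would argue directly. The three rows forming $\Mb_k$ are (after transposition) $(1,0,1),(1,1,0),(0,1,1)$, and their pairwise Hadamard products are $\eb_1,\eb_2,\eb_3$; hence I may take $\Gb_k=\Ib_3$, so that $\one^\top\Gb_k^{-1}=\one^\top$ and the span criterion becomes $\one^\top\hb\ne 1$, i.e.\ $h_1+h_2+h_3\ne 1$. It remains to find a row with two ones. Since $\Ab\in\cA_{2,1}$ forces $p_k\ge 2p_{k-1}=6$, and every one of these at least six rows is one of the only three two-ones vectors in $\{0,1\}^3$, the pigeonhole principle yields two distinct indices $i\ne j$ with $\ab_{k,i}=\ab_{k,j}$. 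Their product $\hb=\ab_{k,i}$ satisfies $\one^\top\hb=2\ne 1$, so the adjoined row $(1~\hb^\top)$ lies outside the span of the three rows giving $\eb_1,\eb_2,\eb_3$, and the resulting $4\times 4$ submatrix is invertible. Combining the three cases establishes $\rank(\Db_k)=p_{k-1}+1$ in all situations; the pigeonhole count on repeated two-ones rows together with the explicit realization of $\eb_1,\eb_2,\eb_3$ is the crux that handles the degenerate geometry left open by the zero-pair argument.
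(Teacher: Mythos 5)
Your proof is correct and follows essentially the same route as the paper's: the main case combines the zero pair from Lemma \ref{lemm:A21_A3_to_A23_part1} (or case (i) of Lemma \ref{lemm:A3_to_A23_p3}) with the invertible $\Gb_k$ from Lemma \ref{lemm:A21_A22_A3_to_A23_part3}, and the degenerate circulant case with $p_{k-1}=3$ is handled by a duplicated two-ones row, exactly as in the paper (which leaves the pigeonhole count implicit behind its ``without loss of generality''). Your span criterion $\one^\top\Gb_k^{-1}\hb\ne 1$ is just a uniform repackaging of the paper's block-triangular and explicit $4\times 4$ determinant checks, not a different argument.
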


\begin{proof}[Proof of Lemma \ref{lemm:A21_A22_A3_to_A23}]
We start by considering the case of $p_{k - 1} = 3$ with result (ii) in Lemma \ref{lemm:A3_to_A23_p3} holding.
By permuting the rows and columns of $\Ab_k$ (and $\Db_k$ accordingly), without loss of generality we can assume the first four rows of $\Ab_k$ to be
$$
\left(\begin{matrix}
1 & 1 & 0 \\
0 & 1 & 1 \\
1 & 0 & 1 \\
1 & 1 & 0
\end{matrix}\right)
.
$$
We notice that the four vectors
$$
\big( 1 ~ \ab_{k, 1}^\top * \ab_{k, 4}^\top \big)
=
(1 ~ 1 ~ 1 ~ 0)
,~
\big( 1 ~ \ab_{k, 1}^\top * \ab_{k, 2}^\top \big)
=
(1 ~ 0 ~ 1 ~ 0)
,~
\big( 1 ~ \ab_{k, 1}^\top * \ab_{k, 3}^\top \big)
=
(1 ~ 1 ~ 0 ~ 0)
,~
\big( 1 ~ \ab_{k, 2}^\top * \ab_{k, 3}^\top \big)
=
(1 ~ 0 ~ 0 ~ 1)
$$
are linearly independent, hence $\rank(\Db_k) = 4 = p_{k - 1} + 1$.

We now consider the case of $p_{k - 1} \ge 4$ and the case of $p_{k - 1} = 3$ with result (i) in Lemma \ref{lemm:A3_to_A23_p3} holding.
By definition of the pairs of indices $(i_0, j_0)$ and $(i_1, j_1), \ldots, (i_{p_{k - 1}}, j_{p_{k - 1}})$ in Lemmas \ref{lemm:A21_A3_to_A23_part1} and \ref{lemm:A21_A22_A3_to_A23_part3}, we know these pairs must be distinct and the $(p_{k - 1} + 1) \times (p_{k - 1} + 1)$ submatrix of $\Db_k$ formed by the corresponding rows takes the form
$$
\left(\begin{matrix}
1 & \zero^\top \\
\one & \Gb_k^\top
\end{matrix}\right)
.
$$
Since $\Gb_k$ is invertible as shown in Lemma \ref{lemm:A21_A22_A3_to_A23_part3}, the submatrix above is also invertible, and hence $\rank(\Db_k) = p_{k - 1} + 1$.
\end{proof}

Finally, we combine Lemmas \ref{lemm:A21_A3_to_A22} and \ref{lemm:A21_A22_A3_to_A23} to prove Proposition \ref{prop:simp_A2}.

\begin{proof}[Proof of Proposition \ref{prop:simp_A2}]
For $S \in \{1, 2\}$ and $p_0 \ge 3$, let $\Ab$ be any adjacency matrix without an all-ones column.
By definition of $\cA_2$ in \eqref{eq:A2_def}, we have
$$
\Ab \in \cA_2 \cap \cA_3
\implies
\Ab \in \cA_{2, 1} \cap \cA_3
.
$$
From Lemma \ref{lemm:A21_A3_to_A22}, we have
$$
\Ab \in \cA_{2, 1} \cap \cA_3
\implies
\Ab \in \cA_{2, 1} \cap \cA_{2, 2} \cap \cA_3
.
$$
From Lemma \ref{lemm:A21_A22_A3_to_A23_part3}, we have
$$
\Ab \in \cA_{2, 1} \cap \cA_{2, 2} \cap \cA_3
\implies
\Ab \in \cA_{2, 1} \cap \cA_{2, 2} \cap \cA_{2, 3} \cap \cA_3 = \cA_2 \cap \cA_3
.
$$
Together they complete the proof.
\end{proof}

\section{Posterior Consistency}\label{sec:post_theo}

In this section, we present the proofs of our posterior consistency theorems.
We start by providing background on standard techniques in Section \ref{ssec:cite_schwartz_doob}, including Schwartz' theorem and Doob's theorem for (strong) posterior consistency.
The Schwartz' theorem works with the Wasserstein space of probability measures.
By establishing certain topological properties of the parameter space, this consistency of probability measures can be transformed into consistency of model parameters, leading to the proof of Theorem \ref{theo:schwartz} in Section \ref{ssec:prove_schwartz}.
Proofs of auxiliary lemmas on these topological properties are provided in Section \ref{ssec:post_aux}.
The Doob's theorem works directly with model parameters, but it requires identifiability of the model.
It does not automatically apply here due to the non-trivial type of non-identifiability problem of our model, so we introduce the idea of quotient parameters and study the Borel $\sigma$-algebra of the quotient parameter space in Section \ref{ssec:post_quotient}.
Theorem \ref{theo:doob} is then proven in Section \ref{ssec:proof_doob} by applying intermediate steps of the Doob's theorem to the quotient parameters.

\subsection{Schwartz' Theorem and Doob's Theorem}\label{ssec:cite_schwartz_doob}

Before we present proofs of posterior consistency for our model, we first provide an overview of the general posterior consistency theories, including Schwartz' Theorem and Doob's Theorem.
We consider a general sample space $(\cX, \sX)$ and a family of models with distributions $P_\theta$ indexed by $\theta \in \Theta$ with $\Theta$ the parameter space.

Schwartz's theory of posterior consistency \citep{schwartz1965bayes} directly works with the probability densities.
Letting $\nu$ be a dominating measure on $\cX$, we view each distribution $P$ as parameterized by its density $p$ with respect to $\nu$.
The class of densities $\cP$ is the parameter space, equipped with the weak topology and the corresponding Borel $\sigma$-algebra $\sP$.
Let observations $\{X^{(n)}\}_{n = 1}^\infty$ i.i.d. follow the true distribution with density $p^* \in \cP$, such that $(p^*)^\infty$ denotes the joint distribution of $X^{(1:\infty)}$.
We impose the prior distribution $\Pi$ on $\cP$ and denote the posterior distribution given $X^{(1:n)}$ by $\Pi_n$.

Let $d$ be a metric on $\cP$ that induces the weak topology, e.g. the Wasserstein distance.
An $\epsilon$-neighborhood of $p^*$ in $\cP$ is naturally defined as $\big\{ p \in \cP:~ d(p, p^*) < \epsilon \big\}$.
Using the Kullback-Leibler divergence $D_{KL}(q, p) := \int_\cX q \log \frac{q}{p} \ud \nu$, we also define an $\epsilon$-KL neighborhood of $p^*$ to be the set $\big\{ p \in \cP:~ D_{KL}(p^*, p) < \epsilon \big\}$.
We notice that by Pinsker's inequality, every open neighborhood of $p^*$ contains an $\epsilon$-KL neighborhood of $p^*$ for some $\epsilon > 0$.
For the prior distribution $\Pi$, we say $p^*$ belongs to the \emph{KL support} of $\Pi$ if any KL neighborhood of $p^*$ has positive prior probability, i.e. for all $\epsilon > 0$, $\Pi\big\{ p \in \cP:~ D_{KL}(p^*, p) < \epsilon \big\} > 0$.

\begin{theorem}[Schwartz' theorem]\label{theo:cite_schwartz}
If $p^*$ is in the KL support of $\Pi$, then for any open neighborhood $\cO$ of $p^*$, we have
$$
\lim_{n \to \infty} \Pi_n\left( \cO^c \left| X^{(1:n)} \right. \right)
=
0
\quad
(p^*)^\infty \text{-a.s.}
$$
\end{theorem}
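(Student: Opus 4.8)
The plan is to follow the classical two-part argument underlying Schwartz' theorem, writing the posterior probability of $\cO^c$ as a ratio of likelihood-ratio integrals and controlling the numerator and denominator separately. Setting $R_n(p) := \prod_{i=1}^n p(X^{(i)})/p^*(X^{(i)})$ for the likelihood ratio, I would first record the identity
$$
\Pi_n\left( \cO^c \mid X^{(1:n)} \right)
=
\frac{\int_{\cO^c} R_n(p)\, \ud\Pi(p)}{\int_\cP R_n(p)\, \ud\Pi(p)}
=:
\frac{N_n}{I_n}
.
$$
The goal is then to show that $N_n$ decays exponentially fast while $I_n$ decays at most subexponentially, so that the ratio tends to $0$ almost surely.

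\emph{(Denominator.)} Since $p^*$ lies in the KL support of $\Pi$, for each $\epsilon > 0$ the set $U_\epsilon := \{p : D_{KL}(p^*, p) < \epsilon\}$ satisfies $\Pi(U_\epsilon) > 0$. Restricting the evidence integral to $U_\epsilon$ and applying Jensen's inequality to the normalized restriction $\Pi(\cdot \mid U_\epsilon)$ gives
$$
\frac{1}{n}\log \frac{I_n}{\Pi(U_\epsilon)}
\ge
\int_{U_\epsilon} \frac{1}{n}\sum_{i=1}^n \log\frac{p(X^{(i)})}{p^*(X^{(i)})} \, \frac{\ud\Pi(p)}{\Pi(U_\epsilon)}
.
$$
By the strong law of large numbers the inner average converges to $-D_{KL}(p^*, p) > -\epsilon$ for $\Pi$-a.e.\ $p \in U_\epsilon$, and Fatou's lemma then yields $\liminf_n \frac{1}{n}\log I_n \ge -\epsilon$ almost surely. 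Since $\epsilon$ is arbitrary, $e^{n\beta} I_n \to \infty$ for every $\beta > 0$, $(p^*)^\infty$-almost surely.

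\emph{(Numerator.)} Here I would construct tests $\phi_n : \cX^n \to [0,1]$ that are uniformly exponentially consistent against $\cO^c$, meaning $\EE_{(p^*)^n}[\phi_n] \le c_1 e^{-a n}$ and $\sup_{p \in \cO^c} \EE_{p^n}[1 - \phi_n] \le c_2 e^{-a n}$ for some $a > 0$. This is the crux of the argument and the step I expect to be the main obstacle, as it is precisely where the choice of the weak (Wasserstein-induced) topology is essential: a basic weak-open neighborhood of $p^*$ is determined by finitely many bounded continuous functions $g_1, \dots, g_k$ through sets of the form $\{p : |\int g_j\, \ud p - \int g_j\, \ud p^*| < \delta\}$, so a test based on the empirical averages $\frac{1}{n}\sum_i g_j(X^{(i)})$ combined with Hoeffding's inequality delivers the required exponential error bounds. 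For stronger topologies such as total variation or Hellinger, such tests need not exist without additional entropy or sieve conditions, which is why the theorem is stated for the weak topology.

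\emph{(Combination.)} Decomposing $N_n = \phi_n N_n + (1 - \phi_n) N_n$ and using $\Pi_n(\cO^c \mid X^{(1:n)}) = N_n/I_n \le \phi_n + (1-\phi_n) N_n / I_n$, I would take expectations and apply Fubini to obtain
$$
\EE_{(p^*)^n}\big[(1 - \phi_n) N_n\big]
=
\int_{\cO^c} \EE_{p^n}[1 - \phi_n]\, \ud\Pi(p)
\le
c_2 e^{-a n}
.
$$
Summability together with the Borel--Cantelli lemma then forces $\phi_n \to 0$ and $(1 - \phi_n) N_n \le e^{-a n/2}$ eventually, $(p^*)^\infty$-almost surely. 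Dividing the latter bound by the denominator estimate $e^{n\beta} I_n \to \infty$ with any $\beta < a/2$ shows $(1 - \phi_n) N_n / I_n \to 0$ almost surely, whence $\Pi_n(\cO^c \mid X^{(1:n)}) \to 0$ almost surely, completing the proof.
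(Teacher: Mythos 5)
The paper itself does not prove Theorem \ref{theo:cite_schwartz}; it is stated as background and the proof is deferred to \citet{ghosal2017fundamentals}, Section 6.4. Your proposal is exactly that classical argument: the decomposition $\Pi_n(\cO^c \mid X^{(1:n)}) = N_n/I_n$, an evidence lower bound for $I_n$ from the KL support condition, uniformly exponentially consistent tests against $\cO^c$ built from empirical averages of finitely many bounded continuous functions plus Hoeffding (which is precisely where the weak topology is needed, as you correctly diagnose), and a Borel--Cantelli combination. The numerator and combination steps are correct as written.

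There is, however, one step that fails as literally stated: the use of Fatou's lemma in the denominator bound. You apply Fatou to $f_n(p) = \frac{1}{n}\sum_{i=1}^n \log\{p(X^{(i)})/p^*(X^{(i)})\}$, but the inequality $\liminf_n \int f_n \ge \int \liminf_n f_n$ requires the $f_n$ to be bounded below by a fixed integrable function, and log-likelihood ratios admit no such bound (on a KL ball one controls only the mean of $\log(p^*/p)$, not its pointwise size; the elementary counterexample $f_n = -n\, 1_{[0,1/n]}$ on $[0,1]$ shows Fatou is simply false without a lower bound). The standard repairs are: (i) apply Jensen first, giving $\log I_n \ge \log \Pi(U_\epsilon) + \sum_{i=1}^n \int \log\{p(X^{(i)})/p^*(X^{(i)})\} \,\ud\bar\Pi(p)$ with $\bar\Pi$ the renormalized restriction of $\Pi$ to $U_\epsilon$, and then use the SLLN on these i.i.d.\ summands, whose mean is $\ge -\epsilon$ by Fubini (integrability holds since the positive part of $\log(p/p^*)$ has $p^*$-expectation at most $1$, as $(\log x)^+ \le x$); or (ii) apply Fatou on the exponential scale to the nonnegative integrand $e^{n\beta} R_n(p)$, which tends to $\infty$ for $\bar\Pi$-a.e.\ $p$ whenever $\beta > \epsilon$. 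Either route also handles the null-set bookkeeping you glossed over: Fubini on the product of the sample and parameter spaces is what converts ``for each $p$, a.s.'' into ``a.s., for $\bar\Pi$-a.e.\ $p$.'' With that one step repaired, your proof is complete and coincides with the argument in the cited reference.
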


Theorem \ref{theo:cite_schwartz} states that the posterior distribution $\Pi_n$ is (strongly) consistent for any true density $p^*$ in the KL support of the prior distribution $\Pi$.
A proof can be found in \citet{ghosal2017fundamentals} Section 6.4.
To apply Theorem \ref{theo:cite_schwartz} to our model, we notice that a gap exists between the consistency of the probability density $p$ and the consistency of our model parameters $(\Ab, \bTheta)$.
These two notions of consistency can be bridged by utilizing the continuity of $p$ in $(\Ab, \bTheta)$ and enforcing constraints on the topological properties of our parameter space $\cR$. This will be formally shown in the proof of Theorem \ref{theo:schwartz}.

In contrast, Doob's theory of posterior consistency works with an arbitrary parameter space $(\Theta, \sB)$ that is a separable metric space.
It establishes the (strong) consistency of posterior distribution $\Pi_n$ under all true parameters $\theta^*$ in $\Theta$ except for a set that is negligible under the prior distribution $\Pi$.
Let $Q$ be a probability measure on the product space $(\Theta \times \cX^\infty, \sB \times \sX^\infty)$ defined by
\begin{equation}\label{eq:jpmQ}
Q(B \times X)
=
\int_B P_\theta^\infty(E) \Pi(\ud \theta)
,\quad
\forall B \in \sB,~ E \in \sX
.
\end{equation}

\begin{theorem}[Doob's theorem]\label{theo:cite_doob}
If $\theta$ is measurable with respect to the $Q$-completion of $\sigma(X^{(1:\infty)})$, then there exists $\Theta_0 \subset \Theta$ with $\Pi(\Theta_0) = 0$ such that for all $\theta^* \in \Theta \cap \Theta_0^c$ and any open neighborhood $\cO$ of $\theta^*$, we have
$$
\lim_{n \to \infty} \Pi_n\left(
\cO^c \Big| X^{(1:n)}
\right)
=
0
\quad
P_{\theta^*}^\infty \text{-a.s.}
$$
\end{theorem}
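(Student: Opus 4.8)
The plan is to prove Theorem \ref{theo:cite_doob} by the classical martingale argument, exploiting the identifiability hypothesis that $\theta$ can be recovered from the infinite data sequence. First I would unpack the assumption that $\theta$ is measurable with respect to the $Q$-completion of $\sigma(X^{(1:\infty)})$: this produces a measurable map $g:~ \cX^\infty \to \Theta$ with $\theta = g(X^{(1:\infty)})$ holding $Q$-almost surely, where $Q$ is the joint law defined in \eqref{eq:jpmQ}. Since $\Theta$ is a separable metric space, I would next fix a countable family $\{f_k\}_{k \ge 1}$ of bounded continuous functions that is convergence-determining, so that weak convergence of probability measures on $\Theta$ is equivalent to convergence of all the integrals $\int f_k \, \ud \mu$. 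Separability also guarantees the existence of the regular conditional distribution $\Pi_n$ of $\theta$ given $X^{(1:n)}$ under $Q$, which is exactly the posterior.

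The core step is a martingale convergence argument carried out under $Q$. For each fixed $k$, the sequence $M_n^{(k)} := \int_\Theta f_k \, \ud \Pi_n = \EE_Q[f_k(\theta) \mid \sigma(X^{(1:n)})]$ is a uniformly bounded martingale for the filtration $\{\sigma(X^{(1:n)})\}_{n \ge 1}$. By L\'evy's upward theorem $M_n^{(k)}$ converges $Q$-almost surely to $\EE_Q[f_k(\theta) \mid \sigma(X^{(1:\infty)})]$, and by the identifiability representation $\theta = g(X^{(1:\infty)})$ this conditional expectation equals $f_k(\theta)$ up to a $Q$-null set. Intersecting the countably many exceptional null sets over $k$, I obtain $\int f_k \, \ud \Pi_n \to f_k(\theta)$ simultaneously for all $k$, $Q$-almost surely. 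The convergence-determining property then yields $\Pi_n \Rightarrow \delta_\theta$ weakly, and since the limit is a point mass, the Portmanteau theorem gives $\Pi_n(\cO) \to 1$, i.e. $\Pi_n(\cO^c \mid X^{(1:n)}) \to 0$, for every open neighborhood $\cO$ of $\theta$.

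Finally I would transfer this $Q$-almost-sure statement into the desired per-parameter statement by disintegrating $Q$ as $Q(\ud \theta, \ud x) = \Pi(\ud \theta)\, P_\theta^\infty(\ud x)$. Using a countable neighborhood base $\{\cO_m(\theta)\}_{m \ge 1}$ at each point (available by separability) to replace the uncountable family of open sets, I would define the good event $G := \{(\theta, x):~ \Pi_n(\cO_m(\theta)^c \mid x) \to 0 \text{ for all } m\}$ and record from the previous paragraph that $Q(G) = 1$. Applying Fubini to the indicator of $G^c$ forces the section $P_{\theta^*}^\infty(G^c_{\theta^*}) = 0$ for $\Pi$-almost every $\theta^*$; the exceptional set of such $\theta^*$ is precisely the claimed $\Pi$-null set $\Theta_0$, and on its complement the $Q$-a.s.\ identity $g(x) = \theta^*$ ensures that neighborhoods of $g(x)$ are neighborhoods of $\theta^*$, completing the argument. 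The main obstacle will be exactly this last bookkeeping: one must legitimately collapse the quantifiers ``for every $\epsilon$'' and ``for every $\theta^*$'' onto only countably many null sets, and must check the joint measurability of $(\theta^*, x) \mapsto \mathbf{1}_{G}(\theta^*, x)$ before Fubini can be invoked; the martingale convergence itself is routine once $Q$ and the representation $\theta = g(X^{(1:\infty)})$ are in hand.
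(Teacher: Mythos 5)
Your proposal is correct and is essentially the same argument the paper relies on: the paper does not reprove Theorem \ref{theo:cite_doob} but cites \citet{ghosal2017fundamentals} (Section 6.2), whose proof is precisely your route --- L\'evy/Doob martingale convergence of $\EE_Q[f_k(\theta) \mid \sigma(X^{(1:n)})]$ for a countable convergence-determining family, the identity $\EE_Q[f_k(\theta) \mid \sigma(X^{(1:\infty)})] = f_k(\theta)$ from the completion-measurability hypothesis, weak convergence $\Pi_n \Rightarrow \delta_\theta$ $Q$-a.s., and then the Fubini disintegration of $Q$ over a countable neighborhood base to obtain the statement for $\Pi$-almost every $\theta^*$. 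The one imprecision is your claim that separability of $\Theta$ alone guarantees regular conditional distributions; one needs slightly more (e.g.\ $\Theta$ a Borel subset of a Polish space, or a dominated model as in this paper's discrete sample space), which is exactly the setting in which the paper invokes the theorem, so this does not affect the argument.
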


Theorem \ref{theo:cite_doob} is based on an application of Doob's martingale convergence theorem \citep{durrett2019probability}, see \citet{ghosal2017fundamentals} Section 6.2 for a proof.
The measurability of $\theta$ with respect to the $Q$-completion of $\sigma(X^{(1:\infty)})$ in Theorem \ref{theo:cite_doob} can be guaranteed by the slightly stronger condition of $\theta = f(X^{(1:\infty)})$ $P_\theta$-a.s. for every $\theta \in \Theta$ and some measurable function $f: \cX^{\infty} \to \Theta$.
A sufficient condition is model identifiability, as indicated in the following proposition.

\begin{proposition}\label{prop:doob_meas}
Let $\bTheta$ be a Borel subset of a Polish space.
If $\theta \to P_\theta(E)$ is Borel measurable for every $E \in \sX$ and the model $\{P_\theta:~ \theta \in \Theta\}$ is identifiable, then there exists a Borel measurable function $f:~ \cX^\infty \to \Theta$ satisfying for every $\theta \in \Theta$,
$$
\theta
=
f(X^{(1:\infty)})
\quad
P_\theta \text{-a.s.}
$$
\end{proposition}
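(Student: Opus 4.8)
The plan is to build $f$ as a composition of two measurable maps: a statistic $T$ that reconstructs the sampling law $P_\theta$ from the infinite data stream $X^{(1:\infty)}$ almost surely, followed by a measurable inverse of the parametrization $\theta \mapsto P_\theta$, which is injective by identifiability. First I would fix a countable $\pi$-system $\{E_k\}_{k \ge 1}$ generating $\sX$ (in our application this is trivial, since the sample space $\cX = \cX_K$ is finite and $\sX = 2^{\cX_K}$) and define $\Phi \colon \Theta \to [0,1]^{\NN}$ by $\Phi(\theta) := (P_\theta(E_k))_{k \ge 1}$. Each coordinate $\theta \mapsto P_\theta(E_k)$ is Borel measurable by hypothesis, so $\Phi$ is Borel measurable; and if $\Phi(\theta) = \Phi(\theta')$ then $P_\theta$ and $P_{\theta'}$ agree on a generating $\pi$-system, hence coincide by the uniqueness (Dynkin) theorem, whence $\theta = \theta'$ by identifiability. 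Thus $\Phi$ is an injective Borel map.

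Next I would recover $\Phi(\theta)$ from the observations. Under the product measure $P_\theta^\infty$ the $X^{(i)}$ are i.i.d.\ $P_\theta$, so for each fixed $k$ the strong law of large numbers gives $n^{-1}\sum_{i=1}^n \mathbf{1}(X^{(i)} \in E_k) \to P_\theta(E_k)$, $P_\theta^\infty$-almost surely; intersecting these countably many full-measure events yields a single $P_\theta^\infty$-full event on which every empirical frequency converges. I would then define $T \colon \cX^\infty \to [0,1]^{\NN}$ coordinatewise as the limit of the empirical frequency of $E_k$ when that limit exists, and as a fixed default value otherwise. Each coordinate is a limit of measurable functions restricted to a measurable convergence set, so $T$ is Borel measurable, and by construction $T(X^{(1:\infty)}) = \Phi(\theta)$ holds $P_\theta^\infty$-almost surely for every $\theta \in \Theta$.

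The decisive step is to invert $\Phi$ measurably. Since $\Theta$ is a Borel subset of a Polish space it is a standard Borel space, and $[0,1]^{\NN}$ is Polish; by the Lusin--Souslin theorem a Borel injection between standard Borel spaces has Borel image and a Borel-measurable inverse on that image. Hence $\Phi(\Theta)$ is Borel and there is a Borel map $\Phi^{-1} \colon \Phi(\Theta) \to \Theta$ with $\Phi^{-1} \circ \Phi = \mathrm{id}_\Theta$; extending it to all of $[0,1]^{\NN}$ by sending the complement of $\Phi(\Theta)$ to a fixed reference point $\theta_0 \in \Theta$ preserves Borel measurability. Setting $f := \Phi^{-1} \circ T$, which is Borel measurable as a composition, I conclude that for each $\theta$, on the $P_\theta^\infty$-full event where $T(X^{(1:\infty)}) = \Phi(\theta)$ one has $f(X^{(1:\infty)}) = \Phi^{-1}(\Phi(\theta)) = \theta$, i.e.\ $\theta = f(X^{(1:\infty)})$ holds $P_\theta$-almost surely.

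I expect the main obstacle to be precisely this measurable-inversion step: identifiability by itself supplies only a set-theoretic inverse of $\theta \mapsto P_\theta$, and upgrading it to a genuinely Borel-measurable inverse is where the standard Borel structure of $\Theta$ and the Lusin--Souslin theorem are indispensable. A secondary technical point is that $\sX$ must be countably generated for a single almost-sure event to recover the entire law from empirical frequencies; this is automatic here because $\cX_K$ is finite, and holds more generally whenever the sample space is standard Borel.
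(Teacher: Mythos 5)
Your proposal is correct and is essentially the proof the paper relies on (via its citation of \citet{ghosal2017fundamentals}, Section 6.2): the empirical-frequency/SLLN step recovers the law from $X^{(1:\infty)}$, and the Lusin--Souslin theorem you invoke is exactly the Kuratowski theorem the paper singles out as the decisive ingredient for obtaining a Borel inverse of the injective Borel map $\theta \mapsto P_\theta$. Your identification of the measurable-inversion step as the crux, resting on $\Theta$ being standard Borel, matches the paper's own remark precisely.
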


The proof of Proposition \ref{prop:doob_meas} can also be found in \citet{ghosal2017fundamentals} Section 6.2.
We note that the Borel measurability of $\theta \to P_\theta(X)$ and the parameter space $\Theta$ being the Borel subset of a Polish space is important, because the proof utilizes the Kuratowski's theorem \citep{srivastava2008course} to transform measurability of $P_\theta$ with respect to $\sigma(X^{(1:\infty)})$ to measurability of $\theta$ with respect to $\sigma(X^{(1:\infty)})$.
Kuratowski's theorem states that the inverse of a one-to-one Borel map between Polish spaces is also Borel measurable, which does not hold in general for other measurable maps or general spaces.

For our model with some parameter space $\cA \times \cT$, we notice that neither Theorem \ref{theo:cite_doob} nor Proposition \ref{prop:doob_meas} directly apply, due to the trivial type of non-identifiability in $\perm_{\cA \times \cT}(\Ab, \bTheta)$.
Recall that in Definition \ref{defi:post}, we have loosened the requirement of posterior consistency in our model from the posterior distribution asymptotically concentrating at $(\Ab, \bTheta)$ to asymptotically having support within $\perm_{\cA \times \cT}(\Ab, \bTheta)$.
We eliminate the trivial type of non-identifiability by considering the quotient parameter space given by the equivalence relation of $(\Ab, \bTheta) \sim (\Ab', \bTheta')$ for all $(\Ab', \bTheta') \in \perm_{\cA \times \cT}(\Ab, \bTheta)$.
We will show properties of the quotient Borel-algebra and prove Theorem \ref{theo:doob} using Proposition \ref{prop:doob_meas} later this section.

\subsection{Proof of Theorem \ref{theo:schwartz}}\label{ssec:prove_schwartz}

To avoid confusion with notations, within this section we will use $\Pi(\Ab, \bTheta) := \PP(\Ab, \bTheta)$ to denote the prior distribution of $(\Ab, \bTheta)$ and $\Pi_N(\Ab, \bTheta) := \PP(\Ab, \bTheta | \Xb_K^{(1:N)})$ to denote the posterior distribution of $(\Ab, \bTheta)$ given observed adjacency matrices $\Xb_K^{(1:N)}$.

We consider the counting measure $\upmu$ as the base measure on the space $\cX_K$ of observed adjacency matrices, then a distribution's density with respect to $\upmu$ on $\cX_K$ is just its probability mass function.
For a parameter $(\Ab, \bTheta) \in \cR$, we denote the probability vector of $\PP(\Xb_K | \Ab, \bTheta)$ under our model formulations \eqref{eq:model_entry} and \eqref{eq:model} by
$$
\cL(\Ab, \bTheta)
:=
\big\{
\PP(\Xb_K | \Ab, \bTheta)
\big\}_{\Xb_K \in \cX_K}
,
$$
which belongs to the space of probability measures $\cP := \cS^{|\cX_K| - 1}$ on $\cX_K$, i.e. the probability simplex.
Recalling that the space $\cA_0 \times \cT_0$ is equipped with the $L_1$ distance, we let its subspace $\cR \subset \cA_0 \times \cT_0$ inherit the same metric.
Since $\cX_K$ is discrete and finite-dimensional, the weak topology on $\cP$ is the same as the topology induced by the $L_1$ distance inherited as a subspace of $\RR^{|\cX_K|}$.
Our model specifications in \eqref{eq:model_entry} and \eqref{eq:model} ensure for every $\Xb_K \in \cX_K$ that the probability $\PP(X_K | \Ab, \bTheta)$ is a continuous function of $\Ab$ and $\bTheta$.
We therefore find that the map $\cL:~ \cR \to \cP$ is continuous.

We notice that Theorem \ref{theo:schwartz} considers true parameter $(\Ab^*, \bTheta^*)$ in the support of the prior distribution $\Pi(\Ab, \bTheta)$, whereas Theorem \ref{theo:cite_schwartz} applies to true probability density $\cL(\Ab^*, \bTheta^*)$ within the KL support of the prior.
We bridge their connections in the following lemma.
Note that we denote the pushforward measure of $\Pi$ through the map $\cL$ on $\cP$ by $\cL_\# \Pi$.

\begin{lemma}\label{lemm:supp_kl_supp}
If $(\Ab, \bTheta) \in \cR$ is in the support of $\Pi$, then $\cL(\Ab, \bTheta)$ is in the KL support of $\cL_\# \Pi$.
\end{lemma}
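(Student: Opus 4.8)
The plan is to unwind both ``support'' conditions into statements about open neighborhoods and then transport an open neighborhood of $(\Ab, \bTheta)$ forward through the map $\cL$ to a KL neighborhood of $p^* := \cL(\Ab, \bTheta)$, using the continuity of $\cL$ together with the continuity of the Kullback--Leibler divergence. First I would record the two characterizations. Being in the support of $\Pi$ means $\Pi(\cO) > 0$ for every open $\cO \subset \cR$ containing $(\Ab, \bTheta)$; being in the KL support of $\cL_\# \Pi$ means $(\cL_\# \Pi)\{ p \in \cP:~ D_{KL}(p^*, p) < \epsilon \} > 0$ for every $\epsilon > 0$. By the definition of the pushforward, this last probability equals $\Pi\{ (\Ab', \bTheta') \in \cR:~ D_{KL}(p^*, \cL(\Ab', \bTheta')) < \epsilon \}$, so it suffices to show this preimage set has positive prior mass for each fixed $\epsilon > 0$.

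Next I would observe that $p^*$ lies in the open simplex $\cS_+^{|\cX_K| - 1}$, i.e. $\PP(\Xb_K | \Ab, \bTheta) > 0$ for every $\Xb_K \in \cX_K$. This holds because $\bnu \in \cS_+$ makes $\PP(\Xb_0 | \bnu)$ strictly positive, each Bernoulli factor in \eqref{eq:model_entry} is strictly between $0$ and $1$ since its logit $C_k + \ab_{k, i}^\top (\bGamma_k * \Xb_{k - 1}) \ab_{k, j}$ is finite, and the marginal $\PP(\Xb_K | \Ab, \bTheta)$ obtained by summing over the latent adjacency matrices is a finite sum of strictly positive products. On the open simplex the map $p \mapsto D_{KL}(p^*, p) = \sum_{\Xb_K \in \cX_K} p^*(\Xb_K) \log \frac{p^*(\Xb_K)}{p(\Xb_K)}$ is a finite sum of functions continuous at $p = p^*$, where it equals zero. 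Hence there exists $\delta > 0$ with $\|p - p^*\|_1 < \delta \implies D_{KL}(p^*, p) < \epsilon$, recalling that the $L_1$ metric induces the weak topology on the finite-dimensional space $\cP$.

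Finally I would invoke the continuity of $\cL:~ \cR \to \cP$ established just before the lemma. By continuity there is an open neighborhood $\cO$ of $(\Ab, \bTheta)$ in $\cR$ with $\cL(\cO) \subset \{ p \in \cP:~ \|p - p^*\|_1 < \delta \}$, so that $D_{KL}(p^*, \cL(\Ab', \bTheta')) < \epsilon$ for all $(\Ab', \bTheta') \in \cO$. Since $(\Ab, \bTheta)$ is in the support of $\Pi$, we have $\Pi(\cO) > 0$, and therefore the set $\{ (\Ab', \bTheta') \in \cR:~ D_{KL}(p^*, \cL(\Ab', \bTheta')) < \epsilon \} \supseteq \cO$ carries positive prior probability, which is exactly what is required.

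I do not expect a serious obstacle in this argument; the two points demanding care are the strict positivity of $p^*$ and the finiteness and continuity of $D_{KL}(p^*, \cdot)$ near $p^*$ that it guarantees, and the identification of the KL-neighborhood mass under $\cL_\# \Pi$ with the prior mass of its preimage under $\cL$. The former is the genuinely model-specific step, since it is where the interior constraint $\bnu \in \cS_+$ and the finiteness of the logistic parameters enter; everything else is a routine continuity-plus-pushforward bookkeeping exercise.
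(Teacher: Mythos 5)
Your proof is correct, and it reaches the conclusion by a genuinely more elementary route than the paper's. The skeleton is the same in both arguments: establish that $p^* = \cL(\Ab, \bTheta)$ lies in the interior $\cS_+^{|\cX_K| - 1}$ of the simplex, show that a sufficiently small $L_1$-ball around $p^*$ is contained in the $\epsilon$-KL neighborhood, and then pull that ball back through the continuous map $\cL$ to an open set of positive prior mass. The difference is in the middle step. The paper proves the inclusion quantitatively: it first uses Pinsker's inequality to confine KL neighborhoods inside a compact $L_1$-ball $\overline\cO_{\kappa/2}(p^*)$ lying strictly inside the open simplex, obtains uniform bounds $a \le \frac{\ud q}{\ud p^*} \le b$ there, and then invokes the \emph{reverse} Pinsker inequality to get a Lipschitz-type bound $D_{KL}(p^*, q) \le m \|p^* - q\|_1$, yielding $\cO_{\epsilon/m}(p^*) \subset \cO_\epsilon^{KL}(p^*)$ with explicit constants. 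You instead observe that, because $\cX_K$ is finite and $p^*$ has all coordinates strictly positive, the map $q \mapsto D_{KL}(p^*, q)$ is simply continuous at $q = p^*$ (each term $-p^*(\Xb_K)\log q(\Xb_K)$ is finite and continuous once $\|q - p^*\|_1 < \min_{\Xb_K} p^*(\Xb_K)$), and it vanishes there, which already gives a $\delta$ with $\cO_\delta(p^*) \subset \cO_\epsilon^{KL}(p^*)$. Since the lemma only needs positive mass of each KL neighborhood, this qualitative continuity suffices, and your argument avoids both named inequalities; the paper's quantitative version would only matter if one wanted explicit radii or rates. Both proofs rest on the same two model-specific facts — strict positivity of $p^*$ (from $\bnu \in \cS_+$ and finite logits) and continuity of $\cL$ — which you identify correctly, and your bookkeeping of the pushforward identity $(\cL_\#\Pi)(B) = \Pi(\cL^{-1}(B))$ is sound.
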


The proof of Lemma \ref{lemm:supp_kl_supp} is deferred to Section \ref{ssec:post_aux}.
The continuity of the map $\cL$, the Pinsker's inequality, and the reverse Pinsker's inequality \citep{binette2019note} are used to bridge the connection between neighborhoods in $\cR$, neighborhoods in $\cP$, and KL neighborhoods in $\cP$.

In Theorem \ref{theo:schwartz}, the condition (i) requires the parameter space $\cR$ to be compact, while the condition (ii) only requires the parameter space $\cR \subset \cA_1 \times \cT_0$.
The following lemma shows that the space $\cA_1 \times \cT_0$ is actually not far from being compact for our posterior consistency purposes.
For a probability measure $p \in \cP$, we let $\cO_\epsilon(p)$ denote the $\epsilon$-neighborhood of $p$ in $\cP$.

\begin{lemma}\label{lemm:A1T0_almost_compact}
Let true parameter $(\Ab^*, \bTheta^*) \in \cA_1 \times \cT_0$.
Then there exists $\epsilon > 0$ and a compact set $B \subset \cA_1 \times \cT_0$ such that
$$
\cL^{-1}\left(
\cO_\epsilon(\cL(\Ab^*, \bTheta^*))
\right)
\cap
(\cA_1 \times \cT_0)
\subset
B
.
$$
\end{lemma}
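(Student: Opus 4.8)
The plan is to exploit the structure of $\cA_1 \times \cT_0$, where the discrete part $\cA_1$ is finite and the continuous part $\cT_0$ is an open but unbounded manifold. Since $\cA_1 \subset \cA_0$ is a finite set (each $\Ab_k$ is a binary matrix of bounded dimensions), compactness fails only because $\cT_0$ is not closed: the simplex interior $\cS_+^{m}$ allows $\nu_i \to 0$, each $C_k \in \RR$ is unbounded, and each $\Gamma_{k,i,j} > 0$ can tend to $0$ or $+\infty$. The key observation is that the map $\cL$ is continuous and, crucially, sends these ``escaping'' directions in parameter space to the \emph{boundary} of the probability simplex $\cP$ or to distributions that are uniformly bounded away from $\cL(\Ab^*, \bTheta^*)$ in the weak topology. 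Thus the preimage of a small enough $\epsilon$-neighborhood of $\cL(\Ab^*, \bTheta^*)$ cannot contain parameters that run off to these problematic limits, confining the preimage to a compact box.

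First I would fix the true parameter $(\Ab^*, \bTheta^*)$ and note that since $\bTheta^* \in \cT_0$, all its coordinates are interior: $\nu_i^* > 0$, $C_k^* \in \RR$, $\Gamma_{k,i,j}^* > 0$. Because $(\Ab^*, \bTheta^*) \in \cA_1 \times \cT_0$, Theorem \ref{theo:strict} guarantees strict identifiability, so $\cL(\Ab, \bTheta) = \cL(\Ab^*, \bTheta^*)$ forces $(\Ab, \bTheta) \in \perm_{\cA_1 \times \cT_0}(\Ab^*, \bTheta^*)$; in particular the $\cL$-preimage of the single point is a finite set with a common compact set of parameter values (the permutations only shuffle coordinates among finitely many fixed values). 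The real content is to show the preimage remains bounded when we allow a small $\epsilon$-ball rather than a single point. I would argue by contradiction on each unbounded direction: suppose there is a sequence $(\Ab^{(n)}, \bTheta^{(n)})$ in $\cL^{-1}(\cO_\epsilon(\cL(\Ab^*, \bTheta^*))) \cap (\cA_1 \times \cT_0)$ escaping to infinity. Since $\cA_1$ is finite, infinitely many share a common $\Ab$; passing to this subsequence, some continuous coordinate of $\bTheta^{(n)}$ diverges. I would then show each type of divergence produces a limiting probability vector $\cL(\Ab, \bTheta^{(n)})$ that is at $L_1$-distance at least some fixed $\delta_0 > 0$ from $\cL(\Ab^*, \bTheta^*)$, contradicting membership in the $\epsilon$-ball once $\epsilon < \delta_0$.

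The heart of the argument is analyzing how $\cL$ behaves under each escape direction. When some $C_k \to \pm\infty$ or some $\Gamma_{k,i,j} \to +\infty$, the corresponding Bernoulli entry probabilities in \eqref{eq:model_entry} saturate to $0$ or $1$, forcing $\cL(\Ab, \bTheta^{(n)})$ to a degenerate point mass on the simplex boundary; since $\cL(\Ab^*, \bTheta^*)$ has all entries strictly positive (every adjacency configuration has positive probability under the interior parameter $\bTheta^*$, as the logits are finite), there is a uniform gap. When $\Gamma_{k,i,j} \to 0^+$, the layer-$k$ conditional distributions for different $\Xb_{k-1}$ coalesce, again changing the marginal $\cL$ in a detectable way bounded away from the target; here I would invoke strict identifiability (Lemma \ref{lemm:strict_param}, which shows $\Gamma_{k,i,j} > 0$ is pinned down by the conditional law) to guarantee the target marginal genuinely requires the positive values $\Gamma^*_{k,i,j}$. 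When $\nu_i \to 0$, the marginal $\PP(\Xb_0 \mid \bnu)$ loses support on some configuration, and propagating through the (fixed, full-rank) layer maps $\bLambda_k$ changes $\vb_K = \cL(\Ab,\bTheta)$ by a fixed amount. The main obstacle will be establishing these gaps \emph{uniformly}, i.e. choosing a single $\epsilon > 0$ and compact box $B$ that simultaneously handles all finitely many $\Ab \in \cA_1$ and all finitely many continuous coordinates; I expect to handle this by taking $\epsilon$ as the minimum over the finitely many $\Ab$ and coordinate directions of the corresponding individual gaps, and defining $B$ as the product of $\cA_1$ with a closed box $[\,-M, M\,]$ for the $C_k$'s, $[\eta, M]$ for the $\Gamma$'s, and a closed subsimplex $\{\nu_i \ge \eta\}$ for $\bnu$, with $M$ large and $\eta > 0$ small chosen from the contradiction argument. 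The continuity of $\cL$ (already established above) then ensures $B$ is compact and contains the full preimage.
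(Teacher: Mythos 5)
Your skeleton (argue by contradiction, use finiteness of $\cA_1$ to fix $\Ab$ along a subsequence, make each continuous coordinate converge in the extended reals, then do a case analysis on the divergence modes) is the same as the paper's, but your case analysis has a genuine gap at its core: it only works when the divergence occurs at the \emph{last} layer $K$. If $C_k \to -\infty$ (or $\Gamma_{k,i,j} \to \infty$) at an intermediate layer $k < K$ while the layer-$K$ parameters stay bounded and interior, the observed marginal $\cL(\Ab, \bTheta^{(n)})$ does \emph{not} degenerate to the simplex boundary: conditional on any realization of $\Xb_{K-1}$, every observed adjacency configuration still has strictly positive probability, so the limit of $\cL(\Ab, \bTheta^{(n)})$ is an interior point of $\cP$. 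Your "uniform gap to the boundary" reasoning therefore says nothing in this case; what must be shown is that this interior limit \emph{differs} from $\cL(\Ab^*, \bTheta^*)$, and that cannot be done by citing Lemma \ref{lemm:strict_param} or Theorem \ref{theo:strict}, because those results compare exact laws of parameters \emph{inside} $\cA_1 \times \cT_0$, whereas the limiting object lives on the boundary of the parameter space and is not a model parameter at all. The paper's proof supplies exactly the missing tool: it lifts the whole sequence to the compact space $\cW$ of layerwise conditional/marginal probability vectors and matrices $\Wb = \{\bLambda_k\} \cup \{\vb_k\}$, and invokes Lemma \ref{lemm:strict_tensor} --- which, crucially, is a Kruskal-uniqueness statement valid for \emph{arbitrary} normalized factorizations $\tilde\bLambda_k \tilde\vb_{k-1} = \bLambda_k^* \vb_{k-1}^*$, including degenerate limits --- to conclude $\Wb^{(m)} \to \Wb^*$ up to shuffles. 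Only then does each divergence mode yield a contradiction, because it forces some entry of the limiting $\bLambda_k$ to vanish or two of its columns to coincide, violating entrywise properties of $\bLambda_k^*$ (strict positivity, distinct columns) established in the proof of Lemma \ref{lemm:strict_sX}.

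Two further points would also need repair. First, several coordinates can diverge simultaneously (e.g.\ $C_K \to -\infty$ while $\Gamma_{K,i,j} \to +\infty$ keeps some observed edge probabilities interior), so the cases cannot be treated one coordinate at a time; the paper orders its six cases (e.g.\ handling $\Gamma \to \infty$ only under $\neg\{C_k \to -\infty\}$) and evaluates conditional probabilities at carefully chosen conditioning values such as $\Xb_{k-1} = \Ib_{p_{k-1}}$ and pure-node pairs, which isolate a single parameter regardless of what the others do. Second, your plan to obtain a single $\epsilon$ by taking a minimum of "gaps" over finitely many $\Ab$'s and coordinate directions is not sound as stated: the gap associated with one direction is an infimum over infinitely many escaping sequences (all other coordinates varying freely), not a finite minimum, and proving that this infimum is positive is equivalent to proving that no escaping sequence has $\cL$-values converging to $\cL(\Ab^*, \bTheta^*)$ --- which is precisely what the paper's diagonal construction ($\epsilon_m = 1/m$ with growing boxes $B^{(m)}$) establishes directly, bypassing uniform gaps altogether.
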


Lemma \ref{lemm:A1T0_almost_compact} is obtained by exploiting the topological properties of the parameter space $\cA_1 \times \cT_0$, the space of probability measures $\cP$, and the space of probability vectors and matrices $\vb_k, \bLambda_k$ earlier defined in Section \ref{sec:iden}.
We defer the detailed proof to Section \ref{ssec:post_aux}.

With Lemmas \ref{lemm:supp_kl_supp} and \ref{lemm:A1T0_almost_compact}, we can verify the KL support condition in Theorem \ref{theo:cite_schwartz} and work with a compact or almost compact parameter space.
Theorem \ref{theo:strict} is also used to ensure identifiability of the true parameter under condition (ii) of Theorem \ref{theo:schwartz}.
We prove Theorem \ref{theo:schwartz} as follows.

\begin{proof}[Proof of Theorem \ref{theo:schwartz}]
For true parameter $(\Ab^*, \bTheta^*)$ in the support of prior distribution $\Pi$, Lemma \ref{lemm:supp_kl_supp} suggests that the true probability measure $\cL(\Ab^*, \bTheta^*)$ is in the KL-support of $\cL_\# \Pi$.
Therefore, we can apply Theorem \ref{theo:cite_schwartz} and obtain that for any $\epsilon > 0$,
$$
\lim_{n \to \infty} \cL_\# \Pi_n\left(
\cO_\epsilon(\cL(\Ab^*, \bTheta^*))^c
\right)
=
0
\quad
\PP\left(
\Xb_K^{(1:\infty)} \Big| \Ab^*, \bTheta^*
\right) \text{-a.s.}
$$

From Lemma \ref{lemm:A1T0_almost_compact}, we know under either condition (i) or (ii) of Theorem \ref{theo:schwartz}, there exists $\kappa > 0$ and a compact set $B \subset \cA_0 \times \cT_0$ such that $\cL^{-1}\big( \cO_\kappa(\cL(\Ab^*, \bTheta^*)) \big) \subset B$, which implies
$$
\cL(B^c)
\subset
\cO_\kappa\left(
\cL(\Ab^*, \bTheta^*)
\right)^c
.
$$

Let $\delta > 0$ be arbitrary.
Recalling the definition of $\tilde\cO_\delta(\Ab^*, \bTheta^*)$ from \eqref{eq:eps_nbhd} and using the continuity of $\cL$, we know that $\cL\big( \tilde\cO_\delta(\Ab^*, \bTheta^*)^c \cap B \big)$ is a compact set in $\cP$.
Under either condition (i) or condition (ii), using Theorem \ref{theo:strict} we know that the true parameter $(\Ab^*, \bTheta^*) \in \cR$ is identifiable, i.e. $\perm_\cR(\Ab^*, \bTheta^*) = \marg_\cR(\Ab^*, \bTheta^*)$.
This suggests that the true probability measure $\cL(\Ab^*, \bTheta^*) \notin \cL\big( \tilde\cO_\delta(\Ab^*, \bTheta^*)^c \cap B \big)$, so its $L_1$-distance to the compact set $\cL\big( \tilde\cO_\delta(\Ab^*, \bTheta^*)^c \cap B \big)$ can be lower bounded by some $\epsilon > 0$, i.e.
$$
\cL\left(
\tilde\cO_\delta(\Ab^*, \bTheta^*)^c \cap B
\right)
\subset
\cO_\epsilon(\cL(\Ab^*, \bTheta^*))^c
.
$$

Putting everything together, we obtain
$$
\cL\big( \tilde\cO_\delta(\Ab^*, \bTheta^*)^c \big)
\subset
\cL\left(
\tilde\cO_\delta(\Ab^*, \bTheta^*)^c \cap B
\right)
\cup
\cL(B^c)
\subset
\cO_{\min\{\epsilon, \kappa\}}(\cL(\Ab^*, \bTheta^*))^c
$$
with $\min\{\epsilon, \kappa\} > 0$, which gives
\begin{align*}
\lim_{n \to \infty} \Pi_n\left(
\tilde\cO_\delta(\Ab^*, \bTheta^*)^c
\right)
&\le
\lim_{n \to \infty} \cL_\# \Pi_n\left(
\cL\big( \tilde\cO_\delta(\Ab^*, \bTheta^*)^c \big)
\right)
\\&\le
\lim_{n \to \infty} \cL_\# \Pi_n\left(
\cO_{\min\{\epsilon, \kappa\}}(\cL(\Ab^*, \bTheta^*))^c
\right)
=
0
\end{align*}
$\PP(\Xb_K^{(1:\infty)} | \Ab^*, \bTheta^*)$-almost surely.
\end{proof}

\subsection{Proof of Auxiliary Lemmas}\label{ssec:post_aux}

\begin{proof}[Proof of Lemma \ref{lemm:supp_kl_supp}]
For $\epsilon > 0$ and $p \in \cP$, we let $\cO_\epsilon(p)$ denote the $\epsilon$-neighborhood of density $p$ in $\cP$ and let $\cO^{KL}_\epsilon(p)$ denote the $\epsilon$-KL neighborhood of $p$ in $\cP$, i.e.
$$
\cO_\epsilon(p)
:=
\big\{
q \in \cP:~
\|p - q\|_1 < \epsilon
\big\}
,\quad
\cO_\epsilon^{KL}(p)
:=
\big\{
q \in \cP:~
D_{KL}(p, q) < \epsilon
\big\}
.
$$
Similarly, for $(\Ab, \bTheta) \in \cR$, we let $\cO_\epsilon(\Ab, \bTheta)$ denote the $\epsilon$-neighborhood of parameter $(\Ab, \bTheta)$ in $\cR$, i.e.
$$
\cO_\epsilon(\Ab, \bTheta)
:=
\big\{
(\Ab', \bTheta') \in \cR:~
\|(\Ab', \bTheta') - (\Ab, \bTheta)\|_1 < \epsilon
\big\}
.
$$

Let $(\Ab, \bTheta) \in \cR$ be an arbitrary parameter in the support of $\Pi$.
Our model assigns positive probability $\PP(\Xb_K | \Ab, \bTheta)$ to any $\Xb_K \in \cX_K$, which suggests that the probability measure $p := \cL(\Ab, \bTheta)$ belongs to the (combinatorial) interior of the probability simplex $\cS_+^{|\cX_K| - 1}$.
Using Pinsker's inequality, for any $q \in \cP$ we have $D_{KL}(p, q) \ge \frac12 \|p - q\|_1^2$, which suggests for any $\epsilon > 0$,  the neighborhoods $\cO_\epsilon^{KL}(p) \subset \cO_{\sqrt{2\epsilon}}(p)$.
We let $\kappa := \inf_{q \in \partial \cS^{|\cX_K| - 1}} \|p - q\|_1 > 0$ denote the $L_1$-distance between $p$ and the (combinatorial) boundary of $\cS^{|\cX_K| - 1}$, then we have
$$
\cO_\epsilon^{KL}(p)
\subset
\overline\cO_{\frac{\kappa}{2}}(p)
\subsetneq
\cS_+^{|\cX_K| - 1}
,\quad
\forall 0 < \epsilon \le \frac{\kappa^2}{8}
,
$$
where $\overline{\cO}$ denotes the closure of $\cO$.

By compactness of $\overline\cO_{\frac{\kappa}{2}}(p)$, there exists finite positive values $a, b > 0$ such that for all $q \in \cO_\epsilon^{KL}(p)$ we have $a \le \frac{\ud q}{\ud p} \le b$, where $\frac{\ud q}{\ud p}$ denotes the Radon-Nikodym derivative of the probability measures $q$ over $p$, i.e. the entrywise division of $q$ over $p$.
Using reverse Pinsker's inequality \citep{binette2019note}, we obtain for $0 < \epsilon \le \frac{\kappa^2}{8}$
$$
D_{KL}(p, q)
\le
\frac12 \left(
\frac{\log a}{1 - a} + \frac{\log b}{b - 1}
\right) \|p - q\|_1
,\quad
\forall q \in \cO_\epsilon^{KL}(p)
.
$$
Denoting $m := \frac12 \big( \frac{\log a}{1 - a} + \frac{\log b}{b - 1} \big) > 0$, the above suggests
$$
\cO_{\frac{\epsilon}{m}}(p)
\subset
\cO_\epsilon^{KL}(p)
,\quad
\forall 0 < \epsilon \le \frac{\kappa^2}{8}
.
$$

Since the map $\cL:~ \cR \to \cP$ is continuous, for any $\epsilon > 0$, the pre-image $\cL^{-1}\big( \cO_{\min\{\frac{\epsilon}{m}, \frac{\kappa^2}{8m}\}}(p) \big)$ is open in $\cR$ and contains the $\delta$-neighborhood $\cO_\delta(\Ab, \bTheta)$ for some $\delta > 0$.
Therefore, since $(\Ab, \bTheta)$ is in the support of $\Pi$, for any $\epsilon > 0$ we have
\begin{align*}
(\cL_\# \Pi)\left(
\cO_\epsilon^{KL}(p)
\right)
&\ge
(\cL_\# \Pi)\left(
\cO_{\min\{\epsilon, \frac{\kappa^2}{8}\}}^{KL}(p)
\right)
\\&\ge
(\cL_\# \Pi)\left(
\cO_{\min\{\frac{\epsilon}{m}, \frac{\kappa^2}{8m}\}}(p)
\right)
\\&=
\Pi\left( \cL^{-1}\left(
\cO_{\min\{\frac{\epsilon}{m}, \frac{\kappa^2}{8m}\}}(p)
\right) \right)
\\&\ge
\Pi(\cO_\delta(\Ab, \bTheta))
\\&>
0
.
\end{align*}
This suggests that $p = \cL(\Ab, \bTheta)$ is in the KL support of $\cL_\# \Pi$.
\end{proof}

\begin{proof}[Proof of Lemma \ref{lemm:A1T0_almost_compact}]
We prove by contradiction.
Suppose
\begin{equation}\label{eq:A1T0_almost_compact_supp}
\cL^{-1}\big( \cO_\epsilon(\cL(\Ab^*, \bTheta^*)) \big)
\cap
(\cA_1 \times \cT_0)
\cap
B^c
\ne
\varnothing
\end{equation}
for any $\epsilon > 0$ and any compact set $B \subset \cA_1 \times \cT_0$.
We consider the sequence of positive values $\{\delta^{(m)}\}_{m = 1}^\infty$ and the sequence of compact sets $\{B^{(m)}\}_{m = 1}^\infty$ defined as as $\delta^{(m)} := \frac{1}{m}$ and
$$
B^{(m)}
:=
\left\{
(\Ab, \bTheta) \in \cA_1 \times \cT_0:~
\forall k \in [K], 1 \le i \le j \le p_{k - 1},~
-m \le C_k \le m,~
\frac{1}{m} \le \Gamma_{k, i, j} \le m
\right\}
.
$$
By \eqref{eq:A1T0_almost_compact_supp}, we can find a sequence of parameters $\{(\Ab^{(m)}, \bTheta^{(m)})\}_{m = 1}^\infty \subset \cA_1 \times \cT_0$ such that
$$
\cL(\Ab^{(m)}, \bTheta^{(m)}) \in \cO_\epsilon(\cL(\Ab^*, \bTheta^*))
,\quad
(\Ab^{(m)}, \bTheta^{(m)}) \notin B^{(m)}
,\quad
\forall m \in \NN
.
$$
Since the space of probability measure $\cP$ is compact, we can take a subsequence of $\{(\Ab^{(m)}, \bTheta^{(m)})\}_{m = 1}^\infty$ such that the corresponding subsequence of probability measures $\{\cL(\Ab^{(m)}, \bTheta^{(m)})\}_{m = 1}^\infty$ converges in $\cP$.
We can further take a subsequence, which we still denote by $\{(\Ab^{(m)}, \bTheta^{(m)})\}_{m = 1}^\infty$ for notational simplicity, such that the sequence of $\Ab^{(m)}$ is constant and each sequence of $C_k^{(m)}$ and $\Gamma_{k, i, j}^{(m)}$ converges in the extended real line $\RR \cup \{\pm\infty\}$.

We recall the definitions of probability vectors and matrices $\vb_k, \bLambda_k$ from \eqref{eq:def_vk}, \eqref{eq:def_lambdakij}, \eqref{eq:def_lambdak}.
For each parameter $(\Ab^{(m)}, \bTheta^{(m)})$, we denote the corresponding probability vectors and matrices by $\vb_k^{(m)} (k \in [0, K])$ and $\bLambda_k^{(m)} (k \in [K])$.
For simplicity, we denote the collection of $\vb_k^{(m)} (k \in [0, K])$ and $\bLambda_k^{(m)} (k \in [K])$ by $\Wb^{(m)}$ and let $\cW$ denote the space of all valid $\Wb$.
We also denote the collection of probability vectors and matrices corresponding to the true parameter $(\Ab^*, \bTheta^*)$ by $\Wb^*$.
Since each $\vb_k$ and each column of $\bLambda_k$ are constrained to the probability simplex, we know $\cW$ is homeomorphic to a product of probability simplexes and hence compact.
We further notice that $\cL(\Ab^{(m)}, \bTheta^{(m)}) = \bLambda_{K - 1}^{(m)} \vb_{K - 1}^{(m)}$, which represents a continuous map $\cL_W(\Wb) := \bLambda_{K - 1} \vb_{K - 1}$ from $\cW$ to $\cP$.
For any $\Wb = \{\bLambda_k\}_{k = 1}^K \cup \{\vb_k\}_{k = 0}^K \in \cW$, we denote the collection of all its row and column shuffled versions by
\begin{align*}
\shuff(\Wb)
:=
\Big\{
\Wb' \in \cW:~
&
\exists~ \Pb_k (k \in [0, K])
\text{ such that }
\vb_K' = \vb_K,
\\&
\forall k \in [K],~
\bLambda_k' = \Pb_k^\top \bLambda_k \Pb_{k - 1}
,~
\vb_{k - 1}' = \Pb_{k - 1}^\top \vb_{k - 1}
\Big\}
,
\end{align*}
where each $\Pb_k (k \in [0, K - 1])$ is an arbitrary $|\cX_k| \times |\cX_k|$ permutation matrix and $\Pb_K := \Ib_{|\cX_k|}$.
Letting the space $\cW$ be endowed with the $L_1$-distance and $\cO_\delta(\Wb)$ denoting the $\delta$-neighborhood of $\Wb$ in $\cW$, we consider the open set
$$
\tilde\cO_\delta(\Wb)
:=
\bigcup_{\Wb' \in \shuff(\Wb)} \cO_\delta(\Wb')
.
$$
We notice that its complement $\tilde\cO_\delta(\Wb)^c$ in $\cW$ is compact, which implies that its image $\cL_W\big( \tilde\cO_\delta(\Wb)^c \big)$ under the continuous map $\cL_W$ is also compact.
Lemma \ref{lemm:strict_tensor} suggests for any $\delta > 0$
$$
\cL_W^{-1}(\cL(\Ab^*, \bTheta^*))
=
\shuff(\Wb^*)
\subset
\tilde\cO_\delta(\Wb)
,
$$
which implies that the point $\cL(\Ab^*, \bTheta^*)$ in $\cP$ does not belong to the compact set $\cL_W\big( \tilde\cO_\delta(\Wb)^c \big)$ and hence there is a positive distance between them.
Therefore, for any $\delta > 0$, the sequence $\{\cL_W(\Wb^{(m)})\}_{m = 1}^\infty$ will be eventually outside $\cL_W\big( \tilde\cO_\delta(\Wb)^c \big)$, which implies that the sequence $\{\Wb^{(m)}\}_{m = 1}^\infty$ will be eventually within $\tilde\cO_\delta(\Wb)$.
Since $\{\Ab^{(m)}, \bTheta^{(m)}\}_{m = 1}^\infty$ converges, we know $\{\Wb^{(m)}\}_{m = 1}^\infty$ converges to a point of $\shuff(\Wb^*)$, so without loss of generality we have $\lim_{m \to \infty} \Wb^{(m)} = \Wb^*$.

Since each parameter $(\Ab^{(m)}, \bTheta^{(m)}) \notin B^{(m)}$, the convergence of the sequence $\{(\Ab^{(m)}, \bTheta^{(m)})\}$ implies that at least one of the following six statements is true for some $k \in [K]$:
\begin{itemize}
\item[($S_1$)]
$\lim_{m \to \infty} C_k^{(m)} = -\infty$;
\item[($S_2$)]
$\lim_{m \to \infty} C_k^{(m)} = \infty$;
\item[($S_3$)]
$\exists (i, j) \in \langle p_{k - 1} \rangle$ s.t. $\lim_{m \to \infty} \Gamma_{k, i, j}^{(m)} = 0$;
\item[($S_4$)]
$\exists (i, j) \in \langle p_{k - 1} \rangle$ s.t. $\lim_{m \to \infty} \Gamma_{k, i, j}^{(m)} = \infty$;
\item[($S_5$)]
$\exists i \in [p_{k - 1}]$ s.t. $\lim_{m \to \infty} \Gamma_{k, i, i}^{(m)} = 0$;
\item[($S_6$)]
$\exists i \in [p_{k - 1}]$ s.t. $\lim_{m \to \infty} \Gamma_{k, i, i}^{(m)} = \infty$.
\end{itemize}
We examine each of the six statements in the following and show that contradictions with $\lim_{m \to \infty} \bLambda_k^{(m)} = \bLambda_k^*$ can be found as long as one of the statements is true.
Note since the sequence $\{\Ab^{(m)}\}_{m = 1}^\infty$ is constant and in $\cA_1$, by definition in \eqref{eq:A1_def} we can assume without loss of generality that each $\Ab_k^{(m)}$ takes the blockwise form of $\big( \Ib_{p_{k - 1}} ~ \Ib_{p_{k - 1}} ~ \Bb \big)^\top$.

If $S_1$ is true, then for any $(s, t) \in \langle p_k \rangle$ we have
$$
\lim_{m \to \infty} \PP\left(
X_{k, s, t} = 1
~\left|~
\Xb_{k - 1} = \Ib_{p_{k - 1}}, \Ab_k^{(m)}, \bTheta_k^{(m)}
\right. \right)
=
\lim_{m \to \infty} \frac{e^{C_k^{(m)}}}{1 + e^{C_k^{(m)}}}
=
0
.
$$

If $S_2$ is true, then for any $(s, t) \in \langle p_k \rangle$ we have
$$
\lim_{m \to \infty} \PP\left(
X_{k, s, t} = 0 \left| \Xb_{k - 1} = \Ib_{p_{k - 1}}, \Ab_k^{(m)}, \bTheta_k^{(m)}
\right. \right)
=
\lim_{m \to \infty} \frac{1}{1 + e^{C_k^{(m)}}}
=
0
.
$$

If $S_3$ is true for some $(i, j) \in \langle p_{k - 1} \rangle$, then for any $(s, t) \in \langle p_k \rangle$ we have
\begin{align*}
&\quad~
\lim_{m \to \infty} \PP\left(
X_{k, s, t} = 0 \left| \Xb_{k - 1} = \Ib_{p_{k - 1}}, \Ab_k^{(m)}, \bTheta_k^{(m)}
\right. \right)
\\&=
\lim_{m \to \infty} \frac{1}{1 + e^{C_k^{(m)}}}
\\&=
\lim_{m \to \infty} \PP\left(
X_{k, s, t} = 0 \left| \Xb_{k - 1} = \Ib_{p_{k - 1}} + \Eb_{i, j}, \Ab_k^{(m)}, \bTheta_k^{(m)}
\right. \right)
,
\end{align*}
where we denote $\Eb_{i, j} := \eb_i \eb_j^\top + \eb_j \eb_i^\top$.

If $(\neg S_1) \wedge S_4$ is true for some $(i, j) \in \langle p_{k - 1} \rangle$, then we have
$$
\lim_{m \to \infty} \PP\left(
X_{k, i, j} = 0 \left| \Xb_{k - 1} = \Ib_{p_{k - 1}} + \Eb_{i, j}, \Ab_k^{(m)}, \bTheta_k^{(m)}
\right. \right)
=
\lim_{m \to \infty} \frac{1}{1 + e^{C_k^{(m)} + \Gamma_{k, i, j}^{(m)}}}
=
0
.
$$

If $S_5$ is true for some $i \in [p_{k - 1}]$, then we have
\begin{align*}
&\quad~
\lim_{m \to \infty} \PP\left(
X_{k, p_{k - 1} + i, i} = 0 \left| \Xb_{k - 1} = \Ib_{p_{k - 1}}, \Ab_k^{(m)}, \bTheta_k^{(m)}
\right. \right)
\\&=
\lim_{m \to \infty} \frac{1}{1 + e^{C_k^{(m)}}}
\\&=
\lim_{m \to \infty} \PP\left(
X_{k, 1, 2} = 0 \left| \Xb_{k - 1} = \Ib_{p_{k - 1}}, \Ab_k^{(m)}, \bTheta_k^{(m)}
\right. \right)
.
\end{align*}

If $(\neg S_1) \wedge S_6$ is true for some $i \in [p_{k - 1}]$, then we have
$$
\lim_{m \to \infty} \PP\left(
X_{k, p_{k - 1} + i, i} = 0 \left| \Xb_{k - 1} = \Ib_{p_{k - 1}}, \Ab_k^{(m)}, \bTheta_k^{(m)}
\right. \right)
=
\lim_{m \to \infty} \frac{1}{1 + e^{C_k^{(m)} + \Gamma_{k, i, i}^{(m)}}}
=
0
.
$$

Recalling the proof of Lemma \ref{lemm:strict_sX} from Section \ref{ssec:iden_strict_aux}, we know that the probability matrix $\bLambda^*$ corresponding to the parameter $(\Ab^*, \bTheta^*)$ must have properties including distinct columns, non-zero entries, etc.
This suggests that none of the above six equations can hold, meaning that $S_1 \vee S_2 \vee S_3 \vee S_4 \vee S_5 \vee S_6$ contradicts with $\lim_{m \to \infty} \Wb^{(m)} = \Wb^*$.
Therefore, \eqref{eq:A1T0_almost_compact_supp} can not hold.
\end{proof}

\subsection{Quotient Parameter Space and Its Properties}\label{ssec:post_quotient}

As discussed in Section \ref{ssec:cite_schwartz_doob}, to apply Proposition \ref{prop:doob_meas} and prove Doob's posterior consistency for our model, we need to get rid of the trivial non-identifiability issue coming from node permutations in each layer of our Bayesian network.
We will consider the quotient of parameter space $\cR$ defined by the equivalence relation of node permutations.
We show some properties of the quotient parameter spaces in the following and prove Theorem \ref{theo:doob} in the following subsection.

We consider a parameter space $\cR \subset \cA_0 \times \cT_0$ that is invariant to node permutations, i.e. satisfying $\perm_\cR(\Ab, \bTheta) = \perm_{\cA_0 \times \cT_0}(\Ab, \bTheta)$ for all $(\Ab, \bTheta) \in \cR$.
Recall that $\cR$ is endowed with the $L_1$-distance and the induced metric topology.
Let $\sR$ denote its Borel $\sigma$-algebra.
We notice that the binary relation of $(\Ab, \bTheta), (\Ab', \bTheta') \in \cR$ defined by
$$
(\Ab, \bTheta) \sim (\Ab', \bTheta')
\quad\iff\quad
(\Ab', \bTheta') \in \perm_\cR(\Ab, \bTheta)
$$
is an equivalence relation.
For notational simplicity, we also use $\perm_\cR(\Ab, \bTheta)$ to denote the equivalence class of $(\Ab, \bTheta)$ in $\cR$.
Letting $\cQ := \big\{ \perm_\cR(\Ab, \bTheta):~ (\Ab, \bTheta) \in \cR \big\}$ denote the collection of all the equivalence classes, we can define the map $\varphi:~ \cR \to \cQ$ as $\varphi(\Ab, \bTheta) := \perm_\cR(\Ab, \bTheta)$.
Since $\varphi$ is surjective, it induces the quotient topology on $\cQ$ such that a set $U \subset \cQ$ is open if and only if $\varphi^{-1}(U)$ is open in $\cR$.
We denote the Borel $\sigma$-algebra of the quotient topology on $\cQ$ by $\sQ$.

By definition of quotient topology, we know that the quotient map $\varphi$ is Borel measurable, i.e. for any $B \in \sQ$ we have $\varphi^{-1}(B) \in \sR$.
Furthermore, the following lemma shows that the quotient map $\varphi$ also maps Borel sets in $\cR$ to Borel sets in $\cQ$.
Note that this is a property specific for our model, which does not hold for general quotient maps.

\begin{lemma}\label{lemm:quotient_borel}
For any Borel set $B \in \sR$, its image $\varphi(B) \in \sQ$.
\end{lemma}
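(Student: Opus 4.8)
The plan is to recognize the equivalence relation defining $\cQ$ as the orbit relation of a \emph{finite} group action, and then exploit the finiteness of the orbits together with a Borel transversal. Let $G := \prod_{k=0}^{K-1} S_{p_k}$ be the product of symmetric groups, a finite group. By the definition of $\perm_\cR$ in \eqref{eq:sigma_def}, each $g = (\Pb_0,\ldots,\Pb_{K-1}) \in G$ acts on $\cR$ by sending $(\Ab,\bTheta)$ to the parameter with $\Ab_k \mapsto \Pb_k \Ab_k \Pb_{k-1}^\top$, $\bGamma_k \mapsto \Pb_{k-1}\bGamma_k \Pb_{k-1}^\top$, $C_k$ unchanged, and $\bnu$ replaced by its corresponding permutation. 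Since each such map merely permutes the coordinates of $(\Ab,\bTheta)$ and the entries of $\bnu$, it is an $L_1$-isometry of $\cR$, hence a homeomorphism; the orbit of $(\Ab,\bTheta)$ is exactly $\perm_\cR(\Ab,\bTheta)$, a finite set of cardinality dividing $|G|$, and $\varphi$ is the orbit map.

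First I would record the two elementary consequences of finiteness. For any $U \subseteq \cR$ one has $\varphi^{-1}(\varphi(U)) = \bigcup_{g\in G} g(U)$; taking $U$ open shows $\varphi$ is an open map, and the quotient topology on $\cQ$ is induced by the quotient metric $\bar d(\varphi(x),\varphi(y)) := \min_{g\in G} d(x, g y)$, under which $\cQ$ is a separable metric space. Because $\varphi$ is a continuous open surjection and $\cR$ is second countable, the images of a countable base form a countable base of $\cQ$, so $(\cQ,\sQ)$ is a standard Borel space --- being, moreover, the quotient of the standard Borel space $(\cR,\sR)$ by a finite, hence \emph{smooth}, group action; note that $\cA_0$ is finite and $\cT_0$ is Euclidean, so $\cA_0\times\cT_0$ is Polish and the Borel subset $\cR$ is standard.

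Next I would produce a Borel transversal. Fixing a Borel total order $\preceq$ on $\cR$ (e.g.\ lexicographic, after enumerating the finite set $\cA_0$ and appending the Euclidean coordinates of $\bTheta$), set $T := \{ x \in \cR:~ x \preceq g x ~\forall g \in G \}$, the set of orbit minima. Then $T$ is Borel (a finite intersection of Borel conditions, each $g$ being continuous) and meets every orbit in exactly one point, so $\varphi|_T:~ T \to \cQ$ is a Borel bijection. Given any $B \in \sR$, its saturation $\tilde B := \bigcup_{g\in G} g(B)$ is Borel, being a finite union of homeomorphic images of $B$; moreover $\varphi(B) = \varphi(\tilde B) = \varphi(\tilde B \cap T) = (\varphi|_T)(\tilde B \cap T)$, because $\tilde B$ is a union of full orbits and $T$ selects one representative per orbit. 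Since $\tilde B \cap T$ is a Borel subset of $T$ and $\varphi|_T$ is an \emph{injective} Borel map between standard Borel spaces, the Lusin--Souslin theorem \citep{srivastava2008course} guarantees that $(\varphi|_T)(\tilde B \cap T)$ is Borel in $\cQ$. Hence $\varphi(B)\in\sQ$, as claimed.

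The step I expect to be the main obstacle is precisely the passage from the Borel set $\tilde B\cap T$ to its image: in general the image of a Borel set under a continuous map is only analytic, not Borel, so no purely point-set-topological argument (open/closed images, monotone classes) can succeed, and indeed the naive attempt to show the saturated Borel sets form a $\sigma$-algebra generated by saturated open sets runs in a circle. What rescues the statement is the \emph{finiteness of the fibers} of $\varphi$, which makes the orbit relation smooth, furnishes the Borel transversal $T$ on which $\varphi$ is injective, and thereby brings the injective form of Lusin--Souslin into play. A secondary point requiring care --- which I would verify explicitly --- is that the Borel structure $\sQ$ arising from the quotient topology coincides with the standard Borel structure transported through $\varphi|_T$; this is exactly where the second countability of $\cQ$ and the openness of $\varphi$ are used.
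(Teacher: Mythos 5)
Your proposal is correct in substance, and its first half coincides exactly with the paper's proof: the paper likewise introduces the finite collection of layer-wise permutation maps $M_{\Pb_{0:(K-1)}}$, observes that each is a homeomorphism (indeed an $L_1$-isometry) of $\cR$, derives the saturation identity $\varphi^{-1}(\varphi(\cO)) = \bigcup_{\Pb_{0:(K-1)}} M_{\Pb_{0:(K-1)}}(\cO)$, and concludes that $\varphi$ maps open sets to open sets. Where you genuinely diverge is at the decisive step. The paper stops there and finishes with the single sentence that this ``completes the proof since the open sets generate the Borel $\sigma$-algebra'' --- which is precisely the inference you flag as the main obstacle: the family of sets with Borel image is closed under countable unions but not under complements, so openness of $\varphi$ does not by itself propagate through the generated $\sigma$-algebra. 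Your route instead saturates $B$, intersects with the Borel transversal $T$ of lexicographic orbit minima, and applies Lusin--Souslin to the injective Borel map $\varphi|_T$; this supplies the descriptive-set-theoretic content that the paper's one-line conclusion elides, at the cost of requiring $\cR$ to be Borel in the Polish space $\cA_0 \times \cT_0$ (true in the paper's application, Theorem \ref{theo:doob}, though the quotient-space section assumes only permutation invariance of $\cR$) and requiring $(\cQ, \sQ)$ to be a standard Borel space. That last point is the one loose end in your write-up: second countability plus metrizability of $\cQ$ (the latter being Lemma \ref{lemm:quotient_metric}, which you may freely cite since its proof does not use the present lemma) does not by itself yield standardness. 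The clean repair is in the spirit of what you yourself flag at the end: the quotient Borel structure $\{D \subseteq \cQ:~ \varphi^{-1}(D) \in \sR\}$ is standard, being Borel-isomorphic to $T$ via $\varphi|_T$, and $\sQ$ is a countably generated, point-separating sub-$\sigma$-algebra of it, hence equal to it by the Blackwell--Mackey theorem; once this identification is made, $\varphi(B) \in \sQ$ follows directly because its preimage is the Borel saturation $\bigcup_{g} g(B)$, and Lusin--Souslin is not even needed. In short, your proof and the paper's share the group-action setup, but your handling of the passage from open images to Borel images is a different and, as written, more careful argument than the paper's own.
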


\begin{proof}[Proof of Lemma \ref{lemm:quotient_borel}]
We recall from \eqref{eq:sigma_def} that every parameter in $\perm_\cR(\Ab, \bTheta)$ corresponds to a permutation of the nodes in each layer of our model, that is, a combination of $K$ permutation matrices $\Pb_{0:(K - 1)} := \{\Pb_0, \ldots, \Pb_{K - 1}\}$.
Let $\kP$ denote the collection of all combinations $\Pb_{0:(K - 1)}$, which has finite cardinality $|\kP| = \prod_{k = 0}^{K - 1} (p_k!)$.
For $\Pb_{0:(K - 1)} \in \kP$, we let $M_{\Pb_{0:(K - 1)}}:~ \cR \to \cR$ be the mapping that takes a parameter $(\Ab, \bTheta) \in \cR$ to its permuted version $(\Ab', \bTheta')$ given by
$$
\forall k \in [K]
,\quad
\Ab_k' = \Pb_k \Ab_k \Pb_{k - 1}^\top
,\quad
C_k' = C_k
,\quad
\bGamma_k' = \Pb_{k - 1} \bGamma_k \Pb_{k - 1}^\top
,
$$
and
$$
\forall \Xb_0 \in \cX_0
,\quad
\PP(\Pb_0 \Xb_0 \Pb_0^\top | \bnu')
=
\PP(\Xb_0 | \bnu)
.
$$

Since $\cR$ is invariant to node permutations, for any $\Pb_{0:(K - 1)} \in \kP$ the mapping $M_{\Pb_{0:(K - 1)}}$ is a homeomorphism on $\cR$.
We also notice that for any $(\Ab, \bTheta) \in \cR$,
$$
\perm_\cR(\Ab, \bTheta)
=
\bigcup_{\Pb_{0:(K - 1)} \in \kP} \big\{
M_{\Pb_{0:(K - 1)}}(\Ab, \bTheta)
\big\}
.
$$
Therefore, for any open set $\cO \subset \cR$, we have
$$
\phi^{-1}(\phi(\cO))
=
\bigcup_{(\Ab, \bTheta) \in \cO} \perm_\cR(\Ab, \bTheta)
=
\bigcup_{\Pb_{0:(K - 1)} \in \kP} M_{\Pb_{0:(K - 1)}}(\cO)
,
$$
which is a finite union of open sets and therefore also open.
By definition of quotient topology, $\phi(\cO)$ is open in $\cQ$.
This completes the proof since the open sets generate the Borel $\sigma$-algebra.
\end{proof}

In addition to the topological properties, we find that $\cQ$ is also naturally a metric space.
For any equivalence classes $\perm_\cR(\Ab, \bTheta), \perm_\cR(\Ab', \bTheta') \in \cQ$, we define
$$
d_\cQ\left(
\perm_\cR(\Ab, \bTheta), \perm_\cR(\Ab', \bTheta')
\right)
:=
\inf_{(\tilde\Ab, \tilde\bTheta) \in \perm_\cR(\Ab, \bTheta)} \inf_{(\tilde\Ab', \tilde\bTheta') \in \perm_\cR(\Ab', \bTheta')} \left\|
(\tilde\Ab, \tilde\bTheta) - (\tilde\Ab', \tilde\bTheta')
\right\|_1
.
$$
Having an explicit metric simplifies the verification of the quotient parameter space being a Borel subset of Polish space, which is an important condition for applying Proposition \ref{prop:doob_meas}.

\begin{lemma}\label{lemm:quotient_metric}
$d_\cQ$ is a metric on $\cR$ that induces the quotient topology.
\end{lemma}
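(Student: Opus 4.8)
The plan is to exploit the fact, established in the proof of Lemma \ref{lemm:quotient_borel}, that each equivalence class $\perm_\cR(\Ab, \bTheta)$ is a finite orbit of the group $\kP$ of permutation combinations $\Pb_{0:(K-1)}$ acting on $\cR$ through the maps $M_{\Pb_{0:(K-1)}}$, and that every such $M_{\Pb_{0:(K-1)}}$ is an $L_1$-isometry of $\cR$ (since permuting rows, columns, and the entries of $\bnu$ preserves the $L_1$ norm). Because each class is finite, the double infimum defining $d_\cQ$ is attained as a minimum. Writing $\Pb$ as shorthand for a combination $\Pb_{0:(K-1)} \in \kP$ and using the isometry property to absorb one group element, I would first rewrite
\[
d_\cQ\big( \varphi(\Ab, \bTheta), \varphi(\Ab', \bTheta') \big)
=
\min_{\Pb \in \kP} \big\| (\Ab, \bTheta) - M_{\Pb}(\Ab', \bTheta') \big\|_1,
\]
which is the standard orbit pseudometric for a group acting by isometries. (Here $d_\cQ$ is a metric on the quotient space $\cQ$, since its arguments are equivalence classes.)

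Next I would verify the metric axioms on $\cQ$. Non-negativity and symmetry are immediate from the definition, symmetry also using that $\kP$ is a group so that inverses are available. For identity of indiscernibles, reflexivity of the equivalence relation gives $d_\cQ(q, q) = 0$ by choosing identical representatives; conversely, finiteness of the classes forces the minimum to be attained at some $r \in \varphi^{-1}(q_1)$ and $r' \in \varphi^{-1}(q_2)$ with $\|r - r'\|_1 = 0$, hence $r = r'$, and since distinct equivalence classes are disjoint, $q_1 = q_2$. The triangle inequality is the crux: writing $q_1 = \varphi(x)$, $q_2 = \varphi(y)$, $q_3 = \varphi(z)$ and choosing $\Pb, \Pb' \in \kP$ achieving $d_\cQ(q_1, q_2) = \|x - M_{\Pb}(y)\|_1$ and $d_\cQ(q_2, q_3) = \|y - M_{\Pb'}(z)\|_1$, I would bound
\[
d_\cQ(q_1, q_3)
\le
\big\| x - M_{\Pb} M_{\Pb'}(z) \big\|_1
\le
\big\| x - M_{\Pb}(y) \big\|_1 + \big\| M_{\Pb}(y) - M_{\Pb} M_{\Pb'}(z) \big\|_1,
\]
and then use that $M_{\Pb}$ is an isometry to rewrite the second term as $\| y - M_{\Pb'}(z) \|_1$, yielding $d_\cQ(q_1, q_3) \le d_\cQ(q_1, q_2) + d_\cQ(q_2, q_3)$.

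Finally, to show $d_\cQ$ induces the quotient topology, I would prove the two topologies coincide by a double inclusion. Taking the representatives $r, r'$ themselves in the defining minimum shows $d_\cQ(\varphi(r), \varphi(r')) \le \|r - r'\|_1$, so $\varphi$ is $1$-Lipschitz and hence continuous into $(\cQ, d_\cQ)$; since the quotient topology is the finest topology making $\varphi$ continuous, every $d_\cQ$-open set is quotient-open. For the reverse inclusion, let $U$ be quotient-open, so $\varphi^{-1}(U)$ is open and saturated in $\cR$. Given $q_0 \in U$, its preimage is a finite set of points, each lying in the open set $\varphi^{-1}(U)$; taking $\delta$ to be the minimum of finitely many ball radii that fit inside $\varphi^{-1}(U)$, I would show the $d_\cQ$-ball of radius $\delta$ about $q_0$ lies in $U$, because any $q$ within $d_\cQ$-distance $\delta$ has a representative within $\|\cdot\|_1$-distance $\delta$ of some preimage point of $q_0$, hence in $\varphi^{-1}(U)$, so $q \in U$.

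The main obstacle is the triangle inequality: the minimum distance between arbitrary finite subsets of a metric space is generally \emph{not} a metric, so the argument must genuinely use the orbit structure and the isometry property of the group action. This is exactly what rescues the triangle inequality and distinguishes $d_\cQ$ from a naive set-distance; the remaining axioms and the topological equivalence are then routine consequences of finiteness and the $1$-Lipschitz property of $\varphi$.
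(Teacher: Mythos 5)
Your proposal is correct and follows essentially the same route as the paper's: both treat $d_\cQ$ as the orbit distance under the finite collection $\kP$ of node-permutation maps $M_{\Pb_{0:(K-1)}}$, which act as $L_1$-isometries on $\cR$, and both deduce the topological equivalence from the identification of metric balls in $\cQ$ with saturated unions of $L_1$-balls in $\cR$. The only mechanical differences are that you prove the triangle inequality by composing group elements (implicitly using closure of $\kP$ under composition, which does hold since compositions of layerwise node permutations are again such permutations) where the paper instead fixes a common middle representative and combines two one-sided infima, and you handle the quotient-open-implies-metric-open direction by a pointwise $\delta$-ball argument over the finite fiber where the paper writes the saturated preimage globally as a union of permuted balls; both variants are sound, and your explicit treatment of the separation axiom via finiteness of orbits is a point the paper leaves as ``direct.''
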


\begin{proof}[Proof of Lemma \ref{lemm:quotient_metric}]
The positivity and symmetry of $d_\cQ$ is direct.
We notice that the mappings $M_{P_{0:(K - 1)}}:~ \cR \to \cR$ defined in the proof of Lemma \ref{lemm:quotient_borel} are isometries under $L_1$-distance on $\cR$, which suggests the simplified formula
$$
d_\cQ\left(
\perm_\cR(\Ab, \bTheta), \perm_\cR(\Ab', \bTheta')
\right)
=
\inf_{(\tilde\Ab, \tilde\bTheta) \in \perm_\cR(\Ab, \bTheta)} \left\|
(\tilde\Ab, \tilde\bTheta) - (\Ab', \bTheta')
\right\|_1
.
$$
Therefore, for any equivalence classes $\perm_\cR(\Ab, \bTheta), \perm_\cR(\Ab', \bTheta'), \perm_\cR(\Ab^\dagger, \bTheta^\dagger) \in \cQ$, we have
\begin{align*}
&\quad~
d_\cQ\left(
\perm_\cR(\Ab, \bTheta), \perm_\cR(\Ab', \bTheta')
\right)
+
d_\cQ\left(
\perm_\cR(\Ab', \bTheta'), \perm_\cR(\Ab^\dagger, \bTheta^\dagger)
\right)
\\&=
\inf_{(\tilde\Ab, \tilde\bTheta) \in \perm_\cR(\Ab, \bTheta)} \left\|
(\tilde\Ab, \tilde\bTheta) - (\Ab', \bTheta')
\right\|_1
+
\inf_{(\tilde\Ab^\dagger, \tilde\bTheta^\dagger) \in \perm_\cR(\Ab^\dagger, \bTheta^\dagger)} \left\|
(\tilde\Ab^\dagger, \tilde\bTheta^\dagger) - (\Ab', \bTheta')
\right\|_1
\\&\ge
\inf_{(\tilde\Ab, \tilde\bTheta) \in \perm_\cR(\Ab, \bTheta)} \inf_{(\tilde\Ab^\dagger, \tilde\bTheta^\dagger) \in \perm_\cR(\Ab^\dagger, \bTheta^\dagger)} \left\|
(\tilde\Ab^\dagger, \tilde\bTheta^\dagger) - (\tilde\Ab, \tilde\bTheta)
\right\|_1
\\&=
d_\cQ\left(
\perm_\cR(\Ab, \bTheta), \perm_\cR(\Ab^\dagger, \bTheta^\dagger)
\right)
,
\end{align*}
which proves the triangle inequality for $d_\cQ$.
Hence $d_\cQ$ is a metric on $\cQ$.

For any $\epsilon > 0$, we let $\cO_\epsilon(\Ab, \bTheta)$ denote the $\epsilon$-neighborhood of $(\Ab, \bTheta)$ in $\cR$ under $L_1$ metric and $\cO_\epsilon(\perm_\cR(\Ab, \bTheta))$ denote the $\epsilon$-neighborhood of $\perm_\cR(\Ab, \bTheta)$ in $\cQ$ under $d_\cQ$ metric.
We notice that
\begin{equation}\begin{aligned}\label{eq:quotient_two_nbhd}
\varphi^{-1}\left(
\cO_\epsilon(\perm_\cR(\Ab, \bTheta))
\right)
&=
\left\{
(\Ab', \bTheta') \in \cR:~
\inf_{(\tilde\Ab, \tilde\bTheta) \in \perm_\cR(\Ab, \bTheta)} \left\|
(\tilde\Ab, \tilde\bTheta) - (\Ab', \bTheta')
\right\|_1 < \epsilon
\right\}
\\&=
\bigcup_{(\tilde\Ab, \tilde\bTheta) \in \perm_\cR(\Ab, \bTheta)} \cO_\epsilon(\tilde\Ab, \tilde\bTheta)
,
\end{aligned}\end{equation}
which is an open set in $\cR$.
This implies that $\cO_\epsilon(\perm_\cR(\Ab, \bTheta))$ is open under the quotient topology of $\cQ$ for any $\epsilon > 0$, which suggests that the metric topology induced by $d_\cQ$ is coarser than the quotient topology.

Now for any set $\cO \subset \cQ$ open under the quotient topology, we know $\varphi^{-1}(\cO)$ is an open set in $\cR$.
Hence there exist an index set $\cI$, parameters $\{r_i\}_{i \in \cI} \subset \cR$, and positive values $\{\epsilon_i\}_{i \in \cI}$ such that
$$
\varphi^{-1}(\cO)
=
\bigcup_{i \in \cI} \cO_{\epsilon_i}(r_i)
=
\bigcup_{i \in \cI} \bigcup_{r_i' \in \perm_\cR(r_i)} \cO_{\epsilon_i}(r_i')
=
\bigcup_{i \in \cI} \varphi^{-1}(\cO_{\epsilon_i}(\perm_\cR(r_i)))
,
$$
where the second equality is due to the definition of the quotient map $\varphi$ and the third equality is due to \eqref{eq:quotient_two_nbhd}.
This implies that $\cO$ is the union of open balls $\cO_{\epsilon_i}(\perm_\cR(r_i))$ and hence open under the metric topology, which suggests that the metric topology induced by $d_\cQ$ is finer than the quotient topology.

Putting everything together, we have shown that $(\cQ, d_\cQ)$ is a metric space and the induced metric topology is the same as the quotient topology.
\end{proof}

\subsection{Proof of Theorem \ref{theo:doob}}\label{ssec:proof_doob}

After introducing the quotient parameter space and its useful properties in Section \ref{ssec:post_quotient}, we are now ready to prove Theorem \ref{theo:doob} on Doob's posterior consistency of our model using Theorem \ref{theo:cite_doob}, Proposition \ref{prop:doob_meas}, and intermediate results from the proof of Theorem \ref{theo:generic} in Section \ref{ssec:iden_generic}.

\begin{proof}[Proof of Theorem \ref{theo:doob}]
In Section \ref{ssec:post_quotient}, we have assumed the parameter space $\cR \subset \cA_0 \times \cT_0$ to be invariant to node permutations, i.e. $\perm_\cR(\Ab, \bTheta) = \perm_{\cA_0 \times \cT_0}(\Ab, \bTheta)$ for all $(\Ab, \bTheta) \in \cR$.
While this is not assumed in Theorem \ref{theo:doob}, we can extend the parameter space $\cR$ to $\cup_{r \in \cR} \perm_\cR(r)$ and put zero prior mass on the extended regions.
Note that $\cup_{r \in \cR} \perm_\cR(r)$ is a Borel subset of $\cA_2 \times \cT_0$ as long as $\cR$ is.
Therefore, without loss of generality, we will assume the parameter space $\cR$ to be invariant to node permutations.

Recall the quotient map $\varphi$ defined in Section \ref{ssec:post_quotient}.
We let $\overline{\cA_2 \times \cT_0}$ denote the closure of $\cA_2 \times \cT_0$ and notice that it is a complete separable metric space under the $L_1$ distance.
Using the continuity of $\varphi$ and Lemma \ref{lemm:quotient_metric}, this suggests that the quotient space $\varphi(\overline{\cA_2 \times \cT_0})$ is a complete separable metric space under the $d_\cQ$ metric and hence a Polish space.

Recall \eqref{eq:measure_A2T0Rc} and \eqref{eq:generic_mid_proof_R} from the proof of Theorem \ref{theo:generic} in Section \ref{ssec:iden_generic}.
There exists a non-constant-zero holomorphic function $h_\Ab(\bTheta)$ for each $\Ab \in \cA_2$ such that the parameter space
$$
\cD
:=
\big\{
(\Ab, \bTheta) \in \cA_2 \times (\cT_0 \cap \cT_1^c):~
h_\Ab(\bTheta) \ne 0
\big\}
$$
satisfies the strict identifiability $\perm_\cD(\Ab, \bTheta) = \marg_\cD(\Ab, \bTheta)$ for all $(\Ab, \bTheta) \in \cD$ and the measure $(\upmu \times \uplambda)\left( (\cA_2 \times \cT_0) \cap \cD^c \right) = 0$.
Recalling the definition of $\cT_1$ from \eqref{eq:T1_def}, we notice that the set
$$
(\cA_2 \times \cT_0) \cap \cD^c
=
\bigcup_{\Ab \in \cA_2} \left(
\{\Ab\} \times \big( h_\Ab^{-1}(0) \cup \cT_1 \big)
\right)
$$
is Borel measurable in $\cA_2 \times \cT_0$.

Since the parameter space $\cR$ is a Borel subset of $\cA_2 \times \cT_0$, we know $\cR \cap \cD$ is also Borel measurable.
Applying Lemma \ref{lemm:quotient_borel} further suggests that $\cQ := \varphi(\cR \cap \cD)$ is a Borel subset of the Polish space $\varphi(\overline{\cA_2 \times \cT_0})$.
Recall from Section \ref{ssec:prove_schwartz} that the continuous map $\cL:~ \cR \cap \cD \to \cP$ denotes the map from parameter $(\Ab, \bTheta)$ to the probability vector $\cL(\Ab, \bTheta) := \{\PP(\Xb_K | \Ab, \bTheta)\}_{\Xb_K \in \cX_K}$.
Due to the trivial type of non-identifiability issue, $\cL$ agrees on all parameters $(\Ab, \bTheta)$ within the same equivalence class $\perm_{\cR \cap \cD}(\Ab, \bTheta)$, which suggests that $\cL$ passes to the quotient, i.e. there exists a continuous map $\cL_\cQ:~ \cQ \to \cP$ satisfying $\cL = \cL_\cQ \circ \varphi$.
Again applying Lemma \ref{lemm:quotient_borel}, we notice that $\cL_\cQ$ is Borel measurable.

Importantly, since $\perm_{\cR \cap \cD}(\Ab, \bTheta) = \marg_{\cR \cap \cD}(\Ab, \bTheta)$ for all $(\Ab, \bTheta) \in \cD$, the model $\{\cL_\cQ(q):~ q \in \cQ\}$ of quotient parameters is identifiable, i.e. $\cL_\cQ(\perm_{\cR \cap \cD}(\Ab, \bTheta)) \ne \cL_\cQ(\perm_{\cR \cap \cD}(\Ab', \bTheta'))$ for all $\perm_{\cR \cap \cD}(\Ab, \bTheta) \ne \perm_{\cR \cap \cD}(\Ab', \bTheta')$.
Therefore, all conditions of Proposition \ref{prop:doob_meas} are satisfied for the quotient parameter space $\cQ$, which suggests that there exists a Borel measurable function $f:~ \cX^\infty \to \cQ$ satisfying for every $\perm_{\cR \cap \cD}(\Ab, \bTheta) \in \cQ$,
$$
\perm_{\cR \cap \cD}(\Ab, \bTheta)
=
f(X^{(1:\infty)})
\quad
\cL_\cQ(\perm_{\cR \cap \cD}(\Ab, \bTheta)) \text{-a.s.}
.
$$
This indicates that $\perm_{\cR \cap \cD}(\Ab, \bTheta)$ is measurable with respect to the $Q$-completion of $\sigma(X^{(1:\infty)})$, where $Q$ is the joint probability measure on $\cQ \times \cX^\infty$ defined similar to \eqref{eq:jpmQ}.

Since the prior measure $\Pi$ is absolutely continuous with respect to $\upmu \times \uplambda$, we have
$$
\Pi\left(
\cR \cap (\cR \cap \cD)^c
\right)
\le
\Pi\left(
(\cA_2 \times \cT_0) \cap \cD^c
\right)
=
0
.
$$
We therefore neglect the set $\cR \cap (\cR \cap \cD)^c$ from the prior and posterior measures $\Pi, \Pi_n$ and let $\varphi_\# \Pi$ and $\varphi_\# \Pi_n$ denote their pushforward measures from $\cR$ to $\cQ$.
By applying Theorem \ref{theo:cite_doob}, we obtain for any $\epsilon > 0$,
$$
\lim_{n \to \infty} (\varphi_\# \Pi_n)\left(
\cO_\epsilon(\perm_\cR(\Ab^*, \bTheta^*))^c
\right)
=
0
\quad
\cL_\cQ\left(
\perm_\cR(\Ab^*, \bTheta^*)
\right)^\infty \text{-a.s.}
$$
for $\varphi_\# \Pi$-almost every true quotient parameter $\perm_\cR(\Ab^*, \bTheta^*)$.
Using \eqref{eq:quotient_two_nbhd} and recalling the definition of $\tilde\cO_\epsilon(\Ab, \bTheta)$ from \eqref{eq:eps_nbhd}, for any $\epsilon > 0$, the above result is equivalent to
$$
\lim_{n \to \infty} \Pi_n\left(
\tilde\cO_\epsilon(\Ab^*, \bTheta^*)^c
\right)
=
0
\quad
\cL(\Ab^*, \bTheta^*)^\infty \text{-a.s.}
$$
for $\Pi$-almost every true parameter $(\Ab^*, \bTheta^*)$ in $\cR$.
\end{proof}

\section{Posterior Computation: A Complete Guide}\label{sec:post_comp_supp}

In this section, we provide a complete guide of posterior computation for our model, including the details of our data augmented Gibbs sampler in Section \ref{ssec:stand_gibbs}, its subsampling version in Section \ref{ssec:sub_gibbs}, and the essentials of our spectral initialization algorithm in Section \ref{ssec:warm}.
We also discuss practical strategies for selecting the model structure, including the number of layers and layer widths, in Section \ref{ssec:post_comp_model_sel}.
Fo ease of reading, we defer a literature overview of subsampling MCMC and related discussions to Section \ref{sec:sub_gibbs}, and also defer the preliminaries and technical details of our spectral initialization algorithm to Section \ref{sec:spec_init}.

\subsection{Data Augmented Gibbs Sampler}\label{ssec:stand_gibbs}

To conduct posterior sampling, we have designed a Gibbs sampler using Polya-Gamma data augmentation \citep{polson2013bayesian}.
We provide its full details and derivations in this section.

With the model likelihood function specified in Section \ref{sec:model} and the prior distributions specified in Section \ref{sec:post}, the joint distribution of $\Xb^{(1:N)}$ and $\Ab, \bTheta$ is given by
\begin{align*}
&\quad~
\PP(\Xb^{(1:N)}, \Ab, \bTheta)
\\&=
\PP(\Ab, \bTheta) \prod_{n = 1}^N \PP(\Xb^{(n)} | \Ab, \bTheta)
\\&=
\PP(\bnu)
\left( \prod_{k = 1}^K \Big( \PP(C_k)
\prod_{1 \le i \le j \le p_{k - 1}} \PP(\Gamma_{k, i, j})
\Big) \right)
\left(
\prod_{k = 1}^K \prod_{i = 1}^{p_k} \PP(\ab_{k, i})
\right)
\\&\qquad
\prod_{n = 1}^N \left(
\PP(\Xb_0^{(n)} | \bnu) \prod_{k = 1}^K \PP(\Xb_k^{(n)} | \Xb_{k - 1}^{(n)}, \Ab_k, \bTheta_k)
\right)
\\&\propto
\left(
\prod_{\Xb_0 \in \cX_0} \nu_{\Xb_0}^{\alpha - 1}
\right)
\exp\left( \sum_{k = 1}^K \left(
- \frac{(C_k - \mu_C)^2}{2 \sigma_C^2}
-
\sum_{i = 1}^{p_{k - 1}} \frac{(\Gamma_{k, i, i} - \mu_\gamma)^2}{2\sigma_\gamma^2}
-
\sum_{1 \le i < j \le p_{k - 1}} \frac{(\Gamma_{k, i, j} - \mu_\delta)^2}{2\sigma_\delta^2}
\right) \right)
\\&\qquad
\left(
\prod_{k = 1}^K \prod_{1 \le i \le j \le p_{k - 1}} 1_{\Gamma_{k, i, j} > 0}
\right) \left(
\prod_{k = 1}^K \prod_{i = 1}^{p_k} 1_{1 \le \one^\top \ab_{k, i} \le S}
\right)
\\&\qquad
\prod_{n = 1}^N \left(
\nu_{\Xb_0^{(n)}} \prod_{k = 1}^K \prod_{1 \le i < j \le p_k} \frac{\exp\left( X_{k, i, j}^{(n)} \left(
C_k + \ab_{k, i}^\top (\bGamma_k * \Xb_{k - 1}^{(n)}) \ab_{k, j}
\right) \right)}{1 + \exp\left(
C_k + \ab_{k, i}^\top (\bGamma_k * \Xb_{k - 1}^{(n)}) \ab_{k, j}
\right)}
\right)
,
\end{align*}
where $\nu_{\Xb_0}$ denotes the entry in $\bnu$ corresponding to the probability $\PP(\Xb_0 | \bnu)$.

To obtain conditional conjugacy of parameters, for each $n \in [N], k \in [K], (i, j) \in \langle p_k \rangle$ we introduce an associated augmented variable $\omega_{k, i, j}^{(n)}$, conditionally following the distribution
$$
\omega_{k, i, j}^{(n)} \mid \Xb_{k - 1}^{(n)}, \Ab_k, \bTheta_k
\sim
\PG\left(
1
,
\psi_{k, i, j}^{(n)}
\right)
,
$$
where $\PG$ denotes the Polya-Gamma distribution \citep{polson2013bayesian} and
$$
\psi_{k, i, j}^{(n)}
:=
C_k + \ab_{k, i}^\top (\bGamma_k * \Xb_{k - 1}^{(n)}) \ab_{k, j}
.
$$
By properties of the Polya-Gamma distribution, we have the identity
$$
\frac{\exp\left( X_{k, i, j}^{(n)} \psi_{k, i, j}^{(n)} \right)}{1 + \exp\left(
\psi_{k, i, j}^{(n)}
\right)}
=
\EE\left[ \left. \frac12 \exp\left(
\big( X_{k, i, j}^{(n)} - \frac12 \big) \psi_{k, i, j}^{(n)} - \frac{\omega_{k, i, j}^{(n)}}{2} (\psi_{k, i, j}^{(n)})^2
\right) \right| \Xb_{k - 1}^{(n)}, \Xb_k^{(n)} \Ab_k, \bTheta_k \right]
,
$$
where the conditional expectation is taken over $\omega_{k, i, j}^{(n)}$.
Furthermore, the marginal distribution of each $\omega_{k, i, j}^{(n)}$
For each $k \in [K]$, we let
$$
\bomega_k
:=
\big\{
\omega_{k, i, j}^{(n)}:~
n \in [N], (i, j) \in \langle p_k \rangle
\big\}
$$
denote the collection of $\omega_{k, i, j}^{(n)}$'s for all $n, i, j$ and let $\bomega := \{ \bomega_k :~ k \in [K]\}$.
Letting $g$ denote the density of the Polya-Gamma distribution $\PG(1, 0)$, then the density of the Polya-Gamma distribution $\PG(1, c)$ for $c > 0$ is $\propto e^{- \frac{c^2}{2} \omega} g(\omega)$.
The joint distribution of $\Xb^{(1:N)}, \bomega, \Ab, \bTheta$ is given by
\begin{align*}
\\&\quad~
\PP(\Xb^{(1:N)}, \bomega, \Ab, \bTheta)
\\&=
\PP(\Ab, \bTheta)
\left(
\prod_{n = 1}^N \PP(\Xb^{(n)} | \Ab, \bTheta)
\right) \left(
\prod_{n = 1}^N \prod_{k = 1}^K \prod_{1 \le i < j \le p_k} \PP(\omega_{k, i, j}^{(n)} | \Xb_{k - 1}^{(n)}, \Ab_k, \bTheta_k)
\right)
\\&\propto
\left(
\prod_{\Xb_0 \in \cX_0} \nu_{\Xb_0}^{\alpha - 1}
\right)
\exp\left( \sum_{k = 1}^K \left(
- \frac{(C_k - \mu_C)^2}{2 \sigma_C^2}
-
\sum_{i = 1}^{p_{k - 1}} \frac{(\Gamma_{k, i, i} - \mu_\gamma)^2}{2\sigma_\gamma^2}
-
\sum_{1 \le i < j \le p_{k - 1}} \frac{(\Gamma_{k, i, j} - \mu_\delta)^2}{2\sigma_\delta^2}
\right) \right)
\\&\qquad
\left(
\prod_{k = 1}^K \prod_{1 \le i \le j \le p_{k - 1}} 1_{\Gamma_{k, i, j} > 0}
\right) \left(
\prod_{k = 1}^K \prod_{i = 1}^{p_k} 1_{1 \le \one^\top \ab_{k, i} \le S}
\right)
\\&\qquad
\left(
\prod_{n = 1}^N \nu_{\Xb_0^{(n)}}
\right) \left(
\prod_{n = 1}^N \prod_{k = 1}^K \prod_{1 \le i < j \le p_k} \left(
\exp\left(
\big( X_{k, i, j}^{(n)} - \frac12 \big) \psi_{k, i, j}^{(n)}
\right) \exp\left(
- \frac12 \omega_{k, i, j}^{(n)} (\psi_{k, i, j}^{(n)})^2
\right) g(\omega_{k, i, j}^{(n)})
\right) \right)
.
\end{align*}

We partition the parameters $\Ab, \bTheta$, latent variables $\Xb^{(1:N)}$, and augmented variables $\bomega$ into blocks.
These blocks include each row $\ab_{k, i}$ of $\Ab_k$ for $i \in [p_k], k \in [K]$, the parameter $\bnu$, each adjacency matrix $\Xb_0^{(n)}$ for $n \in [N]$, and each entry of all other parameters and variables.
The full conditional distributions of each block is given as follows.

For each $k \in [K]$ and each $i \in [p_k]$, letting $\Ab_{k, -i} := \big\{ \ab_{k, j}:~ j \in [p_k], j \ne i \big\}$ denote the collection of all row vectors in $\Ab_k$ other than $\ab_{i, k}$, we have
$$
\PP\left(
\ab_{k, i}
~\Big|~
\Xb_{k - 1}^{(1:N)}, \Xb_k^{(1:N)}, \Ab_{k, -i}, \bTheta_k
\right)
\propto
\prod_{n = 1}^N \prod_{j = 1, j \ne i}^{p_k} \frac{\exp\left(
X_{k, i, j}^{(n)} \psi_{k, i, j}^{(n)}
\right)}{1 + \exp\left(
\psi_{k, i, j}^{(n)}
\right)}
1_{1 \le \one^\top \ab_{k, i} \le S}
,
$$
which represents a categorical distribution over the space $\cV_k$ of all $p_k$-dimensional binary vectors satisfying our sparsity constraint.

For each $k \in [K]$, we have
$$
\PP\left(
C_k
~\Big|~
\Xb_{k - 1}^{(1:N)}, \Xb_k^{(1:N)}, \Ab_k, \bGamma_k, \bomega_k
\right)
\sim
N\left(
\tilde\mu_{C_k}
,
\tilde\sigma_{C_k}^2
\right)
,
$$
which is a normal distribution with mean and variance given by
$$
\tilde\mu_{C_k}
:=
\tilde\sigma_{C_k}^{-2} \left(
\sigma_C^{-2} \mu_C + \sum_{n = 1}^N \sum_{1 \le i < j \le p_k} \left(
X_{k, i, j}^{(n)} - \frac12 - \omega_{k, i, j}^{(n)} \left(
\psi_{k, i, j}^{(n)} - C_k
\right) \right) \right)
,
$$
$$
\tilde\sigma_{C_k}^2
:=
\left(
\sigma_C^{-2} + \sum_{n = 1}^N \sum_{1 \le i < j \le p_k} \omega_{k, i, j}^{(n)}
\right)^{-1}
.
$$

For each $k \in [K], s \in \langle p_{k - 1} \rangle$, letting $\bTheta_{k, -(s, t)} := \bTheta_k \cap \{\Gamma_{k, s, t}\}^c$ denote the collection of all parameters $C_k$ and $\Gamma_{k, i, j}$'s other than $\Gamma_{k, s, t}$, then we have
$$
\PP\left(
\Gamma_{k, s, t}
~\Big|~
\Xb_{k - 1}^{(1:N)}, \Xb_k^{(1:N)}, \Ab_k, \bTheta_{k, -(s, t)}, \bomega_k
\right)
\sim
N_+\left(
\tilde\mu_{\Gamma_{k, s, t}}
,
\tilde\sigma_{\Gamma_{k, s, t}}^2
\right)
,
$$
which is the normal distribution with mean $\tilde\mu_{\Gamma_{k, s, t}}$ and variance $\tilde\sigma_{\Gamma_{k, s, t}}^2$ truncated to the positive real line.
By denoting $\kappa_{i, j, s, t}^{(n)} := X_{k - 1, s, t}^{(n)} (A_{k, i, s} A_{k, j, t} + A_{k, i, t} A_{k, j, s})$, the mean and variance are given by
$$
\tilde\mu_{\Gamma_{k, s, t}}
:=
\tilde\sigma_{\Gamma_{k, s, t}}^{-2} \left(
\sigma_\delta^{-2} \mu_\delta + \sum_{n = 1}^N \sum_{1 \le i < j \le p_k} \left(
X_{k, i, j}^{(n)} - \frac12 - \omega_{k, i, j}^{(n)} \left(
\psi_{k, i, j}^{(n)} - \kappa_{i, j, s, t}^{(n)} \Gamma_{k, s, t}
\right) \right) \kappa_{i, j, s, t}^{(n)}
\right)
,
$$
$$
\tilde\sigma_{\Gamma_{k, s, t}}^2
:=
\left(
\sigma_\delta^{-2} + \sum_{n = 1}^N \sum_{1 \le i < j \le p_k} \omega_{k, i, j}^{(n)} (\kappa_{i, j, s, t}^{(n)})^2
\right)^{-1}
.
$$

Similarly, for each $k \in [K], s \in [p_{k - 1}]$, we have
$$
\PP\left(
\Gamma_{k, s, s}
~\Big|~
\Xb_{k - 1}^{(1:N)}, \Xb_k^{(1:N)}, \Ab_k, \bTheta_{k, -(s, s)}, \bomega_k
\right)
\sim
N_+\left(
\tilde\mu_{\Gamma_{k, s, s}}
,
\tilde\sigma_{\Gamma_{k, s, s}}^2
\right)
,
$$
which is the normal distribution with mean $\tilde\mu_{\Gamma_{k, s, s}}$ and variance $\tilde\sigma_{\Gamma_{k, s, s}}$ truncated to the positive real line.
By denoting $\kappa_{i, j, s} := A_{k, i, s} A_{k, j, s}$, the mean and variance are given by
$$
\tilde\mu_{\Gamma_{k, s, s}}
:=
\tilde\sigma_{\Gamma_{k, s, s}}^{-2} \left(
\sigma_\gamma^{-2} \mu_\gamma + \sum_{n = 1}^N \sum_{1 \le i < j \le p_k} \left(
X_{k, i, j}^{(n)} - \frac12 - \omega_{k, i, j}^{(n)} \left(
\psi_{k, i, j}^{(n)} - \kappa_{i, j, s} \Gamma_{k, s, s}
\right) \right) \kappa_{i, j, s}
\right)
,
$$
$$
\tilde\sigma_{\Gamma_{k, s, s}}^2
:=
\left(
\sigma_\gamma^{-2} + \sum_{n = 1}^N \sum_{1 \le i < j \le p_k} \omega_{k, i, j}^{(n)} \kappa_{i, j, s}
\right)^{-1}
.
$$

For the full conditional distribution of parameter $\bnu$, we have
$$
\PP\left(
\bnu
~\Big|~
\Xb_0^{(1:N)}
\right)
\sim
\mathrm{Dir}\big(
\tilde\balpha
\big)
,
$$
which is a Dirichlet distribution over the probability simplex $\cS^{2^{|\cX_0|} - 1}$.
Its parameter is defined entrywise by
$$
\tilde\alpha_{\Xb_0}
:=
\alpha + \sum_{n = 1}^N 1_{\Xb_0^{(n)} = \Xb_0}
$$
for all $\Xb_0 \in \cX_0$, with the entry $\tilde\alpha_{\Xb_0}$ in $\tilde\balpha$ corresponding to the entry $\nu_{\Xb_0}$ in $\bnu$.

For each $n \in [N], k \in [K], (i, j) \in \langle p_k \rangle$, the full conditional distribution of $\omega_{k, i, j}^{(n)}$ is by its definition the Polya-Gamma distribution
$$
\omega_{k, i, j}^{(n)}
~\Big|~
\Xb_{k - 1}^{(n)}, \Ab_k, \bTheta_k
\sim
\PG\left(
1
,
\psi_{k, i, j}^{(n)}
\right)
.
$$

As for the conditional distributions for blocks of $\Xb^{(1:N)}$, we partially collapse the augmented variables $\bomega$ and compute directly from the joint distribution of $\Ab, \bTheta, \Xb^{(1:N)}$.
For $k = 0$ and each $n \in [N]$, we have
$$
\PP\left(
\Xb_0^{(n)}
~\Big|~
\Xb_1^{(n)}, \bnu, \Ab_1, \bTheta_1
\right)
\propto
\nu_{\Xb_0^{(n)}}
\prod_{1 \le i < j \le p_1} \frac{\exp\left(
X_{1, i, j}^{(n)} \psi_{1, i, j}^{(n)}
\right)}{1 + \exp\left(
\psi_{1, i, j}^{(n)}
\right)}
,
$$
which is a categorical distribution over the space of all $p_0 \times p_0$ adjacency matrices $\cX_0$.

For each $k \in [K - 1], n \in [N], (i, j) \in \langle p_k \rangle$, letting $\Xb_{k, -(i, j)}^{(n)} := \Xb_k^{(n)} \cap \{X_{k, i, j}^{(n)}\}^c$ denote the collection of all entries in $\Xb_k^{(n)}$ other than $X_{k, i, j}^{(n)}$, we have
\begin{align*}
&\quad~
\PP\left(
X_{k, i, j}^{(n)}
~\Big|~
\Xb_{k - 1}^{(n)}, \Xb_{k, -(i, j)}^{(n)}, \Xb_{k + 1}^{(n)}, \Ab_{k:(k + 1)}, \bTheta_{k:(k + 1)}
\right)
\\&\propto
\frac{\exp\left(
X_{k, i, j}^{(n)} \psi_{k, i, j}^{(n)}
\right)}{1 + \exp\left(
\psi_{k, i, j}^{(n)}
\right)}
\left(
\prod_{1 \le i < j \le p_{k + 1}} \frac{\exp\left(
X_{k + 1, i, j}^{(n)} \psi_{k + 1, i, j}^{(n)}
\right)}{1 + \exp\left(
\psi_{k + 1, i, j}^{(n)}
\right)}
\right)
,
\end{align*}
which is a Bernoulli distribution over $\{0, 1\}$.
If we instead view the adjacency matrix $\Xb_k^{(n)}$ as a whole block, then its conditional distribution is
$$
\PP\left(
\Xb_k^{(n)}
~\Big|~
\Xb_{k - 1}^{(n)}, \Xb_{k + 1}^{(n)}, \Ab_{k:(k + 1)}, \bTheta_{k:(k + 1)}
\right)
\propto
\prod_{\ell \in \{k, k + 1\}} \prod_{1 \le i < j \le p_\ell} \frac{\exp\left( X_{\ell, i, j}^{(n)} \psi_{\ell, i, j}^{(n)}
\right)}{1 + \exp\left(
\psi_{\ell, i, j}^{(n)}
\right)}
,
$$
which is a categorical distribution over the space of all $p_k \times p_k$ adjacency matrices $\cX_k$.

\subsection{Subsampling Gibbs Sampler}\label{ssec:sub_gibbs}

As the sample size $N$ and the dimension $p$ get larger, the Gibbs sampler often experiences a high computational burden.
The main computational bottlenecks come from sampling the latent adjacency matrices $\Xb^{(1:N)}$ and the connection matrices $\Ab$, both of which have linear complexity on the sample size $N$.
The sampling of $\Xb^{(1:N)}$ can be accelerated by drawing the latent $\Xb^{(1)}, \ldots, \Xb^{(N)}$ in parallel, due to their conditional independence.
To boost the scalability in sampling the parameters $\Ab$ and $\bTheta$, it is worth considering whether there is a more efficient alternative way to evaluate their full conditional distributions.
A commonly used approach is subsampling MCMC, in which we aim at reducing the complexity of likelihood evaluation by designing a computationally efficient surrogate likelihood function \citep{quiroz2018speeding, quiroz2018subsampling}.
See Section \ref{ssec:overview_sub} for an overview of related literature.
We next adapt this idea to our Gibbs sampler.

After data augmentation, the latent variables of our model include $\Xb^{(1:N)}$ and $\bomega^{(1:N)}$, where $\bomega^{(n)} := \big\{\omega_{k, i, j}^{(n)}: k \in [K], (i, j) \in \langle p_k \rangle\big\}$ denotes the collection of augmented variables associated with sample $n$.
At each iteration of our subsampling Gibbs sampler, we start by drawing $\Xb^{(n)}, \bomega^{(n)}$ in parallel across $n$ from their standard full conditional distributions.
We then uniformly sample without replacement a subset $\cB$ of fixed cardinality $|\cB|$ from the index set $[N]$.
Using this subset of data and latent variables, we obtain an estimator for the data augmented log-likelihood
\begin{equation}\label{eq:surro}
\hat\ell(\Ab, \bTheta)
:=
\frac{N}{|\cB|} \sum_{n \in \cB} \log \PP(\Xb^{(n)}, \bomega^{(n)} | \Ab, \bTheta)
.
\end{equation}
With this estimator $\hat\ell(\Ab, \bTheta)$, we could derive the approximate full conditional distributions of $\Ab, \bTheta$ and sample them in blocks in the same way as our standard Gibbs sampler.

As a type of approximate MCMC algorithm \citep{johndrow2015optimal, johndrow2017error}, the stationary distribution of the subsampling Gibbs sampler is different from the standard Gibbs sampler.
For the subset ratio of $\frac{|\cB|}{N} = 1\%$, we numerically verify this distance to be small through simulation study in Section \ref{sec:sim}.
Our posterior computation involves iterations of the subsampling Gibbs sampler followed by iterations of the standard Gibbs sampler.
The subsampling Gibbs samples facilitate fast convergence to the posterior mode after initialization, while posterior inference is conducted based on the standard Gibbs samples.
See Section \ref{ssec:discuss_sub} for further discussions.

\subsection{Spectral Initialization}\label{ssec:warm}

We now present an additional technique for accelerating posterior computation.
The Degree Corrected Mixed Membership (DCMM) model \citep{jin2023mixed} is proposed to detect communities in a single sparse network, allowing mixed membership and node degree heterogeneity.
In the following, we establish a somewhat surprising link between the DCMM and our model.
Using a multilayer hierarchical extension of their Mixed-Score algorithm, we can efficiently find a spectral initialization of our connection matrices $\Ab$ and significantly reduce the burn-in period needed in our MCMC sampler.

For a single network with $p$ nodes and $q$ communities, DCMM models adjacency matrix $\Yb$ as
\begin{equation}\label{eq:dcmm}
\EE[\Yb | \Db, \bPi, \Zb]
=
\Db \bPi \Zb \bPi^\top \Db
,
\end{equation}
where $\Db \in \RR^{p \times p}$ is a positive diagonal matrix, $\bPi \in \RR^{p \times q}$ has each row $\bpi_i$ in the simplex $\cS^{q - 1}$, and $\Zb$ is a symmetric positive matrix.
Each diagonal entry $D_i$ of $\Db$ denotes the degree of node $i$, each row $\bpi_i$ in $\bPi$ denotes the mixed membership of node $i$, and each entry $Z_{s, t}$ in $\Zb$ represents the adjacency probability of two nodes with partial memberships in communities $s$ and $t$.
We find that any two consecutive layers in our model are related to the DCMM in a sense stated in the following theorem.
For notational simplicity, we denote $G_k := \max_{i, j \in \langle p_k \rangle} \ab_{k, i}^\top \bGamma_k \ab_{k, j}$ for each $k \in [K]$.

\begin{theorem}\label{theo:warm_init}
Let parameter $(\Ab, \bTheta) \in \cA_0 \times \cT_0$.
For each $k \in [K]$ and any $\beta, \epsilon > 0$, with probability at least $1 - \epsilon$ we have
$$
\left\|
\frac{1}{N} \sum_{n = 1}^N \Xb_k^{(n)}
-
\Db \bPi \Zb \bPi^\top \Db
\right\|_2
\le
p_k \sup_{0 \le x \le G_k} \left|
\frac{\exp(C_k + x)}{1 + \exp(C_k + x)} - \beta x
\right|
+
\frac{\sqrt{2p_k} \log(2p_k \epsilon^{-1})}{\sqrt{N}}
,
$$
where we let $\Db := \diag(\Ab_k \one)$, $\bPi := \Db^{-1} \Ab_k$, and $\Zb := \beta \bGamma_k * \frac{1}{N} \sum_{n = 1}^N \Xb_{k - 1}^{(n)}$.
\end{theorem}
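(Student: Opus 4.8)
The plan is to reduce the claimed bound to two separate estimates — a deterministic approximation (bias) term and a stochastic (concentration) term — after first simplifying the proposed factorization. Writing $\varpi(u):=\exp(u)/(1+\exp(u))$ for the logistic function, the first observation is that $\Db\bPi=\Db\Db^{-1}\Ab_k=\Ab_k$, so the target reduces to $\Db\bPi\Zb\bPi^\top\Db=\Ab_k\Zb\Ab_k^\top$, whose $(i,j)$ entry is
$$
\big(\Ab_k\Zb\Ab_k^\top\big)_{i,j}
=\beta\cdot\frac1N\sum_{n=1}^N \ab_{k,i}^\top(\bGamma_k*\Xb_{k-1}^{(n)})\ab_{k,j}
=\beta\cdot\frac1N\sum_{n=1}^N\big(\psi_{k,i,j}^{(n)}-C_k\big).
$$
I would condition throughout on $\Xb_{k-1}^{(1:N)}$ (which makes $\Zb$ deterministic and the $\Xb_k^{(n)}$ conditionally independent) and decompose $\frac1N\sum_n\Xb_k^{(n)}-\Ab_k\Zb\Ab_k^\top=\Bb+\Eb$, where $\Bb$ collects the entrywise differences between the conditional means $\varpi(\psi_{k,i,j}^{(n)})$ and the linearization $\beta(\psi_{k,i,j}^{(n)}-C_k)$, and $\Eb=\frac1N\sum_n(\Xb_k^{(n)}-\EE[\Xb_k^{(n)}\mid\Xb_{k-1}^{(n)}])$ is the centered fluctuation.

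For the bias matrix $\Bb$, the key point is that $\psi_{k,i,j}^{(n)}-C_k=\ab_{k,i}^\top(\bGamma_k*\Xb_{k-1}^{(n)})\ab_{k,j}\in[0,G_k]$, because $\bGamma_k$ has positive entries, the $\ab_{k,i}$ are binary, the entries of $\Xb_{k-1}^{(n)}$ lie in $[0,1]$, and $G_k=\max_{(i,j)}\ab_{k,i}^\top\bGamma_k\ab_{k,j}$. Hence each summand is of the form $\varpi(C_k+x)-\beta x$ with $x\in[0,G_k]$, so every entry of $\Bb$ is bounded in absolute value by $\sup_{0\le x\le G_k}|\varpi(C_k+x)-\beta x|$. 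Since $\Bb$ is symmetric, $\|\Bb\|_2\le\sqrt{\|\Bb\|_1\|\Bb\|_\infty}=\|\Bb\|_1\le p_k\max_{i,j}|B_{i,j}|$, which yields exactly the first term $p_k\sup_{0\le x\le G_k}|\varpi(C_k+x)-\beta x|$. One point I would flag here is the all-ones diagonal of $\Xb_k^{(n)}$: the diagonal of $\frac1N\sum_n\Xb_k^{(n)}$ is $\one$ rather than $\varpi(\cdot)$, so the diagonal contribution is a separate diagonal matrix of bounded spectral norm; I would either restrict the approximation to off-diagonal entries or absorb this $O(1)$ diagonal perturbation into the first term, noting it is harmless for the spectral-initialization purpose.

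For the noise matrix $\Eb=\frac1N\sum_n\Wb^{(n)}$, each $\Wb^{(n)}$ is, conditionally, symmetric, mean-zero, independent across $n$, with entries in $[-1,1]$; the plan is to invoke the matrix Bernstein inequality. This requires a uniform spectral bound $\|\Wb^{(n)}\|_2\le p_k$ and a conditional variance proxy $\|\sum_n\EE[(\Wb^{(n)})^2\mid\Xb_{k-1}^{(1:N)}]\|_2$. The crucial computation is that the off-diagonal entries of $\EE[(\Wb^{(n)})^2]$ vanish: for $i\ne j$ the entry $\sum_\ell \EE[W^{(n)}_{i,\ell}W^{(n)}_{\ell,j}]$ is a sum over distinct edges $\{i,\ell\}\ne\{\ell,j\}$, which are conditionally independent and mean-zero, so $\EE[(\Wb^{(n)})^2]$ is diagonal with entries $\sum_\ell\mathrm{Var}(X^{(n)}_{k,i,\ell})\le p_k/4$. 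This gives a variance proxy of order $Np_k$ (rather than $Np_k^2$), which is exactly what produces the $\sqrt{p_k}$ scaling; matrix Bernstein then yields $\|\Eb\|_2\lesssim\sqrt{p_k\log(2p_k/\epsilon)/N}$ with conditional probability at least $1-\epsilon$, and this is dominated by the stated second term $\sqrt{2p_k}\log(2p_k\epsilon^{-1})/\sqrt N$. Because the conditional bound holds for every realization of $\Xb_{k-1}^{(1:N)}$, it also holds unconditionally, and a triangle inequality combining the $\Bb$ and $\Eb$ estimates completes the proof.

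I expect the main obstacle to be the stochastic term: getting the clean $\sqrt{p_k}$ dependence forces the use of matrix Bernstein together with the vanishing-off-diagonal variance computation (a naive Hoeffding or entrywise union bound only delivers the looser $p_k/\sqrt N$ scaling), and some care is needed to match the stated constants by checking that the linear-in-deviation term in Bernstein does not dominate and by loosening $\sqrt{\log}$ to $\log$. The only other delicate points are the correct conditioning on $\Xb_{k-1}^{(1:N)}$ (so that the random target $\Zb$ and the fluctuation are handled simultaneously) and the diagonal correction noted above.
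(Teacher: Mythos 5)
Your proposal is correct and follows essentially the same route as the paper's proof: condition on $\Xb_{k-1}^{(1:N)}$, split $\frac{1}{N}\sum_{n}\Xb_k^{(n)}-\Ab_k\Zb\Ab_k^\top$ into a linearization bias plus a centered fluctuation, bound the bias entrywise by $\sup_{0\le x\le G_k}\big|\tfrac{\exp(C_k+x)}{1+\exp(C_k+x)}-\beta x\big|$ and pay a factor $p_k$ for the spectral norm, control the fluctuation with matrix Bernstein, and finish with the triangle inequality (the paper also first reduces to $K=1$, which is cosmetic).

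The one substantive difference is how matrix Bernstein is applied. The paper sums the $N\binom{p_k}{2}$ independent rank-two matrices $\bZ_{n,i,j}=(X_{k,i,j}^{(n)}-\EE[X_{k,i,j}^{(n)}\mid\cdot])(\eb_i\eb_j^\top+\eb_j\eb_i^\top)$, each with spectral norm at most $1$, and obtains the variance proxy $N(p_k-1)$; you instead sum the $N$ per-sample matrices $\Wb^{(n)}$, and your vanishing-off-diagonal computation yields the same $O(Np_k)$ variance proxy, but you are then forced to take the uniform bound $L'=p_k$ rather than $L'=1$. This is precisely the constant-matching issue you flag: with $L'=p_k$, the linear term $L't/3$ in Bernstein is dominated by the variance term only when $N$ exceeds a constant multiple of $p_k$, so your grouping recovers the stated bound only in that regime, whereas the paper's per-edge grouping matches the stated constants for all $N\ge 1$. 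If you want the clean statement, regroup the noise per edge, at which point your variance observation becomes the paper's computation verbatim. Your second flag, on the all-ones diagonal, is also well taken: the fluctuation matrix indeed has zero diagonal because diagonal entries are deterministic, but the bias term's diagonal discrepancy $|1-\beta\,\ab_{k,i}^\top(\bGamma_k*\Xb_{k-1}^{(n)})\ab_{k,i}|$ is not of the form $|\tfrac{\exp(C_k+x)}{1+\exp(C_k+x)}-\beta x|$ and is not covered by the stated bound; the paper's own proof elides this point, so on this step you are being more careful than the source.
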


In Theorem \ref{theo:warm_init}, we have upper bounded the difference between the sample average $\frac{1}{N} \sum_{n = 1}^N \Xb_k^{(n)}$ and $\Db \bPi \Zb \bPi^\top \Db$ by the sum of a constant term and a term decaying with $N$.
While the presence of the constant term implies that this bound does not vanish as $N \to \infty$, the goal of the theorem is not to establish an asymptotic equivalence between the DCMM and our model.
Rather, it highlights a structural proximity between the two formulations.

This connection motivates the use of the Mixed-SCORE algorithm \citep{jin2023mixed}, which is an efficient and accurate spectral method for recovering $\bPi$ from $\Yb$ under the DCMM, as an initialization tool for the connection matrices $\Ab$ in our model.
Although, in principle, the Gibbs sampler converges to the correct posterior distribution regardless of initialization, a well-chosen initialization, such as that provided by Mixed-SCORE, can substantially reduce burn-in and improve computational efficiency in practice.

In our multilayer extension of Mixed-SCORE algorithm, we will recover the connection matrices $\Ab$ in a bottom-up fashion, starting from $\Ab_K$ of the observed layer.
By applying the Mixed-SCORE algorithm to $\frac{1}{N} \Xb_K^{(n)}$, we obtain an estimate of $\bPi := \diag(\Ab_K \one)^{-1} \Ab_K$ and truncate it into an estimate of $\Ab_K$ through $A_{K, i, j} = 1_{\Pi_{i, j} \ge S^{-1}}$.
We then pick the $p_{K - 1}$ rows in $\bPi$ closest to the $p_{K - 1}$ vertices of simplex $\cS^{p_{K - 1} - 1}$ and select the adjacency sub-matrices $\tilde\Xb_{K - 1}^{(1:N)}$ by keeping only these $p_{k - 1}$ rows and columns of $\Xb_K^{(1:N)}$.
By repeatedly applying this approach to $\tilde\Xb_{K - 1}^{(1:N)}, \tilde\Xb_{K - 2}^{(1:N)}, \ldots$, we sequentially obtain initial estimates of $\Ab_{K - 1}, \Ab_{K - 2}, \ldots$, which together form the spectral initialization of our connection matrices $\Ab$.
We defer the details of this algorithm to Section \ref{sec:spec_init}, along with a thorough overview of the DCMM and the proof of Theorem \ref{theo:warm_init}.

\subsection{Model Selection Guide}\label{ssec:post_comp_model_sel}

The proposed Bayesian deep generative model requires specifying both the number of layers $K$ and the number of nodes $p_k$ in each latent layer of the Bayesian network.
These structural quantities can be determined in several ways. When domain knowledge or scientific context suggests a natural multi-resolution representation, such as known functional modules, interaction modes, or biological organization, the number of layers $K$ may simply be set a priori.
In many applications, however, it is desirable to choose $K$ and $(p_0,\dots,p_K)$ in a data-driven manner.
For this purpose, predictive model comparison tools such as the Widely Applicable Information Criterion (WAIC) \citep{watanabe2010asymptotic} provide a principled and convenient mechanism for selecting the model complexity.

At a high level, we treat the layer widths $(p_0, \dots, p_K)$ as unknown structural quantities and evaluate a range of plausible configurations subject to basic identifiability constraints (e.g., $p_k \ge 2 p_{k-1}$).
For each configuration, we perform posterior inference for the model parameters using the Gibbs sampling procedure described earlier and compute WAIC to assess predictive accuracy.
The configuration that minimizes WAIC is selected as the preferred model.
The same approach may be applied to selecting the number of layers $K$ by fitting the model under several candidate values and choosing the one with the lowest WAIC.

Conceptually, it is also possible to treat $K$ and each $p_k$ as a model parameter with its own prior distribution, and to infer it jointly with the other parameters using reversible jump MCMC or other trans-dimensional Bayesian methods.
While theoretically appealing, these approaches introduce substantial practical difficulties.
Reversible jump moves require carefully designed dimension-matching transformations, and in high-dimensional network models with many latent variables they tend to mix poorly.
As a result, fully Bayesian inference over $K$ is typically computationally expensive, unstable, and difficult to tune in practice. For these reasons, WAIC based model selection is more computationally tractable, easier to implement, and empirically stabler.

In practice, model selection must also balance interpretability and computational scalability.
For many applications involving community structure or hierarchical organization, a shallow hierarchy is sufficient to capture the relevant structure.
As a result, choosing $K = 2$, i.e. a three-layer latent hierarchy, often provides an effective trade-off between flexibility and interpretability while remaining computationally efficient for large networks.

A detailed illustration of the WAIC-based selection of $(p_k)$ appears in the simulation section, where we evaluate WAIC over a grid of candidate configurations and find that it is minimized at the true underlying values.
These results also show that WAIC varies smoothly across different choices of $(p_k)$, supporting the reliability of the proposed model-selection strategy.

\section{Subsampling Gibbs Sampler}\label{sec:sub_gibbs}

In this section, we provide a brief overview of subsampling MCMC methods in Section \ref{ssec:overview_sub} and discuss some further variants of our subsampling Gibbs sampler in Section \ref{ssec:discuss_sub}.

\subsection{Brief Overview of Subsampling MCMC}\label{ssec:overview_sub}

Many different versions of subsampling MCMC algorithms have been proposed in the past decade.
As a crucial step in any MCMC algorithm, the evaluation of model likelihood often has very high computational complexity when the sample size of data is large.
To speed up the MCMC algorithm, we could approximate the model log-likelihood with an estimator based on only a subset of data.
For MCMC algorithms based on stochastic differential equations, the approximated versions include stochastic gradient Langevin dynamics \citep{welling2011bayesian} and other variants \citep{ahn2012bayesian, ma2015complete, baker2019control}.
For Metropolis-Hastings algorithms that involve likelihoods in computation of acceptance ratios, efficient estimators for the acceptance ratio based on a subset of data are available \citep{korattikara2014austerity, bardenet2014towards, bardenet2017markov}.
To reduce the variance of the log-likelihood estimator, different control variate methods \citep{quiroz2018speeding, quiroz2018speeding2} and subsample weighting methods \citep{maire2019informed} have also been proposed.

Despite the numerous subsampling MCMC algorithm, it is important to note that all of them fall into the category of approximate MCMC methods, which do not exactly sample from the target distribution.
Theoretical tools have been developed to bound the distance between the stationary distribution of approximate MCMC method and the target distribution when the original MCMC method is uniformly ergodic \citep{johndrow2015optimal, johndrow2017coupling} or geometrically ergodic \citep{pillai2014ergodicity, johndrow2017error, rudolf2018perturbation, negrea2021approximations}.

\subsection{Discussions of Subsampling Gibbs Sampler}\label{ssec:discuss_sub}

We now discuss a few possible variants of our subsampling Gibbs sampler.

In the current subsampling Gibbs sampler, while a subset of samples $\cB$ is randomly selected at the start of each iteration, all the augmented variables $\omega_{k, i, j}^{(n)}$'s are drawn from their full conditional distributions.
These augmented variables can be sampled in parallel efficiently, but the computational complexity will be high when parallel computing is not available.
As a compromise, we could instead draw only the subset of augmented variables $\omega_{k, i, j}^{(n)}$ with $n \in \cB$.
While the resulting Markov chain has a stationary distribution that is theoretically further from the desired posterior distribution, the difference is observed to be small in practice.

As discussed in Section \ref{ssec:sub_gibbs}, in each iteration of the subsampling Gibbs sampler, we are estimating the data augmented model log-likelihood using a subset $\cB$ of data as
$$
\hat\ell(\Ab, \bTheta)
:=
\frac{N}{|\cB|} \sum_{n \in \cB} \log \PP(\Xb^{(n)}, \bomega^{(n)} | \Ab, \bTheta)
.
$$
Other surrogate functions of the log-likelihood could potentially also be designed, in order to further reduce the evaluation time or improve the approximation quality.
In current literature \citep{quiroz2018subsampling}, there are ideas of designing surrogate functions based on Taylor expansions or clustering of the data. We leave this for future work.

\section{Spectral Initialization}\label{sec:spec_init}

In this section, we provide the necessary background for the DCMM model and the Mixed-SCORE algorithm \citep{jin2023mixed} in Section \ref{ssec:prelim_dcmm}.
We then present the multi-layer extension of Mixed-SCORE algorithm to our model in Section \ref{ssec:spec_init_alg}, which can be used as the spectral initialization.
The proof of Theorem \ref{theo:warm_init} is provided in Section \ref{ssec:proof_warm_init}.

\subsection{Preliminaries of DCMM and Mixed-SCORE Algorithm}\label{ssec:prelim_dcmm}

The Degree Corrected Mixed Membership (DCMM) model \citep{jin2023mixed} is designed to detect communities in a single sparse network, allowing mixed membership and node degree heterogeneity.
An elementary introduction to the DCMM model and the Mixed-SCORE algorithm is provided here.
We refer readers to \citet{jin2023mixed} for a complete study.

For a single sparse network with a $p \times p$ adjacency matrix $\Yb$ and $\tilde{p}$ communities, the DCMM model assumes that the  upper-triangular entries of $\Yb$ are conditionally independent given Bernoulli parameters:
$$
\PP(Y_{i, j} = 1 ~|~ \Db, \bPi, \Zb)
=
D_i D_j \bpi_i^\top \Zb \bpi_j
,
$$
where $D_i > 0$ denotes the degree of node $i$, $\bpi_i \in \RR^{\tilde{p} \times 1}$ denotes the mixed membership vector of node $i$, and $\Zb \in \RR^{\tilde{p} \times \tilde{p}}$ is a symmetric matrix with positive entries.
The diagonal matrix $\Db = \diag(D_1, \ldots, D_p)$, the $p \times \tilde{p}$ membership matrix $\bPi$ with its $i$th row being $\bpi_i$, and the matrix $\Zb$ together compose the parameters of the DCMM model.
Intuitively, the entry $Z_{i, j}$ $(i, j \in [1, \tilde{p}])$ of $\Zb$ models the probability of connection between a node purely belonging to community $i$ and a node purely belonging to community $j$, as in stochastic block models \citep{holland1983stochastic}.
Each row vector $\bpi_i$ in the mixed membership matrix $\bPi$ belongs to the probability simplex, with all entries non-negative and summing up to one.
Therefore, the probability of connection between two nodes is a weighted average of $Z_{i, j}$'s over all pairs of communities $(i, j)$'s that the two nodes partially belong to.
Different nodes may have different propensity of being connected to other nodes regardless of their community memberships, which is known as the degree heterogeneity issue. The degree correction matrix $\Db$ is introduced to adjust for such degree heterogeneity.
In matrix form, we have
$$
\EE[\Yb | \Db, \bPi, \Zb]
=
\Db \bPi \Zb \bPi^\top \Db
.
$$

Given the observation $\Yb$, the Mixed-SCORE algorithm estimates $\bPi$ in three steps: spectral clustering on ratios-of-eigenvectors (SCORE), vertex hunting, and membership reconstruction.
\begin{enumerate}
\item
In the SCORE step, for the top $\tilde{p}$ eigenvectors $\bxi_1, \ldots, \bxi_{\tilde{p}}$ of $\Yb$, we divide $\bxi_2, \ldots, \bxi_{\tilde{p}}$ entrywise over $\bxi_1$ and stack the resulted $\tilde{p} - 1$ vectors by columns into a matrix $\Rb \in \RR^{p \times {\tilde{p} - 1}}$.
For intuition purposes, we also consider the matrix $\tilde\Rb$ obtained by applying this SCORE step to $\EE[\Yb | \Db, \bPi, \Zb]$ instead of $\Yb$, since the difference between $\Rb$ and $\tilde\Rb$ is proven to be small.
It is also shown that the rows of $\tilde\Rb$ form a simplex in $\RR^{\tilde{p} - 1}$.
When each community possesses at least one pure node (a node that belongs purely to this community), the vertices of this simplex corresponds to the rows in $\tilde\Rb$ of the pure nodes.
\item
In the vertex hunting step, we identify the vertices of this simplex using existing methods such as successive projection \citep{araujo2001successive, nascimento2005vertex}.
This allows us to find out the pure nodes of the model.
To improve the robustness of this step, it is often helpful to conduct $k$-means clustering of the rows in $\Rb$ before finding the vertices, know as sketched vertex search \citep[SVS,][]{jin2023mixed}.
\item
In the membership reconstruction step, given the pure nodes, the mixed memberships of each node can be recovered through solving linear equation systems.
\end{enumerate}
The mixed memberships $\bPi$ can be exactly recovered when the Mixed-SCORE algorithm is applied to $\EE[\Yb | \Db, \bPi, \Zb]$.
In practice, only $\Yb$ is known and the Mixed-SCORE algorithm is applied to $\Yb$, but the estimator of $\bPi$ obtained remains valid and consistent since the difference $\|\Yb - \EE[\Yb | \Db, \bPi, \Zb]\|_2$ can be bounded to be small.

\subsection{Our Spectral Initialization Algorithm}\label{ssec:spec_init_alg}

\begin{algorithm}
\caption{Multilayer Mixed-SCORE Algorithm}
\begin{algorithmic}
\STATE
Compute the sample mean of observed adjacency matrices
$$
\overline\Xb_K
:=
\frac{1}{N} \sum_{n = 1}^N \Xb_K^{(n)}
;
$$
\FOR{$k = K, K - 1, \ldots, 1$}
\STATE
Apply Mixed-SCORE algorithm to $\overline\Xb_k$ and obtain the mixed membership matrix $\bPi_k$;
\STATE
Truncate $\bPi_k$ to connection matrix $\Ab_k$ entrywise as
$$
A_{k, i, j}
:=
1_{\Pi_{k, i, j} \ge S^{-1}}
,\quad
\forall i \in [p_k], j \in [p_{k - 1}]
;
$$
\STATE
Find the $p_{k - 1}$ rows $\cI_k$ in $\bPi_k$ closest to distinct pure nodes through
$$
\cI_k
=
\argmin_{\cI_k \subset [p_k], |\cI_k| = p_{k - 1}} \|(\bPi_k)_{\cI_k, :} - \Ib_{p_{k - 1}}\|_F^2
;
$$
Keep only the $\cI_k$ rows and $\cI_k$ columns of $\overline\Xb_k$ and let the $p_{k - 1} \times p_{k - 1}$ submatrix be
$$
\overline\Xb_{k - 1}
:=
(\overline\Xb_k)_{\cI_k, \cI_k}
;
$$
\ENDFOR
\STATE
Output the initialized connection matrices $\Ab := \{\Ab_k\}_{k = 1}^K$.
\end{algorithmic}
\label{algo:warm_init}
\end{algorithm}

Theorem \ref{theo:warm_init} shows that the sample average of adjacency matrices $\frac{1}{N} \sum_{n = 1}^N \Xb_1^{(n)}$ is close to the expected adjacency matrix $\Db \bPi \Zb \bPi^\top \Db$ in a DCMM model, with the mixed membership matrix being $\bPi = \diag(\Ab_1 \one)^{-1} \Ab_1$.
Given $\bPi$, we can recover the connection matrix $\Ab_1$ through the simple truncation $\Ab_{1, i, j} = 1_{\Pi_{i, j} \ge S^{-1}}$.
This suggests an algorithm for warmly initializing $\Ab$ for $K = 1$.

When our Bayesian network has more than two layers (i.e. $K > 1$), we can take the following approach to iteratively apply the above algorithm for $K = 1$ and initialize the connection matrices $\Ab_k$ for $k = K, K - 1, \ldots, 1$ in a bottom-up fashion.
For each $k \in [K]$, after initializing the connection matrix $\Ab_k$, we can compute the corresponding DCMM mixed membership matrix $\bPi_k = \diag(\Ab_k \one)^{-1} \Ab_k$.
We pick the submatrix of $\bPi_k$ that is closest to an identity matrix, such that these selected $p_{k - 1}$ rows in $\bPi_k$ roughly corresponds to $p_{k - 1}$ distinct pure nodes.
We keep only these $p_{k - 1}$ rows and columns of the adjacency matrices $\Xb_k^{(1:N)}$, leading to the $p_{k - 1} \times p_{k - 1}$ adjacency sub-matrices $\tilde{\Xb}_k^{(1:N)}$.
The obtained adjacency sub-matrices will closely characterize the adjacency between the nodes in the $(k - 1)$th layer in the Bayesian network, to which the Mixed-SCORE algorithm and truncation can be applied again to initialize $\Ab_{k - 1}$.
Repeated application of this approach for $k = K, K - 1, \ldots, 1$ allows us to initialize all connection matrices $\Ab$.
We refer to this warm initialization algorithm as the Multilayer Mixed-SCORE algorithm and formalize it in Algorithm \ref{algo:warm_init}.

\subsection{Proof of Theorem \ref{theo:warm_init}}\label{ssec:proof_warm_init}

In the following, we first state the matrix Bernstein inequality.
A proof can be found in Theorem 6.1 of \citet{tropp2015introduction}.
As before, we let $\|\cdot\|_2$ denote the spectral norm of a matrix.

\begin{theorem}[Matrix Bernstein Inequality]\label{theo:mat_bern}
Let $\bZ_1, \ldots, \bZ_N$ be independent $d_1 \times d_2$ random matrices satisfying $\EE[\bZ_n] = 0$ and $\|\bZ_n\|_2 \le L'$ almost surely for some constant $L'$ and each $n \in [N]$.
Then for all $t \ge 0$,
$$
\PP\left( \left\|
\sum_{n = 1}^N \bZ_n
\right\|_2 \ge t \right)
\le
(d_1 + d_2) \exp\left(
- \frac{t^2 / 2}{L + L' t / 3}
\right)
,
$$
where constant $L := \max\left\{ \left\|\sum_{n = 1}^N \EE[\bZ_n \bZ_n^\top]\right\|_2, \left\|\sum_{n = 1}^N \EE[\bZ_n^\top \bZ_n]\right\|_2 \right\}$.
\end{theorem}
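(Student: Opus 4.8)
The plan is to prove the inequality by the matrix Laplace transform method, reducing first to the Hermitian case and then controlling the matrix moment generating function summand by summand. First I would pass to Hermitian dilations: for each $n$, set
$$
\Yb_n
:=
\left(\begin{matrix}
\zero & \bZ_n \\
\bZ_n^\top & \zero
\end{matrix}\right)
\in
\RR^{(d_1 + d_2) \times (d_1 + d_2)}
,
$$
which is Hermitian, satisfies $\EE[\Yb_n] = \zero$, and has $\|\Yb_n\|_2 = \|\bZ_n\|_2 \le L'$. The spectrum of each dilation is symmetric about zero, so $\lambda_{\max}(\sum_n \Yb_n) = \|\sum_n \Yb_n\|_2 = \|\sum_n \bZ_n\|_2$; hence it suffices to bound $\PP(\lambda_{\max}(\sum_n \Yb_n) \ge t)$, with no separate two-sided union bound. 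Moreover $\Yb_n^2$ is block diagonal with blocks $\bZ_n \bZ_n^\top$ and $\bZ_n^\top \bZ_n$, so $\|\sum_n \EE[\Yb_n^2]\|_2 = L$ exactly. This reduces the rectangular statement to a one-sided Hermitian bound with variance $L$ and ambient dimension $d_1 + d_2$.

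Next I would apply the matrix Laplace transform bound: for any $\theta > 0$, exponentiating and using Markov's inequality on $\lambda_{\max}$ gives
$$
\PP\left(
\lambda_{\max}\left( \sum_n \Yb_n \right) \ge t
\right)
\le
e^{-\theta t}\, \EE\,\tr\exp\left( \theta \sum_n \Yb_n \right)
.
$$
The crux is to bound $\EE\,\tr\exp(\theta \sum_n \Yb_n)$ in spite of noncommutativity, where the scalar factorization $\EE\exp(\sum) = \prod\EE\exp$ fails. Here I would invoke Lieb's concavity theorem — that for fixed Hermitian $\Hb$ the map $\Ab \mapsto \tr\exp(\Hb + \log \Ab)$ is concave on positive-definite $\Ab$ — together with Jensen's inequality and independence to peel off the summands one at a time, yielding the subadditivity of the matrix cumulant generating function
$$
\EE\,\tr\exp\left( \theta \sum_n \Yb_n \right)
\le
\tr\exp\left( \sum_n \log \EE\, e^{\theta \Yb_n} \right)
.
$$
This is the main obstacle: it is the one genuinely deep ingredient, since the scalar MGF factorization is unavailable and Lieb's theorem is precisely what rescues the sequential removal of the summands.

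Then I would bound each matrix MGF. Since the scalar map $x \mapsto (e^{\theta x} - 1 - \theta x)/x^2$ is increasing and every eigenvalue of $\Yb_n$ is at most $L'$, the transfer rule for Hermitian matrices together with $\EE[\Yb_n] = \zero$ gives the Loewner-order bound $\EE\, e^{\theta \Yb_n} \preceq \exp(g(\theta)\,\EE[\Yb_n^2])$ with $g(\theta) = (e^{\theta L'} - 1 - \theta L')/(L')^2$, so $\log\EE\, e^{\theta \Yb_n} \preceq g(\theta)\,\EE[\Yb_n^2]$. Summing over $n$, using monotonicity of $\tr\exp$ under the positive-semidefinite order, the estimate $\tr\exp(\cdot) \le (d_1 + d_2)\exp(\lambda_{\max}(\cdot))$, and $\|\sum_n \EE[\Yb_n^2]\|_2 = L$, I would arrive at
$$
\PP\left(
\left\| \sum_n \bZ_n \right\|_2 \ge t
\right)
\le
(d_1 + d_2)\, \inf_{0 < \theta < 3/L'} \exp\left( -\theta t + g(\theta)\, L \right)
.
$$
Finally I would apply the elementary inequality $g(\theta) \le (\theta^2/2)/(1 - \theta L'/3)$ on $0 < \theta < 3/L'$ and choose $\theta = t/(L + L' t/3)$, which makes the exponent equal to $-(t^2/2)/(L + L' t/3)$ and completes the proof. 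Everything after the Lieb step is routine scalar calculus and standard matrix functional-calculus, all available in \citet{tropp2015introduction}.
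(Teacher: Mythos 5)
Your proposal is correct and complete: the Hermitian dilation correctly transfers the rectangular problem to a one-sided bound with variance parameter exactly $L$ and dimension $d_1 + d_2$, the Lieb/Jensen subadditivity step, the MGF bound via the increasing function $(e^{\theta x} - 1 - \theta x)/x^2$, the estimate $g(\theta) \le (\theta^2/2)/(1 - \theta L'/3)$, and the choice $\theta = t/(L + L' t/3)$ all check out. The paper gives no proof of its own but simply cites Theorem 6.1 of \citet{tropp2015introduction}, and your argument is precisely the matrix Laplace transform proof given there, so you have in effect reconstructed the paper's intended proof.
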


We now prove Theorem \ref{theo:warm_init} using Theorem \ref{theo:mat_bern}.

\begin{proof}[Proof of Theorem \ref{theo:warm_init}]
We note that it suffices to prove Theorem \ref{theo:warm_init} for $K = 1$.
Under $K = 1$, let $\Xb^{(1:N)}$ be independent realizations from our model with parameter $(\Ab, \bTheta) \in \cA_0 \times \cT_0$.
Conditioned on $\Xb_0^{(1:N)}$, for each entry $(i, j)\in \langle p_1 \rangle$ we have $X_{1, i, j}^{(n)}$ independently following the Bernoulli distribution with
$$
\PP\left(
X_{1, i, j}^{(n)} = 1
\Big|
\Ab, \bTheta, \Xb_0^{(n)}
\right)
=
\EE\left[
X_{1, i, j}^{(n)}
\Big|
\Ab, \bTheta, \Xb_0^{(n)}
\right]
=
\frac{\exp\left(
C + \ab_{1, i}^\top (\Xb_0^{(n)} * \bGamma_1) \ab_{1, j}
\right)}{1 + \exp\left(
C + \ab_{1, i}^\top (\Xb_0^{(n)} * \bGamma_1) \ab_{1, j}
\right)}
.
$$
By noticing
$$
0
\le
\ab_{1, i}^\top (\Xb_0^{(n)} * \bGamma_1) \ab_{1, j}
\le
G
=
\max_{1 \le i, j \le p_1} \ab_{1, i}^\top \bGamma_1 \ab_{1, j}
\quad
\forall (i, j) \in \langle p_1 \rangle, n \in [N]
,
$$
we have
$$
\left\|
\EE\left[
\Xb_1^{(n)}
\Big|
\Ab, \bTheta, \Xb_0^{(n)}
\right]
-
\beta \Ab_1^\top (\Xb_0^{(n)} * \bGamma_1) \Ab_1
\right\|_2
\le
p_1 \sup_{0 \le x \le G} \left|
\frac{\exp(C + x)}{1 + \exp(C + x)} - \beta x
\right|
,
$$
which by Jensen's inequality gives
$$
\left\|
\EE\left[
\Xb_1^{(n)}
\Big|
\Ab, \bTheta
\right]
-
\beta \Ab_1^\top (\bGamma_1 * \EE[\Xb_0 | \bnu]) \Ab_1
\right\|_2
\le
p_1 \sup_{0 \le x \le G} \left|
\frac{\exp(C + x)}{1 + \exp(C + x)} - \beta x
\right|
,
$$
where $\EE[\Xb_0 | \bnu]$ is the $p_0 \times p_0$ matrix denoting the expectation of $\Xb_0^{(n)}$ under the categorical distribution over its space $\cX_0$ parameterized by $\bnu$.

For each index pair $(i, j) \in \langle p_1 \rangle$ and sample $n \in [N]$, we define the random matrix
$$
\bZ_{n, i, j}
:=
\left(
X_{1, i, j}^{(n)} - \EE\left[
X_{1, i, j}^{(n)} \Big| \Ab, \bTheta, \Xb_0^{(n)}
\right] \right) \Eb_{i, j}
,
$$
where $\Eb_{i, j} := \eb_i \eb_j^\top + \eb_j \eb_i^\top 1_{i \ne j}$ denotes the indicator matrix with all entries zero except for entries $(i, j)$ and $(j, i)$.
We notice that each $\bZ_{n, i, j}$ satisfies
$$
\EE\left[
\bZ_{n, i, j} \Big| \Ab, \bTheta, \Xb_0^{(n)}
\right]
=
0
,\quad
\|\bZ_{n, i, j}\|_2
\le
1
.
$$
Furthermore, we have
$$
\EE\left[
\bZ_{n, i, j} \bZ_{n, i, j}^\top \Big| \Ab, \bTheta, \Xb_0^{(n)}
\right]
=
\VV\left[
X_{1, i, j} \Big| \Ab, \bTheta, \Xb_0^{(n)}
\right] (\Eb_{i, i} + \Eb_{j, j})
$$
and
$$
\VV\left[
X_{1, i, j} \Big| \Ab, \bTheta, \Xb_0^{(n)}
\right]
\le
1
,
$$
which together imply that
$$
\left\|
\sum_{n = 1}^N \sum_{(i, j) \in \langle p_1 \rangle} \EE\left[
\bZ_{n, i, j} \bZ_{n, i, j}^\top \Big| \Ab, \bTheta, \Xb_0^{(n)}
\right]
\right\|_2
\le
N \left\|
\sum_{(i, j) \in \langle p_1 \rangle} (\Eb_{i, i} + \Eb_{j, j})
\right\|_2
=
N (p_1 - 1)
.
$$
Notice that $\bZ_{n, i, j}$'s are conditionally independent given $\Ab, \bTheta, \Xb_0^{(1:N)}$.
By applying the matrix Bernstein inequality in Theorem \ref{theo:mat_bern} to the summation $\sum_{n = 1}^N \sum_{(i, j) \in \langle p_1 \rangle} \bZ_{n, i, j}$ conditioned on $\Ab, \bTheta, \Xb_0^{(1:N)}$, we obtain for any $t > 0$
$$
\PP\left( \left. \left\|
\sum_{n = 1}^N \sum_{(i, j) \in \langle p_1 \rangle} \bZ_{n, i, j}
\right\|_2 \ge t
~\right|~
\Ab, \bTheta, \Xb_0^{(1:N)}
\right)
\le
2p_1 \exp\left(
- \frac{t^2 / 2}{N(p_1 - 1) + t / 3}
\right)
.
$$

We recall that the DCMM model parameters corresponding to our model are $\Db := \diag(\Ab_1 \one), \bPi := \Db^{-1} \Ab_1, \Zb := \beta \bGamma_1 * \frac{1}{N} \sum_{n = 1}^N \Xb_0^{(n)}$.
Since the diagonal entries of adjacency matrices are fixed, by triangle inequality we have
\begin{align*}
\left\|
\frac{1}{N} \sum_{n = 1}^N \Xb_1^{(n)} - \Db \bPi \Zb \bPi^\top \Db
\right\|_2
&\le
\left\|
\frac{1}{N} \sum_{n = 1}^N \EE\left[
\Xb_1^{(n)} \Big| \Ab, \bTheta, \Xb_0^{(n)}
\right]
-
\beta \Ab_1^\top \left(
\bGamma_1 * \frac{1}{N} \sum_{n = 1}^N \Xb_0^{(n)}
\right) \Ab_1
\right\|_2
\\&\quad+
\left\|
\frac{1}{N} \sum_{n = 1}^N \sum_{(i, j) \in \langle p_1 \rangle} \bZ_{n, i, j}
\right\|_2
.
\end{align*}
Let $\epsilon > 0$ be arbitrary.
Putting everything together and letting $t := \sqrt{2p_1 N} \log(2p_1 \epsilon^{-1})$, with probability at least $1 - \epsilon$ we have
$$
\left\|
\frac{1}{N} \sum_{n = 1}^N \Xb_1^{(n)} - \Db \bPi \Zb \bPi^\top \Db
\right\|_2
\le
p_1 \sup_{0 \le x \le G} \left|
\frac{\exp(C + x)}{1 + \exp(C + x)} - \beta x
\right|
+
\frac{\sqrt{2p_1} \log(2p_1 \epsilon^{-1})}{\sqrt{N}}
.
$$
\end{proof}

\section{Simulation Study}\label{sec:sim}

In this section, we numerically evaluate our model consistency and posterior computation approaches through simulated examples.
We present the main results in Section \ref{ssec:main_sim} and additional results with discussions in Section \ref{ssec:more_sim}.
We also examine robustness to hyperparameter specification via sensitivity analysis in Section \ref{ssec:sens}, study performance in the large-$p_K$ small-$N$ regime in Section \ref{ssec:large_p_small_N}, and provide comparisons with the state-of-the-art hierarchical community detection methods proposed by \citet{li2022hierarchical} in Section \ref{ssec:compare_hcd}.

\subsection{Main Results}\label{ssec:main_sim}

To prepare for the application to brain connectomics, we design our simulation to mimic the sample size and the dimension of the brain connectome data.
Let there be $N = 1000$ samples of $68 \times 68$ observed adjacency matrix $\Xb_K^{(1:N)}$, generated from our model with the Bayesian network structure of $(K + 1) = 3$ layers and $p_0 = 4, p_1 = 16, p_2 = 68$ nodes.
In the data generating distribution, the continuous parameters of our model have true values specified as $\bnu = 2^{- p_0 (p_0 - 1) / 2} \one$, $C_k = -7$, and $\bGamma_k = 4 \one \one^\top + 6 \Ib$, for all $k \in [K]$.
The true value of each connection matrix $\Ab_k$ takes the form $\big( \Ib_{p_{k - 1}} ~ \Mb_k ~ \Bb_k \big)^\top$ for submatrices $\Mb_k \in \cM_{p_{k - 1}}$ and $\Bb_k \in \RR^{p_{k - 1} \times (p_k - 2 p_{k - 1})}$ (the exact value is given in Section \ref{ssec:more_sim}), with each row $\ab_{k, i}^\top$ of $\Ab_k$ containing one or two entries of ``1''.
In this way, we have $\Ab \in \cA_2 \cap \cA_1^c$ and $\bTheta \in \cT_0$, implying that $(\Ab, \bTheta)$ satisfies the condition for generic identifiability in Theorem \ref{theo:generic} but not the condition for strict identifiability in Theorem \ref{theo:strict}.
We illustrate through this design that generic identifiability suffices for practical purposes.
Additionally, the sparsity constraint under $S = 2$ holds for each $\Ab_k$.

We impose a prior distribution $\PP(\Ab, \bTheta)$ in the factorized form \eqref{eq:prior}, with each independent block of parameters following a weakly informative prior.
We let the prior of each $\ab_{k, i}$ be uniform over its space $\cV_k$ and the prior of $\bnu$ be uniform over the probability simplex $\cS_+^{\frac{p_0 (p_0 - 1)}{2} - 1}$.
For the continuous parameters $C$ and $\bGamma$, we follow \citet{gelman2006prior} and adopt the weakly informative prior proportional to the standard normal distribution $N(0, \Ib)$, such that each $C_k \sim N(0, 1)$ and each entry $\Gamma_{k, i, j} \sim N_+(0, 1)$.

We treat the number of nodes $p_k$ in each layer of the Bayesian network and the model parameters $(\Ab, \bTheta)$ as unknown.
To determine $p_k$, we conduct model selection based on the Watanabe-Akaike information criterion (WAIC) in \citet{watanabe2010asymptotic}.
Given fixed choices of $p_k$, we can then infer model parameters $(\Ab, \bTheta)$ using the posterior computation approaches in Section \ref{sec:post}.
Specifically, we consider all choices of $p_k$ satisfying $p_k \ge 2 p_{k - 1}$ for each $k \in [K]$, which is necessary for both the strict and generic notions of identifiability.
We further focus on $p_0 \in [2, 6]$ for better interpretability of the Bayesian network and also to prevent the parameter $\bnu$ from being too high-dimensional.
For every such choice of $p_k$, we use the spectral initialization approach in Section \ref{ssec:warm} to initialize connection matrices $\Ab$, then run the subsampling Gibbs sampler in Section \ref{ssec:sub_gibbs} with a 1\% subset ratio for 10,000 iterations, and follow by running the standard Gibbs sampler in Section \ref{ssec:sub_gibbs} for 100 iterations.
The 10,000 iterations of the subsampling Gibbs sampler is used for quick convergence from the spectral initialization, whereas the 100 iterations of the standard Gibbs sampler are used for posterior inference.
We use the Gelman-Rubin statistic \citep{gelman1992inference} and the Geweke statistic \citep{geweke1992evaluating} for convergence diagnosis of both the subsampling and the standard Gibbs samplers.
Detailed results reported in Section \ref{ssec:more_sim} provide evidence that the Markov chains of Gibbs samplers under each choice of $p_k$ have converged.

Given a Bayesian network structure specified by $p_k$, at the $t$th iteration of the standard Gibbs sampler, we have the posterior sample of parameters $(\Ab^{(t)}, \bTheta^{(t)})$ and latent variables $\Xb^{(1:N; t)}$.
Conditioned on the sampled parameters and latent variables, we denote the log likelihood of each observed adjacency matrix $\Xb_K^{(n)}$ as $\ell(t, n) := \log \PP(\Xb_K^{(n)} | \Ab^{(t)}, \bTheta^{(t)}, \Xb_{K - 1}^{(n; t)})$.
As recommended by \citet{gelman2014understanding} and \citet{merkle2019bayesian}, we use the following formula for computing WAIC
\begin{equation}\label{eq:waic}
\textsf{WAIC}
:=
-2 (\textsf{lppd} - p_{\textsf{WAIC}})
=
-2 \sum_{n = 1}^N \log \left(
\frac{1}{T} \sum_{t = 1}^T e^{\ell(t, n)}
\right)
+ 2 \sum_{n = 1}^N \var_{t = 1}^T\big(
\ell(t, n)
\big)
,
\end{equation}
where $\textsf{lppd}$ stands for the log pointwise predictive density and $p_{\textsf{WAIC}}$ is an approximation to the effective number of parameters, with $\var_{t = 1}^T(\ell(t, n))$ denoting the sample variance of $\{\ell(t, n)\}_{t = 1}^T$.
Among all choices of $p_k$, we find that WAIC is minimized at their true values $p_0 = 4, p_1 = 16, p_2 = 68$ and changes smoothly as we vary $p_k$.
This suggests that we are able to select the best model through the WAIC criterion.
A heatmap of WAIC with respect to $p_0$ and $p_1$ is provided in Section \ref{ssec:more_sim} for further illustration.

\begin{figure}[ht]
\centering
\includegraphics[width = \textwidth]{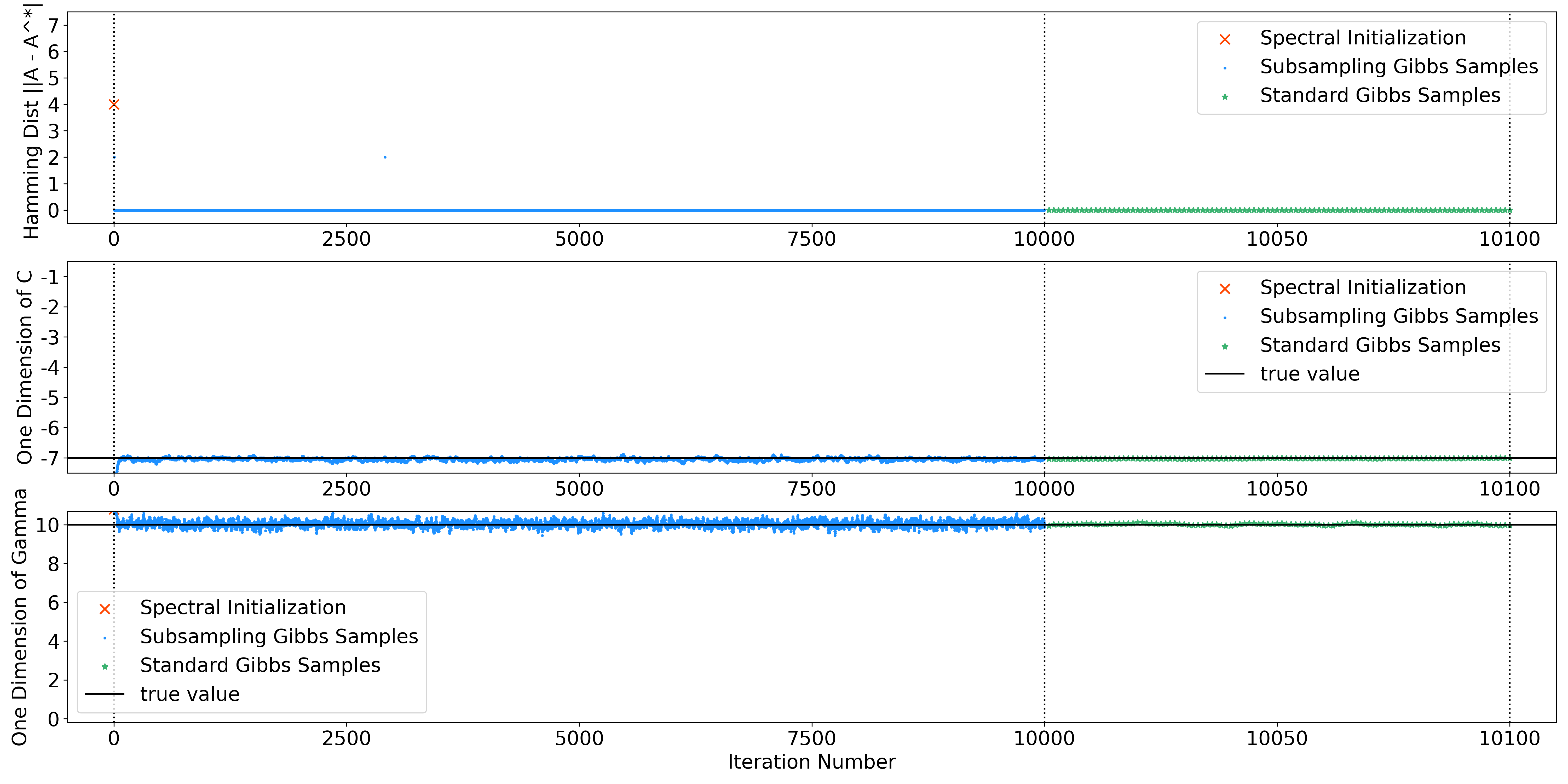}
\caption[Trace plots for Hamming distance, $C_2$, and $\Gamma_{2,1,1}$.]{
From top to bottom are the trace plot of the Hamming distance $\|\Ab - \Ab^*\|$ between the sample $\Ab$ and its true value $\Ab^*$, the trace plot of $C_2$, and the trace plot of entry $\Gamma_{2, 1, 1}$ in $\bGamma_2$.
We start at iteration 0 with spectral initialization and run the subsampling Gibbs sampler for 10,000 iterations, followed by 100 iterations of standard Gibbs sampler.
}
\label{fig:sim_trace}
\end{figure}

For the selected model with Bayesian network structure $p_0 = 4, p_1 = 16, p_2 = 68$, we examine the posterior distribution of model parameters $(\Ab, \bTheta)$ under both the subsampling Gibbs sampler and the standard Gibbs sampler.
Trace plots of the Hamming distance $\|\Ab - \Ab^*\|$ between sample $\Ab$ and its true value $\Ab^*$, the parameter $C_2$, and the entry $\Gamma_{2, 1, 1}$ of $\bGamma_2$ are provided in Figure \ref{fig:sim_trace}.
We observe that the spectral initialization gives a reasonably close estimate of the connection matrix $\Ab$, with only 4 out of 1152 entries being incorrect.
The initialization of continuous parameters are drawn randomly from their weakly informative prior distributions and hence distant from their true values.
In the 10,000 iterations of the subsampling Gibbs sampler, the parameters quickly converge towards their stationary distributions within the first 2,000 iterations and fluctuate for the next 8,000 iterations, exhibiting good mixing properties.
In the following 100 Gibbs sampler iterations, the parameters further converge to their new stationary distributions, which are the desired posterior distributions of parameters.
Despite the small subset ratio of $\frac{|\cB|}{N} = 1\%$ in the subsampling Gibbs sampler, it has stationary distributions very close to the standard Gibbs sampler.
Further comparison of stationary distributions in $W_1$ distance is provided in  Section \ref{ssec:more_sim}.
We also observe that the true values of parameters are recovered in the 100 iterations of the standard Gibbs sampler and uncertainty quantification can be conducted through Monte Carlo inference.

All the simulation results have justified (a) our posterior computation framework that combines spectral initialization, subsampling Gibbs sampler, and standard Gibbs sampler; and also (b) our model selection approach based on the WAIC criterion.
We note that based on our preliminary simulations, the presented results from one trial are typical and similar to observations from repeated experiments and different data generating distributions.
Therefore, we will be adopting the same computational framework in the following real data section when analyzing the brain connectome data, where no true values of model structure or model parameters are known.

\subsection{Additional Results and Discussions}\label{ssec:more_sim}

In this section, we include the deferred figures and tables of Section \ref{sec:sim} as well as some further discussions.

The true data generating mechanism in our simulation has $N = 1000, (K + 1) = 3$ and $p_0 = 4, p_1 = 16, p_2 = 68$.
The true values of continuous parameters are $\bnu = 2^{- p_0 (p_0 - 1) / 2} \one$, $C_k = -7$, and $\bGamma_k = 4 \one \one^\top + 6 \Ib$, for all $k \in [K]$.
The true value of each connection matrix $\Ab_k$ takes the form $\big( \Ib_{p_{k - 1}} ~ \Mb_k ~ \Bb_k \big)^\top$ for submatrices $\Mb_k \in \cM_{p_{k - 1}}$ and $\Bb_k \in \RR^{p_{k - 1} \times (p_k - 2 p_{k - 1})}$, with its exact value visualized in Figure \ref{fig:sim-true-A}.

\begin{figure}[ht]
\centering
\includegraphics[width = 0.7\textwidth]{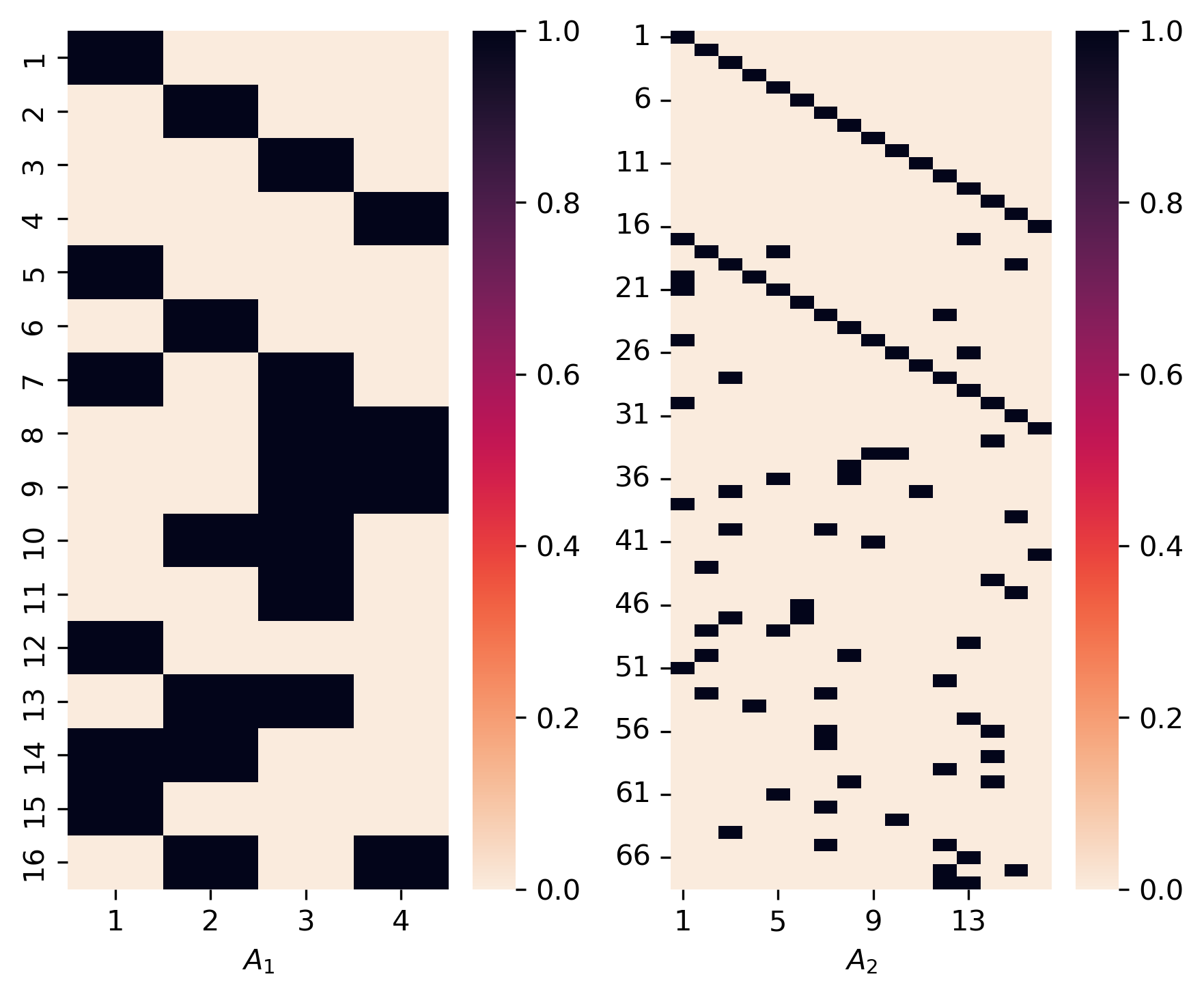}
\caption{True values of connection matrices $\Ab_1, \Ab_2$.}
\label{fig:sim-true-A}
\end{figure}

We have conducted model selection using WAIC as the information criterion.
Fixing $K = 2$, for each valid choice of $p_0$ and $p_1$, we start with spectral initialization, run 10,000 iterations of subsampling Gibbs sampler with 1\% subset ratio, and follow by 100 iterations of standard Gibbs sampler.
We compute the WAIC for each choice of $p_0, p_1$ and visualize in Figure \ref{fig:sim_waic}.
Note that for $p_1 \ge 23$ the computed WAICs are significantly larger and hence excluded from the plot.
We observe that the WAIC changes smoothly with $p_0, p_1$ and finds its mode at the true value $p_0 = 4, p_1 = 16$.

\begin{figure}[ht]
\centering
\includegraphics[width = \textwidth]{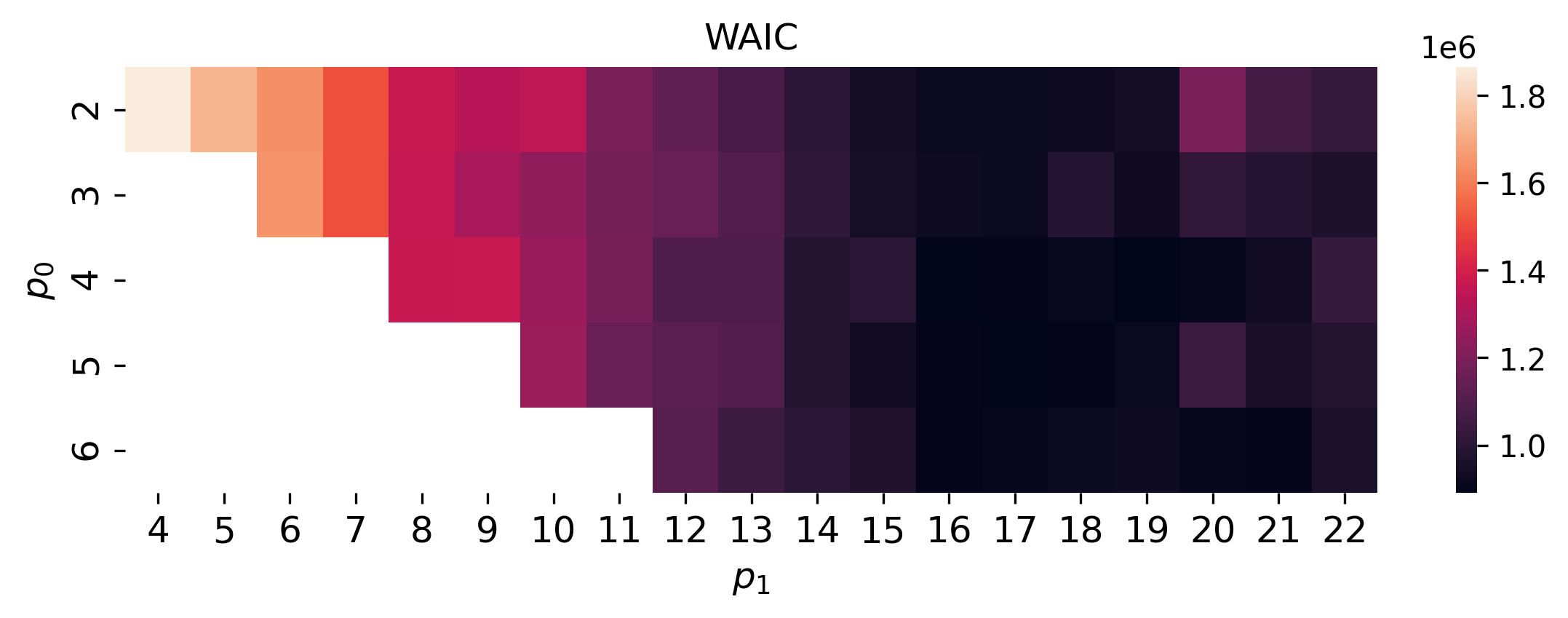}
\caption{WAIC for each choice of network structure with $K = 2, p_0 \in [2, 6], p_1 \in [2p_0, 22], p_2 = 68$.}
\label{fig:sim_waic}
\end{figure}

We diagnose the convergence and mixing of the Gibbs samplers using Gelman-Rubin statistic \citep{gelman1992inference} and Geweke statistic \citep{geweke1992evaluating}.
For every choice of $p_0, p_1$, we repeat the sampling process multiple times and obtain several Markov chains of samples from subsampling and standard Gibbs samplers for each dimension of the parameters $(\Ab, \bTheta)$.
We take averages of the computed Gelman-Rubin and Geweke statistics over all dimensions of $(\Ab, \bTheta)$ and all choices of $p_0, p_1$, as reported in Table \ref{tab:sim_cvg}.
All Gelman-Rubin statistics are below 1.1 and all Geweke statistics lie in the tails of the distribution $N(0, 1)$, which are good indicators for the convergence of our MCMC chains.

\begin{figure}
\centering
\begin{tabular}{c|cc}
\toprule
& Gelman-Rubin Statistic & Geweke Statistic \\
\midrule
Subsampling Gibbs Sampler & 1.0004 & 0.6975 \\
Standard Gibbs Sampler & 1.0473 & -0.2986 \\
\bottomrule
\end{tabular}
\renewcommand{\figurename}{Tab.}
\caption{
Convergence diagnosis for subsampling and standard Gibbs samplers.
}
\label{tab:sim_cvg}
\renewcommand{\figurename}{Fig.}
\end{figure}

For the connection matrices $\Ab_1$ and $\Ab_2$, we plot their true values, their spectral initializations, their posterior means under both Gibbs samplers in Figure \ref{fig:sim_A}.
We observe that the spectral initialization already gives an estimate that differs from the true value by only 7 out of 1152 entries.
The posterior means of $\Ab$ under both subsampling and standard Gibbs samplers are very close to the true value, with the posterior mode exactly being the true value.
This suggests that the MCMC chains have quickly converged to the true $\Ab$ within burn-in period and largely remain there afterwards for both subsampling Gibbs sampler and standard Gibbs sampler.
This agrees with the trace plot of the Hamming distance between samples of $\Ab$ and the true $\Ab$ shown in Figure \ref{fig:sim_trace} of the main paper.

\begin{figure}[ht]
\centering
\includegraphics[width = \textwidth]{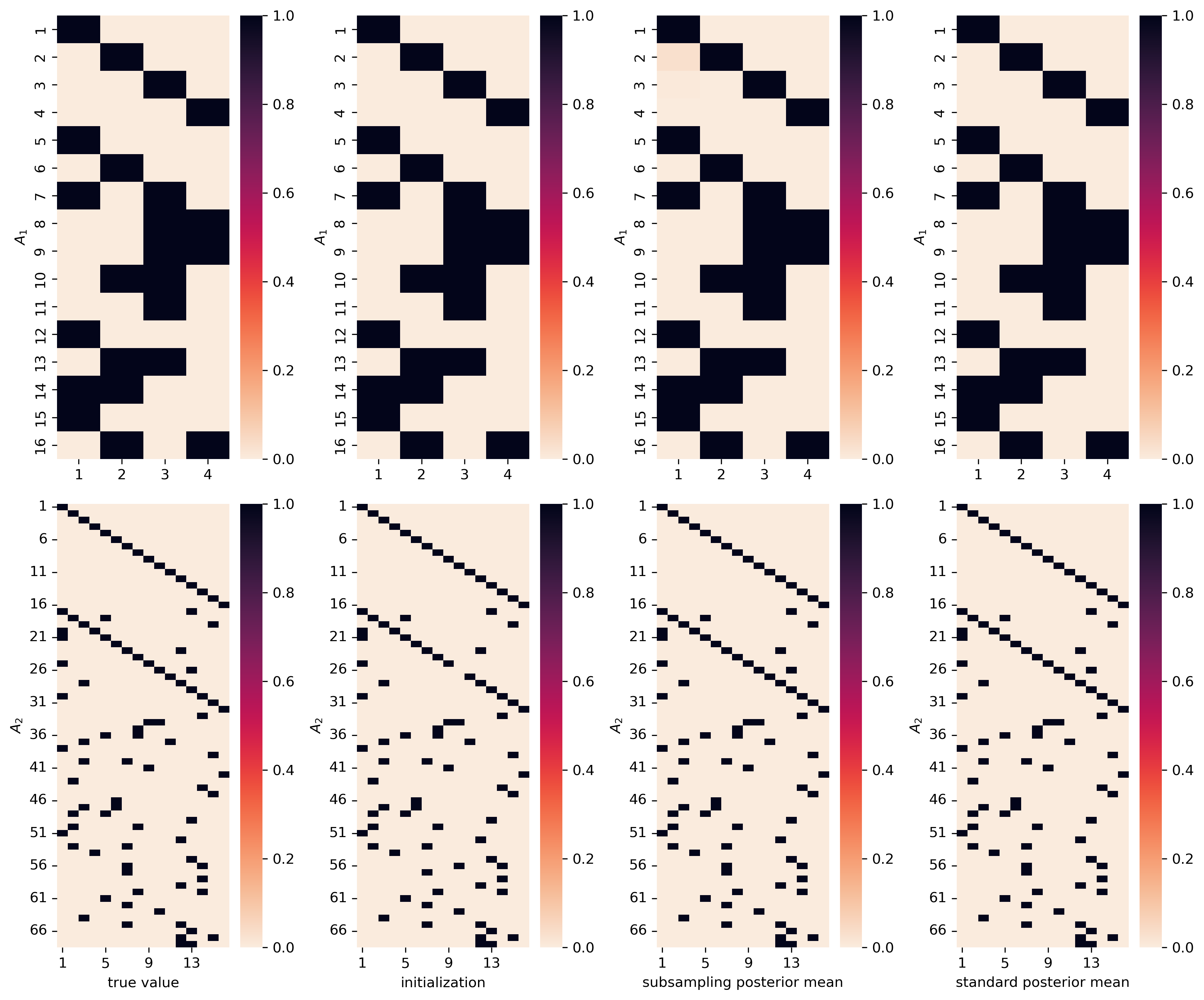}
\caption{True value, spectral initialization, and posterior means of the connection matrices $\Ab$.
}
\label{fig:sim_A}
\end{figure}

\begin{figure}[ht]
\centering
\includegraphics[width = \textwidth]{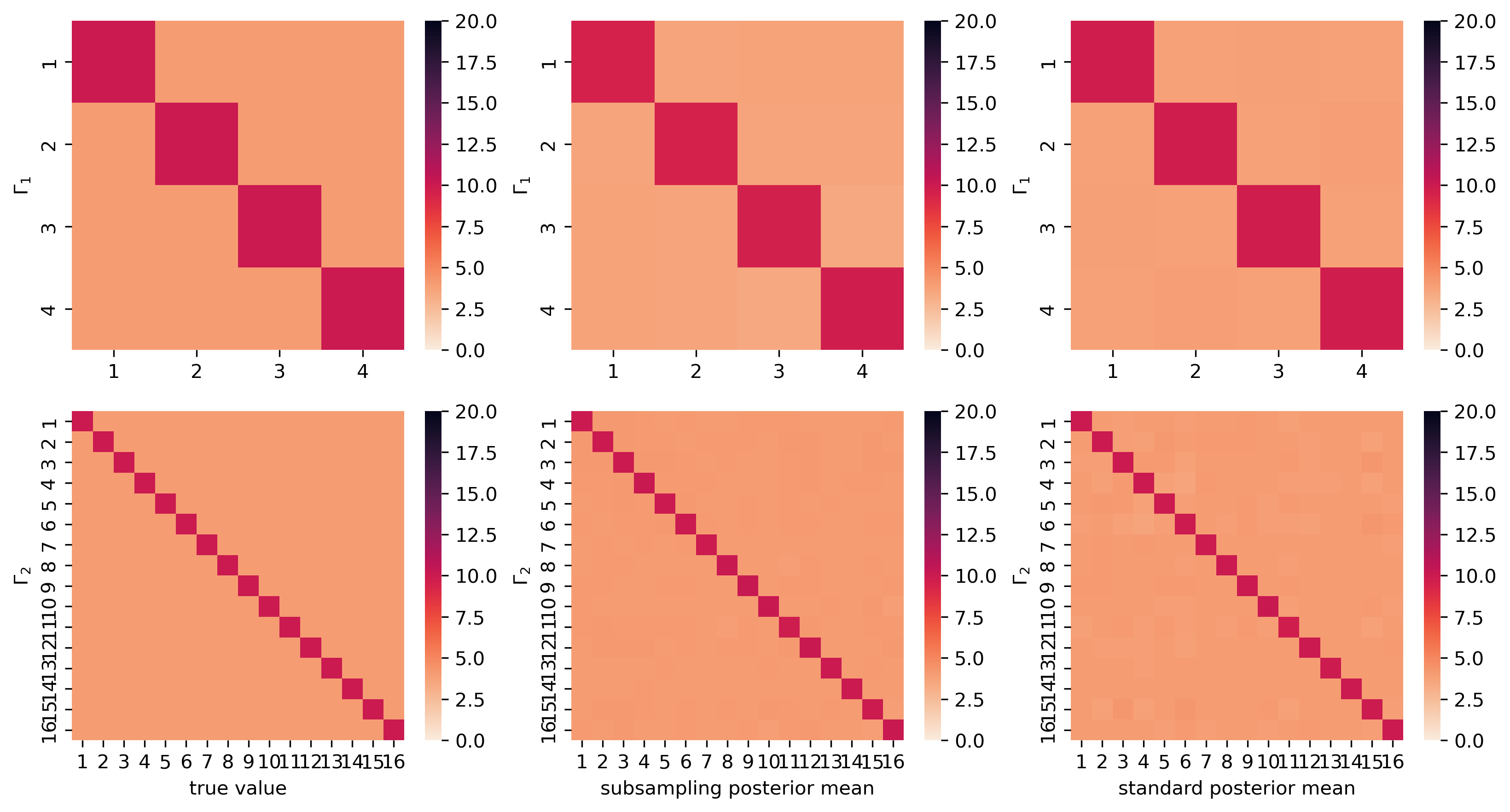}
\caption{True values and posterior means of the continuous parameters $\bGamma_1, \bGamma_2$.
}
\label{fig:sim_Gamma}
\end{figure}

\begin{figure}[!ht]
\centering
\includegraphics[width = \textwidth]{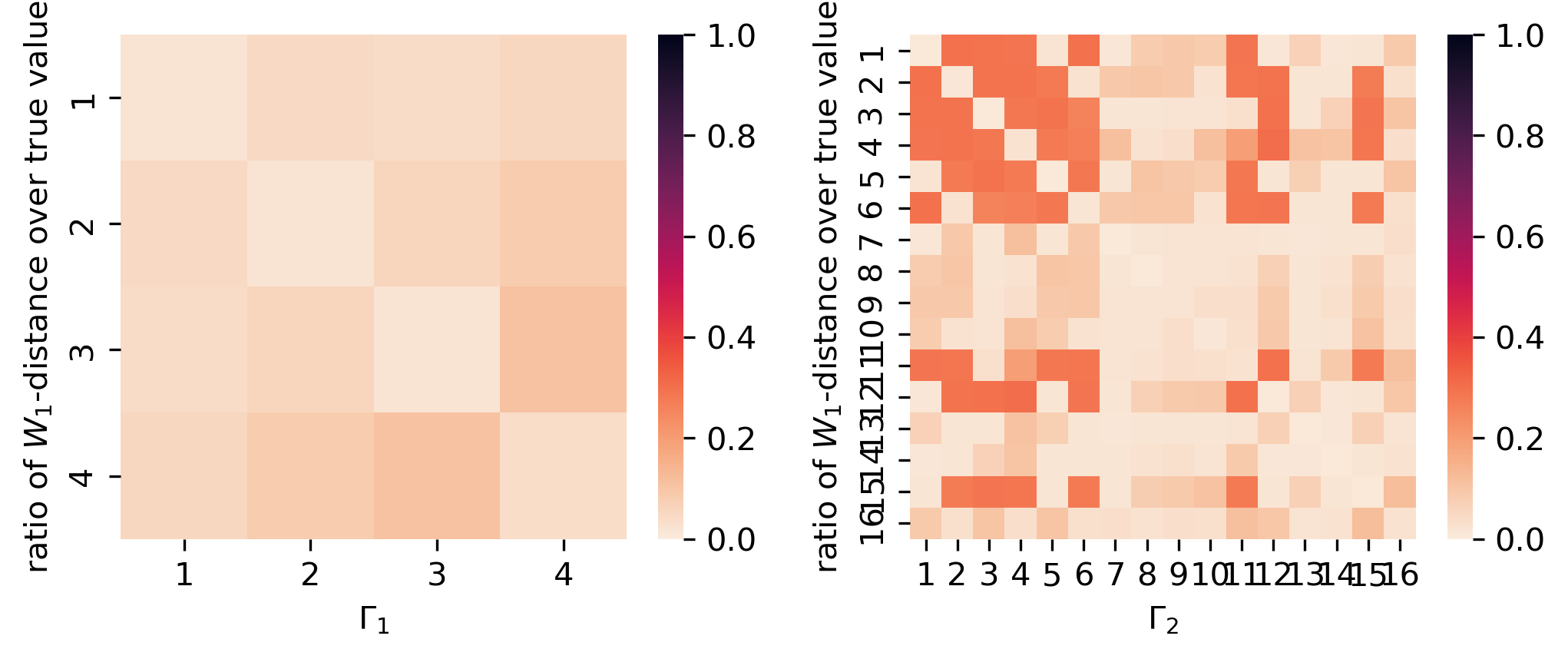}
\caption{Ratio of $W_1$ approximation error over true value for each entry of $\bGamma_1, \bGamma_2$.}
\label{fig:sim_Gamma_W1}
\end{figure}

For the continuous parameters $\bGamma_k$ at each layer $k$, we plot their true values and their posterior means under both Gibbs samplers in Figure \ref{fig:sim_Gamma}.
Both subsampling Gibbs sampler and standard Gibbs sampler produce samples that are close to the true values of $\bGamma_1, \bGamma_2$.
While the subsampling Gibbs sampler tends to underestimate some parameters in the first layer and overestimate some parameters in the second layer, the samples from the standard Gibbs sampler have demonstrated to estimate the true values of $\bGamma$ consistently.
The posterior estimates of the continuous parameters $C_1, C_2$ are also found to demonstrate similar behaviors.
This agrees with the trace plots of $C_k, \bGamma_k$ shown in Figure \ref{fig:sim_trace} of the main paper.

As discussed in Section \ref{ssec:sub_gibbs} of the main paper, the stationary distribution of standard Gibbs sampler is the true posterior distribution, whereas the stationary distribution of subsampling Gibbs sampler only approximates the true posterior distribution.
We now numerically compare the error of this approximation in $W_1$-distance and justify the use of our subsampling Gibbs sampler.
For any two probability measures $\mu, \nu$ on a Polish space $\cZ$, the Wasserstein $1$-distance between them is defined as $W_1(\mu, \nu) := \inf_{\pi \in \Gamma(\mu, \nu)} \int_{\cZ \times \cZ} \|x - y\|_1 \ud \pi(x, y)$, where $\Gamma(\mu, \nu)$ denotes the collection of all couplings of $\mu, \nu$.
For each entry of $\bGamma_1, \bGamma_2$, we compute the $W_1$-distance between the empirical distributions of posterior samples under subsampling and standard Gibbs samplers.
We present the ratio of $W_1$-distance over the true value for each entry in Figure \ref{fig:sim_Gamma_W1}.
We find for each entry of $\bGamma_1, \bGamma_2$ that the $W_1$ approximation error of the stationary distribution for subsampling Gibbs sampler is relatively small compared to the scale of the true value.
This is also consistent with our observations from Figure \ref{fig:sim_trace} of the main paper.

\subsection{Sensitivity Analysis}\label{ssec:sens}

\begin{table}[ht]
\centering
\begin{tabular}{c|ccc|ccc}
\toprule
& $|M - M'|$ & $|SD - SD'|$ & $W_1(P, P')$ & $\frac{|M - M'|}{SD}$ & $\frac{|SD - SD'|}{SD}$ & $\frac{W_1(P, P')}{SD}$ \\
\midrule
$\mu_C \leftarrow \mu_C + 1$ & 0.0029 & 0.0020 & 0.0034 & 0.3156 & 0.2244 & 0.3733 \\
$\mu_C \leftarrow \mu_C - 1$ & 0.0029 & 0.0021 & 0.0035 & 0.3234 & 0.2321 & 0.3843 \\
$\sigma_C^2 \leftarrow 2 \sigma_C^2$ & 0.0038 & 0.0021 & 0.0043 & 0.4125 & 0.2261 & 0.4721 \\
$\sigma_C^2 \leftarrow \frac12 \sigma_C^2$ & 0.0039 & 0.0020 & 0.0042 & 0.4259 & 0.2245 & 0.4645 \\
$\mu_\gamma \leftarrow \mu_\gamma + 1$ & 0.0035 & 0.0021 & 0.0041 & 0.3870 & 0.2311 & 0.4476 \\
$\mu_\gamma \leftarrow \mu_\gamma - 1$ & 0.0032 & 0.0021 & 0.0039 & 0.3534 & 0.2293 & 0.4243 \\
$\sigma_\gamma^2 \leftarrow 2 \sigma_\gamma^2$ & 0.0036 & 0.0023 & 0.0042 & 0.3946 & 0.2532 & 0.4598 \\
$\sigma_\gamma^2 \leftarrow \frac12 \sigma_\gamma^2$ & 0.0031 & 0.0020 & 0.0035 & 0.3368 & 0.2203 & 0.3881 \\
$\mu_\delta \leftarrow \mu_\delta + 1$ & 0.0034 & 0.0022 & 0.0040 & 0.3753 & 0.2435 & 0.4424 \\
$\mu_\delta \leftarrow \mu_\delta - 1$ & 0.0033 & 0.0023 & 0.0039 & 0.3659 & 0.2486 & 0.4261 \\
$\sigma_\delta^2 \leftarrow 2 \sigma_\delta^2$ & 0.0031 & 0.0021 & 0.0035 & 0.3382 & 0.2304 & 0.3862 \\
$\sigma_\delta^2 \leftarrow \frac12 \sigma_\delta^2$ & 0.0032 & 0.0021 & 0.0038 & 0.3486 & 0.2279 & 0.4189 \\
$\alpha \leftarrow 2 \alpha$ & 0.0034 & 0.0022 & 0.0040 & 0.3787 & 0.2386 & 0.4343 \\
$\alpha \leftarrow \frac12 \alpha$ & 0.0036 & 0.0023 & 0.0042 & 0.3906 & 0.2513 & 0.4580 \\
\bottomrule
\end{tabular}
\caption{Sensitivity analysis.}
\label{tab:sens}
\end{table}

To evaluate the robustness of our proposed method with respect to hyperparameter specification, we conduct a systematic sensitivity analysis by perturbing each prior hyperparameter around its baseline value.
The hyperparameters include the prior mean and variance $\mu_C, \sigma_C^2$ of the parameter $C$, which models the baseline logit adjacency probability between two nodes that share neither a community nor adjacent communities; the prior mean and variance $\mu_\gamma, \sigma_\gamma^2$ of the parameter $\gamma$, which governs the increase in logit adjacency probability when two nodes share a community; the prior mean and variance $\mu_\delta, \sigma_\delta^2$ of the parameter $\delta$, which captures the additional increase when two nodes share a pair of adjacent communities; and the Dirichlet prior concentration $\alpha$ for the parameter $\bnu$, which controls the prior distribution of the top-layer adjacency matrices $\Xb_0$.

For each hyperparameter $\mu_C$, $\sigma_C^2$, $\mu_\gamma$, $\sigma_\gamma^2$, $\mu_\delta$, $\sigma_\delta^2$, and $\alpha$, we consider multiplicative or additive perturbations of moderate size, i.e. $\pm 1$ shifts for location parameters and $\times 2$ or $\times \frac12$ scalings for variance and concentration parameters.
Under each perturbed configuration, we refit the model using the same data via the standard Gibbs sampler and assess the influence on posterior inference using several summary metrics: the posterior mean $M$, posterior standard deviation $SD$, and the posterior distribution $P$ itself. 
Deviations from the baseline posterior are quantified through absolute $L_1$ differences $|M - M'|$ and $|SD - SD'|$, as well as the $1$-Wasserstein distance $W_1(P,P')$ between the baseline and perturbed posterior distributions.

Table~\ref{tab:sens} reports both absolute deviations and scale-free deviations obtained by normalizing with respect to the baseline posterior standard deviation.
Across all perturbations examined, the absolute changes in posterior mean and standard deviation remain on the order of $10^{-3}$, with corresponding Wasserstein distances also below $5\times 10^{-3}$.
The normalized deviations exhibit the same pattern: shifts in the posterior mean are typically between $31\%$ and $43\%$ standard deviations, while changes in posterior spread fall between $22\%$ and $25\%$ standard deviations. 
The normalized Wasserstein distances range between $37\%$ and $47\%$, indicating modest but controlled movement of the entire posterior distribution.

Overall, the magnitude and stability of these deviations demonstrate that the proposed model is robust to reasonable misspecification of prior hyperparameters.
The posterior summaries remain stable under a range of plausible perturbations, supporting the practical reliability of the method in applied settings.

\subsection{\texorpdfstring{Large-$p_K$, small-$N$ regime}{Large-pK, small-N regime}}\label{ssec:large_p_small_N}

\begin{figure}[!ht]
\centering
\includegraphics[width = 0.6\textwidth]{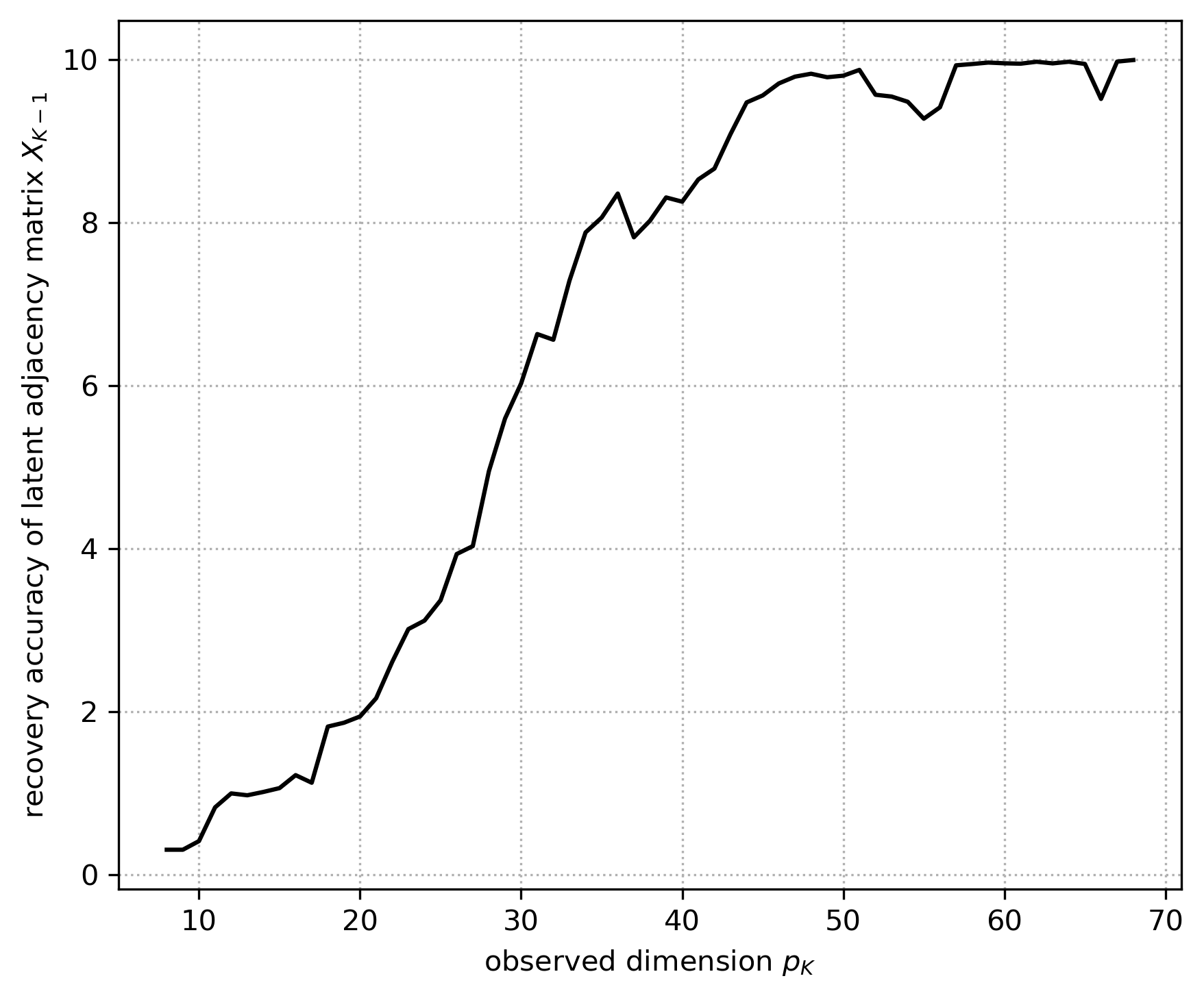}
\caption{
Recovery accuracy of latent adjacency matrices $Xb_{k - 1}^{(n)}$ as observed dimension $p_K$ increases between $[8, 68]$ for small sample size $N = 10$.
The $y$-axis measures the average number of exactly recovered latent adjacency matrices over all Gibbs posterior samples, with a maximum value of $10$ corresponding to exact recovery of all $\Xb_{k - 1}^{(n)}$ for $n \in [N]$ by all posterior samples.
}
\label{fig:large_p_small_N}
\end{figure}

Most of our theoretical results and simulation studies focus on the asymptotic regime where the sample size $N$ (i.e. number of observed adjacency matrices) grows while the observed number of nodes $p_K$ (i.e. dimension of observed adjacency matrices) remains fixed.
This is the regime under which posterior consistency of the model parameters $(\Ab, \bTheta)$ can be established, and it is the primary focus of our contribution.
In contrast, a different line of recent research work focuses on the regime in which $p_K$ is large while the sample size $N$ is small, commonly focusing on a single high-dimensional network with $N = 1$.
This high-dimensional setting is typical in community detection and multiplex network analysis, with examples including \citet{lei2023bias, jing2021community, ma2023community, lei2020consistent, paul2020spectral, chen2022global, macdonald2022latent, lei2024computational}.

In this alternative asymptotic framework with $p_K$ growing and $N$ fixed, theoretical guarantees for parameter consistency become difficult to obtain: the size of the connection matrix $\Ab_K$ grows proportionally with $p_K$, making the parameter space increasingly high-dimensional.
Accordingly, it is natural in this regime to focus not on parameter estimation but on the recovery of latent variables.
In our model, this corresponds to the exact recovery of the latent adjacency matrices $\Xb_{k - 1}^{(n)}$, which play a role analogous to community labels in classical network clustering.

To assess the behavior of our method in the large-$p_K$, small-$N$ setting, we consider a simple two-layer Bayesian network ($K = 1$) with $p_1 = 68$ observed nodes and $p_0 = 4$ latent nodes.
The model parameters match those used in earlier simulations:
$\bnu = 2^{- p_0 (p_0 - 1) / 2} \one$, $C_1 = -7$, and $\bGamma_1 = 4 \one \one^\top + 6 \Ib$.
We fix a small sample size $N = 10$ and vary the observed-layer dimension $p_1$ over the range $[8, 68]$ by extracting the upper-left $p_1 \times p_1$ submatrix of each observed adjacency matrix $\Xb_1^{(n)}$.

For each value of $p_1$, we run the Gibbs sampler for 2000 iterations starting from the spectral initialization and discard the first 1000 as burn-in.
Because of the simplicity of the model in this setting, 1000 iterations are sufficient for the sampler to reach stationarity.
We then post-process the latent adjacency matrices to resolve label-switching among the $p_0 = 4$ latent nodes.

Let $\Xb_0^{(n, *)}$ denote the true latent adjacency matrix and $\Xb_0^{(n; t)}$ denote the posterior sample at iteration $t$.
The recovery accuracy is defined as
$$
\sum_{n = 1}^N \frac{1}{1000} \sum_{t = 1001}^{2000} 1_{\Xb_0^{(n; t)} = \Xb_0^{(n, *)}}
,
$$
representing the average number of exactly recovered latent adjacency matrices over all Gibbs posterior samples.
This metric evaluates whether the entire latent adjacency matrix is correctly recovered, not just individual entries.
Viewing each distinct latent adjacency matrix value as a distinct cluster in community detection, this recovery accuracy focuses on the correct inference of the cluster of each observed network sample.

Figure~\ref{fig:large_p_small_N} shows how recovery accuracy varies with $p_1$.
As $p_1$ increases, recovery accuracy improves steadily and approaches close to exact recovery for $p_1 \ge 50$.
This indicates that even with a small number of network slices $N$, the latent adjacency matrices can be accurately inferred when the observed networks are sufficiently high-dimensional.

This simulation demonstrates that the method performs well not only in the classical large-$N$ regime analyzed theoretically, but also in the high-dimensional large-$p_K$ regime commonly encountered in empirical network applications.
The two regimes emphasize different inferential targets, parameter consistency versus latent-variable recovery, yet both represent meaningful and practically relevant behaviors of the model.

\subsection{Comparison with Hierarchical Community Detection}\label{ssec:compare_hcd}

\begin{table}[!htbp]
\centering
\begin{tabular}{c|cc|c|ccc}
\toprule
\shortstack{sample\\size $N$} & \shortstack{tree\\layer} & \shortstack{number of\\communities} & method & \shortstack{normalized mutual\\information} & \shortstack{label matching\\accuracy} & \shortstack{tree\\distance} \\
\midrule
\multirow{6}{*}{10} & \multirow{3}{*}{1} & \multirow{3}{*}{3} & BDGM      & 0.0000 & 0.3333 & 0.3333 \\
& & & HCD-Sign  & 0.7883 & 0.9111 & 0.1007 \\
& & & HCD-Spec  & 0.9248 & 0.9741 & 0.0747 \\
\cline{2-7}
& \multirow{3}{*}{2} & \multirow{3}{*}{9} & BDGM      & 0.8795 & 0.8640 & 0.2788 \\
& & & HCD-Sign  & 0.6686 & 0.5593 & 0.1397 \\
& & & HCD-Spec  & 0.7021 & 0.5704 & 0.1028 \\
\hline
\multirow{6}{*}{20} & \multirow{3}{*}{1} & \multirow{3}{*}{3} & BDGM      & 0.5916 & 0.7284 & 0.1078 \\
& & & HCD-Sign  & 0.8130 & 0.9185 & 0.0961 \\
& & & HCD-Spec  & 0.8950 & 0.9611 & 0.0806 \\
\cline{2-7}
& \multirow{3}{*}{2} & \multirow{3}{*}{9} & BDGM      & 0.9188 & 0.8930 & 0.1024 \\
& & & HCD-Sign  & 0.6795 & 0.5667 & 0.1337 \\
& & & HCD-Spec  & 0.6919 & 0.5556 & 0.1113 \\
\hline
\multirow{6}{*}{30} & \multirow{3}{*}{1} & \multirow{3}{*}{3} & BDGM      & 1.0000 & 1.0000 & 0.0000 \\
& & & HCD-Sign  & 0.8046 & 0.9173 & 0.0977 \\
& & & HCD-Spec  & 0.9075 & 0.9642 & 0.0785 \\
\cline{2-7}
& \multirow{3}{*}{2} & \multirow{3}{*}{9} & BDGM      & 0.9294 & 0.9259 & 0.0107 \\
& & & HCD-Sign  & 0.6798 & 0.5654 & 0.1368 \\
& & & HCD-Spec  & 0.6924 & 0.5605 & 0.1094 \\
\hline
\multirow{6}{*}{40} & \multirow{3}{*}{1} & \multirow{3}{*}{3} & BDGM      & 1.0000 & 1.0000 & 0.0000 \\
& & & HCD-Sign  & 0.8184 & 0.9241 & 0.0949 \\
& & & HCD-Spec  & 0.9091 & 0.9657 & 0.0779 \\
\cline{2-7}
& \multirow{3}{*}{2} & \multirow{3}{*}{9} & BDGM      & 0.9295 & 0.9259 & 0.0107 \\
& & & HCD-Sign  & 0.6841 & 0.5713 & 0.1333 \\
& & & HCD-Spec  & 0.6992 & 0.5713 & 0.1086 \\
\hline
\multirow{6}{*}{50} & \multirow{3}{*}{1} & \multirow{3}{*}{3} & BDGM      & 1.0000 & 1.0000 & 0.0000 \\
& & & HCD-Sign  & 0.8232 & 0.9259 & 0.0937 \\
& & & HCD-Spec  & 0.9101 & 0.9659 & 0.0779 \\
\cline{2-7}
& \multirow{3}{*}{2} & \multirow{3}{*}{9} & BDGM      & 1.0000 & 1.0000 & 0.0000 \\
& & & HCD-Sign  & 0.6916 & 0.5800 & 0.1306 \\
& & & HCD-Spec  & 0.7076 & 0.5852 & 0.1073 \\
\bottomrule
\end{tabular}
\caption{Comparison with Hierarchical Community Detection.}
\label{tab:compare_hcd}
\end{table}

Through the multilayer Bayesian network specified by the connection matrices $\{\Ab_k\}$, our model naturally induces a hierarchical community structure and can therefore be applied to hierarchical community detection tasks.
In this section, we investigate the clustering accuracy of our model in a controlled community-detection experiment and compare it to the state-of-the-art hierarchical community detection procedure proposed by \citet{li2022hierarchical}.
Their method performs hierarchical community detection through a top-down recursive spectral partitioning procedure that begins with all nodes in a single cluster and repeatedly splits each cluster into two communities, producing a full binary community tree.
Their paper introduces two variants based on their different splitting criteria: HCD-Sign (Algorithm 1 in \citet{li2022hierarchical}), which splits by eigenvalue sign, and HCD-Spec (Algorithm 2 in \citet{li2022hierarchical}), which applies regularized spectral clustering.
These are the two benchmark methods used in our comparison.

Before presenting the comparison, we highlight a fundamental modeling distinction.
The HCD methods are designed specifically for detecting hierarchical community structure in a single high-dimensional adjacency matrix.
In contrast, our model is a Bayesian generative model for multiplex networks, leveraging information across a population of networks.
Consequently, the HCD methods need to be applied separately to each individual adjacency matrix in the sample, while our model learns a shared population-level hierarchical structure by jointly analyzing all adjacency matrices.
This difference has practical consequences: as shown in our numerical study, HCD methods behave favorably when the sample size $N$ is very small, but as $N$ grows, shared structure can be learned more reliably and our model achieves substantially higher accuracy.

For our simulation experiment, we let the true data generating distribution be a hierarchical stochastic block model on 27 nodes with a fixed community tree of depth 3 and structure $3 \to 9 \to 27$.
This represents a fully-connected ternary tree with its root node removed: the 27 nodes at the finest level are partitioned into 9 leaf communities of size 3, and these leaf communities are further grouped into 3 coarser communities of size 9.
Let $\mathcal{C}^{(2)}_1, \dots, \mathcal{C}^{(2)}_9$ denote the 9 leaf communities of size $3$, and  let $\mathcal{C}^{(1)}_1, \dots, \mathcal{C}^{(1)}_3$ denote the coarser communities of size $9$, each formed by merging three leaf communities.
For two distinct nodes $i \neq j$, define their shared tree depth
$$
d(i,j) :=
\begin{cases}
2, & \text{if } i,j \text{ belong to the same leaf community } \mathcal{C}^{(2)}_r,\\[4pt]
1, & \text{if } i,j \text{ belong to the same coarse community } \mathcal{C}^{(1)}_\ell 
      \text{ but different leaf communities},\\[4pt]
0, & \text{if } i,j \text{ belong to different coarse communities}.
\end{cases}
$$
Conditional on the tree structure, we first generate a symmetric matrix of edge probabilities 
$\{p_{ij}\}_{1 \le i < j \le 27}$ according to
$$
p_{ij} \,\big|\, d(i,j) \sim
\begin{cases}
\mathrm{Uniform}(0.7,\,0.8), & d(i,j) = 2,\\
\mathrm{Uniform}(0.4,\,0.5), & d(i,j) = 1,\\
\mathrm{Uniform}(0,\,0.1),   & d(i,j) = 0,
\end{cases}
\qquad p_{ji} = p_{ij}, \quad p_{ii} \text{ fixed.}
$$
Given these edge probabilities, we sample $N$ independent adjacency matrices $\{\Xb^{(n)}\}_{n=1}^N$ with each $\Xb^{(n)} \in \{0,1\}^{27 \times 27}$, where for each $n$ and each unordered pair $i < j$, the entries follow
$$
X^{(n)}_{ij} \mid p_{ij} \sim \mathrm{Bernoulli}(p_{ij}), 
\qquad X^{(n)}_{ji} = X^{(n)}_{ij},
$$
independently across $n$ and all unordered pairs $(i,j)$.
Diagonal entries $X^{(n)}_{ii}$ are excluded from modeling.
This construction defines a hierarchical stochastic block model in which edges between nodes that share a deeper common ancestor in the community tree (same leaf community) are highly likely (probabilities in $[0.7, 0.8]$), edges between nodes that only share a coarser ancestor are moderately likely (probabilities in $[0.4, 0.5]$), and edges between nodes in different top-level branches are sparse (probabilities in $[0, 0.1]$).
This hierarchical stochastic block model provides a natural and challenging testbed for evaluating hierarchical community detection performance.

While different hierarchical community detection methods and our model each have their own procedures for model selection, for simplicity and to ensure a meaningful comparison, we assume that the true numbers of communities, specifically the 3 coarse communities and 9 leaf communities in the data generating hierarchical stochastic block model, are known to all methods so that comparisons can be made over detected communities of equal sizes.
We vary the sample size $N$, i.e. the number of $27 \times 27$ adjacency matrices generated from the hierarchical stochastic block model, over the range of $\{10, 20, 30, 40, 50\}$.
To measure the clustering accuracy of the detected community structure relative to the true structure, we follow \citet{li2022hierarchical} and use the three metrics: normalized mutual information, label matching accuracy, and tree distance (see \citet{li2022hierarchical} Section 4 for their definitions).
The hierarchical community detection methods HCD-Sign and HCD-Spec are fitted separately to each adjacency matrix, which produces a potentially different detected community structure for each sample.
We compare each detected structure with the true community structure and average the resulting metrics over all samples.
For our model, which is designed for multiplex networks and therefore provides a single population level hierarchical community structure, we infer the posterior distribution by running the standard Gibbs sampler as before and compute the clustering metrics for each posterior sample, averaging the results over all posterior samples.

The metrics measuring the clustering accuracy of our model (BDGM), HCD-Sign, and HCD-Spec are reported in Table \ref{tab:compare_hcd} for both the upper layer of the data generating hierarchical stochastic block model with 3 coarse communities and the lower layer with 9 leaf communities, with sample size $N$ varying from 10 to 50.
Several key observations can be made.
Across all methods, community layers, and sample sizes, the clustering accuracy measured by normalized mutual information, label matching accuracy, and tree distance is highly consistent, so that larger normalized mutual information is accompanied by larger label matching accuracy and smaller tree distance.
For both BDGM and the hierarchical community detection methods HCD-Sign and HCD-Spec, detecting the 3 coarse communities in the upper layer tends to be easier and more accurate than detecting the 9 leaf communities in the lower layer.
As expected, since hierarchical community detection methods are designed for single adjacency matrices and can only be applied to each sample separately, their clustering accuracy remains roughly the same as the sample size $N$ increases.
In contrast, our model benefits from pooling information across the population of networks, and its clustering accuracy improves as $N$ becomes larger.
Within the two hierarchical community detection methods, HCD Spec tends to perform slightly better than HCD Sign, while our model underperforms both HCD methods for $N \in \{10, 20\}$, surpasses them for $N \in \{30, 40\}$, and achieves exact recovery of the true hierarchical community structure for $N = 50$.

\section{More Data Analyses Results}\label{sec:more_app}

In this section, we provide more data analyses results complementary to Section \ref{sec:appl}.
We provide the details of our model structure selection in Section \ref{ssec:model_sel}, deferred figures of Section \ref{ssec:appl_part_clus} in Section \ref{ssec:appl_deferred_part_clus}, and deferred figures of Section \ref{ssec:appl_cluster} in Section \ref{ssec:appl_deferred_cluster}.
As promised in Section \ref{sec:appl}, we provide two additional data analyses of our model in the supplementary material, Sections \ref{ssec:appl_relate} and \ref{ssec:mask}, where we relate the inferred latent features to cognitive traits and evaluate performance on missing-edge prediction, with comparisons to baseline methods such as DCMM \citep{jin2023mixed}.

\subsection{Model Structure Selection}\label{ssec:model_sel}

As discussed in Section \ref{sec:post}, the sparsity hyperparameter $S$ controls the number of deeper-layer communities that each node can belong to, and we recommend $S \in \{1,2\}$ for interpretability and computational/statistical efficiency.
We carry out our analyses for both $S = 1$ and $S = 2$.
For $S = 2$, we obtain a multilayer overlapping clustering of brain regions, where each region belongs to up to two clusters, each of which further belongs to up to two deeper clusters.
For $S = 1$, we obtain a multilayer hierarchical partition of brain regions, where each region belongs to one cluster, each of which further belongs to one deeper cluster.

We focus on three-layer Bayesian network structures ($K = 2$).
For model selection, we vary the number of nodes in each layer within $p_0 \in [2, 4], p_1 \in [2p_0, 23], p_2 = 68$, excluding larger $p_0, p_1$ because they imply less interpretable model structures.
Prior specification and posterior computation are the same as our simulation study in Section \ref{sec:sim}.
For each choice of $p_k$, we initialize the connection matrices $\Ab$ using the Mixed-SCORE algorithm, run 10,000 iterations of the subsampling Gibbs sampler with 1\% subset ratio, then run the standard Gibbs sampler for 100 iterations.
Based on the last 100 iterations, we compute WAIC as defined in \eqref{eq:waic} and use it to choose the optimal $p_k$ in each layer.

\begin{figure}[ht]
\centering
\subfigure[
Sparsity hyperparameter $S = 1$.
]{
\includegraphics[width = \textwidth]{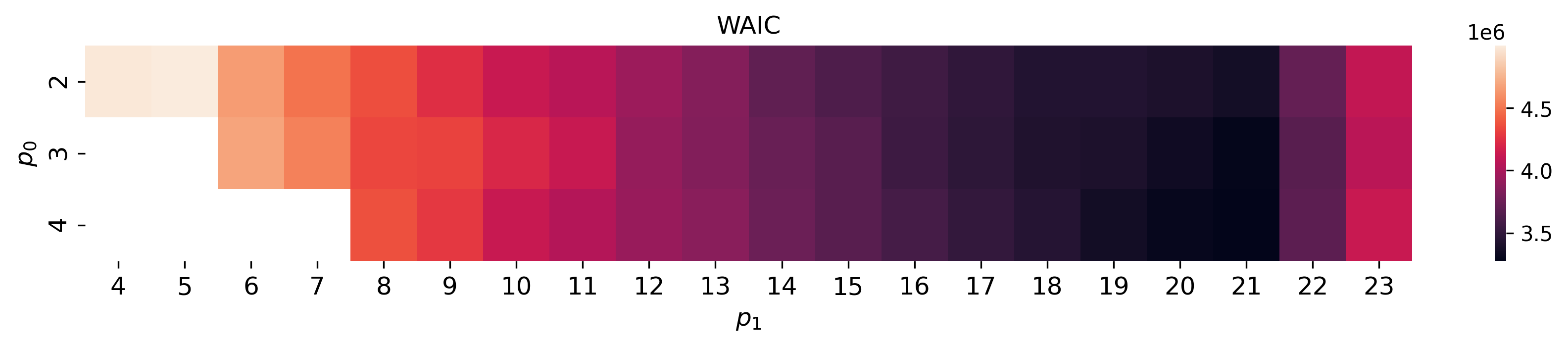}
\label{fig:app_waic_s1}
}
\hfill
\subfigure[
Sparsity hyperparameter $S = 2$.
]{
\includegraphics[width = \textwidth]{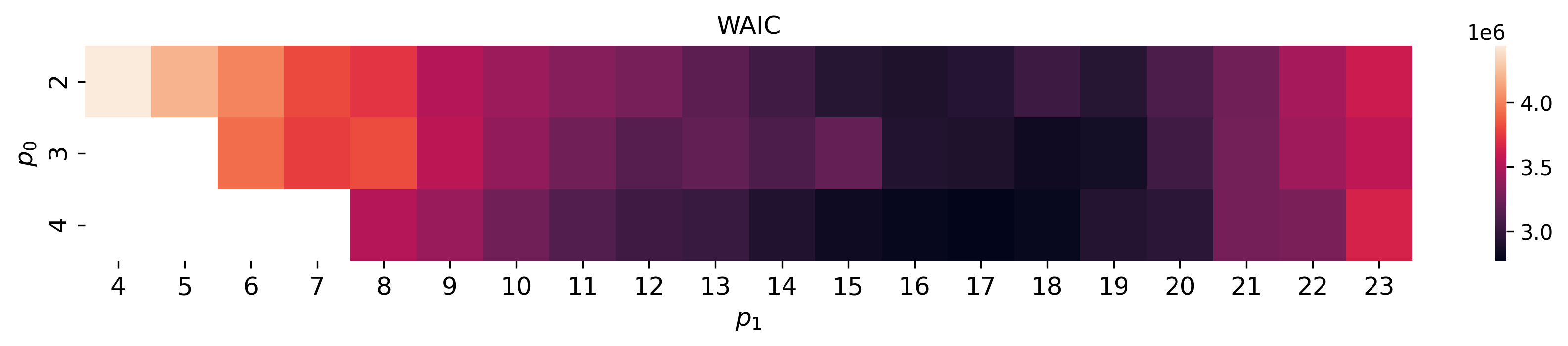}
\label{fig:app_waic_s2}
}
\caption{WAIC over the $p_0-p_1$ plane for sparsity hyperparameter $S \in \{1, 2\}$.}
\label{fig:app_waic}
\end{figure}

We visualize the computed WAIC for each choice of $p_0, p_1$ under sparsity hyperparameter $S = 1, 2$ in Figure \ref{fig:app_waic}.
We find that the landscape of WAIC over the $p_0-p_1$ plane is approximately unimodal and smooth under each sparsity hyperparameter $S$.
For $S = 1$, WAIC is minimized at $p_0 = 4, p_1 = 21$.
For $S = 2$, WAIC is minimized at $p_0 = 4, p_1 = 18$.
As expected, the optimal choice of $p_k$ is smaller under sparsity parameter $S = 2$ compared to $S = 1$, indicating that allowing overlapping memberships under $S = 2$ relieves the need of higher latent dimensions compared to $S = 1$. 
The WAIC for the optimal choice of $p_k$ is also much lower under $S = 2$ compared to $S = 1$.

\subsection{Multi-resolution Partitioning and Clustering of Brain Regions}\label{ssec:appl_deferred_part_clus}

\begin{figure}[ht]
\centering
\subfigure[
Visualization of $p_0 = 4$ higher-level clusters of ROIs on 
cerebral cortex.
]{
\includegraphics[width = 0.5\textwidth]{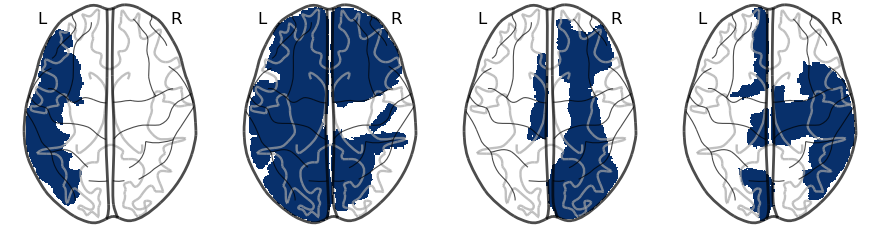}
\label{fig:app_brain_S1_A1}
}
\hfill
\subfigure[
Visualization of $p_1 = 21$ lower-level clusters of ROIs on cerebral cortex.
]{
\includegraphics[width = \textwidth]{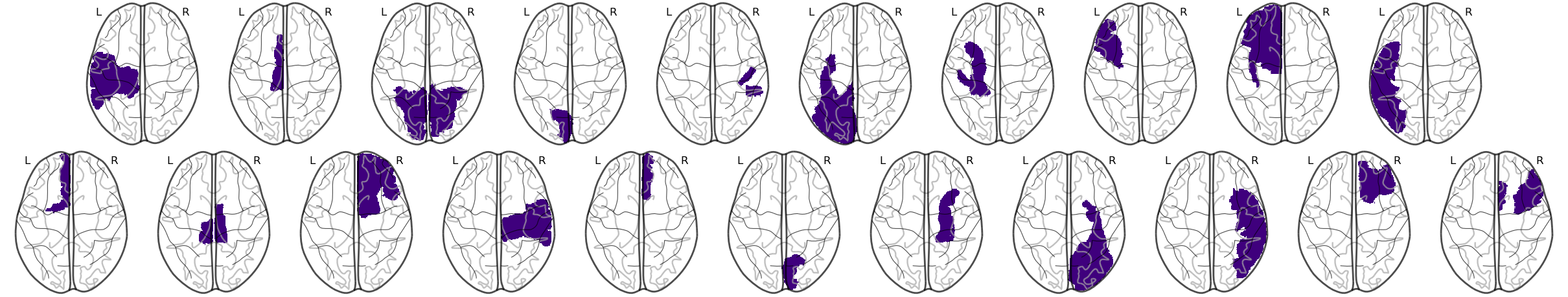}
\label{fig:app_brain_S1_A2}
}
\caption{Posterior distribution of the higher-level and lower-level clusters given by connection matrices $\Ab$, under sparsity hyperparameter $S = 1$ and number of nodes in each Bayesian network layer $p_0 = 4, p_1 = 21, p_2 = 68$.}
\label{fig:app_brain_S1}
\end{figure}

In Section \ref{ssec:appl_part_clus}, we have visualized the marginal posterior distributions of the $p_0$ higher-level clusters and the $p_1$ lower-level clusters for the optimal model choice $p_0 = 4, p_1 = 18$ under sparsity hyperparameter $S = 2$.
We now visualize for the optimal model choice $p_0 = 4, p_1 = 21$ under sparsity hyperparameter $S = 1$ in Figures \ref{fig:app_brain_S1_A1} and \ref{fig:app_brain_S1_A2}.
Observe that almost all lower-level and higher-level clusters are formed by brain regions that are geographically close together on the cerebral cortex even though we do not include spatial locations of nodes in our data analysis.

The learned clusters are highly interpretable.
The lower-level (shallower-layer) clusters in Figure \ref{fig:app_brain_S1_A2} form a partition of the cerebral cortex into 21 regions, which roughly correspond to the frontal lobes, parietal lobes, temporal lobes, occipital lobes of the left and right hemispheres of the brain, as well as some smaller regions.
At a coarser scale, the four higher-level (deeper-layer) clusters in Figure \ref{fig:app_brain_S1_A1} separate the brain into left- and right-hemisphere structures: two clusters correspond roughly to the left hemisphere and two to the right.
Within each hemisphere, one cluster spans most of the cortex, while the other is primarily concentrated in the temporal lobe.
Overall, these clusters align well with known anatomical organization of the human brain.

\subsection{Clustering Individuals}\label{ssec:appl_deferred_cluster}

\begin{figure}[!ht]
\centering
\subfigure[
$S = 1$ and $p_0 = 4, p_1 = 21, p_2 = 68$.
]{
\includegraphics[width = 0.45\textwidth]{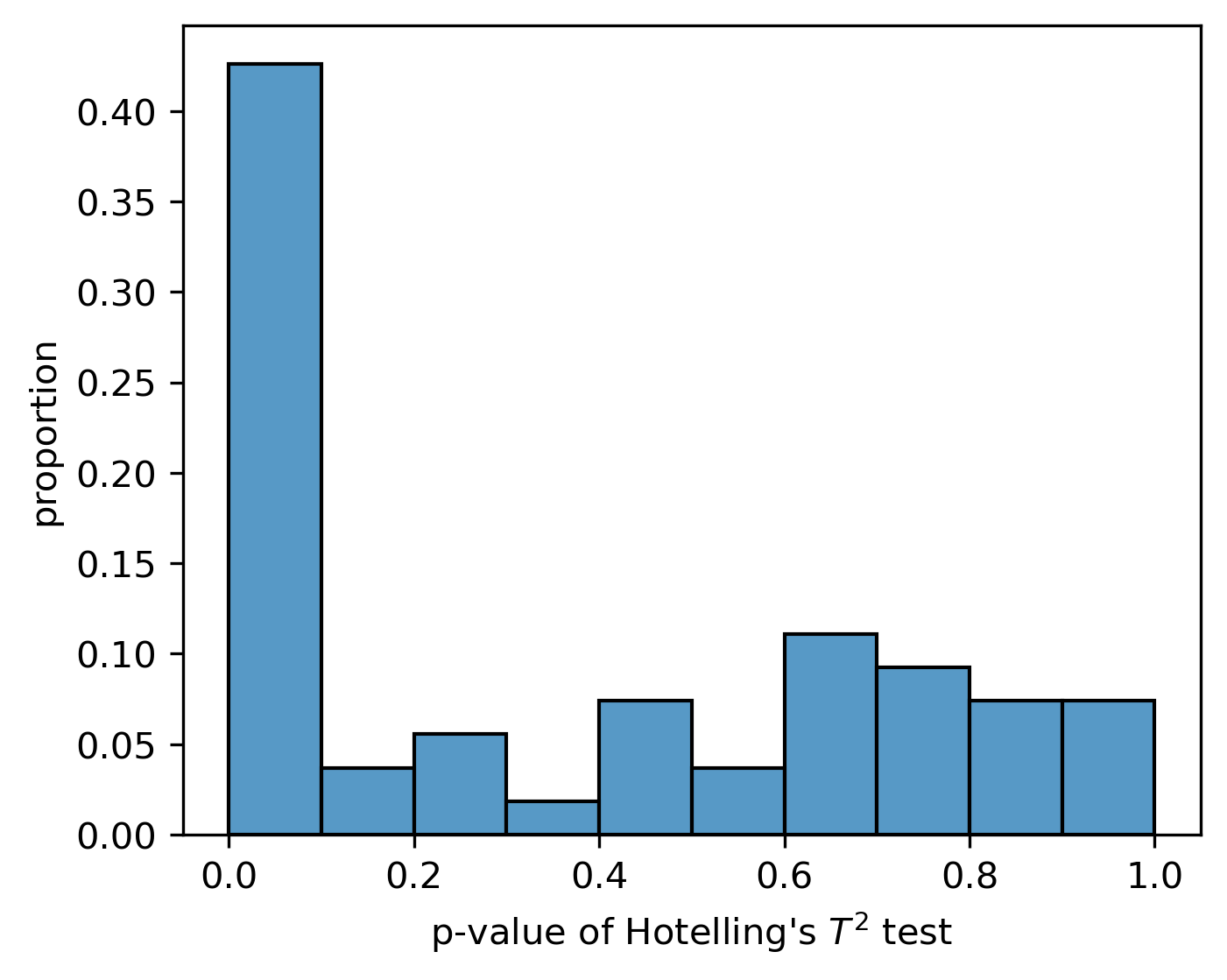}
\label{fig:hotelling_S1}
}
\hfill
\subfigure[
$S = 2$ and $p_0 = 4, p_1 = 18, p_2 = 68$.
]{
\includegraphics[width = 0.45\textwidth]{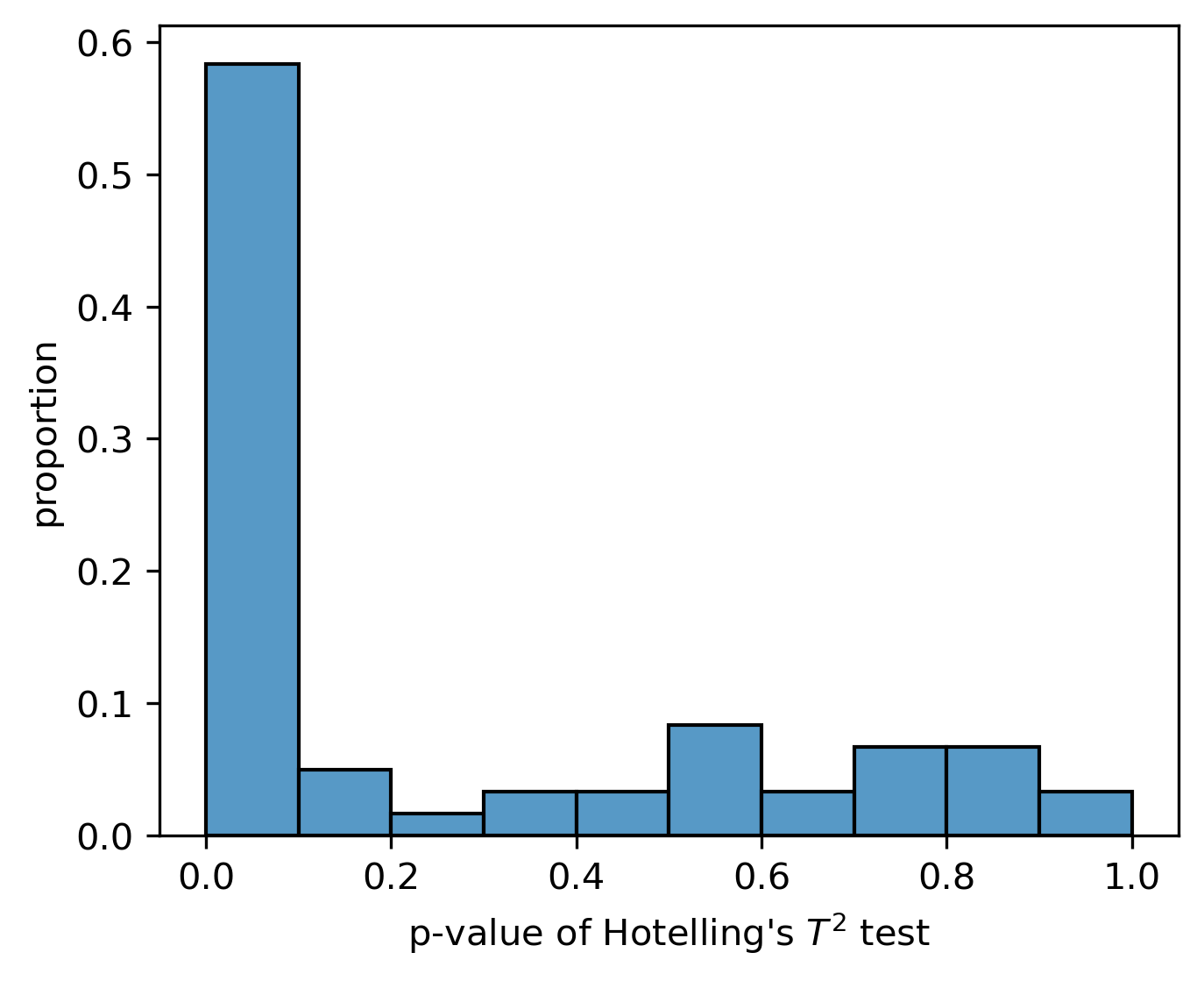}
\label{fig:hotelling_S2}
}
\caption{
Histograms of $p$-values in Hotelling's two-sample $T^2$ tests.
A significant $p$-value represents an active entry $(i, j)$ in $\Xb_1$ that generates an interpretable clustering of the population.
}
\label{fig:hotelling}
\end{figure}

In Section \ref{ssec:appl_cluster}, we are assessing how different clusters of individuals vary in their cognitive traits.
Hotelling's $T^2$ tests \citep{hotelling1931generalization} are performed for each active entry of $\Xb_1$ under the model with $S = 1, p_0 = 4, p_1 = 21$ and under the model with $S = 2, p_0 = 4, p_1 = 18$.
Histograms of the results $p$-values are plotted in Figure \ref{fig:hotelling}.
We observe that under both $S = 1$ and $S = 2$, around 50\% of the active entries in $\Xb_1$ have significant $p$-values ($< 0.1$), indicating that our unsupervised model manages to infer latent edges that are related to many cognitive traits.

\subsection{Relating Inferred Latent Features to Cognitive Traits}\label{ssec:appl_relate}

We next relate individual's cognitive traits to their brain connectivity.
We use 33 categories of cognitive traits, including language skills measured by oral reading recognition, dexterity measured by the game of 9-hole pegboard, strength measured by grip strength dynamometry, dependence on tobacco, etc.
Within each category, multiple ordinal or continuous measures are available, together forming 175 cognitive traits.

Instead of directly relating the very high-dimensional observed adjacency matrices $\Xb_2^{(1:N)}$ to the cognitive traits, we focus on the latent adjacency matrices in the shallower latent layer of our model, i.e. $\Xb_1^{(1:N)}$.
As explained in Section \ref{sec:model}, the nodes in layer 1 can be viewed as the communities for the observed nodes in layer 2 and the adjacency matrix $\Xb_1^{(n)}$ represents the adjacency between these communities for each individual $n$.
So intuitively, $\Xb_1^{(n)}$ should preserve the important adjacency information in $\Xb_2^{(n)}$.
We want to study if $\Xb_1^{(1:N)}$, which is learned in an unsupervised manner, is related to the cognitive traits.

MCMC samples of latent adjacency matrix $\Xb_1^{(n)}$ are not comparable when the corresponding samples of connection matrix $\Ab$ are different.
In the remaining part of this section, for interpretability, we therefore condition on $\Ab$ being equal to its posterior mode $\Ab^*$.
For brain connectome data, the posterior mean of $\Xb_1^{(n)}$ often has many entries $\approx$ 0 or 1 for all individuals.
We exclude these entries and focus on the {\em active} entries $(i, j)$ having $\frac{1}{N} \sum_{n = 1}^N \EE[X_{1, i, j}^{(n)} | \Xb_2^{(n)}, \Ab^*] \in (0.01, 0.99)$.
For our best models selected in Section \ref{ssec:model_sel} with $p_0 = 4, p_1 = 21, p_2 = 68$ for $S = 1$ and $p_0 = 4, p_1 = 18, p_2 = 68$ for $S = 2$, there remain around 100 active entries, forming the latent features of our model.
This is a considerable reduction from the 2278 entries in an observed adjacency matrix $\Xb_2^{(n)}$.

Since our spectral initialization is based on the Degree Corrected Mixed Membership model \citep[DCMM,][]{jin2023mixed} and its Mixed-SCORE algorithm, we compare our results with those obtained by applying DCMM to each individual network.
This comparison will allow us to see the improvement in our model over the approximation given by simpler models.
The DCMM is a state-of-the-art flexible method for single network community detection.
Descriptions of DCMM are provided in Section \ref{ssec:warm} and Section \ref{sec:more_app}.
From \eqref{eq:dcmm}, the entries in the latent probability matrix $\Zb^{(n)}$ for individual $n$ are viewed as the latent features of DCMM and can be estimated using the Mixed-SCORE algorithm.
The number of communities $q$ in DCMM roughly corresponds to the number of nodes $p_1$ in the shallower latent layer of our Bayesian network.
So for comparison with our model, we consider two DCMM models with $q = 18$ and $21$, corresponding to the selected $p_1$ values under $S = 2$ and $S = 1$, respectively.

\begin{table}[!htbp]
\centering
\small
\begin{tabular}{p{6.8cm}cc}
\toprule
Cognitive Group & BDGM PLS Corr & DCMM PLS Corr \\
\midrule
Delay Discounting & 0.1525 & 0.0896 \\
Education and Income & 0.2129 & 0.0861 \\
Emotion Recognition & 0.1731 & 0.0829 \\
Episodic Memory (Picture Sequence Memory) & 0.1759 & 0.1456 \\
Fluid Intelligence (Penn Progressive Matrices) & 0.2006 & 0.1057 \\
Health and Family History & 0.2493 & 0.1216 \\
Alcohol Use and Dependence & 0.1440 & 0.0859 \\
Audition (Words in Noise) & 0.2123 & 0.0860 \\
Contrast Sensitivity (Mars Contrast Sensitivity) & 0.1555 & 0.0638 \\
Dexterity (9-hole Pegboard) & 0.2056 & 0.1752 \\
Endurance (2 minute walk test) & 0.1659 & 0.1022 \\
Executive Function / Cognitive Flexibility & 0.1894 & 0.0713 \\
Executive Function / Inhibition (Flanker Task) & 0.1639 & 0.0666 \\
Five Factor Model (NEO-FFI) & 0.1845 & 0.0883 \\
Illicit Drug Use & 0.1788 & 0.0864 \\
Language / Reading Decoding (Oral Reading Recognition) & 0.1763 & 0.1023 \\
Locomotion (4-meter walk test) & 0.2376 & 0.0805 \\
Marijuana Use and Dependence & 0.2734 & 0.1148 \\
Negative Affect & 0.2402 & 0.0713 \\
Spatial Orientation (Line Orientation Test) & 0.2005 & 0.0951 \\
Strength (Grip Strength Dynamometry) & 0.2361 & 0.1057 \\
Stress and Self Efficacy & 0.1659 & 0.0858 \\
Sustained Attention (Continuous Performance Test) & 0.1431 & 0.1032 \\
Tobacco Use and Dependence & 0.3091 & 0.1693 \\
Verbal Episodic Memory (Penn Word Memory Test) & 0.1848 & 0.0902 \\
Language / Vocabulary Comprehension (Picture Vocabulary) & 0.1980 & 0.1132 \\
Processing Speed (Pattern Completion) & 0.1540 & 0.0740 \\
Psychological Well-being & 0.2389 & 0.0852 \\
Restricted Instrument: Psychiatric History & 0.2230 & 0.0837 \\
Social Relationships & 0.2166 & 0.2152 \\
Working Memory (List Sorting) & 0.2037 & 0.0804 \\
Pain Intensity (Odor) & 0.2297 & 0.0938 \\
Lift Function & 0.2079 & 0.0577 \\
\bottomrule
\end{tabular}
\caption{
Partial least squares canonical correlations (PLS Corr) between latent network features learned from BDGM or DCMM and cognitive trait groups.
}
\label{tab:pls_corr}
\end{table}

\begin{figure}[ht]
\centering
\includegraphics[width = 0.8\textwidth]{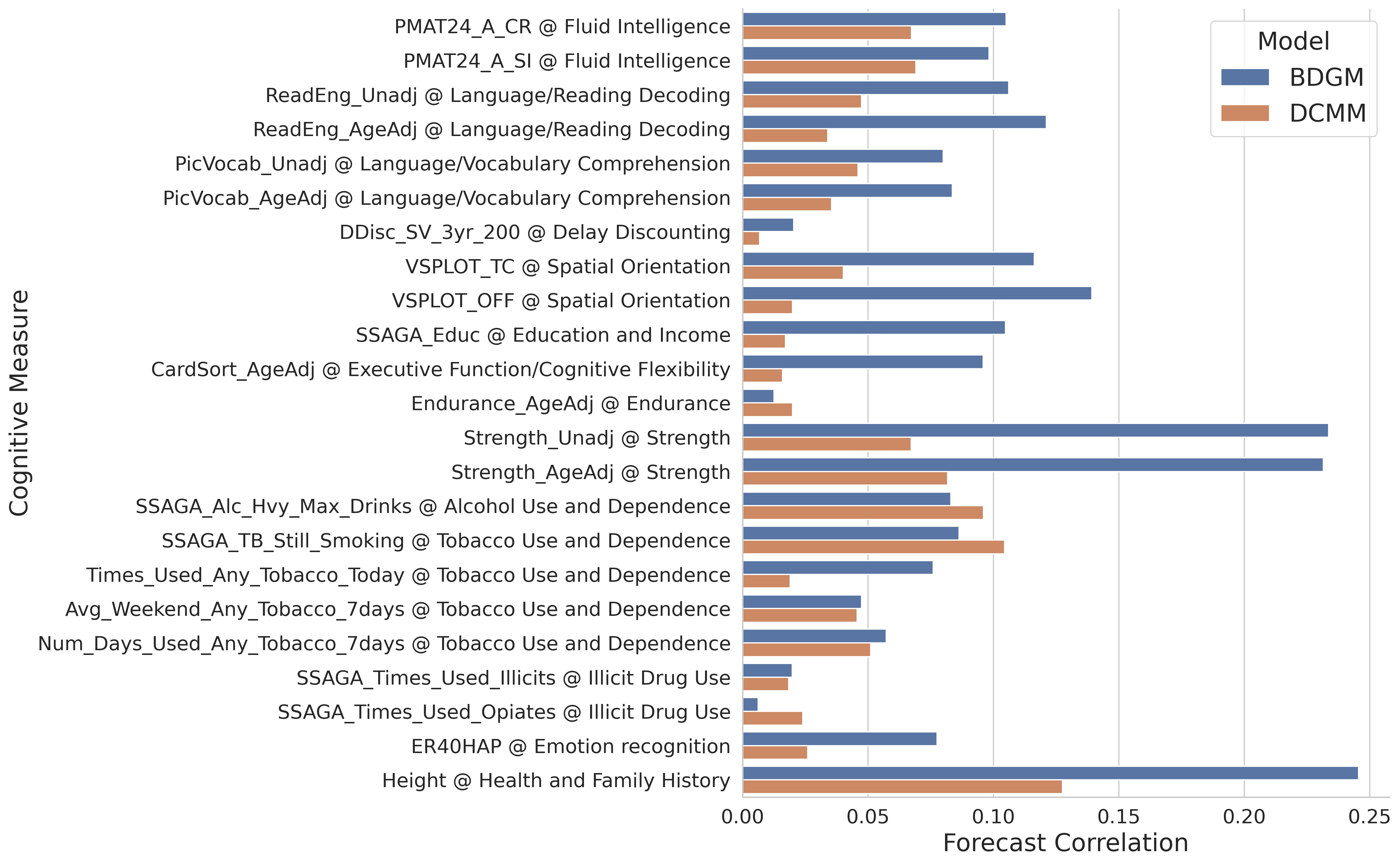}
\caption{
The out-of-sample $R^2$'s are presented for ridge regressions that predict different (Gaussian-transformed) dominant principal components of cognitive traits.
Four groups of predictors are considered, which are the latent features in $\Xb_1^{(n)}$ of our model with $S = 1$ and $S = 2$, the latent features in $\Zb^{(n)}$ of the DCMM with $q = 18$ and $q = 21$.
}
\label{fig:low_rep_cog_pred}
\end{figure}

We evaluate how the latent brain connectivity features learned from our model and from DCMM relate to the cognitive traits, noting that the latent features themselves are learned entirely from the brain networks without using any cognitive information.
We consider two complementary analyses.
First, to quantify global associations between the latent network representations and each category of cognitive traits, we compute partial least squares canonical correlations \citep{hastie2009elements} between the latent features and the grouped cognitive measures, with results summarized in Table~\ref{tab:pls_corr}.
Second, to assess the practical predictive value of the latent representations, we fit ridge regression models that use the latent features to forecast individual cognitive measures, tuning the regularization parameter by 10-fold cross validation and evaluating out-of-sample forecast correlations.
The results are shown in Figure~\ref{fig:low_rep_cog_pred}, where we focus on the subset of cognitive traits achieving at least 1\% predictive correlation under either BDGM or DCMM, a criterion that highlights traits containing recoverable signal without favoring either method.
Across both sets of analyses, the latent adjacency matrices $\Xb_1^{(n)}$ inferred by our model display markedly stronger associations with, and explanatory power for, the cognitive traits compared with the latent features obtained from DCMM.
Given that cognitive traits are well known to be subject to considerable measurement error and we are using an unsupervised approach, our model's predictive performance is very promising in terms of capturing important aspects of variation in human brain networks.

\subsection{Prediction of Missing Edges in Multiplex Networks}\label{ssec:mask}

\begin{figure}[ht]
\centering
\includegraphics[width = 0.6\textwidth]{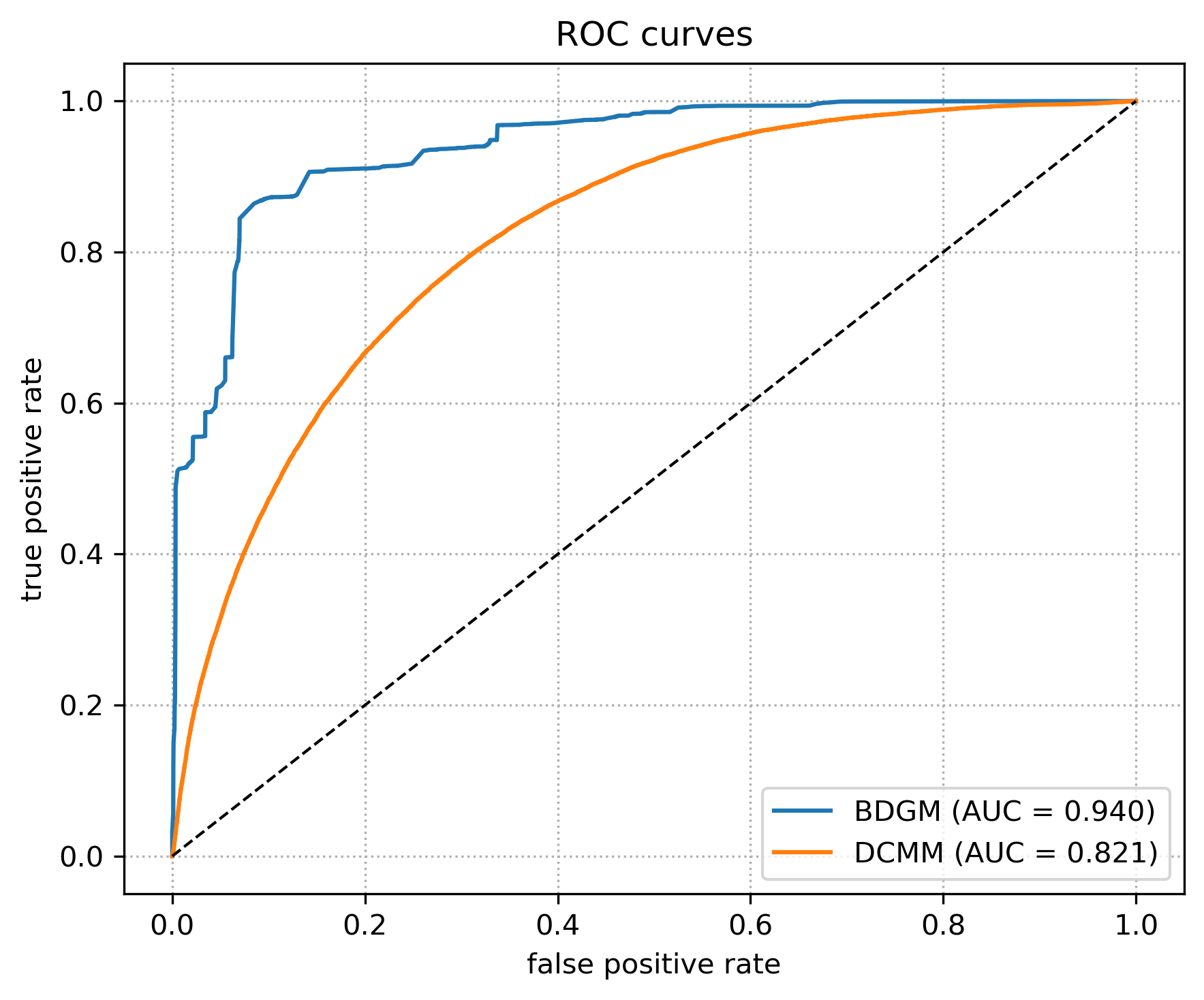}
\caption{
ROC curves for predicting masked edges in the brain connectivity networks of 1065 individuals.
Missing entries are created by masking a fixed $20 \times 20$ submatrix corresponding to 20 regions of the Desikan atlas.
}
\label{fig:mask}
\end{figure}

Although many existing methods are available for clustering individuals or detecting communities in observed adjacency matrices, most competing approaches are primarily algorithmic and either do not require model parameter estimation or rely on only partial parameter estimation.
In contrast, our model is a fully generative Bayesian model that performs joint inference on all latent variables and model parameters.
This feature enables applications that go beyond clustering, including prediction and uncertainty quantification.
To further illustrate the practical utility of full Bayesian inference, we include an additional application that evaluates the model’s ability to predict missing edges in multiplex networks.

To construct this example, we use the brain connectivity dataset consisting of 1065 individuals, each with a fully observed $68 \times 68$ adjacency matrix representing structural connectivity between regions defined by the Desikan atlas.
For demonstration purposes, we manually introduce missingness by randomly selecting 20 regions and masking the corresponding $20 \times 20$ submatrix in every individual’s adjacency matrix.
The same submatrix is masked across individuals, resulting in 8.3\% missing entries.
These artificially removed entries serve as ground truth for evaluating predictive performance.

We compare the predictive accuracy of the proposed model with that of the Degree Corrected Mixed Membership model (DCMM) \citep{jin2023mixed}.
Both models need to be adapted to accommodate missing data.
For our model, the standard Gibbs sampler is modified by excluding the missing entries from the likelihood when sampling parameters and latent variables.
For DCMM, its Mixed-SCORE algorithm does not directly handle missing values, so we embed it in an expectation maximization scheme that alternates between estimating DCMM parameters and inferring the conditional distributions of missing edges under the current parameter estimates.

To evaluate predictive performance, we vary the binary classification threshold for missing edges and compute the receiver operating characteristic (ROC) curves for both models.
As shown in Figure \ref{fig:mask}, the area under the curve for BDGM is 0.940, while that for DCMM is 0.821.
This suggests that BDGM achieves better recovery of missing edges under the structured missingness pattern used in this example.

This demonstrative result highlights two advantages of the fully generative Bayesian approach in our model.
First, by integrating information across individuals and inferring the latent hierarchical structure, the model gains predictive power and robustness in the presence of missingness or irregularities in the observed networks.
Second, although full model inference requires higher computational effort, it enables applications such as edge prediction, posterior uncertainty assessment, and principled handling of missing data, which are difficult to achieve using algorithmic clustering methods alone.
Together, these results complement the previous clustering example and provide additional evidence of the practical relevance of the model in real network applications.

\section{Implementation Details and Computational Scalability}

\subsection{Software Environment and Open Source Code}

All numerical experiments and simulations presented in this work were implemented in \textsf{Python} (version~3.11). 
To ensure reproducibility, numerical stability, and computational efficiency, we relied on widely used open-source scientific computing libraries. 
The core packages used in our implementation are listed below, with full details documented in the accompanying code repository.

\begin{itemize}
\item
\textsf{NumPy}~1.20.3 for vectorized numerical computation and array-based linear algebra;
\item
\textsf{SciPy}~1.6.2 for probability distributions, numerical routines, and random variate generation;
\item
\textsf{polyagamma}~1.3.2 for exact sampling of Pólya--Gamma latent variables used in Bayesian posterior inference;
\item
\textsf{Numba}~0.53.1 for just-in-time (\textsf{JIT}) compilation of performance-critical computational kernels;
\item
\textsf{scikit-learn}~1.3.1 for baseline models, comparative methods, and standardized preprocessing utilities;
\item
\textsf{nilearn}~0.10.2 and \textsf{nibabel}~4.0.2 for neuroimaging data manipulation and extraction when handling MRI data formats;
\item
\textsf{Pillow}~9.2.0, \textsf{matplotlib}~3.7.3, and \textsf{seaborn}~0.11.1 for visualization, diagnostic plots, and figure generation.
\end{itemize}

All experiments were executed on a standard multi-core CPU workstation, and no GPU acceleration was required. 
Reproducible code for our Gibbs sampler and all simulation and application results are available on Github at \url{https://github.com/BayesianModels01/Bayesian-Deep-Generative-Models-for-Multiplex-Networks-with-Multiscale-Overlapping-Clusters}.

\subsection{Computational Scalability for Large Multiplex Networks}

Analyzing large multiplex networks presents notable computational challenges due to the high dimensionality 
of latent variables, the combinatorial structure of adjacency patterns, and the repeated evaluation of likelihood contributions within each iteration of the Gibbs sampler. 
This subsection outlines both methodological strategies and implementation-level tools that substantially improve scalability.

\subsubsection{Methodological Tools}

\paragraph{Spectral Initialization.}
Random initial points often lead to slow burn-in for MCMC methods with high-dimensional latent structures. 
To address this, we initialize the latent connection matrices using a spectral initialization approach, as introduced in Section \ref{ssec:spec_init_alg}.
This provides high-quality starting values that guide the sampler toward high-probability regions of the posterior, thus reducing the total number of required iterations.

\paragraph{Subsampling Gibbs Sampler.}
In early stages of posterior computation, evaluating full conditional distributions over all network data can be prohibitively expensive for large networks. 
We therefore employ a subsampling Gibbs sampler during the initial iterations, in which only a fixed fraction (e.g., 1\%) of node pairs is randomly sampled and updated, as introduced in Section \ref{ssec:sub_gibbs}.
This produces rapid initial convergence at a fraction of the computational cost. 
After the sampler has stabilized, we switch to the full Gibbs sampler for accurate posterior inference. 
This hybrid approach provides substantial computational savings while preserving estimation accuracy.

\subsubsection{Implementation Tools}

\paragraph{Vectorization with NumPy.}
Whenever possible, numerical operations are expressed in vectorized form using \textsf{NumPy} arrays. 
This allows Python code to leverage highly optimized C and Fortran routines under the hood, dramatically accelerating dense matrix operations such as vector–matrix and matrix–matrix multiplications, log-likelihood evaluations, and probability computations.

\paragraph{Loop Acceleration with Numba.}
Certain components of the Gibbs sampler inherently require Python-level loops, such as iterating over nodes, communities, or conditional updates for non-vectorizable components. 
To accelerate these steps, we use \textsf{Numba}, a just-in-time compiler that translates annotated Python functions into optimized machine code using LLVM. 
Numba preserves readable Python syntax while enabling near-C execution speed for computational bottlenecks.

\paragraph{Parallelization via Multiprocessing.}
Although not used in the experiments reported here, the structure of our model admits parallelization through Python’s \textsf{multiprocessing} module. 
Node-level updates, independent Gibbs chains, or evaluations of WAIC across different model configurations can all be parallelized on multi-core machines. 
This provides a direct and practical path toward scaling the methodology to substantially larger networks or to extensive model-selection grids in future applications.

\vspace{0.1in}
\noindent
Together, spectral initialization, subsampling Gibbs sampling, vectorized numerical routines, Numba-accelerated loops, and potential multiprocessing support provide a cohesive framework 
for scaling the proposed Bayesian generative model to large multiplex networks while maintaining practical runtime performance.

\end{document}